\documentclass[12pt,reqno]{amsart}
\usepackage{hyperref}
\usepackage{comment}
\usepackage{amsmath, amssymb, url, mathtools,centernot, tikz}
\usepackage{xcolor}
\usepackage{wrapfig}
\usepackage{mathrsfs}
\usepackage[utf8]{inputenc}
\DeclareUnicodeCharacter{2010}{-}
\DeclareUnicodeCharacter{2011}{-}
\DeclareUnicodeCharacter{2012}{--}
\DeclareUnicodeCharacter{2013}{--}
\DeclareUnicodeCharacter{2014}{---}
\DeclareUnicodeCharacter{2018}{`}
\DeclareUnicodeCharacter{2019}{'}
\DeclareUnicodeCharacter{201C}{``}
\DeclareUnicodeCharacter{201D}{''}
\newtheorem{proposition}{Proposition}
\newcommand{\norminf}[1]{\|#1\|_{\infty}}
\newcommand{\normmax}[1]{\|#1\|_{\max}}
\makeatletter
\newcommand*{\addFileDependency}[1]{% argument=file name and extension
  \typeout{(#1)}% latexmk will find this if $recorder=0 (however, in that case, it will ignore #1 if it is a .aux or .pdf file etc and it exists! if it doesn't exist, it will appear in the list of dependents regardless)
  \@addtofilelist{#1}% if you want it to appear in \listfiles, not really necessary and latexmk doesn't use this
  \IfFileExists{#1}{}{\typeout{No file #1.}}% latexmk will find this message if #1 doesn't exist (yet)
}
\makeatother

\usepackage[left=1in,right=1in,top=1in,bottom=1in]{geometry}

\usepackage[authoryear,round]{natbib}
\usepackage[utf8]{inputenc} % allow utf-8 input
\usepackage[T1]{fontenc}    % use 8-bit T1 fonts
\usepackage{url}            % simple URL typesetting
\usepackage{booktabs}       % professional-quality tables
\usepackage{amsfonts}       % blackboard math symbols
\usepackage{nicefrac}       % compact symbols for 1/2, etc.

\usepackage{caption}
\usepackage{subcaption}
\usepackage{ragged2e}

\usepackage{comment}
\usepackage{adjustbox}      % General-purpose boxes and floats
\usepackage{algpseudocode, algorithm}
\usepackage{amsmath, amssymb, amsthm}
\usepackage{array}
\usepackage{blkarray}       % Extended block matrices
\usepackage{appendix}
\usepackage{booktabs}       % Prettier tables
\usepackage{bm}             % Bold symbolds in simpler fashion
\usepackage[justification = centerlast]{caption}
\usepackage{color}
\usepackage{graphicx}
\usepackage{multirow}
\usepackage{mathtools}      % Additional math commands and coloneqq symbol
\usepackage{natbib}         % Citations
\usepackage{paralist}       % Additional control of lists
\usepackage{parskip}        % Spacing between paragraphs
\usepackage{setspace}       % Singlespacing and doublespacing
\usepackage{subcaption}     % Loads subfigure and subcaption commands
\DeclareMathOperator{\exv}{\mathbb{E}}
\DeclareMathOperator{\var}{Var}

\DeclareMathOperator{\diag}{diag}
\DeclareMathOperator{\tr}{tr}
\DeclareMathOperator{\vecv}{vec}

\DeclareMathOperator{\SVT}{SVT}

\newcommand{\e}{\varepsilon}
\newcommand{\I}{\mathbb{I}}

\newcommand{\T}{^{\top}}  % General ranspose symbol

\newcommand{\N}{\mathbb{N}}

\newcommand{\bvL}{\mathbf{v}_{L^*_0}}
\newcommand{\bvS}{\mathbf{v}_{S^*_0}}
\newcommand{\bvW}{\mathbf{v}_{W_0}}

\renewcommand{\P}{\mathbb{P}}

\theoremstyle{plain}
\newtheorem{theorem}{Theorem}
\newtheorem{corollary}{Corollary}
\newtheorem{lemma}{Lemma}

\theoremstyle{definition}
\newtheorem{assumption}{Assumption}

\theoremstyle{remark}

\newtheorem{example}{Example}
\newtheorem{remark}{Remark}

\newcommand{\E}{\mathbb{E}}
\newcommand{\bL}{L^*}
\newcommand{\bS}{S^*}

\newcolumntype{H}{>{\setbox0=\hbox\bgroup}c<{\egroup}@{}} % "Hidden" column type
 \title[]{Estimating Network Spillovers under Dense Measurement Error}
\thanks{Date: \today{}. }
\thanks{Author names are listed in alphabetical order.}
 \thanks{We thank Santiago Montoya-Blandon for comments and preliminary data analysis.
We thank Debopam Bhattacharya, Tom Boot, Guido Kuersteiner, Oliver Linton, Elena Manresa, Whitney Newey, Alexey Onatski, Bernard Salanie, Shuyang Sheng,  Xun Tang as well as participants at the Gothenburg Econometrics Workshop, the Oxford Econometrics Workshop, the VTSS Seminar, the (EC)$^2$ Conference, the 2025 Econometric Society World Congress and Sofie 2026 for their thoughtful comments. Áureo de Paula gratefully acknowledges financial support from the Economic and Social Research Council through the ESRC Institute for the Microeconomic Analysis of Public Policy (Grant Ref: ES/T014334/1), and from UK Research and Innovation (UKRI) under the UK Government’s Horizon Europe funding guarantee (Grant Ref: EP/X02931X/1). Any remaining errors are our own. 
 }
  \author[]{ Yingxing Li\textsuperscript{a} \and \'Aureo de Paula\textsuperscript{b} \and  Weining Wang\textsuperscript{c}}
\thanks{$^a$: School of Business, Sun Yat-sen University.}
\thanks{$^b$:  Department of Economics, University College London, Institute for Fiscal Studies and CeMMAP}
\thanks{$^c$: Department of Economics, University of Bristol. Corresponding author: Weining Wang, email: \texttt{weining.wang@bristol.ac.uk}.}

\begin{document}

\maketitle

\begin{abstract}
This paper analyzes spillover effects in spatial (network) models when the neighborhood (adjacency) matrix is contaminated by measurement error from reporting, aggregation, or disclosure imperfections, leading to inconsistent estimation of network effects. We introduce a regularization framework for the latent network that allows for sparse and/or low-rank structure and accommodates potential correlation between measurement errors and outcomes. %, low-rank structure, or both as well as} for correlation between measurement errors and outcomes.  
We propose two estimators: (i) a two-stage procedure that first denoises the adjacency matrix and then incorporates the purified network into a regression analysis, and (ii) a Generalized Method of Moments (GMM) estimator that jointly estimates regression parameters and refines the network structure. We then establish strictly improved consistency rates for the spillover effect estimator relative to naive estimation ignoring measurement error.  Simulations demonstrate that, in the presence of noisy networks, %{\color{red} Our approach reduces the root mean squared error of spillover estimates to approximately 20\% of that obtained using conventional methods.} % 
our approach reduces the root mean squared error of spillover estimates relative to conventional methods by approximately $50-80\%$.  
We apply our framework to examine the international spillover of economic growth, and the tax competition across U.S. states, illustrating that denoising might restore Leontief stability and yields improved estimates of spillovers.
\end{abstract}

\vspace{5mm}
\noindent{JEL Classification:} C21, C23, D57

\vspace{1mm} \noindent{Keywords:}  network analysis, measurement error, LASSO, nuclear norm penalty, penalized GMM

\newpage
{
\section{Introduction}
\label{Sec:Introduction}
Network analysis provides powerful tools to quantify policy spillovers 
and identify pivotal agents for targeted interventions, particularly 
when policymakers face binding resource constraints 
\citep{Ballester2006, Zenou2016, Galeotti2020}. A large literature 
shows that economic outcomes often depend on interactions among 
interconnected units, giving rise to amplification, diffusion, and 
multiplier effects that are central to policy design 
\citep{Acemoglu2012, Giorgi2020, Araujo2023}. These insights have 
motivated the widespread use of spatial and social interaction models 
to study peer effects, diffusion dynamics, and network-based policy 
interventions across a broad range of economic settings.

Methodologically, the literature on spatial and social interaction models 
differs mainly in how the network structure is handled. A dominant strand 
assumes that the adjacency matrix is correctly specified and directly 
observed, and studies spillover effects conditional on this object 
\citep{Lee2007a, Bramoulle2009, Lee2010a, Banerjee2013, Elhorst2014, 
Lee2014, Yang2017, Zhu2020a, Kuersteiner2020},\\ \citep{ Shin2021, 
chernozhukov2021uniform}. An alternative and growing line of research 
relaxes this assumption by estimating the network jointly with model 
parameters, typically imposing dimension reduction constraints such as 
sparsity to address the curse of dimensionality in high-dimensional 
settings \citep{Manresa2016, Lam2020, Paula2023}.

Sparsity-based methods work well when the latent network is dominated 
by a few strong links, but they fail in the dense networks that arise 
in production, trade, and financial settings, where the observed 
adjacency matrix is itself contaminated by measurement error 
\citep{FismanWei2004, Upper2011, Anand2015}. In such settings, two 
distinct forms of latent structure coexist beyond idiosyncratic noise. 
First, outcomes may be driven by a small number of pervasive (though 
individually weak) latent forces such as common shocks, shared 
technologies, or institutional linkages, naturally represented by a 
low-rank component \citep{Chudik2013, Kapetanios2021}. Second, 
networks may contain a small number of strong bilateral or 
idiosyncratic interactions, naturally represented by a sparse 
component \citep{Lam2020, Paula2023}. Imposing sparsity alone discards 
the pervasive signal, while ignoring measurement error produces 
non-vanishing bias as elementwise small errors accumulate along 
network paths \citep{Lewbeletal2024EJ}. The stakes are practical: in 
our cross-state tax-competition application, the standard spatial GMM 
estimator with the raw inverse-distance weight matrix yields 
$\widehat\lambda > 1$, violating the Leontief stability condition and 
implying explosive multipliers, while our  estimator returns 
$\widehat\lambda < 1$ within the admissible region. We therefore 
propose a regularization approach for the observed adjacency matrix 
that filters measurement error while preserving low-rank or sparse 
signal, developed formally in Section~\ref{Sec:Methodology}.

Throughout, by \emph{regularization} we mean imposing restrictions: low-rank, sparse, or low-rank-plus-sparse, directly on the 
latent adjacency matrix $W_0$ rather than on the regression 
coefficients, as identifying restrictions that make denoising of 
the observed network feasible.

The problem is non-trivial because it combines matrix-valued 
measurement error with a spillover estimator that depends nonlinearly 
on the network matrix through the Leontief inverse 
$(I_n - \lambda W)^{-1}$, so errors in $W$ propagate through the 
estimator in ways classical errors-in-variables corrections do not 
handle. 
Low-rank and sparse structures have deep foundations in statistics 
and econometrics, appearing in approximate factor models, 
latent-variable graphical models, and high-dimensional covariance 
estimation \citep{Bai2002, Chandrasekaran2010, Fan2011}. A large 
methodological literature develops penalized procedures to recover 
low-rank or sparse patterns in covariance and precision matrices 
\citep{Cai2011, Candes2011, Agarwal2012, Fan2018}. Despite their 
success in these settings, regularization of this kind has not, to 
our knowledge, been adopted in the econometrics literature to address 
measurement error in observed network adjacency matrices, or to 
improve the estimation of spillover effects and other policy-relevant 
network statistics. We build on this literature by developing two 
complementary estimation approaches. The first is a two-stage 
procedure that denoises the observed adjacency matrix by separating 
low-rank or sparse signal from measurement error before estimating 
spillover effects. The second is a regularized Generalized Method of 
Moments (GMM) estimator that jointly estimates the network and the 
model parameters in a single step.
We establish that our estimator is consistent, in contrast to naive estimation that ignores measurement error and remains inconsistent in dense regimes.
Simulation results show that the proposed method achieves a $50\text{--}80\%$ reduction in the root mean squared error of spillover estimates under noisy weight matrices, with measurement noise ranging from moderate to high.

Our main theoretical contribution is to characterize how the bias 
of standard spillover estimators depends on the structure of the 
latent network and the measurement error, and to show that this 
bias does not vanish in the dense regime that characterizes economic 
networks. \cite{Lewbeletal2024EJ} show that modest link 
misclassification bias in the estimation of linear social-interaction 
models may be negligible as long as misclassification is not 
pervasive; we extend their analysis to settings in which the latent 
network has many but individually weak connections, and in which the 
measurement error itself need not be sparse. We further allow the 
measurement error to be correlated with unobservables in the outcome 
equation \citep{Candelaria2023, lewbel2024estimating}, which can 
introduce non-vanishing bias if ignored. We derive convergence rates 
for both the spillover-parameter estimator and our estimated
adjacency matrix. Beyond these formal results, the framework also 
delivers an interpretable decomposition of network connections (a 
low-rank component capturing pervasive relationships and a sparse 
component isolating localized interactions, with identification 
following standard strategies in the literature 
\citep{chandrasekaran2011rank}) and we use this decomposition to 
provide a novel expression of the network eigencentrality that 
quantifies the contributions of each component.

We illustrate the practical relevance of our framework through two 
empirical applications: international spillovers in GDP growth across 
23 economies and tax competition among 48 U.S.\ states. In both 
settings, our estimated network produces spillover estimates that 
differ markedly from those obtained using the raw adjacency matrix, 
highlighting the empirical importance of network denoising.  In the 
cross-economy GDP-growth application, regularization affects not only the scale of spillovers but also the inferred structure of interdependence. %denoising attenuates estimated spillovers by 5\%–21\% relative to the raw matrix, depending on the decomposition used. 
The network decomposition further reveals 
distinct regional and global components of cross-economy 
interdependence. These empirical findings demonstrate how our 
framework can uncover the structure of interdependence in economic 
networks and support more reliable inference in spatial econometric 
analysis. In the 
U.S.\ tax-competition application, the raw inverse-distance weight 
matrices yield explosive estimates $\widehat\lambda > 1$ that violate 
the Leontief stability condition, while our estimator 
returns stable estimates. 

The remainder of our paper is organized as follows. 
Section~\ref{Sec:Motivation} motivates the framework by explaining 
why measurement error matters in dense networks, illustrating four 
possible network structures (sparse, low-rank, low-rank-plus-sparse), 
and demonstrating with the U.S.\ Input-Output Table that real-world 
dense networks exhibit exploitable regularization structure. 
Section~\ref{Sec:Methodology} presents the spatial interaction model 
and introduces the plug-in and regularized GMM estimators under 
low-rank, sparse, or low-rank-plus-sparse structure. 
Section~\ref{Sec:theory} provides assumptions and establishes 
consistency and asymptotic normality results for both estimators. 
Section~\ref{Sec:Simulation} presents simulation evidence. 
Section~\ref{Sec:Application} applies our methods to estimate 
spillover effects using noisy network data. 
Section~\ref{Sec:Conclusion} concludes. Proofs, algorithms, and 
additional technical details are provided in the appendices.
}

\section{Motivation} \label{Sec:Motivation}
This section sets up the notation and previews the regularization 
framework used throughout the paper. Let $\mathcal{N} \coloneqq 
\{1, \ldots, n\}$ denote the set of $n$ units or nodes and 
$\mathcal{E}$ the set of edges between them, so that 
$(\mathcal{N}, \mathcal{E})$ represents the graph. Let $W_0$ be the 
$n \times n$ adjacency matrix encoding the true connections, with 
$(i,j)$-th element $W_{0,ij}$; we write $W_{0,ij} \neq 0$ when unit 
$i$ is connected to unit $j$ and $W_{0,ij} = 0$ otherwise.\footnote{We 
allow the connection from $i$ to $j$ to differ from that from $j$ 
to $i$, so the graph may be directed and $W_0$ need not be symmetric.} 
We further assume no self-loops, so that $W_{0,ii} = 0$ for all 
$i = 1, \ldots, n$.

The econometrician observes a noisy version of $W_0$,
\[
W = W_0 + E,
\]
where $E$ is a measurement-error matrix whose sources are discussed 
in Section~\ref{sec:densemeasurement}. Before formally motivating 
the regularization framework, we briefly preview the structural 
decomposition of $W_0$ that plays a central role throughout the paper.

To reduce estimation complexity for large $n$, it is natural to 
impose structural assumptions on $W_0$. The most common choice (sparsity) is plausible for social networks dominated by a few 
strong links, but fails to describe networks in which a large 
number of units share common interests or characteristics: a 
highly sparse network typically splits into several disconnected 
components. We therefore propose to model $W_0$ as the sum of a 
low-rank and a sparse component,
\begin{equation}
    \label{Eq:Std_Matrix_Structure1}
    W = \underbrace{L_0 + S_0}_{\equiv W_0} + E, 
\end{equation}
where $L_0$ captures pervasive, common connections and $S_0$ 
captures sporadic or idiosyncratic connections.

The components $L_0$ and $S_0$ need not individually satisfy the 
zero-diagonal constraint. When $W_0$ has a zero diagonal, we can 
define $L_0^* = L_0 - \diag(L_0)$ and $S_0^* = S_0 + \diag(L_0)$, 
which absorb the diagonal corrections so that both represent 
networks without self-loops:
\begin{equation}
    \label{Eq:Std_Matrix_Structure2}
    W = \underbrace{L_0^* + S_0^*}_{\equiv W_0} + E.
\end{equation}
While $L_0^*$ is not exactly low-rank, it inherits the low-rank 
structure of $L_0$ and differs only on the diagonal. 
Section~\ref{set:Iden} illustrates the usefulness of this decomposition through 
concrete examples. We further develop our theoretical analysis (Section \ref{Sec:theory}) in terms of $L_0$ and $S_0$  and 
note that the results extend directly to $L_0^*$ and $S_0^*$. Both theoretical and empirical evidence in 
the following sections show that this decomposition captures 
several meaningful network structures encountered in practice 
\citep{Diebold2014, Barranca2015, Gamble2016, newman2010networks}.

\subsection{Why dense measurement errors matter} 
\label{sec:densemeasurement}
A central feature of the applications mentioned in the introduction 
is that the observed network is dense and measured with error. 
We document four such sources of error in Section~\ref{Sec:Sources}; 
here we develop the consequences for spillover estimation.
The existing literature, such as \citet{Lewbeletal2024EJ}, assumes 
that measurement error in the network is sufficiently sparse that 
it does not affect the consistency of the spillover estimator. In 
particular, consistency of $\widehat{\lambda}$ can be preserved 
when the aggregate magnitude of the error vanishes asymptotically, 
for example when
\[
\frac{1}{n}\sum_{i,j} |E_{ij}| = o_p(1).
\]
This condition is naturally satisfied when the total error budget 
grows slowly relative to the size of the network, as in settings 
where the underlying adjacency matrix itself is sparse. 
Intuitively, when only a small fraction of links are contaminated 
\emph{and the magnitude of each contamination is small}, the 
resulting perturbation to the network propagation mechanism 
becomes negligible in large samples.

The situation is fundamentally different when measurement errors 
are \emph{dense}. Even when each entry of $E$ is elementwise small, 
the errors accumulate across the $O(n^2)$ entries of the matrix 
and across the propagation paths generated by the spatial 
multiplier. The aggregate distortion may not vanish asymptotically, and the 
estimator $\widehat{\lambda}_{\mathrm{GMM}}$ may fail to be consistent.

To recover $W_0$ from $W$ in the dense regime, some structural 
restriction on $W_0$ is needed. Depending on the application, 
the appropriate structure is:
\begin{itemize}
    \item[i)] \textbf{Sparse:} $W_0 = S_0^*$, where most links are 
    absent. Suitable for social networks.
    \item[ii)] \textbf{Low-rank:} $W_0 = L_0^*$, where connections 
    arise from a small number of common factors or shared 
    institutional features. The primary case for dense networks 
    such as input--output tables and financial exposure matrices.
    \item[iii)] \textbf{Low-rank plus sparse:} $W_0 = L_0^* + S_0^*$, 
    the general case combining pervasive common connections with 
    idiosyncratic strong links.
\end{itemize}
Misspecifying this structure leaves residual bias: assuming 
$W_0 = S_0^*$ when a non-negligible $L_0^*$ is present discards the 
pervasive systematic connections that drive spillovers; assuming 
$W_0 = L_0^*$ when strong idiosyncratic links are present absorbs 
those links into the residual. Moreover, the low-rank and sparse decomposition introduced above is the device we use to 
operationalize this regularization. We emphasize that it is a 
flexible modelling device, not an economic interpretation. The 
interactions-model estimator $\widehat\theta_p$ depends only on 
$\widehat W = \widehat L^* + \widehat S^*$, not on the individual 
components. We thus do not rely on any economic interpretation of 
$L_0^*$ and $S_0^*$ as separate network objects. 
Nevertheless, Appendix~\ref{Appendix:decomp} discusses the economic 
interpretation of $L_0^*$ and provides identification conditions 
for the $L_0^* + S_0^*$ decomposition.

The practical importance of regularizing $W_0$ rests on three 
observations. First, measurement error in dense networks is 
\emph{non-negligible}: it may not vanish with sample size, so 
standard asymptotic consistency arguments fail. Second, it is 
\emph{endogenous}: errors share common sources and are typically 
correlated with the true network entries, violating the classical 
measurement-error assumptions that justify standard 
instrumental-variable corrections. Third, it is \emph{economically 
consequential}: in our empirical illustration with the U.S.\ 
Input--Output Table (Section~\ref{empirical}), 
ignoring measurement error entirely distorts the estimated 
propagation matrix by $9\%$, and misspecifying the structure of 
the error (treating it as purely sparse) compounds the distortion 
to $68\%$. A method that filters dense, structured noise while 
preserving the signal in $W_0$ is therefore not a refinement but a 
prerequisite for credible inference about network spillovers.

\subsection{Sources of measurement error in economic networks}\label{Sec:Sources}

The economic literature documents several mechanisms through which
the observed adjacency matrix $W$ may depart from the latent network
$W_0$. Many policy-relevant production, trade, and reconstructed
financial networks contain numerous weak links and can be dense
relative to conventional social networks. Consequently, measurement
error may affect a large fraction of their entries. A related source
of potentially dense measurement error arises when the adjacency
matrix is constructed from geographical or economic distance
\citep{LeSage2009, Getis2004}: errors in the underlying distance
measurem (arising, for example, from centroid approximations, omitted
transport links, or bandwidth choices) can propagate across entire
rows or columns of $W$. We discuss several additional sources of
measurement error below.

\subsubsection*{(i) Aggregation and reporting error in input--output networks}

A widely studied dense economic network is the input--output table, where 
the $(i,j)$ entry records the share of commodity $i$ used as an intermediate 
input in producing commodity $j$. \citet{Acemoglu2012} establish that 
propagation multipliers %(the Leontief inverses that govern how sectoral shocks aggregate to macroeconomic fluctuations) 
are highly sensitive to 
the precise pattern of bilateral input shares across all $n^2$ entries of 
this matrix. The observed Bureau of Economic Analysis table aggregates 
transactions across hundreds of heterogeneous firms and products into 
sector-level flows, and is further subject to reporting inaccuracies, 
imputation of missing entries, periodic revisions, and reconciliation 
between the supply and use tables. Given these, the analyst observes a sector-level $W$ that departs from any 
meaningful sector-level $W_0$, and the resulting errors propagate along 
multi-step Leontief chains.\footnote{The literature on aggregation in 
production networks treats this as a separate methodological problem; 
see e.g.\ \citet{barrot2016input} and \citet{boehm2019input}. Our framework 
is complementary: it addresses reporting and revision error at whatever 
level the network is observed, and is silent on the firm-to-sector 
aggregation step.} Our method takes the sector-level network as primitive 
and denoises around a sector-level truth.  We show in Section~\ref{Sec:Application} that 
even at the sector level, denoising restores Leontief stability and 
materially changes estimated cross-sector multipliers.

\subsubsection*{(ii) Mirror-statistic discrepancies in bilateral trade networks}

International trade flows provide a second canonical example of a dense network where measurement error affects every bilateral entry. Trade flows are recorded twice: by the exporting country as exports and by the importing country as imports. These two independent records of the same physical shipment should agree, but in practice they differ substantially and systematically. \citet{FismanWei2004} exploit this bilateral mirror-statistic structure to measure the \emph{evasion gap}, i.e., the log difference between Hong Kong's reported exports to China and China's reported imports from Hong Kong at the six-digit product level and show that a one-percentage-point increase in the tariff rate is associated with a three-percent increase in the evasion gap.
This tariff-related underreporting may affect many product categories
and can be amplified by product misclassification, as importers
reclassify high-tariff goods into lower-tariff categories. The
resulting measurement error need not be confined to a small number
of trade links and may affect a substantial fraction of the bilateral
trade network in a manner correlated with tariff rates.
Because tariff schedules are systematically related to trade volumes
and revealed comparative advantage, the measurement error $E_{ij}$
tends to correlate with the true $W_{0,ij}$, violating classical
errors-in-variables assumptions and biasing estimates of
trade-network spillovers.

\subsubsection*{(iii) Entropy-based reconstruction error in interbank exposure networks}

Bilateral interbank exposures form a naturally dense network for studying systemic risk and financial contagion. Each bank's balance sheet records its total interbank lending and borrowing, but the bilateral breakdown, who lent to whom and in what amount, is typically confidential. Regulatory data under the Basel II framework require reporting of exposures only above a threshold of 10\% of regulatory capital, so the majority of bilateral links are unobserved \citep{Anand2015}. Researchers and regulators therefore reconstruct the full $n \times n$ bilateral exposure matrix from the observable marginals (total lending and borrowing of each bank) using maximum-entropy algorithms. As \citet{Upper2011} and the systematic evaluation of \citet{Anand2015} document, the maximum-entropy solution fills in the matrix as uniformly as possible, implicitly treating the network as if every bank lent to every other bank in proportion to its total activity. This assumption yields a fully dense reconstructed matrix, but one that systematically underestimates concentration and overestimates diversification: in reality, interbank lending is relationship-based and concentrated among a small set of counterparties. The reconstruction error $E = W - W_0$ is may therefore be dense. It affects every entry of the matrix, and is potentially correlated with the true bilateral exposures, since banks with large total lending are disproportionately misrepresented. Stress-testing exercises that rely on the reconstructed network underestimate contagion risk precisely because the entropy assumption smooths away the concentrated exposures that drive cascading failure.

\subsubsection*{(iv) Recall error and endogeneity in reported network links}

Even when network data are collected via direct reports rather than algorithmic reconstruction, the observed adjacency matrix is subject to recall error, misunderstanding, and measurement timing mismatches. \citet{Lewbeletal2024EJ} study economic network models in which some reported binary links are misclassified: true connections are recorded as absent (false negatives) and absent connections are recorded as present (false positives). They establish that this misclassification introduces new sources of endogeneity beyond the standard simultaneity problem in spatial models: the mismeasured adjacency matrix contaminates the spatial lag $WY$, creating correlation between regressors and structural errors, and invalidating standard instrumental variables based on powers of $W$. The result is that conventional Two-Stage Least Squares (2SLS) estimators that ignore misclassification are inconsistent even when the fraction of misclassified links is small relative to $n$, because in a dense network a small misclassification rate still corresponds to a large absolute number of erroneous entries. Panel data settings compound this problem: if the network is observed at lower frequency than the outcome variable, links created or dissolved between survey waves appear as persistent misclassifications correlated with individual fixed effects that cannot be differenced away \citep{lewbel2024estimating}.

The four mechanisms above are heterogeneous in origin but share the 
three features highlighted at the end of Section~\ref{sec:densemeasurement}: 
the measurement error is dense, correlated across entries through 
shared sources (aggregation, tariff incentives, reconstruction 
algorithm, or recall bias), and asymptotically non-negligible. These 
properties together explain why classical errors-in-variables 
corrections may be insufficient: they presume sparse or vanishing 
noise that can be handled by instrumental variable (IV) or 
measurement-error bias corrections, whereas the dense, correlated 
errors documented here require a different approach. Imposing a 
regularization on $W_0$ (as a low-rank matrix, a sparse matrix, 
or their combination) provides the identifying restrictions needed 
to disentangle the systematic signal from the noise, motivating the 
framework developed in the remainder of the paper.

\subsection{Examples of network structures}\label{set:Iden}

To build intuition, we describe four common economic network structures and show how each fits into the low-rank, sparse, or low-rank plus sparse framework.
For simplicity, throughout this subsection, we assume 
the network is observed without measurement error, so $W = W_0$. %let $E = \bm{0}_{n \times n}$, so that the network is observed without contamination; i.e., $W = W_0$. 
\footnote{Note that the link matrix or adjacency matrix could be used to describe the connection strength between individuals. If it does not satisfy the bounded row sum assumption, it could be further scaled by its $\ell_1$ norm.}  %$$n^{-1}$. } %Note that the link  To ensure the bounded row sum assumption, we need to further scale the $W$ in later sections.} 
%Moreover, through simulations, we investigate the relationship between network structure and a low-rank plus sparse decomposition. We demonstrate that the low-rank and sparse components capture distinct aspects of network topology: for example, the low-rank component reveals "common" or "pervasive" connection patterns, while the sparse component highlights "sporadic" or "outlier" connections. 

\subsubsection{Complete network} %{\color{red}THIS VIOLATES THE COLINEARITY ASSUMPTION} 

We begin by considering one of the simplest examples: a fully connected network. In this setup, the link matrix $W_0$ is an $n \times n$ matrix of nonzero values, where it is common to set $W_{0, ii} = 0$ to avoid self-loops. 
%{\color{red}To ensure that the sum of the weights in each row is strictly less than one, we normalize the spatial weight matrix by setting $W = n^{-1}G$.} 
The link matrix corresponding to a complete network generated by nonzero constants $\bm{c}=(c_1, \cdots, c_n)\T$ is defined as 
\begin{equation*}
    W_0 = \begin{bmatrix}
        0 & c_{1}c_2 & c_{1}c_3 &\cdots & c_{1}c_n \\
        c_{2}c_1 & 0 & c_{2}c_3 &\cdots & c_2c_n \\
       % c & 1 & 0 &\cdots & 1 \\
        \vdots & \vdots & \vdots & \ddots & \vdots \\
        c_nc_1 & c_nc_2 & c_nc_3 & \cdots & 0
    \end{bmatrix} ,
\end{equation*}
{where $c_i$'s are positive constants within the bounded support $[c_{\min}, c_{\max}]$, and they could be viewed as individual characteristics that vary across $i$.  Let $L_0= \bm{c} \bm{c}\T$, which is a symmetric low-rank matrix with rank $1$. Its singular value is $\|\bm{c}\|_2^2$, and the associated singular vector  is  %proportional to 
$\bm{c} /\|\bm{c}\|_2$. Then we could also express $W_0=L_0+S_0=L^*_0$, where the sparse component $S_0$ is a diagonal matrix whose $(i,i)$th element is $- c_{i}^2$ to guarantee no self-loops in $W_0$. Such a network could be used to capture the common interaction structure driven by individuals or agents who interact through a common market  or platform. }

%Suppose $c_i$'s are  constants within the bounded positive support $[c_{\min}, c_{\max}]$. We define the weight matrix  as  $ W_0= \bm{c} \bm{c}\T + S_0$, where the sparse component $S_0$ is a diagonal matrix whose $(i,i)$th element is $- c_{i}^2$. In this example, $L_0 = \bm{c} \bm{c}\T$ is a symmetric low-rank matrix with rank $1$. Its singular value is $\|\bm{c}\|_2^2$, and the associated singular vector  is $\bm{c} /\|\bm{c}\|_2$.
% The low-rank component captures the common interaction structure driven by individual characteristics $c_i$, while $S_0$ removes the self-loops. This arises, for example, in settings where all agents interact through a common market or platform.
%{\color{red} WE NEED TO DEFINE THE RELEVANT OBJECT FIRST.....If one concerns the decomposition of the rank and sparse component. Note that $\operatorname{deg}_{\max }(S_0)=1$ and $\mbox{inc}(L_0)=\sqrt{\max_{1\leq i\leq n} c_i^2/\|\bm{c}\|_2^2}\leq c_{\max}/(\sqrt{n}c_{\min})$.  Hence, a sufficient condition to ensure the identification of $L_0$ and $S_0$, i.e.  $\mbox{inc}(L_0)\operatorname{deg}_{\max }(S_0)<1/16$, is $c_{\max}/c_{\min}\leq \sqrt{n}/16$, see Assumption \ref{Assump:W_structure}. }

\subsubsection{Low-rank network}

Motivated by the structure of the latent factor, we consider the symmetric network in which the strength of the connection is determined by $W_{0,ij}=a_i+z_iz_j$ if $i\neq j$, { where $a_{i}$'s are individual characteristics and $z_i$'s are some latent variables. \footnote{In a more general term, the low rank component can be understood as network effects related to a principal agent, see \citet{Galeotti2020} . The principal agent is formed by connections with respect to many nodes within the network, and the effects are dense.}
Denote $\bm a$ as a vector consisting of $a_i$. In this example, $W_0=L_0+S_0=L_0^*$, where $L_0={\bm a} {\bf 1}_n^\top + {\bm z} {\bm z}^\top$ is a matrix of rank no more than 2 with ${\bm a}=(a_1, \cdots, a_n)$ and ${\bm z}=(z_1, \cdots, z_n)$, and $S_0$ is a diagonal matrix whose $(i,i)$th element is $-L_{0,ii}$ to remove self-loops. Networks of this kind arise in production networks and financial systems, where common shocks or institutional linkages generate pervasive, dense interactions.}

%For simplicity, we assume ${\bm z}$ is orthogonal to both ${\bm a}$ and ${\bm 1}_n$, and hence the left singular vectors are proportional to ${\bm a}$ and ${\bm z}$, while the right singular vectors are proportional to ${\bf 1}_n$ and ${\bm z}$. To ensure that $W$ is not self-looped, $S_0$ is set as a diagonal matrix whose diagonal elements are opposite to those of $L_0$. 
%Since $\operatorname{deg}_{\max }(S_0)=1$, the identification assumption to separate the low-rank and sparse components could be satisfied if $\mbox{inc}(L_0)=\sqrt{\frac{\max_{1\leq i\leq n } z_i^2}{2\|z\|_2^2}+\frac{\max_{1\leq i\leq n } a_i^2}{2\|a\|_2^2}}\leq 1/16$.  

%To construct the spatial weight matrix, we can scale by $n^{-1}$ so that $W$ has a bounded row sum. Correspondingly,  
%\begin{equation*}
%    W = \begin{bmatrix}
%        0 & 1/n & 1/n & 1/n & \cdots & 1/n & 1/n \\
%        1/n & 0 & 1/n & 0 & \cdots & 1/n & 1/n \\
%        1/n & 1/n & 0 & 1/n & \cdots & 1/n & 1/n \\
%        \vdots & \vdots & \vdots & \vdots & \ddots & \vdots & \vdots \\
%        1/n & 1/n & 1/n & 1/n & \cdots & 1/n& 0 \\
%    \end{bmatrix}.
%\end{equation*}

\subsubsection{Group network}
Another example of interest is the pure group network, where there are two or more communities, potentially of distinct sizes to avoid  eigenvalue multiplicity in the link matrix. Moreover, individuals within the same community are assumed to be connected with each other, while individuals do not connect to others outside their community. %Assuming that the sizes of each community are different  to avoid having the same singular values in the link matrix. 
For example, suppose $n=20$, and $6$ individuals belong to community $1$ and the rest belong to community $2$. We define  
\begin{equation*}
    W_0 = \begin{bmatrix}
       L_1 & \bm{0}_{6 \times 14} \\
        \bm{0}_{14 \times 6} & L_2 %\bm{1}_{9} \bm{1}_{9}\T - I_9
    \end{bmatrix}+S_0=L_0^*\, ,
\end{equation*}
where $L_1=(c_1, c_2, \cdots, c_6)^{\top}(c_1, c_2, \cdots c_6)$ and $L_2=(c_7, \cdots, c_{20})^{\top}(c_7, \cdots, c_{20})$, %are matrices generated by $(c_1, c_2, \cdots, c_6)^{\top}(c_1, c_2, \cdots c_6)$ and $(c_7, \cdots, c_{20})^{\top}(c_7, \cdots, c_{20})$ respectively, 
and $S_0$ is a diagonal matrix such that $S_{0,ii}=-c_i^2$. %that guarantees that $W$ has no self-loop}. %Similarly, we can set $W_0$ by keeping all the off-diagonal elements of $W$ and setting the diagonal as 0. % = W-S_0$, where $S_0$ is the diagonal matrix to ensure that the diagonal elements of $W$ are $0$. 
Note that $W_0$ still admits the low-rank and sparse decomposition. The low-rank component admits a block structure and its rank equals %has a rank of $2$, which is 
the number of communities. One of its two singular vectors %it has two singular vectors, one of which %has 6 zero elements and the rest are 
is proportional to $(0, \cdots, 0, c_7, \cdots, c_{20})$, while the other is %has first 6 elements 
proportional to $(c_1, \cdots, c_6, 0, \cdots, 0)$.  % and the rest are 0. It still admits the low-rank and sparse decomposition, where the rank $r=2$ represents the number of communities. %, with $\sigma_1(W)$ bounded away from 0 and $\infty$ and $\|u_i(W)\|_{\max}=\|v_i(W)\|_{\max} = O(n^{-1/2})$ for $1 \leq i \leq r$. 
See Figure \ref{fig:group} for the network graph and the heatmap of $W_0$. %In this example, $\operatorname{deg}_{\max }(S_0)=1$. 
{This setting captures trade blocs, regional banking clusters, and sectoral linkages, and the rank of $L_0$ equals the number of communities.  We could also extend it to allow sporadic between-group connections, which are described by few nonzero off-diagonal elements in $S_0$. Then the general model could be written as $W_0=L_0+S_0=L_0^*+S_0^*$, where the low-rank component captures the within-group connections and the sparse component represents the between-group connections.}  % by introducing a non-diagonal elements $S_0$, i.e. $W_0=L_0+S_0=L_0^*+S_0^*$, where $S_0$ and $S_0^*$ have few nonzero off-diagonal elements which are described by the off-diagonal elements in $S_0$. could be these are absorbed by the sparse component $S_0$.

\subsubsection{Dominant units}
\label{Subsec:Dominant_Units_Simulation}
An interesting example comes from considering networks containing a set of dominant units also known as key players, see \citet{CalvoArmengol2009, Zenou2016}. One definition of dominant units used in the network literature identifies  them as the set of units whose actions have large and persistent  effects on the other agents.\footnote{We follow the definition of dominant units as in \citet{Pesaran2020, Pesaran2021, Lee2022}. } %We use this definition given the simplicity of the resulting network generating process, expanding on their existing simulation schemes in the next subsection.} 
One way to measure  the effect of the unit $j$ on others is to use its weighted out-degree $d_j$, %which is 
defined simply by the $j$-th column sum of $W_0$:
 $d_j = \sum_{i = 1}^n W_{0,ij}$.
We then call the unit $j$ a dominant unit if its weighted out-degree is large or grows as the sample size does. %like $n^{\delta_j}$ for some $0 < \delta_j < 1$. Formally, this means $d_j = O(n^{\delta_j})$. The larger the exponent $\delta_j$ is, the more pervasive the effect of actions by the dominant unit $j$ on others is. 
For example, suppose $n=20$ and there is only one dominant unit, namely the first individual, who is connected to $13$ individuals. %, i.e., unit 1 is a star. 
We assume that each nondominant unit is also connected to its
immediate neighbors. Then the link matrix $W_0$ is specified as below: 
\begin{equation*}
    W_0 = \begin{bmatrix}
            0 & w_{1,2} & 0 & \cdots & \cdots & \cdots &\cdots&\cdots& 0 \\
            w_{2,1} & 0 & w_{2,3} & \cdots & \cdots &\cdots&\cdots& \cdots & 0 \\
            \vdots & \vdots & \vdots & \vdots & \ddots & \vdots & \vdots &\vdots &\vdots\\
            w_{14,1} & 0 & \cdots  & w_{14,13} & 0 & w_{14,15} & \cdots & 0 & 0 \\
            0 & \cdots & \cdots & 0 & w_{15,14} & 0 & w_{15,16} & \cdots  & 0 \\
            \vdots & \vdots & \vdots & \vdots & \vdots & \vdots & \vdots &\vdots &\vdots\\
            0 & 0 & 0 & 0 & \cdots &\cdots&\cdots& w_{20,19} & 0 \\
        \end{bmatrix} \, .
\end{equation*}
%\begin{equation*}
%    W = \begin{bmatrix}
%        0 & 1 & 0 & 0 & \cdots & 0 & 0 \\
%        c_{21} & 0 & 1 & 0 & \cdots & 0 & 0 \\
%        c_{31} & 1 & 0 & 1 & \cdots & 0 & 0 \\
%        \vdots & \vdots & \vdots & \vdots & \ddots & \vdots & \vdots \\
%        c_{n1} & 0 & 0 & 0 & \cdots & 1 & 0 \\
%    \end{bmatrix}
%\end{equation*}
See Figure \ref{fig:dominant} for the network graph and the heat map of $W_0$. 

This network is naturally sparse. The number of nonzero entries in
each row is bounded, although the column corresponding to the dominant
unit may contain a growing number of nonzero entries. While this
column can be represented as a sparse rank-one matrix, the full
adjacency matrix need not be low-rank because it also contains links
among neighboring units. We therefore treat $W_0$ as a sparse matrix.
Appendix~\ref{Appendix:iden} provides further discussion of alternative
low-rank-plus-sparse representations and their identification.

%Though it seems that one could write $W_0 = {\bf w}_1 {\bm e}\T_1 + W_{r}$, where $W_1$ is the first column of $W_0$,  ${\bm e}_1$ is the first column of the identity matrix $I_{n}$, and $W_{r}$ is the $n\times n$ partitioned matrix satisfying
% \begin{equation*}
% 	\begin{bmatrix} 0 & {\bm s}\T_{1} \\
% 		            \bm{0}_{(n-1)\times 1} & {\bm S}_{r} \end{bmatrix} \, 
% \end{equation*}
% with ${\bm s}_{1} = [w_{1,2},  \bm{0}_{1\times (n-2)}]\T $, and the $(i,j)$th element in ${\bm S_{r}}$ is identical to the $(i+1, j+1)$th element of $W_0$. In such a decomposition, the right singular vector of the low-rank part is $[1, \bm{0}_{1\times (n-1)}]\T$ and its incoherence index is $\mbox{inc}({\bm w_{1}} {\bm e}\T_1)=1$. Since $\operatorname{deg}_{\max }(W_{r})=2$, we have $\mbox{inc}({\bm w_{1}} {\bm e}\T_1)\operatorname{deg}_{\max }(W_{r})>1/16$. Therefore, we have to treat $W_0$ as a pure sparse matrix to avoid the non-identification issue. In such a case,   $L_0 = \mathbf{0}_{n\times n}$ and $W_0=S_0$. Since $\mbox{inc}(L_0)=0$ and $\operatorname{deg}_{\max }(S_0)=\operatorname{deg}_{\max }(W_0)=3$, Assumption \ref{Assump:W_structure} holds. See Figure \ref{fig:dominant} for the network graph and the heat map of $W_0$. 

\begin{figure}[htbp]
\centering
\caption{Graphical representations of simulated adjacency matrices associated with different groups}
\begin{subfigure}{0.4\textwidth}
	\includegraphics[width = \textwidth]{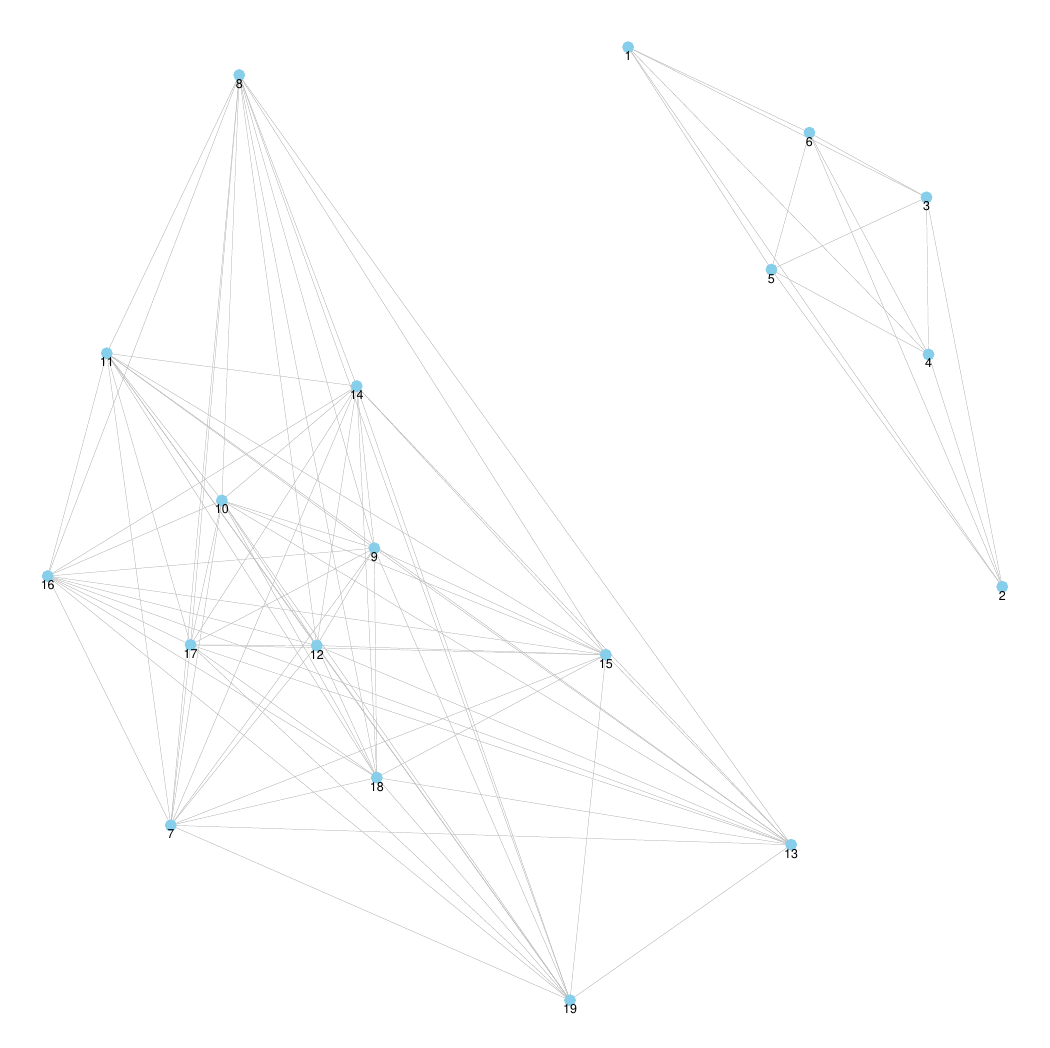}
%	\caption{Graph representation}
%	\label{fig:block_graph}
\end{subfigure}
\hfill
\begin{subfigure}{0.4\textwidth}
\includegraphics[width = \textwidth]{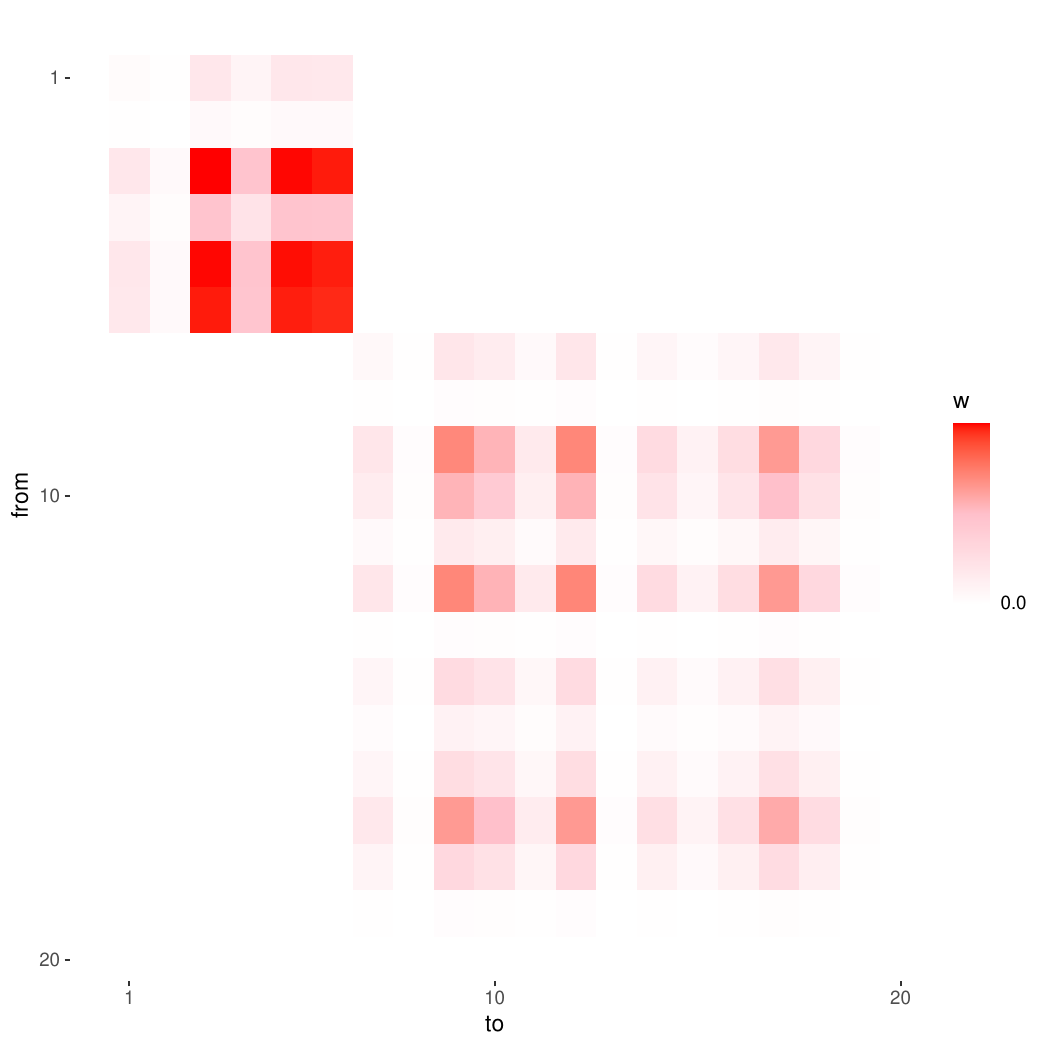}
%	\caption{Standard adjacency}
%	\label{fig:block_adjacency_standard}
\end{subfigure}
\hfill
\caption*{\footnotesize Notes: Examples for the weight matrix formed by individuals who belong to different groups. The left one is the network graph and the right one is the heat map of the adjacency matrix. There are two groups, one of which has $6$ members and the other has $14$ members. Only members within the same group have connections, and the connection strength is randomly simulated. } % from between individual $i$ and $j$ equals $d_gu_{g,i}u_{g,j}/\sqrt{\sum_i u_{g,i}^2\sum_j u_{g,j}^2}$, where $d_g$ is the group specific constant with $d_1=1$ and $d_2=0.9$, and $u_{g,i}$'s are some latent state individual characteristic generated by the absolute value of independent standard normal random variables}. }
\label{fig:group}
\end{figure}

\begin{figure}[htbp]
\centering
\caption{Graphical representations of the simulated adjacency matrix with dominant units. }
\begin{subfigure}{0.4\textwidth}
\includegraphics[width = \textwidth]{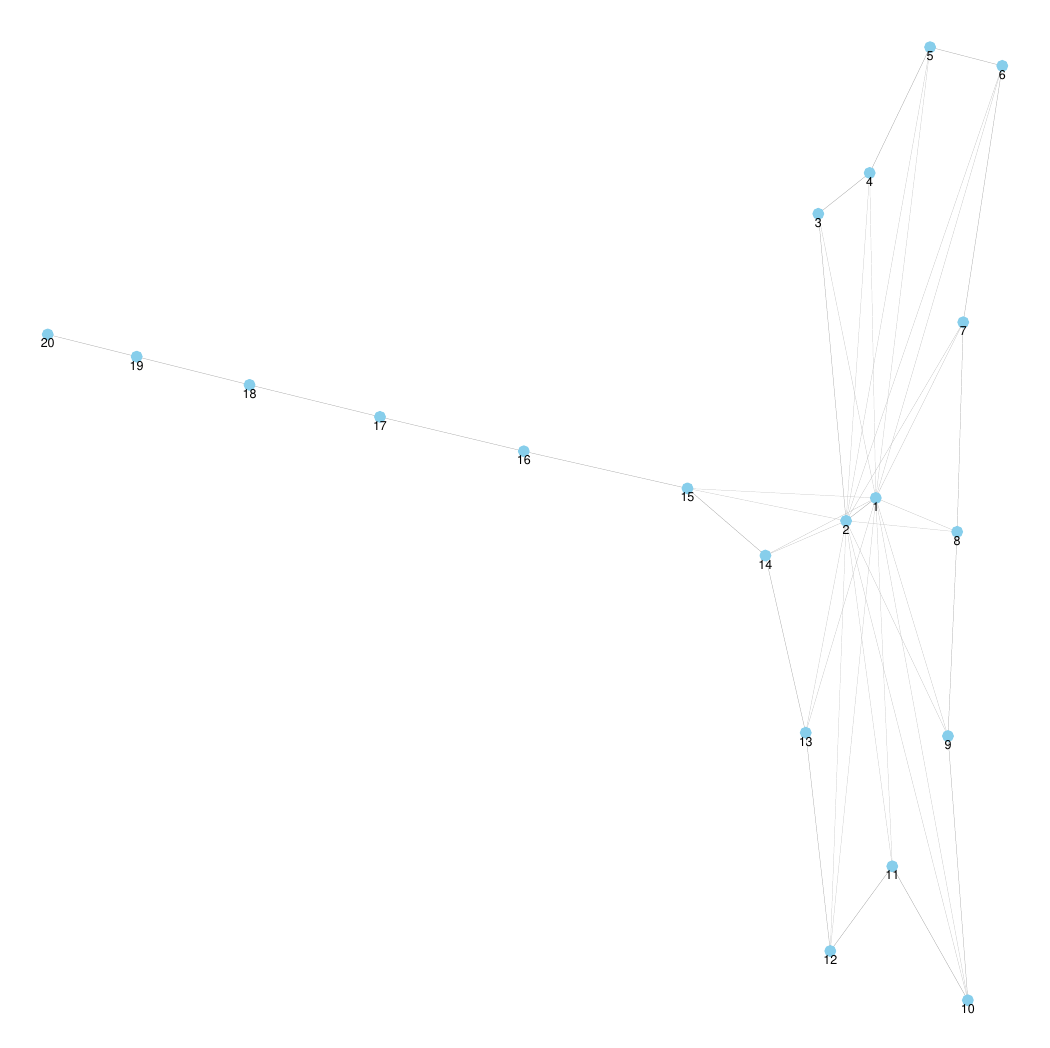}
%	\caption{Standard adjacency}
%	\label{fig:block_adjacency_standard}
\end{subfigure}
\hfill
\begin{subfigure}{0.4\textwidth}
\includegraphics[width = \textwidth]{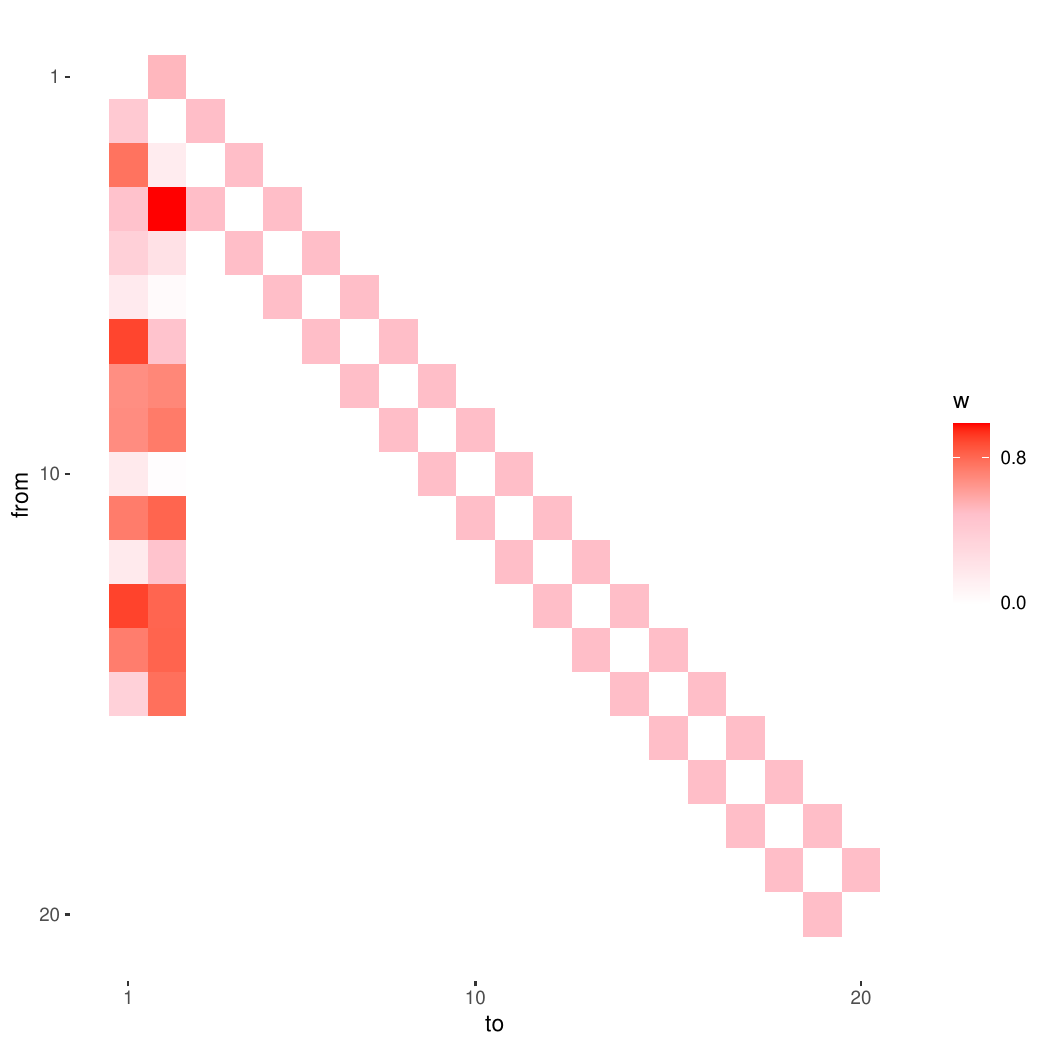}
%	\caption{Standard adjacency}
%	\label{fig:block_adjacency_standard}
\end{subfigure}
\hfill
\caption*{\footnotesize Notes: Examples for the weight matrix with dominant units. The left one is the network graph and the right one is the heat map of the adjacency matrix. There are $20$ individuals and each individual connects to its neighbors with strength set by $0.25$. The first one is a key player that influences 13 other individuals, and the influence strengths are generated by $\text{Uniform}(0, 1)$ distributed random variables. }
\label{fig:dominant}
\end{figure}

{\subsection{Empirical illustration: the US Input-Output Table}
\label{empirical}
To demonstrate that measurement error in dense economic networks has economically
meaningful consequences, and that the $L+S$ decomposition provides an effective device for
recovering the underlying true network structure, we revisit the 2002 US Input-Output
table from \cite{graham2020econometric}, compiled by the Bureau of Economic Analysis
and studied by \cite{Acemoglu2012}. It contains an $n \times n$ network of $n=417$
sectors (after dropping rows and columns with zero direct input requirements), where
entry $(i,j)$ indicates the share of sector $i$'s output used in sector's $j$ production.
This network is dense by construction(nearly all sectors interact) making it a
canonical example where sparsity cannot be assumed and measurement error is pervasive.
We treat the observed $W$ as a noisy version of the true network $W_0$ and apply our
Algorithm~\ref{Algo:CD_LRSED} (Appendix~\ref{algo}) to recover $W_0$. %,  with automatically selected penalty parameters (as in Section~\ref{Sec:Simulation}) to recover $W_0$.

To assess the quality of the denoising, we compare three candidate representations of
$W_0$, a purely sparse estimate $\widehat{S}$, a purely low-rank estimate $\widehat{L}$,
and the combined version $\widehat{W} = \widehat{L}+\widehat{S}$. The purely
sparse estimate satisfies $\|\widehat{S}-W\|_{\max}=0.025$ and
$\|\widehat{S}-W\|_2=0.470$, indicating that the tiny elements might have non-negligible aggregation effects. %: although few elements are dropped, their aggregate correlation is non-negligible. 
The purely low-rank estimate satisfies
$\|\widehat{L}-W\|_{\max}=0.999$ and $\|\widehat{L}-W\|_2=1.214$, indicating that the low-rank
structure alone might fail to capture occasionally large idiosyncratic links. The combined
estimate achieves $\|\widehat{W}-W\|_{\max}=0.024$ and $\|\widehat{W}-W\|_2=0.183$,
materially outperforming either alternative. This superiority is further confirmed by
eigenvector alignment. The inner product between the leading eigenvectors of $W$ and
$\widehat{W}$ is $0.981$, compared with $0.526$ for $\widehat{S}$ and $0.222$ for
$\widehat{L}$, indicating that  the low-rank and sparse decomposition might best preserve the spectral structure
of the observed network. From Figure~\ref{heatmap}, we could see that %We further create the heatmaps of the low-rank obtain $\widehat{L}$ and $\widehat{S}$ by removing all nonzero diagonal elements. 
the sparse component $\widehat{S}$ captures large idiosyncratic bilateral
links, while the low-rank component $\widehat{L}$ exhibits a distinctive stripe
pattern, revealing that pervasive but individually weak linkages
aggregate into systematic structure that a purely sparse representation would miss.

To quantify the econometric consequences of measurement error, we compare the
Leontief diffusion matrices $(I_n - \lambda \widehat{W})^{-1}$,
$(I_n - \lambda W)^{-1}$, and $(I_n - \lambda \widehat{S})^{-1}$ following
\cite{acemoglu2016networks} with $\lambda = 0.7$. Leaving measurement errors uncorrected
produces a $9\%$ deviation in the spectral norm relative to our estimators:
\[
\frac{\|(I_n-0.7\widehat{W})^{-1}-(I_n-0.7W)^{-1}\|_2}
     {\|(I_n-0.7\widehat{W})^{-1}\|_2}=9\%.
\]
Misspecifying the structure of measurement errors by treating it as a purely sparse %and
discarding the low-rank 
component compounds the distortion substantially, producing a
$68\%$ deviation:
\[
\frac{\|(I_n-0.7\widehat{W})^{-1}-(I_n-0.7\widehat{S})^{-1}\|_2}
     {\|(I_n-0.7\widehat{W})^{-1}\|_2}=68\%.
\]
These results demonstrate that measurement error in dense networks is not merely a theoretical concern: it materially distorts estimated propagation channels. Hence it is essential for reliable inference to correctly specify the error structure accounting for both its sparse idiosyncratic component and its low-rank systematic component. Figure~\ref{heatmap2} illustrates this point visually: the stripe pattern present in
the full diffusion matrix $(I_n - 0.7\widehat{W})^{-1}$ is eliminated when the
low-rank component is omitted, confirming that misspecification of network structure
biases spillover estimates in economically meaningful ways.

\begin{figure}[htbp]
\centering
\begin{subfigure}{0.3\textwidth}
    \includegraphics[width=\textwidth]{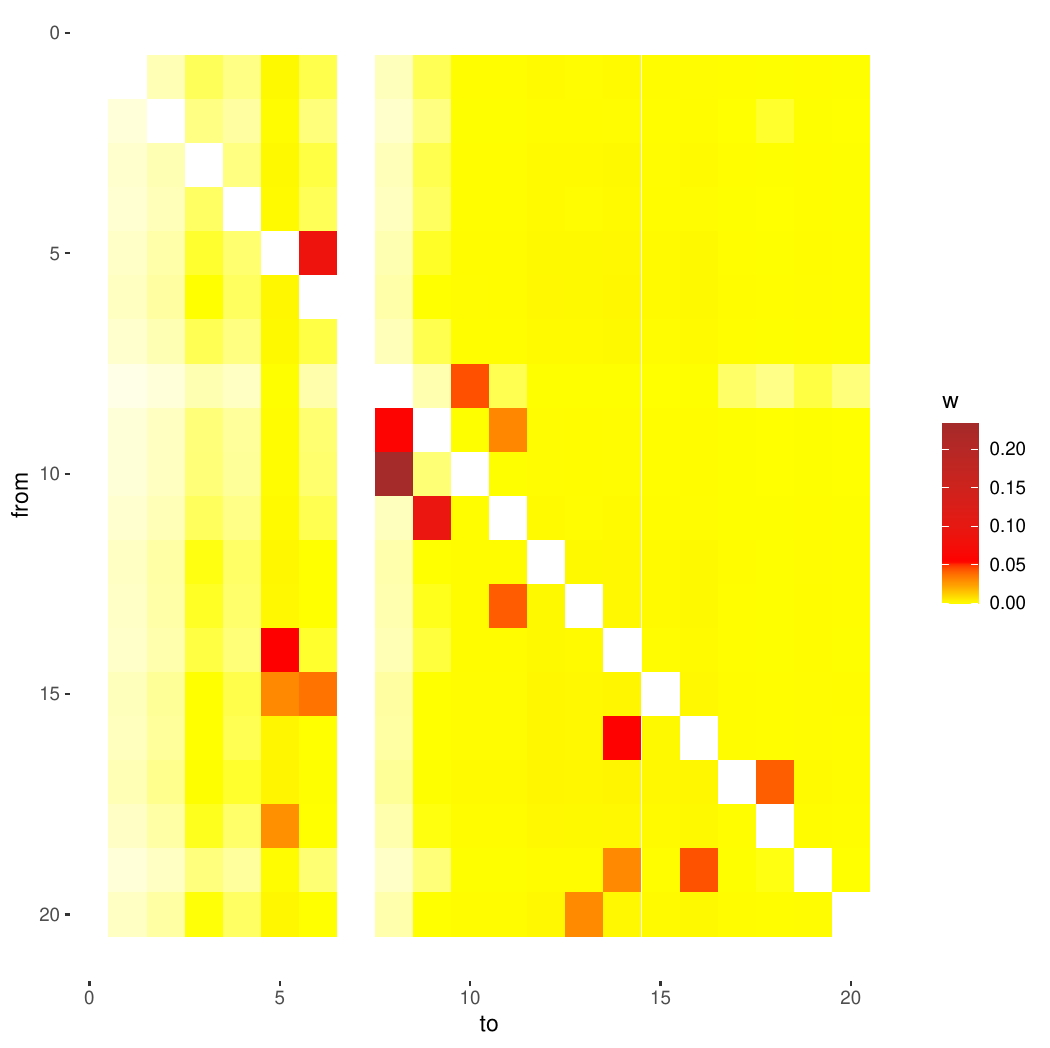}
\end{subfigure}
\hfill
\begin{subfigure}{0.3\textwidth}
    \includegraphics[width=\textwidth]{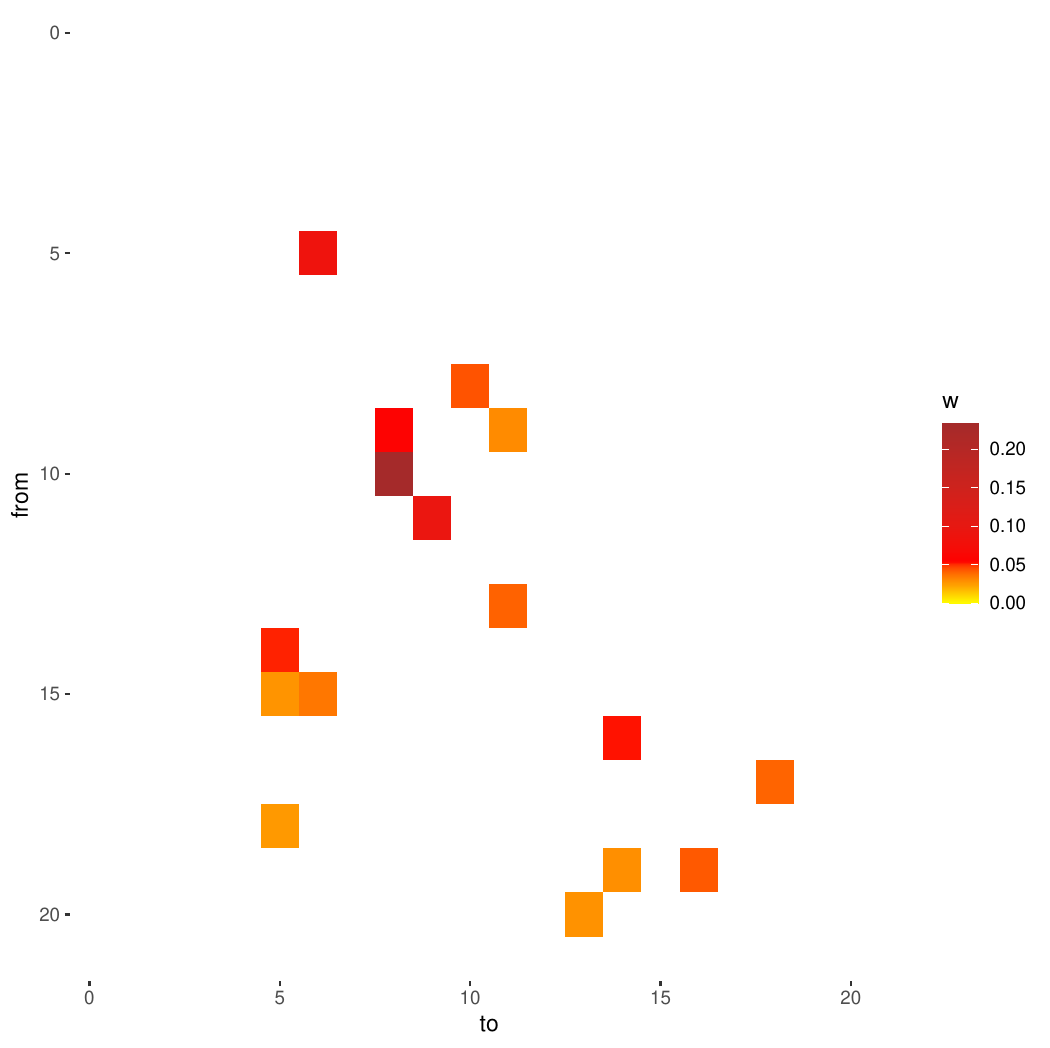}
\end{subfigure}
\hfill
\begin{subfigure}{0.3\textwidth}    \includegraphics[width=\textwidth]{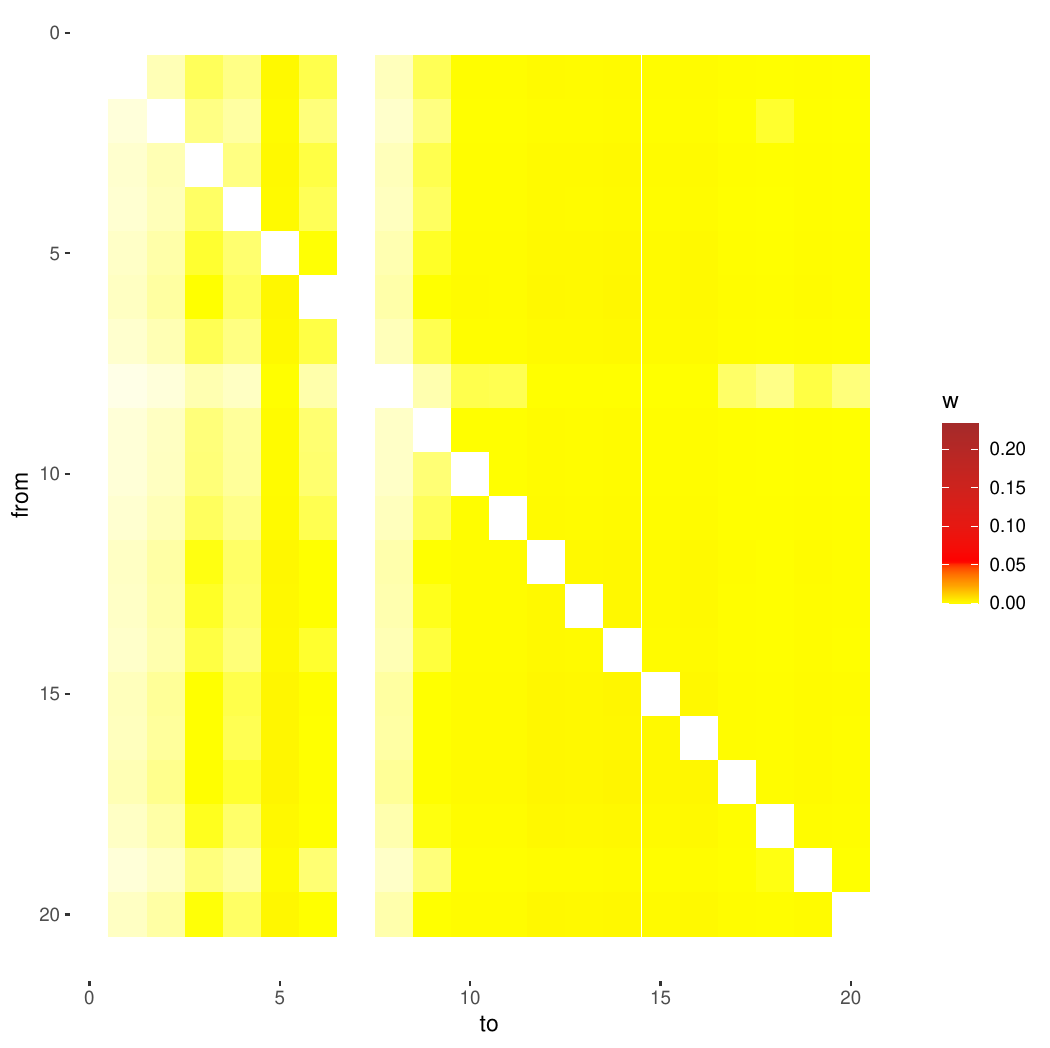}
\end{subfigure}
\caption{\footnotesize Heatmaps of 20 sectors from the 2002 Input-Output Table $W$ (left),
estimated sparse component $\widehat{S}$ (centre), and estimated low-rank component
$\widehat{L}$ (right). The stripe pattern in $\widehat{L}$ reveals pervasive
systematic structure missed by a purely sparse
representation.\label{heatmap}}
\end{figure}

\begin{figure}[htbp]
\centering
\begin{subfigure}{0.3\textwidth}
    \includegraphics[width=\textwidth]{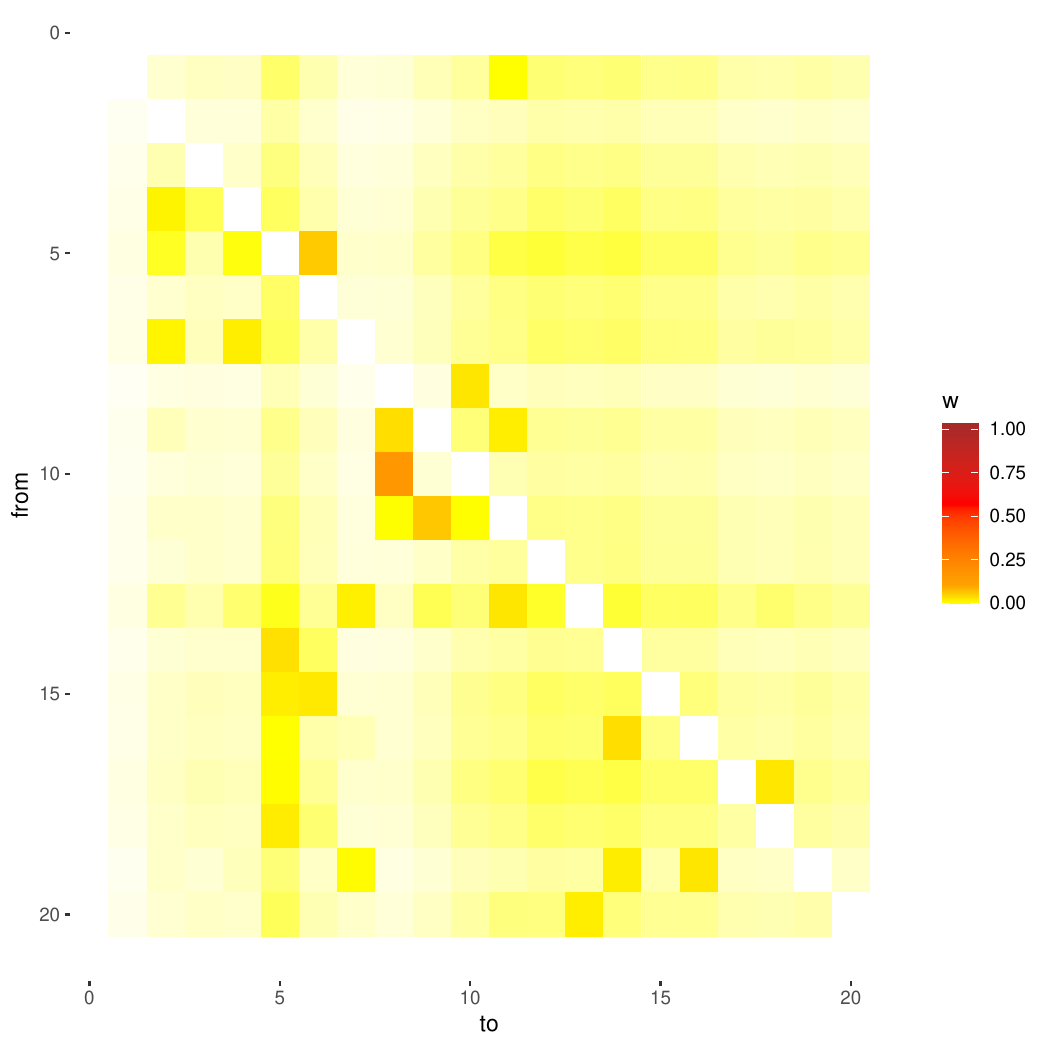}
\end{subfigure}
\hfill
\begin{subfigure}{0.3\textwidth}
    \includegraphics[width=\textwidth]{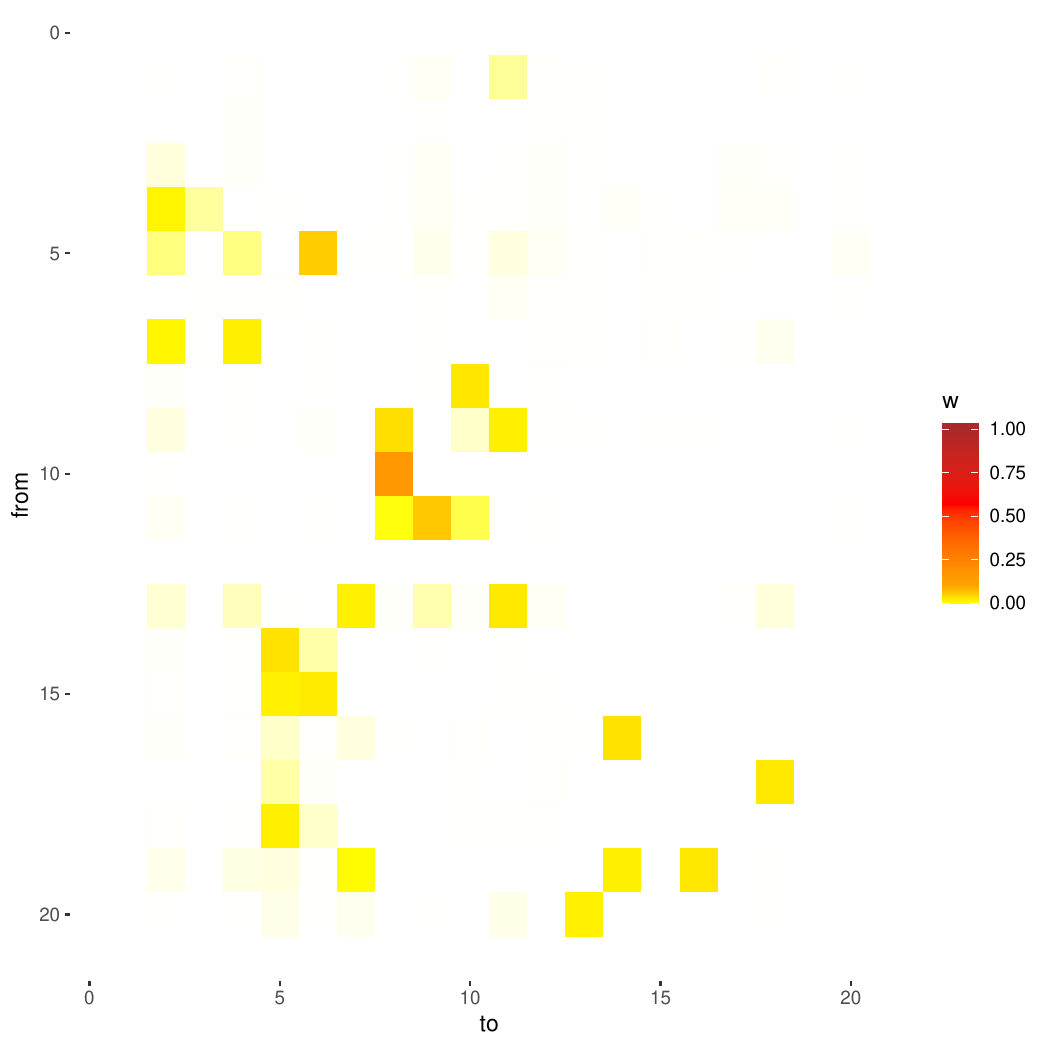}
\end{subfigure}
\hfill
\begin{subfigure}{0.3\textwidth}
    \includegraphics[width=\textwidth]{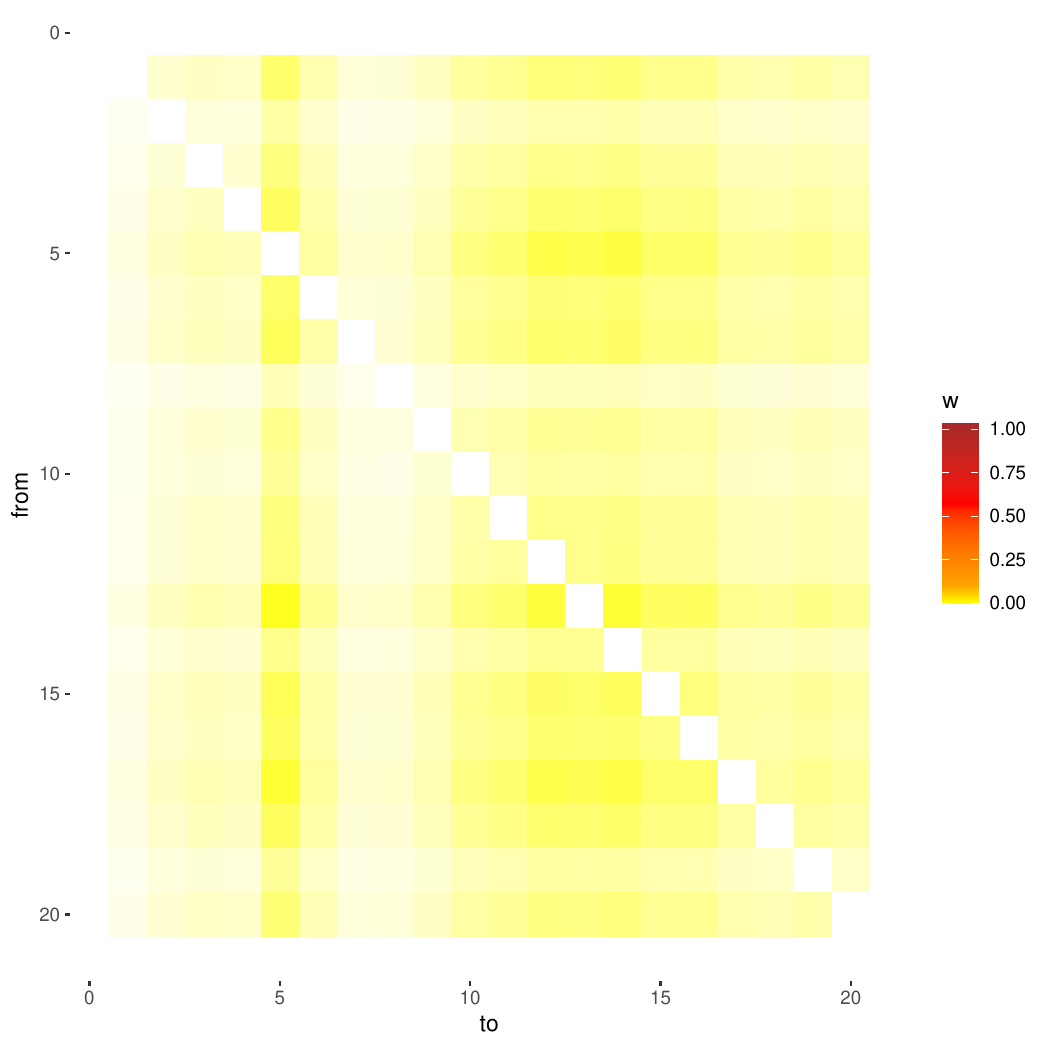}
\end{subfigure}
\caption{\footnotesize Heatmaps of 20 sectors from the diffusion matrix
$(I_n-0.7\widehat{W})^{-1}$ (left), $(I_n-0.7\widehat{S})^{-1}$ (centre), and
the difference (right). Omitting the low-rank component eliminates the stripe pattern
and distorts estimated spillover propagation by $68\%$.\label{heatmap2}}
\end{figure}
}

\section{Methodology} \label{Sec:Methodology}

%We discuss three structural settings for the latent network $W_0$ that admit consistent denoising: i) a sparse network, ii) a dense low-rank network, and iii) a dense network with both low-rank and sparse components. In each case, the structural assumption serves as a regularization constraint that makes it possible to separate $W_0$ from the measurement error $E$. Our primary focus is on the dense cases (ii) and (iii), which are most relevant for macroeconomic and financial network applications.

We present the general model in which the latent network $W_0 = L_0 + S_0$
admits both a low-rank and a sparse component, with the purely low-rank and purely sparse settings arising as special cases.
We emphasize that this decomposition need not be unique or economically interpretable on its own; rather, the joint regularization imposed on $L_0$ and $S_0$ provides sufficient structure to separate $W_0$ from the noise $E$.  We provide the modeling definitions and estimation procedures we propose to tackle this problem. We offer additional assumptions and asymptotic results in Section~\ref{Sec:theory}.
    \subsection{Estimation}\label{Sec:Estimators}
	We formally describe here the social/spatial interaction model of interest. We consider a panel data scenario, but %will generally 
assume the true underlying network does not vary with time. Denote data points as $\{Y_t, X_t\}_{t=1, \cdots, T}$, where $Y_t$ is the $n \times 1$ outcome vector and $X_t$ is the $n \times K$ design matrix of exogenous covariates. We assume that the true model is generated by the unobserved $n \times n$ adjacency matrix $W_0$ encoding the true connection, and that we  have a noisy or contaminated observation, such that we can write
\begin{eqnarray}
Y_t &=& \lambda_0 W_0 Y_t + \alpha + \iota_t {\bf 1}_{n\times 1} + X_t\beta_0 + W_0 X_t \gamma_0 + \varepsilon_t \nonumber \\
W   &=& W_0 + E,  \label{maineq}
\end{eqnarray}
where $\lambda_0$ represents the scalar (social / spatial) interaction or spillover effect, $\beta_0 \in \mathbb{R}^K$ are standard slope coefficients, $\gamma_0 \in \mathbb{R}^K$ are contextual effects (spillover effects of the observable characteristics $X$), $\alpha = (\alpha_1, \ldots, \alpha_n)^\top$ is the vector of individual fixed effects, $\iota_t$ is a time fixed effect at period $t$, ${\bf 1}_{n\times 1}$ is an $n \times 1$ vector of ones, and $\varepsilon_t$ is an $n$-dimensional vector of idiosyncratic disturbances, assumed for now to be independent and identically distributed over units $i$ and time $t$. We assume the covariates $X_t$ to be exogenous to the outcome unobservables and network formation process such that   $\exv[\varepsilon_{it} \mid X_t, W_0X_t, \cdots, W_0^dX_t] = 0$, for a small integer $d>0$, $i = 1,\cdots, n$ and $t = 1, \cdots, T$. Crucially, we allow for the measurement errors to be endogenous in the sense that  $\exv[\varepsilon_{it} \mid E] \neq 0$, %for $i,i',j = 1,\cdots, n$ and $t = 1, \cdots, T$, 
which introduces additional bias into conventional estimates when $W_0$ is replaced by $W$ \citep{boucher2025estimating, hardy2019estimating, Lewbeletal2024EJ}. %It is further assumed (with formal statements found in the next section) that $L_0$ is of {\color{red}a finite?????} rank $r_0 \ll n$ {\color{red}need to define this here????}, $S_0$ is a sparse matrix such that each of its row has finite number of nonzero elements, and $E$ is a stochastic noise term that only affects entries in {\color{red}$\Omega_0 \subset \mathcal{N} \times \mathcal{N}$ (recall $\mathcal{N} = \{1, \ldots, n\}$ is the set of nodes)}.

Collect all parameters of interest into $\theta_0 \coloneqq (\lambda_0, \beta^{\T}_0, \gamma^{\T}_0)^{\T} \in \mathbb{R}^{2K+1}$. When covariates $X_t$ are exogenous and the adjacency matrix $W_0$ is correctly observed,
the literature proposes to select $m$ linearly independent columns 
from $[X_t, W_0 X_t, \ldots, W_0^d X_t]$, for a small positive 
integer $d$, as instruments for the regressors 
$\bar{X}_t(W_0) = [W_0 Y_t, X_t, W_0 X_t]$ in estimating $\theta_0$ 
\citep{Kelejian1998, Lee2003, Lee2007}.

The degree $d$ should be chosen such that the number of retained instruments %, denoted as $m$, 
is at least $2K+1$ to satisfy the order condition, i.e.  $2K+1 \leq m \leq (d+1)K$ (see Assumption \ref{Assump:Instruments}). For an arbitrarily pre-specified $n \times n$ adjacency matrix $M$, let $Z_t(M)$ be the $n \times m$ matrix that collects the linearly independent columns from $[X_t, M X_t, \ldots, M^d X_t]$ and let ${X}_t(W)$ be the $n \times (2K+1)$ matrix that collects the regressors $[W Y_t, X_t, W X_t]$. Note that ${X}_t(W)$ uses the observed adjacency matrix $W$ as opposed to the true $W_0$ showing up in model \eqref{maineq}. 

By defining $\varepsilon_t(\theta, W) \coloneqq Y_t - {X}_t(W) \theta$ and thus $\varepsilon_t =\varepsilon_t(\theta_0, W_0)$ , we obtain the population moment condition  $\exv[Z_t\T(M)\varepsilon_{t}] = \mathbf{0}_{m\times 1}$ for any exogenous $M$.  % or $\exv[X_t\T (W^\ell) \varepsilon_{t}] = 0$ %$\exv[X_t\T (W^\ell) \varepsilon_{t}] = 0$ 
%for powers $\ell \in \{0, 1, \ldots, d\}$.} Stacking these moment conditions across all powers up to degree $d$, the sample counterpart to the moments is given by
Thus we have the empirical moment conditions:
\begin{align}
	\label{Eq:Moment_Conditions}
	\bar{g}_{nT}(\theta, W, M) & \coloneqq \frac{1}{nT} \sum_{t=1}^{T} Z_t(M) \T \varepsilon_t(\theta, W) \, .
\end{align}
The distinction made between the matrix used to construct the regressors ${X}_t(W)$ and the instruments $Z_t(M)$ is necessary to define %two possible versions of 
our supervised estimator below. When $M=W$ such that no distinction is necessary, we simply write $\bar{g}_{nT}(\theta, W)  = \bar{g}_{nT}(\theta, W, W)$. For a given $m \times m$ moment-weighting matrix $\widehat{\Lambda}_{nT}^{-1}$, we define the sample GMM objective function as
\begin{equation*}
	J_{nT}(\theta, W, M; \widehat{\Lambda}_{nT}^{-1}) \coloneqq  (nT)\bar{g}_{nT}(\theta, W, M)\T \widehat{\Lambda}_{nT}^{-1} \, \bar{g}_{nT}(\theta, W, M) .
\end{equation*}

When fixed effects are present, within transformation {\color{red}\footnote{The within transformation eliminates the 
individual effects $\alpha$ by subtracting unit-specific time 
averages: writing $Y_{\cdot} \coloneqq \frac{1}{T}\sum_{t=1}^{T} Y_t$, 
the transformed outcome is $\ddot{Y}_t \coloneqq Y_t - Y_{\cdot}$, and 
$X_t(W)$ and $Z_t(M)$ are transformed analogously. Since $\alpha$ is 
constant over $t$, it is differenced out by this demeaning; the time 
effects $\iota_t$ are removed by the additional cross-sectional 
demeaning discussed in Remark~\ref{rem:twoway_demean}. The moment conditions 
\eqref{Eq:Moment_Conditions} constructed from the demeaned variables 
remain valid.}} is applied 
to $Y_t$, $X_t(W)$, and $Z_t(M)$ before constructing the GMM objective; 
we suppress this transformation in the notation that follows. The 
extension to two-way fixed effects is discussed in 
Remark~\ref{rem:twoway_demean}.

Finally, we stack all time series observations into the $nT$-dimensional outcome vector $Y \coloneqq [Y_1\T, \ldots, Y_T\T]\T$,  the $nT \times (2K+1)$ design matrix ${X}(W) \coloneqq [{X}_1(W)\T, \ldots, {X}_T(W)\T]\T$, and  the $nT \times m$ matrix of instruments ${Z}(M) \coloneqq [{Z}_1(M)\T, \ldots,{Z}_T(M)\T]\T$. Using this notation, % and the parameter space restriction on the parameter space $\Theta \subset \mathbb{R}^{2K+1}$ from Assumption \ref{Assump:Theta}, 
we can express the GMM estimator as
\begin{eqnarray}
	\label{Eq:GMM_estimator}
	 \widehat{\theta}_\text{GMM}(W,M) &&\coloneqq \arg \min_{\theta \in \Theta} J_{nT}(\theta, W, M; \widehat{\Lambda}_{nT}^{-1})  \\&&=\left[{X}(W)\T  Z(M) \widehat{\Lambda}_{nT}^{-1} Z(M)\T  {X}(W)\right]^{-1} {X}(W)\T  Z(M) \widehat{\Lambda}_{nT}^{-1} Z(M)\T  Y. \nonumber
\end{eqnarray}
We define $\widehat{\theta}_\text{GMM} = \widehat{\theta}_\text{GMM}(W,W)$. In practice, we can obtain a two-step GMM estimator by using the weight matrix  $\widehat{\Lambda}_{nT}^{-1}$. {\color{red}\footnote{In practice, we obtain a feasible two-step GMM estimator by first computing
$\widetilde{\theta}$ using the identity weighting matrix and then using
$\widehat{\Lambda}_{nT}^{-1}$, where
\[
\widehat{\Lambda}_{nT}
=
\frac{1}{nT}\sum_{t=1}^T
Z_t(W)^\top
\operatorname{diag}\!\left(
\widehat{\varepsilon}_{1t}^{\,2},\ldots,
\widehat{\varepsilon}_{nt}^{\,2}
\right)
Z_t(W),
\qquad
\widehat{\varepsilon}_t
=
Y_t-X_t(W)\widetilde{\theta}.
\] } }representing an initial consistent estimator obtained by using the identity $\widehat{\Lambda}_{nT}^{-1} = I_m$ as the first-step GMM weight matrix.

%\footnote{One can additionally handle dependent and non-identically distributed data by setting $\hat A$ as the standard Heteroskedasticity and Autocorrelation Consistent (HAC) estimator. }

The rates of our estimators along with all formal assumptions and theoretical results, are presented in Section \ref{Sec:theory}. If $W_0$ was observed without error, \cite{Kelejian1998} and \cite{Lee2007}, for example, guarantee that the standard GMM estimator $\widehat{\theta}_\text{GMM}$ is consistent for $\theta_0$ in \eqref{maineq}. However, when measurement error is present, small elementwise noises in $E$ can accumulate, leading to inconsistent estimation of the regression coefficients and spillover effects.

%Due to the curse of dimensionality, it is challenging to estimate $W_0$ directly when the dimension $n$ is even moderately large. 

Therefore, when measurement error exists, we need a way to correct for this additional noise in our estimation. We make use of the information afforded by the low-rank plus sparse plus error decomposition as in \eqref{Eq:Std_Matrix_Structure1} to obtain an estimate $\widehat W$. Theoretically, we quantify the estimation accuracy of $\widehat W$ %this estimate for $W_0$ 
and its asymptotic influence when used as an input for estimating $\theta_0$. We now describe the details leading to our two proposed estimators. Our first option is the ``plug-in estimator,'' and the other is the ``supervised estimator,'' both defined in the following subsection.
\vspace{-0.2cm}
\subsection{The plug-in estimator and the supervised estimator}
\label{Subsec:Plug-in}
Our proposed estimators make use of %take advantage of 
the structure for the adjacency $W$ in \eqref{maineq} by using penalized convex low-rank and sparse estimation. %approximations. 
%We shall utilize the penalized convex low-rank and sparse approximations. 
Our first proposal is simply a plug-in estimator that first produces a denoised estimate for $W_0$ and uses this estimate in place of the observed version when calculating the GMM estimator. %running {\color{red} the two-step GMM }approach. 
Recall that $A$ is an $n\times n$ matrix and $A_{ij}$ denotes its $(i,j)$th element. We first minimize the following loss function to de-noise the adjacency matrix { by} taking advantage of its low-rank plus sparse structure:
\begin{align}
	\label{Eq:LRSED}
	(\widehat{L}, \widehat{S}) \coloneqq \arg \min_{L \in \mathbb{R}^{n \times n}, S \in \mathbb{R}^{n \times n}, L_{ii}+S_{ii}=0, 1\leq i\leq n} \frac{1}{2}\|W-L-S\|_F^2 + \nu_{n} \|L\|_* + \tau_{n} \sum_{i\neq j} |S_{ij}|,
\end{align}
where $\|A\|_F$ is the Frobenius norm of $A$, $\|L\|_*$ is the nuclear norm of $L$, $\nu_n$ and $\tau_n$ are positive tuning parameters. Similar to  \citep{Candes2011, Cao2017}, we propose to use Algorithm \ref{Algo:CD_LRSED} to solve the optimization problem in \eqref{Eq:LRSED}. 
Defining $\widehat{W} \coloneqq \widehat{L} + \widehat{S}$, our plug-in estimator is then simply given as in  \eqref{Eq:GMM_estimator} {to replace} $W$ with $\widehat{W}$.
\begin{eqnarray}
	 &&\widehat{\theta}_{p}(\widehat W) \coloneqq \arg \min_{\theta \in \Theta} J_{nT}(\theta, \widehat{W},\widehat{W}; \widehat{\Lambda}_{nT}^{-1})\nonumber
     \\&&= \left[{X}(\widehat{W})\T  Z(\widehat{W}) \widehat{\Lambda}_{nT}^{-1} Z(\widehat{W})\T  {X}(\widehat{W})\right]^{-1} {X}(\widehat{W})\T  Z(\widehat{W}) \widehat{\Lambda}_{nT}^{-1} Z(\widehat{W})\T  Y . 
     	\label{Eq:Plugin_Estimator}
\end{eqnarray}
Our second proposal, the supervised estimator, %jointly performs a quadratic loss function 
{minimizes the combination of the GMM objective function and  %along with 
the penalized square loss based on the low-rank plus sparse decomposition}. %That is, 
The estimator $	(\widehat{\theta}_{s}, \widehat{L}_s, \widehat{S}_s) $ %minimizes %can be 
is defined as 
\begin{eqnarray}
    \label{Eq:Supervised_Estimator}
&&\arg\min_{\theta \in \Theta, L \in \mathbb{R}^{n \times n}, S \in \mathbb{R}^{n \times n}, L_{ii}+S_{ii}=0, 1\leq i\leq n} \xi_{nT}  J_{nT}(\theta, L + S, M; \widehat{\Lambda}_{nT}^{-1}) \nonumber 
       \\&& + \frac{1}{2}\|W-L-S\|_F^2 + \nu_{nT} \|L\|_* + \tau_{nT} \sum_{ i\neq j}|S_{ij}|, 
\end{eqnarray}
where $\xi_{nT}$ is an additional nonnegative  tuning parameter necessary to balance the 
%the information between the observed adjacency matrix 
 {{variations contained in the observed adjacency and the outcome variables}. {It is worth noting that \cite{cai2021network} also study a supervised method based on a purely low-rank structure for analyzing network centrality. Their approach, which focuses on the leading singular vectors, differs from ours in that it does not incorporate a spatial regression.}

 A full procedure for solving \eqref{Eq:Supervised_Estimator} is given in Algorithm \ref{Algo:ADMM_GMM_FBS} in Appendix  \ref{Appendix:Add_Results}. Note that, as a by-product of the algorithm, the estimator produces supervised {low-rank and sparse} decompositions %($ \widehat{L}_s$ and  $\widehat{S}_s$) 
 that incorporate information from the outcome and covariates.

\section{Asymptotic theory}
\label{Sec:theory}

In this section, we discuss the theoretical properties of our estimation procedures. Our main interest is to quantify the impact of using an estimated adjacency matrix on recovering both the true adjacency and the spillover parameters of interest.
We shall use the following notation:  Let $A$ and $B$ be matrices of dimensions $n\times n$. For $i = 1, \ldots, n$, we use $\lambda_i(A)$ and $\sigma_i(A)$ %, $u_i(A)$, and $v_i(A)$ to 
denote the $i$-th largest eigenvalue and the $i$-th largest singular value %, associated left singular vector and right singular vector 
of the matrix $A$. Let $\lambda_{\min}(A)$ denote the minimum eigenvalue of $A$.  Let $\|.\|_2$ denote the Euclidean-norm of a vector. We use $\|A\|_1$, $\|A\|_2$, $\|A\|_{\infty}$, $\|A\|_{\max}$ %, $\|A\|_F$ and $\|A\|_*$ 
to denote the $l_1$-norm, $l_2$-norm, $l_{\infty}$ norm, max norm %, Frobenius norm, and nuclear norm of matrix $A$, 
respectively.  {$|A|_{1,1} =\sum_{i=1}^n\sum_{j=1}^n |A_{ij}|$.}  Let $1\leq r\leq n$ be the rank of $A$. 
For a matrix $A \in \mathbb{R}^{n \times n}$, let:
\begin{align*}
&m_r(A) := \max_{1 \leq i \leq n} \sum_{j=1}^n \mathbb{I}(A_{ij} \neq 0), 
&& m_c(A) := \max_{1 \leq j \leq n} \sum_{i=1}^n \mathbb{I}(A_{ij} \neq 0),\\
&m_s(A) := \sum_{i,j} \mathbb{I}(A_{ij} \neq 0), 
&& \deg_{\max}(A) := \max\{m_r(A), m_c(A)\}.
\end{align*}

We use $e_i$ to denote the standard basis vector, i.e., the $i$th column of the identity matrix, and $\vecv(A)$ to denote the column vectorization of matrix $A$. For sequences $\{a_n\}_{n = 1}^{\infty}$ and $\{b_n\}_{n = 1}^{\infty}$,
we denote $a_n\lesssim b_n$ if there exists a positive constant $C$ such that $a_n/b_n\le C$ for all $n$, and denote $a_n=o(b_n)$  (resp. $a_n \asymp b_n$) if $a_n/b_n\to 0$ or $a_n \ll b_n$ {(resp. $a_n\lesssim b_n$ and $b_n\lesssim a_n$)}.  
We use $\overset{d}{\to}$ and ${\to_p}$ to indicate convergence in distribution and in probability, respectively. We use $X_n = O_p(a_n) (\lesssim_p a_n)$ to indicate $X_n/a_n$ is bounded in probability, i.e. for any $\e > 0$, there exists $M(\e) > 0$  %${\color{blue} M_{\e}}>0$ $M(\e) > 0$ 
such that $\P(|X_n/a_n| > M(\e) ) < \e$ for all $n \in \mathbb{N}$. We assume $\mbox{min}{(n,T)}\to \infty$ or $n \to \infty$. In what follows, $Y_t$, $X_t$, $Z_t(M)$, and $\varepsilon_t$ denote 
the within-transformed versions of the model objects, with individual 
fixed effects removed. The asymptotic results carry through under the 
two-way transformation that additionally removes time fixed effects; 
see Remark~\ref{rem:twoway_demean}.

\begin{assumption}[True weight]
\label{weight}
\begin{enumerate}[(i)]
    \item (Low-rank component) $L_0$ is an $n\times n$ low-rank matrix 
    with $\mathrm{rank}(L_0) = r$, 
    $\min(m_r(L_0), m_c(L_0)) \to \infty$, and 
    $\|L_0\|_{\max} = O(1/\sqrt{m_r(L_0)m_c(L_0)})$.
 \item (Sparse component) $S_0$ is an $n\times n$ sparse matrix 
with $\|S_0\|_{\max} = O(1)$.

    \item (Joint identification) 
    $\frac{m_s(S_0)}{m_r(L_0)m_c(L_0)} \to 0$ and 
    $\frac{\deg_{\max}(S_0)^2}{\min\{m_r(L_0),\,m_c(L_0)\}} \to 0$.
\end{enumerate}
\end{assumption}

Assumption~\ref{weight} collects conditions on $W_0$. Part~(i) 
requires the low-rank signal to be pervasive but individually 
weak: the condition $\|L_0\|_{\max} = O(1/\sqrt{m_r(L_0)m_c(L_0)})$ 
allows $L_0$ to be distinguished from  $S_0$; %a sparse perturbation; 
it fails when $L_0$ is spiked (as in the 
dominant-units example), in which case the low-rank part is better absorbed 
into $S_0$. The rank $r$ may diverge with $n$ provided the relevant rates still vanish. 
Part~(ii) requires $S_0$ to be bounded and Part (iii) allows $S_0$ to %have few but strong 
{have relatively few nonzero links compared with the dense part.} 

Hence the decomposition $W_0 = L_0 + S_0$ is uniquely determined when $n$ is large, making the 
rank parameters $r$ and $m_s(S_0)$ well-defined. We stress this is a 
\emph{mathematical} requirement only: $\widehat\theta_p$ uses 
$\widehat W = \widehat L + \widehat S$ as a single object, so we rely 
on no economic interpretation of the components, and inference about 
$\theta$ is robust to weak identification of the decomposition. 
Part~(iii) holds trivially when $L_0 = \mathbf{0}_{n\times n}$ or 
$S_0 = \mathbf{0}_{n\times n}$, and as a side implication limits the sum over individual columns 
of $W_0$, yielding $\|W_0\|_1 = o(\sqrt{n})$, which is used in the 
rate analysis of $\widehat\theta_p(\widehat W)$. The following 
assumption imposes conditions on the measurement error $E$ in the 
observed matrix $W$, required for the large-sample analysis as 
$n \to \infty$.

\begin{assumption}[Noise]
    \label{Assump:E_noise}
    \begin{enumerate}[(i)]
       % \item $E$ is an $n \times n$ noise matrix affecting entries $\Omega_0 \subseteq \mathcal{N} \times \mathcal{N}$, such that $E_{ij} \neq 0$ only if $(i, j) \in \Omega_0$.
 \item $E$ is an $n \times n$ noise matrix such that $E_{ii}=0$ for $i=1, \cdots, n$. %affecting the non-diagonal entries in \Omega_0 \subseteq \mathcal{N} \times \mathcal{N}$, such that $E_{ij} \neq 0$ only if $(i, j) \in \Omega_0$.
        
        \item $E$ satisfies $\|E\|_{2}\leq w_{2n}$ with probability greater than $1-p_{n}$, where $p_{n}=o(1)$.
        \item $E$ satisfies $\|E\|_{\max}\leq w_{\max n}$ with probability greater than $1-p_{n}$, where $p_{n}=o(1)$.
    \end{enumerate}
\end{assumption}

Assumption~\ref{Assump:E_noise} is mild. It does not require the 
entries of $E$ to be independent or identically distributed (a 
useful feature in spatial and social network contexts), where 
measurement errors often exhibit dependence. It also does not 
require $w_{2n}$ or $w_{\max n}$ to vanish; these are sequences 
that bound the spectral and entrywise norms of $E$, and may grow 
with $n$. Whether the bounds delivered by 
Theorems~\ref{Thm:DK1}--\ref{Thm:DK22} are informative depends on 
joint conditions involving $(w_{2n}, w_{\max n}, r, m_s(S_0), n)$, 
discussed after each theorem.

The roles of $w_{2n}$ and $w_{\max n}$ are complementary. The 
spectral norm bound $w_{2n}$ controls the aggregate noise level and 
governs how well the low-rank component $L_0$ can be separated 
from $E$. The entrywise maximum $w_{\max n}$ controls the largest 
individual perturbation and governs the recovery of $S_0$.  We illustrate how $E$ looks like in the following examples.

\begin{example}[A sample-covariance error matrix]
\label{ex:covariance}
Suppose the error arises from estimating a covariance matrix: let 
$z_1, \ldots, z_T$ be independent mean-zero Gaussian vectors in 
$\mathbb{R}^n$ with covariance $W_z$, whose eigenvalues are positive and bounded away from 0 and $\infty$. Let 
$\widetilde E = T^{-1}\sum_{t=1}^T z_t z_t^\top - W_z$ be the 
sampling error. To respect the no-self-loop convention we take 
$E = \widetilde E - \mathrm{diag}(\widetilde E)$, so 
$E_{ii} = 0$ and Assumption~\ref{Assump:E_noise}(i) holds. Since 
$\|\mathrm{diag}(\widetilde E)\|_2 = \max_i|\widetilde E_{ii}| = 
O_p(\sqrt{\log n/T})$ is dominated by $\|\widetilde E\|_2$, removing 
the diagonal leaves the rates unchanged.
By Theorem~6.5 of \citet{wainwright2019high}, taking 
$\eta=\sqrt{n/T}$ gives
$\|E\|_2\leq C\|W_z\|_2\{\sqrt{n/T}+n/T\}=:w_{2n}$
with probability $\geq1-c_1\exp(-c_2n)$; taking
$\eta=\sqrt{\log(n)/T}$ and applying a union bound gives
$\|E\|_{\max}\leq
C(\max_iW_{z,ii})\{\sqrt{\log(n)/T}+\log(n)/T\}
=:w_{\max n}$
with probability $\geq1-2n^{-c_3}$. Thus,
Assumption~\ref{Assump:E_noise} holds with
$w_{2n}\asymp\sqrt{n/T}+n/T$ and
$w_{\max n}\asymp\sqrt{\log(n)/T}+\log(n)/T$.
If $n/T\to0$, these rates simplify to
$w_{2n}\asymp\sqrt{n/T}$ and
$w_{\max n}\asymp\sqrt{\log(n)/T}$, and both vanish, with the
spectral bound dominating the entrywise one --- the aggregate noise
exceeds the worst individual entry, as is typical in dense networks.
$\blacksquare$ 
\end{example}

\begin{example}[Dense Gaussian noise: comparison with \citet{Lewbeletal2024EJ}]
\label{lewbel:e}
Suppose $E_{ij}$ are i.i.d.\ $\N(0, \sigma^2_{n})$ for $i \neq j$ 
and $E_{ii} = 0$, with $\sigma_n$ possibly diminishing in $n$.
As in Example~\ref{ex:covariance}, applying the operator-norm bound 
of \citet{wainwright2019high} and a union bound over the entries 
gives, with probability approaching $1$, 
$\|E\|_2 \leq 4\sigma_n\sqrt n =: w_{2n}$ and 
$\|E\|_{\max} \leq 8\sigma_n\sqrt{\log n} =: w_{\max n}$.
 Assumption~\ref{Assump:E_noise} holds 
with $w_{\max n} = 8\sigma_n\sqrt{\log n}$ and $w_{\max n}\to 0$ whenever 
$\sigma_n = o(1/\sqrt{\log n})$.
 By contrast, $\mathbb{E}|E|_{1,1} \asymp n^2 \sigma_n$.  The 
condition $|E|_{1,1} = o_p(n)$ used by \citet{Lewbeletal2024EJ} 
requires $\sigma_n = o(1/n)$.
For $1/n \ll \sigma_n \ll 1/\sqrt n$, the spectral-norm 
condition $w_{2n} \to 0$ ensures the GMM estimator 
$\widehat\theta_{GMM}(W)$ is consistent (cf. Theorem \ref{Thm:DK22} b), while their 
$|E|_{1,1} = o_p(n)$ condition fails. Our estimator 
$\widehat\theta_p(\widehat W)$ further sharpens the rate via 
dimension discounting.
For $\sigma_n\geq c>0$, the available bound for the GMM estimator
does not guarantee consistency. More generally, the plug-in estimator is consistent whenever the
dimension-discounted rate condition stated in
Theorem~\ref{Thm:DK22}(a) holds. $\blacksquare$ 
\end{example}

We next show the properties
$\widehat L$ and $\widehat S$ corresponding to \eqref{Eq:LRSED}. Define
\[
a_n
\coloneqq
\begin{cases}
\{m_r(L_0)m_c(L_0)\}^{-1/2},
& L_0\neq\mathbf 0_{n\times n},\\
0,&L_0=\mathbf 0_{n\times n}.
\end{cases}
\]
Define the rates
\begin{align*}
\mathcal R_F(W_0,E)
&=
\max\{r,1\}w_{2n}^2
+m_s(S_0)w_{\max n}^2
+m_s(S_0)a_n^2,\\
\mathcal R_2(W_0,E)
&=
w_{2n}^2
+m_s(S_0)w_{\max n}^2
+m_s(S_0)a_n^2,\\
\mathcal R_s(W_0,E)
&=
(w_{\max n}+a_n)\sqrt{m_s(S_0)}.
\end{align*}

Let
\[
B_0=(I_n-\lambda_0W_0)^{-1},
\qquad
\rho_n=w_{2n}+\mathcal R_s(W_0,E),
\]
\[
s_n=\mathcal R_s(W_0,E)\sqrt{m_s(S_0)},
\qquad
\kappa_n=1+\|W_0\|_2+\rho_n,
\]
and define the dimension-discounted rate
\[
\mathcal R_{nT}^*
\coloneqq
(1+\|B_0\|_2)\kappa_n^{d+1}
\frac{(r\vee1)\rho_n+s_n}{n}.
\]

\begin{theorem}
\label{Thm:DK1}
Suppose Assumptions~\ref{weight}--\ref{Assump:E_noise} hold.
Let the tuning parameters satisfy
\[
\nu_n=C_\nu w_{2n},
\qquad
\tau_n=C_\tau(w_{\max n}+a_n),
\]
for sufficiently large constants $C_\nu,C_\tau>0$. 

\medskip
\noindent\textbf{(a) Frobenius bound.}
The two-step estimator satisfies
\begin{align}
\|\widehat L-L_0\|_F^2+\|\widehat S-S_0\|_F^2
&\lesssim_p\mathcal R_F(W_0,E),\\
|\widehat S-S_0|_{1,1}
&\lesssim_p
\mathcal R_s(W_0,E)\sqrt{m_s(S_0)}
+n\{w_{2n}+\mathcal R_s(W_0,E)\}.
\end{align}

\medskip
\noindent\textbf{(b) Sharpened spectral-norm bound.}
The two-step estimator satisfies
\[
\|\widehat L-L_0\|_2^2+\|\widehat S-S_0\|_2^2
\lesssim_p
\mathcal R_2(W_0,E).
\]
\end{theorem}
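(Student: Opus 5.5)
We follow the standard analysis of convex low-rank-plus-sparse denoising, as in \citet{Agarwal2012} and \citet{Candes2011}. We work on the event $\mathcal{A}_n=\{\|E\|_2\le w_{2n},\ \|E\|_{\max}\le w_{\max n}\}$, which by Assumption~\ref{Assump:E_noise} has probability at least $1-2p_n\to1$. Write $\Delta_L=\widehat L-L_0$ and $\Delta_S=\widehat S-S_0$.

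The first step is the basic inequality. Since $(L_0,S_0)$ is feasible for \eqref{Eq:LRSED}, comparing objective values gives
\[
\tfrac12\|\Delta_L+\Delta_S\|_F^2\le \langle E,\Delta_L+\Delta_S\rangle+\nu_n(\|L_0\|_*-\|\widehat L\|_*)+\tau_n(|S_0|_{1,1}-|\widehat S|_{1,1}),
\]
where the $\ell_1$ sums run over off-diagonal entries. The diagonal constraint $L_{ii}+S_{ii}=0$ is handled by noting that $E_{ii}=0$, so diagonal entries of $E$ contribute nothing to the inner product.

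Next we bound the noise term by duality: $|\langle E,\Delta_L\rangle|\le w_{2n}\|\Delta_L\|_*$ and $|\langle E,\Delta_S\rangle|\le w_{\max n}|\Delta_S|_{1,1}$.

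The choice $\tau_n=C_\tau(w_{\max n}+a_n)$, with $a_n$ added, is there because the $L$ and $S$ parts are not perfectly separable. Under Assumption~\ref{weight}(i) one has $\|L_0\|_{\max}\lesssim a_n$. Hence on the sparse support $\Omega=\mathrm{supp}(S_0)$ the error of misattributing mass of $L_0$ to $S$ is controlled by $a_n$, in the spirit of the $\alpha/n$ spikiness term in \citet{Agarwal2012}. Concretely, we would either impose $\|L\|_{\max}\lesssim a_n$ implicitly through the analysis, or treat $L_{0,\Omega}$ as an additional perturbation of size $a_n\sqrt{m_s(S_0)}$ in Frobenius norm.

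The third step uses decomposability. Let $\mathcal{T}$ be the tangent space of $L_0$, of rank $r$, and let $\Omega$ be the support of $S_0$. Decomposability of the nuclear and $\ell_1$ norms, with $\nu_n\ge 2w_{2n}$ and $\tau_n\ge 2w_{\max n}$, yields the cone condition
\[
\nu_n\|P_{\mathcal T^\perp}\Delta_L\|_*+\tau_n|\Delta_{S,\Omega^c}|_{1,1}\le 3\bigl(\nu_n\|P_{\mathcal T}\Delta_L\|_*+\tau_n|\Delta_{S,\Omega}|_{1,1}\bigr)+\text{(spikiness remainder)}.
\]
We then lower bound $\|\Delta_L+\Delta_S\|_F^2$ by $\tfrac12(\|\Delta_L\|_F^2+\|\Delta_S\|_F^2)$ minus a cross term $|\langle\Delta_L,\Delta_S\rangle|$. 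This cross term is bounded by $\|\Delta_L\|_{\max}|\Delta_S|_{1,1}$, and the incoherence and identification conditions of Assumption~\ref{weight}(iii), namely $m_s(S_0)a_n^2\to0$ and $\deg_{\max}(S_0)^2/\min(m_r,m_c)\to0$, make it a lower-order term. This restricted-strong-convexity step is where we expect the main obstacle to lie. The difficulty is that the estimator is not explicitly max-norm constrained, so we must show that $\Delta_L$ inherits a max-norm bound of order $a_n$ plus noise from the incoherence of $L_0$. As a first attempt we would use the $\deg_{\max}$ condition, which gives $\|P_{\mathcal T}P_\Omega\|$ small as in \citet{chandrasekaran2011rank}, to obtain approximate orthogonality of $\mathcal T$ and $\Omega$.

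The remaining steps assemble the bounds. Combining the basic inequality with $\|P_{\mathcal T}\Delta_L\|_*\le\sqrt{2r}\|\Delta_L\|_F$ and $|\Delta_{S,\Omega}|_{1,1}\le\sqrt{m_s}\|\Delta_S\|_F$ gives
\[
\|\Delta_L\|_F^2+\|\Delta_S\|_F^2\lesssim r\nu_n^2+m_s\tau_n^2,
\]
which is $\mathcal R_F$.

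For the $|\cdot|_{1,1}$ bound, the cone condition gives $|\Delta_S|_{1,1}\lesssim\sqrt{m_s}\|\Delta_S\|_F+(\nu_n/\tau_n)\|\Delta_L\|_*$. Bounding the last term crudely, using $\|\Delta_L\|_*\le\sqrt n\|\Delta_L\|_F$ together with the rates, produces the $n(w_{2n}+\mathcal R_s)$ term.

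For part (b), we would use the KKT conditions rather than the basic inequality. The subgradient of the nuclear norm has spectral norm at most 1, so $\|W-\widehat L-\widehat S\|_2\le\nu_n$. This gives $\|\Delta_L+\Delta_S\|_2\le\nu_n+w_{2n}\lesssim w_{2n}$ without the factor $r$. We then bound $\|\Delta_S\|_2^2\le\|\Delta_S\|_F^2\lesssim m_s(w_{\max n}+a_n)^2$, where the Frobenius bound for $\Delta_S$ alone is obtained, with no $r$ factor, from the $\ell_1$ KKT condition on $\Omega$. Finally $\|\Delta_L\|_2\le\|\Delta_L+\Delta_S\|_2+\|\Delta_S\|_2$, which gives $\mathcal R_2$.
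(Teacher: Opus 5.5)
Your proposal has a genuine gap. It is the step you flag yourself, but it is not confined to part (a).

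\textbf{Where the argument breaks.} For the joint minimizer of \eqref{Eq:LRSED}, the basic inequality and cone condition only control $\|\Delta_L+\Delta_S\|_F$. Splitting this into $\Delta_L$ and $\Delta_S$ requires $|\langle\Delta_L,\Delta_S\rangle|\le\|\Delta_L\|_{\max}|\Delta_S|_{1,1}$ with $\|\Delta_L\|_{\max}\lesssim\tau_n$. A program with no max-norm constraint on $L$ gives no such bound. The trivial bound $\|\Delta_L\|_{\max}\le\|\Delta_L\|_2$ is not available a priori, and it would be too weak anyway, since $w_{2n}$ can exceed $\tau_n$ by a factor $\sqrt{n/\log n}$. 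Your $\|P_{\mathcal T}P_\Omega\|$ route needs $\mathrm{inc}(L_0)\deg_{\max}(S_0)$ small (Lemma~\ref{ineq}). Assumption~\ref{weight} does not impose this: it restricts $\|L_0\|_{\max}$ and support sizes, not singular-vector incoherence.

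Part (b) silently reuses the same missing fact. Off the diagonal, the KKT condition reads $\widehat S_{ij}=\operatorname{soft}_{\tau_n}(S_{0,ij}+E_{ij}-\Delta_{L,ij})$. So both the error of $\widehat S$ on $\Omega$ and its false positives on $\Omega^c$ are driven by the entries of $\Delta_L$. Your ``no-$r$'' bound $\|\Delta_S\|_F\lesssim\sqrt{m_s(S_0)}(w_{\max n}+a_n)$ is therefore circular.

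The diagonal is a further hole. The constraint $L_{ii}+S_{ii}=0$ together with $W_{0,ii}=0$ forces $\Delta_{S,ii}=-\Delta_{L,ii}$. These entries are unpenalized and invisible to the data-fit term, so neither your cone condition nor the KKT condition on $\Omega$ controls them. Yet both $\|\widehat S-S_0\|_F$ and $|\widehat S-S_0|_{1,1}$ include them.

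Your derivation of the $n\{w_{2n}+\mathcal R_s\}$ term also fails. It carries the factor $\nu_n/\tau_n\asymp w_{2n}/(w_{\max n}+a_n)$, which is unbounded in general (of order $\sqrt{n/\log n}$ in Example~\ref{ex:covariance}). So it does not reproduce the stated bound.

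\textbf{How the paper avoids this.} It analyses the sequential estimator that ``two-step'' refers to in its proof.
\begin{enumerate}
\item $\widehat S$ is the off-diagonal soft-thresholding of $W$ itself, i.e.\ with $L$ set to zero. The effective noise is then $L_0+E$ with $\|L_0+E\|_{\max}\lesssim_p a_n+w_{\max n}$, which is exactly why $a_n$ enters $\tau_n$. Lemma~\ref{theorem919} with $\kappa=1$ and $\Psi(\overline{\mathbb M})\le\sqrt{m_s(S_0)}$ gives $\|\widehat S-S_0\|_F\lesssim_p\mathcal R_s$ and an off-diagonal $\ell_1$ error $\lesssim_p\mathcal R_s\sqrt{m_s(S_0)}$. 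Neither bound refers to $\widehat L$.
\item $\widehat L=\SVT_{\nu_n}(W-\widehat S)$. Your own observation that the residual has spectral norm at most $\nu_n$ gives $\|\widehat L-L_0\|_2\le\nu_n+\|E\|_2+\|\widehat S-S_0\|_2\lesssim_p w_{2n}+\mathcal R_s$.
\item The diagonal reset $\widehat S_{ii}=-\widehat L_{ii}$ yields $\|\widehat S-S_0\|_2\lesssim_p w_{2n}+\mathcal R_s$. It also gives $\sum_i|\Delta_{L,ii}|\le n\|\Delta_L\|_2$, which is the actual source of the $n\{w_{2n}+\mathcal R_s\}$ term.
\end{enumerate}
For the Frobenius rate the paper invokes Theorem~1 of \citet{Agarwal2012} with spikiness $\alpha/\kappa_d\asymp a_n$. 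That program carries the explicit constraint $\|L\|_{\max}\le\alpha/\kappa_d$, which is precisely the ingredient your restricted-strong-convexity step lacks. To complete your route you must either add that constraint to the program or switch to the sequential estimator; the max-norm control cannot be ``imposed implicitly through the analysis.''
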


The rates $\mathcal{R}_F$, $\mathcal{R}_2$, and $\mathcal{R}_s$ share 
a common structure with three error sources: a term in $w_{2n}$ about low-rank recovery; a term 
$m_s(S_0)\,w_{\max n}^2$ related to $m_s(S_0)$ nonzero sparse 
entries, and a term involving $\|L_0\|_{\max}$.
The factor $r$ in $\mathcal{R}_F$ is replaced 
by unity in $\mathcal{R}_2$, reflecting the rank discount of 
Theorem 9.24 in \citet{wainwright2019high}. Both rates parallel 
Corollary 1 of \citet{Agarwal2012} and Theorem 9.19 of 
\citet{wainwright2019high}.  Note that estimator of \( L_0^* \) and \( S_0^* \) inherits the theoretical properties of the corresponding estimators of \( L_0 \) and \( S_0 \). For instance, since \( \widehat{L}^* = \widehat{L} - \operatorname{diag}(\widehat{L}) \), we have  
$\| \widehat{L}^* - L_0^* \|_2 = O_p( \| \widehat{L} - L_0 \|_2)$,
which implies that estimation accuracy is preserved under the diagonal adjustment.  

The conditions required for first-step and second-step 
consistency differ in an essential way. For the L+S decomposition 
$(\widehat L, \widehat S)$ to be consistent in operator norm using the bounds in Assumption \ref{Assump:E_noise}, 
i.e., $\|\widehat L - L_0\|_2 \to_p 0$ and $\|\widehat S - S_0\|_2 
\to_p 0$, we require
\[
w_{2n}^2\to0,
\qquad
m_s(S_0)w_{\max n}^2\to0,
\qquad
m_s(S_0)a_n^2\to0.
\]
in keeping with the standard $L+S$ literature. Frobenius consistency,
$\|\widehat L-L_0\|_F\to_p0$ and
$\|\widehat S-S_0\|_F\to_p0$, requires the stronger condition
$\max\{r,1\}w_{2n}^2\to0$ in place of $w_{2n}^2\to0$, while the
other two conditions remain unchanged. Consistency of the spillover estimator 
$\widehat\theta_p(\widehat W)$, $\|\widehat\theta_p(\widehat W) - 
\theta_0\|_2 \to_p 0$, requires only weaker dimension-discounted 
conditions, given in Theorem~\ref{Thm:DK22} below.

Next, define $\widehat W\coloneqq\widehat L+\widehat S$, where
$(\widehat L,\widehat S)$ is the estimator in
Theorem~\ref{Thm:DK1}. We use $\widehat W$ as the working matrix for
estimating the model in~\eqref{maineq}. We now impose the conditions
used to analyze the spillover-parameter estimators.

\begin{assumption}[True parameters]
    \label{Assump:Theta}
  %  For simplicity, we set $\gamma_0 = \bm{0}_{K}$ for our theoretical results. 
  %  Additionally:
    \begin{enumerate}[(i)]
        \item The parameter space $\Theta = \mathcal{C} \times \mathcal{B}\times {\Gamma}$ 
        is a compact subset of $\mathbb{R} \times {\mathbb{R}^K} \times \mathbb{R}^K$ for some finite  dimension $K$. % < n$.
        \item The true value, denoted as $\theta_0 = (\lambda_0, \beta_0\T, \gamma_0 \T)\T$, lies in the interior of the parameter space $\Theta$.  Moreover, $\lambda_0\beta_0+\gamma_0\neq \mathbf{0}_{K\times 1}$. 
    \end{enumerate}
\end{assumption}
	
\begin{assumption}[Adjacency matrix]
    \label{Assump:True_Adjacency}
    $W_0$ is a non-stochastic spatial weight matrix satisfying   %  Additionally:
    \begin{enumerate}[(i)]
        \item The diagonal elements of $W_0$ are 0, such that $W_{0, ii} = 0$ for all $i = 1, \ldots, n$.
                \item  $I_n, W_0, W_0^2$ are linearly independent.

                \item Suppose that $J$ is fixed and that the absolute column sums of
$W_0$ are uniformly bounded in $n$, except possibly for those of its
first $J$ columns. Let $W_{0,u}$ contain the first $J$ columns of
$W_0$ and zeros elsewhere, and let
$W_{0,b}=W_0-W_{0,u}$. When $J=0$, set
$W_{0,u}=\mathbf 0_{n\times n}$ and $W_{0,b}=W_0$.
Assume that there exist constants $c_\infty,c_1>0$, independent of
$n$, such that
\[
\sup_{\lambda\in\mathcal C}
|\lambda|\|W_0\|_\infty\leq1-c_\infty,
\qquad
\sup_{\lambda\in\mathcal C}
|\lambda|\|W_{0,b}\|_1\leq1-c_1.
\]

    \end{enumerate}
\end{assumption}

Assumption \ref{Assump:Theta} imposes standard conditions on the true parameters. The assumption
$\lambda_0 \beta_0 + \gamma_0 \neq \mathbf{0}_{K\times 1}$
is imposed to guarantee the identification of all the parameters in the model (see, e.g., \cite{Bramoulle2009}).
 % and avoids . % for parameter identification. ensuring the absolute summability of spillover effects. Along with Assumption \ref{Assump:True_Adjacency}, they guarantee no self-influence,  and the existence of \((I_n - \lambda W_0)^{-1}\). 
Assumption \ref{Assump:True_Adjacency} (i) requires that the true adjacency matrix \(W_{0}\) have zero diagonal, which rules out self‐loops.  Regarding Assumption \ref{Assump:True_Adjacency} (ii), the matrices \(I_{n}\), \(W_{0}\), and \(W_{0}^{2}\) are linearly independent, i.e.\ there do not exist any nontrivial scalars \(\alpha_{0},\alpha_{1},\alpha_{2}\) such that
\[
  \alpha_{0}I_{n} + \alpha_{1}W_{0} + \alpha_{2}W_{0}^{2} = \mathbf{0}_{n\times n}.
\]
Assumption~\ref{Assump:True_Adjacency}(iii) restricts the number of
dominant units to a fixed finite value and imposes uniform stability.
The first stability condition guarantees that
$I_n-\lambda W_0$ is invertible uniformly over
$\lambda\in\mathcal C$.

\begin{assumption}[Spatial errors]
    \label{Assump:Spatial_Errors}
   The primitive disturbances $\varepsilon_{it}$ are independently and identically distributed across $i$ and $t$ with mean $0$, variance $\sigma_0^2>0$ and $ \exv|\varepsilon_{it}|^{2+\delta}<\infty$ for a constant $\delta>2$. 
   
\end{assumption}

For each $i$, let
$x_{t,i}=(x_{t,i1},\ldots,x_{t,iK})^\top$.
\begin{assumption}[Instrumental variables]
    \label{Assump:Instruments}
    \begin{enumerate}[(i)]
        \item The covariates $X_t$ %= (X_{1t}, \cdots, X_{Kt})$ 
        are stochastic $n \times K$ matrices with elements $x_{t,il}$ satisfying       
$\sup_{n,t,i,\,1\leq l\leq K}
\mathbb{E}|x_{t,il}|^{2+\delta}<\infty$
for some $\delta>2$. 
For each $i$, $\{x_{t,i}:t\in\mathbb Z\}$ is strictly stationary
and strongly mixing over $t$. The mixing coefficients are uniform
in $n$ and $i$: there exist constants $C<\infty$ and
$\rho\in(0,1)$ such that
\[
\sup_{n,i}\alpha_{n,i}(h)\leq C\rho^h,
\qquad h\geq1.
\]
The cross-sectional processes
$\{x_{t,i}:t\in\mathbb Z\}$ are independent across $i$. Dependence between $E$ and $X_t$ is allowed, subject to the
conditional moment restrictions stated in
Lemma~\ref{rateG}(c) in the Appendix.
        \item $M$ is a pre-specified adjacency matrix.  The instrumental variable matrix $Z_t(M)$ 
        has columns chosen from $M^{\ell} X_{jt}$ for $\ell = 0, \ldots, d$ and $j = 1, \ldots, K$.
        We require $\mathbb{E}[\varepsilon_t \mid Z_t(M)] = \mathbf{0}_{n \times 1}$.
        Suppose $Z_t(M)$ has $m$ linearly independent columns, where $m$ is a fixed constant 
        satisfying $m \geq 2K+1$. \[
\sup_{n,t,i,\,1\leq l\leq m}
\mathbb E|z_{t,il}|^{2+\delta}<\infty,
\qquad \delta>2.
\]

        \item $\lim_{\min(n,T) \to \infty} (nT)^{-1} \sum_{t=1}^T \mathbb{E}\bigl[Z_t(M)^\top Z_t(M)\bigr]$ 
        exists and is nonsingular, denoted as $\Sigma_{Z_0 Z_0}$.

        \item Let $Q_{0t} \coloneqq \bigl[W_0(I_n - \lambda_0 W_0)^{-1}(X_t \beta_0 + W_0 X_t \gamma_0),\, 
        X_t,\, W_0 X_t\bigr]$. We assume the $m \times (2K+1)$ matrix
        $\lim_{\min(n,T) \to \infty} (nT)^{-1} \sum_{t=1}^T \mathbb{E}\bigl[Z_t(M)^\top Q_{0t}\bigr]$
        exists, denoted as $G_0$, and is of full column rank.
    \end{enumerate}
\end{assumption}

\begin{assumption}[Moment condition and GMM weight matrix]
    \label{Assump:Moments_Weight}
The GMM weight matrix $\widehat{\Lambda}_{nT}^{-1} \in \mathbb{R}^{m \times m}$ converges in probability to some symmetric positive-definite matrix $\Lambda^{-1} \in \mathbb{R}^{m \times m}$, whose eigenvalues are bounded away from 0 and $\infty$. 
\end{assumption}

  %These conditions are standard for model identification and align with those in \cite{Paula2023}.  
Assumption \ref{Assump:Spatial_Errors} and  \ref{Assump:Instruments} include standard moment assumptions.  The independence \ requirement across $i$ in 
  Assumption \ref{Assump:Spatial_Errors} and Assumption~\ref{Assump:Instruments}(i) might be strong in network 
settings, but it  can be relaxed to weak cross-sectional 
dependence with all rates in 
Theorems~\ref{Thm:DK22} and~\ref{Thm:Supervised_Rate} unchanged 
in order. We maintain independence for expositional simplicity. {%Adding remark to explicitly indicate how to select $Z_{t}(W_0)$ satisfying nonsingularity....
Assumptions \ref{Assump:Instruments} (ii) are regularity conditions on the instrumental variables.  Among other things, the requirement that $\mathbb{E}[\varepsilon_t \mid Z_t(M)] = \mathbf{0}_{n \times 1}$ precludes $M=W$  if the measurement error $E$ is endogenous. Assumptions \ref{Assump:Instruments} (iii) and (iv) impose nonsingularity on the variance covariance matrix of the instrumental variables, and rank conditions of gradient of the moment functions.} The full column rank condition ensures that we only consider linearly independent columns of the instrument matrix, a common practice in the literature \citep{Lee2003}.
%the moment functions, 
Take $M=W_0$, this is linked to the rank conditions of $W_0$, the variance covariance structure of $Z_t(W_0)$ and the correlation between  $X_t$ and $Z_t(W_0)$. %We provide a sufficient condition to make this assumption hold. And we provide practitioner guide on how to construct $Z_t(W_0)$ to avoid singularity.
To further understand the implications of the above assumption, let {$Z_t(W_0)$ be the $n\times m$ matrix}. Note that %$Z_t^{\top}$ be a $m \times n$ matrix, and  $Z_t^{\top}W_0^{\top}$ be a $m\times n$ matrix,
%{
%\begin{eqnarray*}
%&& (nT)^{-1}\sum_{t=1}^T Z_t(W_0)\T Z_t(W_0)=(nT)^{-1}\sum_{t=1}^T [X_t,W_0X_t, \cdots, W_0^d X_t]\T [X_t, W_0X_t,\cdots, W_0^{d}X_t],
%\end{eqnarray*}
%}
%$l_1= 1,\cdots, d$ and $l_2= 1,\cdots, d$. 
$(nT)^{-1}\sum_{t=1}^T Z_t(W_0)\T Z_t(W_0)$ is a {partitioned} matrix whose elements have the form %with the $(l_1, l_2)$ block as 
$(nT)^{-1}\sum_{t=1}^T  X_t^{\top} [W_0^{l_1}]^{\top} W_0^{l_2}X_t,$ for any $l_1$ and $l_2$ between 0 and $d$. 
%and $\lambda_{2K+1}(W_0W^{\top}_0)> 0.$ 
%{\color{blue} Assuming stationarity}, let us look at the population example for a while:
%$[n^{-1}\E\{Z_t^{\top}W_0^{l_1\top}W_0^{l_2}Z_t]_{l_1,l_2}\}.$
%Each block is of full rank under Assumption 7(i) and  %
This condition can be implied by 
$\lambda_{\min}(\lim_{n,T\to \infty}\{nT\}^{-1}\sum_t\E\{Z_t^{\top}Z_t\})>0
,$
and 
$\lambda_{2K+1}(W_0W^{\top}_0)> 0$. Then $\Sigma_{Z_0Q_0}$ is of full column rank could be implied by a condition that $\lambda_{2K+1}(W_0W_{0}^{\top})>0$,   $\lim_{T\to \infty}T^{-1}\sum_t[X_t,W_0X_t,\cdots, W_0^dX_t]$ has column rank greater than or equal to $2K+1$, and all eigenvalues of $I_n - \lambda_0 W_0$ are bounded away from 
$0$ and $\infty$. %Assumptions \ref{Assump:Instruments} iii) and iv)  impose nonsingularity on the variance covariance matrix of the instrumental variables, and rank conditions of gradient of the moment functions. %the moment functions, 
Assumption \ref{Assump:Moments_Weight}  is a standard assumption imposed on the GMM weight matrix. Because $M$ is fixed and time invariant, $Z_t(M)$ is a measurable
function of $X_t$. Therefore, $\{Z_t(M)\}$ inherits the strong-mixing
property, with
$\alpha_{Z(M),n}(h)\leq\alpha_{X,n}(h)$.

%Incorporating additional noisy exogenous regressors would compound the bias without yielding further theoretical insights. The following remark elaborates on this issue.  
Before establishing the consistency of our spillover-parameter estimator, we first examine the bias introduced by 
$E$ when using the classical GMM estimator.
We provide an intuitive argument illustrating how the measurement error matrix \(E\) biases the moment functions and the importance of its removal, particularly when the instrument matrix \(Z_t(M)\) also depends on \(W\), as in standard spatial regression.  For simplicity here, assume no contextual effects are present.When $Z_t(M)$ consists of exogenous instruments satisfying 
$\mathbb{E}[Z_t(M)^\top \varepsilon_t] = \mathbf{0}_{m \times 1}$, 
the moment function is given by
\[
\bar g_{nT}(\theta, W, M) = \frac{1}{nT} \sum_{t=1}^{T} 
Z_t(M)^\top (Y_t - \lambda W Y_t - X_t \beta).
\]
Substituting $W = W_0 + E$,
\[
\bar g_{nT}(\theta, W, M) = \frac{1}{nT} \sum_{t=1}^{T} 
Z_t(M)^\top (Y_t - \lambda W_0 Y_t - X_t \beta) 
- \frac{\lambda}{nT} \sum_{t=1}^{T} Z_t(M)^\top E Y_t.
\]

Thus, if $E$ is independent of $(Z_t(M),Y_t)$ and has mean zero, then
\[
\mathbb{E}\!\left[Z_t(M)^\top E Y_t\right]
=\mathbf{0}_{m\times 1},
\]
and hence
\[
\mathbb{E}\!\left[\bar g_{nT}(\theta_0,W,M)\right]
=\mathbf{0}_{m\times 1}.
\]
However, when $Z_t(M)$ is a plug-in version of the true instrument 
with $M = W$, i.e.,
\[
Z_t(W) = [X_t, W X_t] = [X_t, (W_0 + E) X_t],
\]
the moment function becomes
\[
\bar g_{nT}(\theta, W) = \frac{1}{nT} \sum_{t=1}^{T} 
[X_t, (W_0 + E) X_t]^\top \big[(Y_t - \lambda W_0 Y_t - X_t \beta) 
- \lambda E Y_t\big].
\]

Expanding the moment function $\Bar{g}_{nT}(\theta, W)$ with the 
instrument matrix $[X_t, (W_0+E)X_t]$, we get: 
%{\color{blue} \[
% \frac{1}{nT} \sum_{t=1}^{T} \left[-\lambda X_t^{\top} EY_t,  X_t^\top E^\top (Y_t-\lambda W_0 Y_t-X_t\beta) - \lambda X_t^\top W_0^\top E Y_t - \lambda X_t^\top E^\top E Y_t \right]\T.
%\]} 
{\[
 \left[ \frac{1}{nT} \sum_{t=1}^{T}(-\lambda X_t^{\top} EY_t)\T,   \frac{1}{nT} \sum_{t=1}^{T}[X_t^\top E^\top (Y_t-\lambda W_0 Y_t-X_t\beta) - \lambda X_t^\top W_0^\top E Y_t - \lambda X_t^\top E^\top E Y_t]\T \right]\T.
\]} 
Even if \(E_{ij}\)'s are i.i.d. mean zero, with bounded second moment and independent of $(X_t,\varepsilon_t)$,  there will be moment condition misspecification. Specifically,  
\[
\mathbb{E}[\Bar{g}_{nT}(\theta_0, W)] \neq { \mathbf{0}_{m\times 1}},
\]
since the expectation of the last term yields:  
$n^{-1}\mathbb{E}[X_t^\top E^\top E Y_t] 
= n^{-1} \mathbb{E}\big[X_t^\top \mathbb{E}[E^\top E]\,(I-\lambda_0 W_0)^{-1} X_t\big]\beta_0 
\neq \mathbf{0}_{K\times 1},$
%This term %is of order {$n $}, % \(n^2 \max_{ 1\leq i,j\leq n}\text{Var}(E_{ij})\), which 
which might be non-negligible under our assumptions.  
Recall that \(\widehat{\theta}_{GMM}\) denotes the efficient two-step GMM estimator based on a given adjacency matrix \(W\), obtained in closed form in \eqref{Eq:GMM_estimator}. From the preceding results, it follows that \(\widehat{\theta}_{GMM}\) is likely to be inconsistent. We now analyze the convergence rate of \(\widehat{\theta}_p(\widehat{W})\).

Since \(W_0\) is not directly available for constructing instruments, one approach is to use \(Z_t(\widehat{W})\) as plug-in instruments, where \(\widehat{W}\) is obtained from  Algorithms \ref{Algo:CD_LRSED} %or \ref{Algo:ADMM_GMM_FBS}, 
and satisfies Theorem \ref{Thm:DK1}. When \(W_0\) is correctly observed, the minimizer \(\widehat{\theta}_p(W_0)\) satisfies the standard panel-data estimator convergence rate:
\[
\|\widehat{\theta}_p(W_0) - \theta_0\|_2 = O_p(1/\sqrt{nT}).
\]
However, when \(W_0\) is replaced by a denoised estimate \(\widehat{W}\), the associated minimizer \(\widehat{\theta}_p(\widehat{W})\) depends on the deviation \(\widehat{W} - W_0\). The corresponding convergence properties are detailed in the following results.
\begin{theorem}[Performance of the plug-in estimator]
\label{Thm:DK22}
Recall the definitions of
$B_0$, $\rho_n$, $s_n$, $\kappa_n$, and
$\mathcal R_{nT}^*$ given above.
Consider the model described in \eqref{maineq}. 
Suppose Assumptions \ref{weight}--\ref{Assump:Moments_Weight} hold. 
\medskip
\noindent\textbf{(a) The plug-in estimator.}
If $\mathcal R_{nT}^*=o(1)$, then
\[
\|\widehat\theta_p(\widehat W)-\theta_0\|_2
=
O_p(\mathcal R_{nT}^*)
+
O_p\!\left(\frac{1}{\sqrt{nT}}\right).
\]
\medskip
\noindent\textbf{(b)(The GMM estimator).} 
Using the noisy matrix $W = W_0 + E$ directly,
\begin{align}
\|\widehat{\theta}_{GMM} - \theta_0\|_2 
= O_p(w_{2n} + w_{2n}^2) 
+ O_p\!\left(\frac{1}{\sqrt{nT}}\right).
\end{align} 
\medskip
\medskip
\noindent\textbf{(c) Asymptotic normality.}
If
\[
\sqrt{nT}\,\mathcal R_{nT}^*=o(1),
\]
then
\[
\sqrt{nT}\,
\bigl(\widehat\theta_p(\widehat W)-\theta_0\bigr)
\overset{d}{\longrightarrow}
\mathcal N(\mathbf 0,\Sigma_{GMM}).
\]
where
\[
\Sigma_{GMM}
=\left(G_0^\top \Lambda^{-1}G_0\right)^{-1}
G_0^\top \Lambda^{-1}\Omega_0\Lambda^{-1}G_0
\left(G_0^\top \Lambda^{-1}G_0\right)^{-1},
\]
and
\[
\Omega_0
=
\frac{1}{n}
\E(Z_t(W_0)^\top
\varepsilon_t\varepsilon_t^\top
Z_t(W_0)).
\]
\end{theorem}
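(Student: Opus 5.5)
The plan is to use the closed form \eqref{Eq:Plugin_Estimator} and treat $\widehat\theta_p(\widehat W)$ as a perturbation of the infeasible oracle estimator $\widehat\theta_p(W_0)$. Write $\Delta\coloneqq\widehat W-W_0=(\widehat L-L_0)+(\widehat S-S_0)$. Substituting $Y_t=X_t(W_0)\theta_0+\varepsilon_t$ gives
\[
\widehat\theta_p(\widehat W)-\theta_0=\widehat H^{-1}\widehat G^\top\widehat\Lambda_{nT}^{-1}\frac{1}{nT}\sum_t Z_t(\widehat W)^\top\bigl\{\varepsilon_t-\lambda_0\Delta Y_t-\Delta X_t\gamma_0\bigr\},
\]
where $\widehat G=(nT)^{-1}Z(\widehat W)^\top X(\widehat W)$ and $\widehat H=\widehat G^\top\widehat\Lambda_{nT}^{-1}\widehat G$. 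There are then three tasks:
\begin{itemize}
\item[(1)] show that $\widehat G\to_p G_0$, so that $\widehat H$ is invertible with probability approaching one, using Assumptions~\ref{Assump:Instruments}(iv) and~\ref{Assump:Moments_Weight};
\item[(2)] show that $(nT)^{-1}\sum_t Z_t(\widehat W)^\top\varepsilon_t=(nT)^{-1}\sum_tZ_t(W_0)^\top\varepsilon_t+O_p(\mathcal R_{nT}^*)$;
\item[(3)] show that the bias term $(nT)^{-1}\sum_t Z_t(\widehat W)^\top(\lambda_0\Delta Y_t+\Delta X_t\gamma_0)$ is $O_p(\mathcal R^*_{nT})$.
\end{itemize}
Given (1)--(3), part (a) follows because the oracle score is $O_p((nT)^{-1/2})$ by a mixing CLT/variance bound under Assumptions~\ref{Assump:Spatial_Errors}--\ref{Assump:Instruments}. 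Part (c) then follows from the Lindeberg CLT for $(nT)^{-1/2}\sum_tZ_t(W_0)^\top\varepsilon_t$ with variance $\Omega_0$, combined with Slutsky and the condition $\sqrt{nT}\mathcal R^*_{nT}=o(1)$.

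The key device for (2) and (3) is to bound each scalar entry of a quadratic form $n^{-1}x^\top A^\top \Delta' B y$ by a dimension-discounted quantity rather than by $\|\Delta\|_2$ alone. Here $x,y$ are columns of $X_t$ or $\varepsilon_t$, and $A,B$ are products of powers of $W_0$, $\widehat W$ and $B_0=(I_n-\lambda_0W_0)^{-1}$. Split $\Delta=\Delta_L+\Delta_S$.
\begin{itemize}
\item[] \emph{Low-rank part.} $\widehat L-L_0$ has rank at most of order $r\vee1$ (by the SVT structure of the solution, or by projecting onto the rank-$O(r)$ space). Its contribution is therefore bounded by $(r\vee 1)\|\Delta_L\|_2$ times products of at most $r$ normalized inner products. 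Each such product is $O_p(1)$ after the $1/n$ normalization, because $\|x\|_2\|y\|_2=O_p(n)$ while only $O(r)$ directions are involved. Dividing once more by the per-direction signal then gives $(r\vee1)\rho_n/n$.
\item[] \emph{Sparse part.} Bound by $|\Delta_S|_{1,1}$ times $\max$-norms of the vectors. Theorem~\ref{Thm:DK1} gives $|\widehat S-S_0|_{1,1}\lesssim s_n+n\rho_n$. I would refine this by using Theorem~\ref{Thm:DK1}(b) for the dense part, which yields the $s_n/n$ term.
\end{itemize}

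The powers $\widehat W^\ell$, $\ell\le d$, are expanded telescopically as $\widehat W^\ell-W_0^\ell=\sum_k\widehat W^{k}\Delta W_0^{\ell-1-k}$. Each factor has spectral norm at most $\|W_0\|_2+\|\Delta\|_2\lesssim\kappa_n$, using Theorem~\ref{Thm:DK1}(b), and this produces $\kappa_n^{d+1}$. The factor $(1+\|B_0\|_2)$ arises from $Y_t=B_0(X_t\beta_0+W_0X_t\gamma_0+\varepsilon_t)$. Assumption~\ref{Assump:True_Adjacency}(iii) ensures $\|B_0\|_\infty$ and $\|B_0\|_1$ stay bounded, which controls max-norms of $B_0X_t$. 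Task (1) follows from the same expansions, since $\mathcal R^*_{nT}=o(1)$.

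For part (b), apply the same decomposition with $\Delta=E$, now without structure, so that $\|E\|_2\le w_{2n}$ is all that is available. The instruments contain $E X_t$, and the regressor contains $EY_t$, so the cross term $n^{-1}X_t^\top E^\top E Y_t$ is $O_p(w_{2n}^2)$ while the linear terms are $O_p(w_{2n})$. This gives $O_p(w_{2n}+w_{2n}^2)$ after inverting $\widehat H$, provided $\widehat H$ stays bounded away from singularity. If $w_{2n}$ does not vanish, I would keep the bound as stated, noting that it is only informative in that regime.

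The main obstacle is step (3) when $E$ is endogenous. The terms $n^{-1}\sum_t X_t^\top A^\top\Delta B\varepsilon_t$ involve $\Delta$, a function of $E$, which is correlated with $\varepsilon_t$. One therefore cannot condition on $\Delta$ and obtain a mean-zero term. I would try first to bound these terms crudely, as $\|\Delta\|$ times $\|\varepsilon_t\|$, which yields the bias order $\mathcal R^*_{nT}$ rather than a $(nT)^{-1/2}$ term. I would invoke the conditional moment restrictions of Lemma~\ref{rateG}(c) to handle the $X_t$--$E$ dependence. Obtaining exactly the $1/n$ dimension discount in the low-rank part, rather than $1/\sqrt n$, is where I expect the most care to be needed.
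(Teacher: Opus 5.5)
Your overall architecture matches the paper's. You perturb around the oracle $\widehat\theta_p(W_0)$, control the Jacobian and moment differences, telescope $\widehat W^\ell-W_0^\ell$ to produce $\kappa_n^{d+1}$, pick up $1+\|B_0\|_2$ from $Y_t=B_0(\cdots)$, and for (b) use Cauchy--Schwarz with $\|E\|_2\le w_{2n}$ exactly as in Lemma~\ref{rateG}(a). The gap is the step that carries all of (a) and (c): the $1/n$ discount of Lemma~\ref{rateG}(c). The mechanism you sketch for it does not deliver it. Write $\Delta_L=\sum_k\sigma_ku_kv_k^\top$, so an entry is $n^{-1}\sum_k\sigma_k(u_k^\top Ax)(v_k^\top By)$. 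Knowing only $\|x\|_2\|y\|_2=O_p(n)$, each product is $O_p(n)$, so you get $O_p((r\vee1)\rho_n)$. That is worse than plain Cauchy--Schwarz, and there is no further ``per-direction signal'' to divide by. The discount requires $u_k^\top Ax=O_p(1)$, i.e.\ centred data that are asymptotically orthogonal to the $E$-measurable directions.

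The paper obtains this by conditioning on $\mathcal E=\sigma(E)$, which makes $\Delta$ fixed. The conditional mean of $(nT)^{-1}\sum_ta_t^\top P\,\mathrm{off}(\Delta_L)\,Q\,b_t$ is then $n^{-1}$ times a Frobenius pairing of $\Delta_L$ with a matrix of operator norm at most $2\|P\|_2\|M_E\|_2\|Q\|_2$. Here $M_E=T^{-1}\sum_t\mathbb E[a_tb_t^\top\mid\mathcal E]$ has $\|M_E\|_2=O_p(1)$ by the conditional-moment conditions of Lemma~\ref{rateG}(c). Nuclear--operator duality with $\|\Delta_L\|_*=O_p((r\vee1)\rho_n)$ gives $(r\vee1)\rho_n/n$. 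The conditional variance bound $\lesssim T\|A\|_F^2$ makes the fluctuation $O_p(\|A\|_F/(n\sqrt T))$, which is of smaller order.

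Your remaining pieces fail for the same reason.
\begin{itemize}
\item \emph{Sparse part.} The bound $n^{-1}|\Delta_S|_{1,1}\|Ax\|_{\max}\|By\|_{\max}$ needs $T^{-1}\sum_t\|X_{jt}\|_{\max}^2$ and $T^{-1}\sum_t\|\varepsilon_t\|_{\max}^2$ to be $O_p(1)$. These are extra hypotheses of Corollary~\ref{cor1}, not of the theorem, and with only $2+\delta$ moments such maxima grow polynomially in $n$. The paper instead uses $\ell_1$--max duality on the conditional mean: $|\langle\mathrm{off}(\Delta_S),N\rangle_F|\le|\mathrm{off}(\Delta_S)|_{1,1}\|N\|_{\max}$, with $\|N\|_{\max}\le\|N\|_2=O_p(1)$.
\item \emph{Diagonal of $\widehat S-S_0$.} This dense piece should not be handled through Theorem~\ref{Thm:DK1}(b), which reintroduces an undiscounted $\rho_n$. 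Since $\widehat W$ and $W_0$ both have zero diagonal, $\Delta=\mathrm{off}(\Delta_L)+\mathrm{off}(\Delta_S)$ and the diagonal cancels.
\item \emph{Endogenous terms.} Your fallback of bounding them by $\|\Delta\|_2\|\varepsilon_t\|_2$ yields $O_p(\rho_n)$, not $O_p(\mathcal R^*_{nT})$. The same conditional device handles them, since $\mathbb E[X_{jt}\varepsilon_t^\top\mid\mathcal E]$ is nonzero under endogeneity but has bounded operator norm.
\item \emph{Rank of $\widehat L$.} $\operatorname{rank}(\widehat L)\lesssim r$ does not follow from the SVT step with $\nu_n=C_\nu w_{2n}$ when $\mathcal R_s\gg w_{2n}$. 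What the argument actually uses is the nuclear-norm bound $\|\widehat L-L_0\|_*=O_p((r\vee1)\rho_n)$.
\end{itemize}
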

%%%%% Consistency and comparison with GMM %%%%%

Under the assumptions of Theorem~\ref{Thm:DK22}(a), the plug-in
estimator $\widehat\theta_p(\widehat W)$ is consistent provided that
\[
\mathcal R_{nT}^*=o(1).
\]

The factor $1/n$ discounts the noise contribution by the network 
size, so consistency holds even when $w_{2n}$ and $w_{\max n}$ do 
not vanish individually, for example, when $r$ and $m_s(S_0)$ 
remain bounded and $w_{2n}$ is of constant order. By contrast, the available bound for the GMM estimator based on the
noisy matrix does not guarantee consistency when $w_{2n}$ is bounded
away from zero; the moment-misspecification example above shows that
inconsistency can occur in this case. The noise-induced component of the plug-in estimator's rate is
strictly smaller whenever
\[
\mathcal R_{nT}^*
=o(w_{2n}+w_{2n}^2).
\]

%%%%% Source of the improve ment %%%%%
The improvement reflects how the $L+S$ decomposition enters the 
analysis. Denote
$\Delta L := \widehat L - L_0$ and $\Delta S := \widehat S - S_0$ as
the recovery errors from Theorem~\ref{Thm:DK1}.  The recovery error $\widehat W - W_0 = \Delta L + \Delta S$ 
affects the spillover estimator only through bilinear forms 
(averages over the network) with the instruments and residual 
design, and these averages discount the error according to its 
structure. The low-rank error enters through its nuclear norm, 
controlled by $\sqrt r\,\|\Delta L\|_F$, contributing $O(r)$ 
effective directions rather than all $n$; the sparse error enters 
through $|\Delta S|_{1,1}$, controlled by its $m_s(S_0)$ nonzero 
entries rather than all $n^2$. The measurement error can therefore 
be large in aggregate while the spillover estimator still converges.

Example~\ref{lewbel:e} constructs regimes in which
$|E|_{1,1}/n$ does not vanish, while the dimension-discounted
condition $\mathcal R_{nT}^*=o(1)$ may still hold. This 
condition is sharp for sparse measurement error such as link 
misclassification in $0$--$1$ networks, but fails in the dense 
case that motivates our analysis: Example~\ref{lewbel:e} 
constructs the case in which $|E|_{1,1}$ is of order $n^2$ yet 
the plug-in estimator remains consistent. The two analyses 
cover disjoint measurement-error regimes: sparse errors with 
\citet{Lewbeletal2024EJ}, dense errors with structured latent 
adjacency in our framework. When $L_0 = \mathbf{0}_{n \times n}$, 
the rate specializes to $O_p(m_s(S_0)w_{\max n}/n)$ 
(Corollary~\ref{cor1}).

The following corollary specializes Theorem~\ref{Thm:DK22} to the
pure-sparse case $L_0=\mathbf 0_{n\times n}$. Within
Assumption~\ref{weight}, only part~(ii) on $S_0$ is needed, and the
coherence terms drop out. The rate is driven entirely by the sparse
recovery error $n^{-1}|\widehat S-S_0|_{1,1}$, the same
$\ell_1$-type quantity as the $|E|_{1,1}/n$ condition of
\citet{Lewbeletal2024EJ}, but applied to the recovery error after
denoising rather than to the raw measurement error.

\begin{corollary}\label{cor1}
Suppose $L_0=\mathbf 0_{n\times n}$ and let $\widehat S$ denote the
pure-sparse estimator obtained by fixing
$\widehat L=\mathbf 0_{n\times n}$. Suppose
Assumption~\ref{weight}(ii),
Assumption~\ref{Assump:E_noise}(i),(iii), and
Assumptions~\ref{Assump:Theta}--\ref{Assump:Moments_Weight} hold.
Assume additionally that
\[
\|S_0\|_1=O(1),
\qquad
\|E\|_\infty=O_p(1),
\]
and
\[
\max_{1\leq j\leq K}
\frac{1}{T}\sum_{t=1}^T\|X_{jt}\|_{\max}^2=O_p(1),
\qquad
\frac{1}{T}\sum_{t=1}^T
\|\varepsilon_t\|_{\max}^2=O_p(1).
\]
Then
\begin{align*}
\|\widehat\theta_p(\widehat S)-\theta_0\|_2
&=
O_p\!\left(
\frac{|\widehat S-S_0|_{1,1}}{n}
\right)
+
O_p\!\left(\frac{1}{\sqrt{nT}}\right)\\
&=
O_p\!\left(
\frac{m_s(S_0)w_{\max n}}{n}
\right)
+
O_p\!\left(\frac{1}{\sqrt{nT}}\right).
\end{align*}
\end{corollary}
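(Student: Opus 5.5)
Since $\widehat\theta_p(\widehat S)$ has the closed form \eqref{Eq:Plugin_Estimator} with $\widehat W=\widehat S$, the natural approach is a perturbation of the oracle estimator $\widehat\theta_p(S_0)$. Write $\Delta\coloneqq\widehat S-S_0$ and $Y_t=X_t(S_0)\theta_0+\varepsilon_t$. Then
\[
\widehat\theta_p(\widehat S)-\theta_0=\Bigl[\widehat G^\top\widehat\Lambda_{nT}^{-1}\widehat G\Bigr]^{-1}\widehat G^\top\widehat\Lambda_{nT}^{-1}\frac{1}{nT}\sum_{t}Z_t(\widehat S)^\top\bigl\{\varepsilon_t-(X_t(\widehat S)-X_t(S_0))\theta_0\bigr\},
\]
with $\widehat G=(nT)^{-1}\sum_t Z_t(\widehat S)^\top X_t(\widehat S)$. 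Here $X_t(\widehat S)-X_t(S_0)=[\Delta Y_t,0,\Delta X_t]$. The plan is to show three things:
\begin{itemize}
\item[(i)] $\widehat G-G_0=o_p(1)$, so that the bracket is invertible with probability approaching one by Assumptions~\ref{Assump:Instruments}(iv) and \ref{Assump:Moments_Weight};
\item[(ii)] $(nT)^{-1}\sum_t Z_t(\widehat S)^\top\varepsilon_t=O_p(1/\sqrt{nT})+O_p(|\Delta|_{1,1}/n)$;
\item[(iii)] $(nT)^{-1}\sum_t Z_t(\widehat S)^\top[\Delta Y_t,\Delta X_t]=O_p(|\Delta|_{1,1}/n)$.
\end{itemize}
Combining (i)--(iii) gives the first displayed rate.

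The workhorse is an elementary bilinear bound: for $n$-vectors $a,b$,
\[
|a^\top\Delta b|\le|\Delta|_{1,1}\|a\|_{\max}\|b\|_{\max},
\]
so after averaging over $t$ and using Cauchy--Schwarz,
\[
\frac{1}{nT}\sum_t|a_t^\top\Delta b_t|\le\frac{|\Delta|_{1,1}}{n}\Bigl(\frac1T\sum_t\|a_t\|_{\max}^2\Bigr)^{1/2}\Bigl(\frac1T\sum_t\|b_t\|_{\max}^2\Bigr)^{1/2}.
\]
The extra assumptions of the corollary are tailored to make these max-norm averages $O_p(1)$.

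\emph{Max-norm control.} For $Z_t(\widehat S)$, columns are $\widehat S^\ell X_{jt}$, and $\|\widehat S^\ell x\|_{\max}\le\|\widehat S\|_\infty^\ell\|x\|_{\max}$. We need $\|\widehat S\|_\infty=O_p(1)$. Soft thresholding of $W=S_0+E$ with threshold $\tau_n$ gives $|\widehat S_{ij}|\le|W_{ij}|$ entrywise (in the pure-sparse case the estimator is exactly entrywise soft thresholding off the diagonal). Hence $\|\widehat S\|_\infty\le\|S_0\|_\infty+\|E\|_\infty=O_p(1)$, using $\|S_0\|_\infty\le1/(|\lambda|c)$ from Assumption~\ref{Assump:True_Adjacency}(iii) and $\|E\|_\infty=O_p(1)$.

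For $Y_t=B_0(X_t\beta_0+S_0X_t\gamma_0+\varepsilon_t)$, the bound $\|B_0\|_\infty\le1/c_\infty$ gives $\|Y_t\|_{\max}\lesssim\sum_j\|X_{jt}\|_{\max}+\|\varepsilon_t\|_{\max}$, whose mean square is $O_p(1)$.

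\emph{Term (ii).} Split it as
\[
Z_t(S_0)^\top\varepsilon_t+\bigl(Z_t(\widehat S)-Z_t(S_0)\bigr)^\top\varepsilon_t.
\]
The first piece is $O_p(1/\sqrt{nT})$ by Assumptions~\ref{Assump:Spatial_Errors}--\ref{Assump:Instruments} with $M=S_0$, via a variance computation using independence over $i$ and mixing over $t$. For the second piece, telescope
\[
\widehat S^\ell-S_0^\ell=\sum_{k<\ell}\widehat S^{k}\Delta S_0^{\ell-1-k}.
\]
Each summand contributes $\varepsilon_t^\top\widehat S^k\Delta S_0^{\ell-1-k}X_{jt}$. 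Apply the bilinear bound with $a_t=(\widehat S^{k})^\top\varepsilon_t$ and $b_t=S_0^{\ell-1-k}X_{jt}$.

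\textbf{Main obstacle.} Here $\|(\widehat S^k)^\top\varepsilon_t\|_{\max}$ involves column sums, $\|\widehat S\|_1$. I would bound $\|\widehat S\|_1\le\|S_0\|_1+\|\Delta\|_1$ and control $\|\Delta\|_1$ through the soft-thresholding structure: $|\Delta_{ij}|\le|E_{ij}|+\tau_n$ on the support, and zero off it when $\tau_n\ge w_{\max n}$. If this fails, I would alternatively bound the transposed form $\|\Delta^\top a\|$ directly via $\|\Delta\|_\infty$, using the symmetric roles of row and column sums in $|a^\top\Delta b|$. Term (iii) is handled identically, with $\Delta Y_t$ and $\Delta X_t$ each contributing one factor of $|\Delta|_{1,1}/n$. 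Term (i) follows because $\widehat G-G_n(S_0)$ is a sum of such terms and is therefore $o_p(1)$ once $|\Delta|_{1,1}/n\to0$.

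\emph{Second equality.} Substitute the sparse case of Theorem~\ref{Thm:DK1}(a). On the event $\|E\|_{\max}\le w_{\max n}$ with $\tau_n=C_\tau w_{\max n}$, soft thresholding kills all entries off $\mathrm{supp}(S_0)$, and each entry on the support errs by at most $|E_{ij}|+\tau_n\lesssim w_{\max n}$. Hence $|\Delta|_{1,1}\lesssim m_s(S_0)w_{\max n}$ with probability $1-p_n$, since with $a_n=0$ and $w_{2n}$ absent the general bound reduces to this. Dividing by $n$ gives the stated rate.
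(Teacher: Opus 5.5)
Your overall route matches the paper's. The paper proves the corollary by running the Step~2 perturbation of Theorem~\ref{Thm:DK22} with Lemma~\ref{rateG}(b) applied to $\widehat S$ in place of $W$, so that $\Delta=\widehat S-S_0$ plays the role of $E$ and $\|\Delta\|_\infty=O_p(1)$ is needed; it then inserts the sparse-recovery bound. Your steps (i)--(iii), the inequality $|a^\top\Delta b|\le|\Delta|_{1,1}\|a\|_{\max}\|b\|_{\max}$, and your max-norm control of $Y_t$ and $\widehat S$ are all consistent with this. Your direct soft-thresholding proof of $|\Delta|_{1,1}\lesssim m_s(S_0)w_{\max n}$ is a correct and more elementary substitute for the paper's appeal to the Step~1 argument in the proof of Theorem~\ref{Thm:DK1}.

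The one step that does not go through as written is your first resolution of the ``main obstacle''. The bound $|\Delta_{ij}|\le|E_{ij}|+\tau_n$ on the support only yields $\|\Delta\|_1\lesssim m_c(S_0)w_{\max n}$, and the hypotheses do not control this. The condition $\|S_0\|_1=O(1)$ permits a column with $m_c(S_0)\to\infty$ small entries, and $\|E\|_\infty=O_p(1)$ says nothing about column sums of $E$. Your second suggestion is too vague to check. There are two clean fixes:
\begin{enumerate}
\item On the event $\|E\|_{\max}\le w_{\max n}\le\tau_n$, soft thresholding gives $|\widehat S_{ij}|=(|S_{0,ij}+E_{ij}|-\tau_n)_+\le|S_{0,ij}|$ entrywise. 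Hence $\|\widehat S\|_1\le\|S_0\|_1=O(1)$, and also $\|\widehat S\|_\infty\le\|S_0\|_\infty$.
\item Telescope in the opposite order,
\[
\widehat S^\ell-S_0^\ell=\sum_{k=0}^{\ell-1}S_0^k\,\Delta\,\widehat S^{\ell-1-k}.
\]
The vector hitting $\Delta$ from the left is then $(S_0^k)^\top\varepsilon_t$, whose max norm needs only the assumed column-sum bound $\|S_0\|_1=O(1)$. The vector on the right is $\widehat S^{\ell-1-k}X_{jt}$, which needs only the row-sum bound $\|\widehat S\|_\infty=O_p(1)$ you already have.
\end{enumerate}
The second fix is exactly what the paper's inequalities $|AB|_{1,1}\le\|A\|_1|B|_{1,1}$ and $|AB|_{1,1}\le\|B\|_\infty|A|_{1,1}$ do implicitly when bounding $|S_0\Delta+\Delta S_0+\Delta^2|_{1,1}$. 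The same reordering also handles the $Z_t(\widehat S)-Z_t(S_0)$ pieces of $\widehat G$ in your step~(i). With either fix your argument is complete, given the implicit requirement, which you already note, that $|\Delta|_{1,1}/n\to_p0$ so the Jacobian stays invertible.
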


The rate in Corollary~\ref{cor1} follows from
Lemma~\ref{rateG}(b): the sparse recovery error enters the spillover
estimator only through
$n^{-1}|\widehat S-S_0|_{1,1}$. In the pure-sparse case, the
sparse-recovery argument used in the proof of
Theorem~\ref{Thm:DK1} gives
\[
\frac{1}{n}|\widehat S-S_0|_{1,1}
=
O_p\!\left(
\frac{m_s(S_0)w_{\max n}}{n}
\right).
\]
Although Corollary~\ref{cor1} covers the pure-sparse case, the gain
from our general analysis is most pronounced when the low-rank
component $L_0$ is genuinely present and $r$ and $m_s(S_0)$ remain
controlled relative to $n$.

{
\begin{remark}[Asymptotic collinearity]
It should be noted that  variation in peer connections plays an important role in identification and estimation as noted, for example, in \cite{Manski1993} (p.535) and \cite{Paula2017}.  In a \emph{single}, \emph{dense} and \emph{large} network, the leading eigenvector of a row-normalized and  positive adjacency matrix may become (asymptotically) proportional to a vector of ones.  In this case, peer averages become increasingly similar across individuals, complicating identification and estimation of spillover effects. A recent formulation of this issue in the context of peer effects is provided by \cite{hayes2024minimax}, who show that this collinearity arises when $W_0$ is row-normalized, positive, and exhibits unbounded minimum degrees. Our setting differs in two respects. 
For a nonnegative row-normalized matrix,
$W_0\mathbf 1_n=\mathbf 1_n$ holds exactly. When nodal covariates
are independent of the network and the minimum degree diverges, peer
averages may concentrate around a common value, producing asymptotic
collinearity; see \citet{hayes2024minimax}. Our assumptions do not
generally impose row normalization or diverging minimum degree, so
this conclusion does not follow automatically in our setting. The
presence of a sparse component $S_0$ does not by itself eliminate
this possibility. Without row normalization, $\mathbf 1_n$ need not
be an eigenvector of $W_0$, and the argument requires separate
verification.
$\blacksquare$
\end{remark}}

As mentioned previously, we emphasize that our results are also suitable 
for the case when $E$ is endogenous with respect to $\varepsilon_t$. As 
an example, in social networks this can arise when misclassification 
errors are not generated at random, instead being correlated with unobservables 
driving the network formation process, such as the framework considered 
by \citet{Candelaria2023}. To see this, we provide a rate analysis to 
compare the rate when $E$ is endogenous or not of both our estimator and 
the GMM estimator. For simplicity, we adopt the following assumption on 
the relationship between these two sources of error, see similar 
assumptions as in \citep{Lee2014}: 

\begin{assumption}[Endogenous measurement error]
\label{Assump:E_noise_endogenous}
Suppose Model~\eqref{maineq} holds and $W_0$ is symmetric and
positive semidefinite. For each $i$, let
\[
\varepsilon_i
=
(\varepsilon_{i1},\ldots,\varepsilon_{iT})^\top
\in\mathbb R^T,
\]
and let
\[
E_{i,-i}
=
(E_{ij}:j\neq i)^\top
\in\mathbb R^{n-1}
\]
collect the off-diagonal elements of the $i$th row of $E$, with
$E_{ii}=0$.

Suppose that the vectors
$(\varepsilon_i^\top,E_{i,-i}^\top)^\top$ are independent across
$i$ and jointly Gaussian:
\[
\begin{pmatrix}
\varepsilon_i\\
E_{i,-i}
\end{pmatrix}
\sim
\mathcal N\!\left(
\mathbf 0,\,
\Sigma_{\varepsilon E}
\right),
\]
where
\[
\Sigma_{\varepsilon E}
=
\begin{bmatrix}
\sigma_0^2 I_T
&
\sigma_0\sigma_E n^{-s}\rho_{\varepsilon E}\\
\sigma_0\sigma_E n^{-s}\rho_{\varepsilon E}^\top
&
\sigma_E^2n^{-2s}\Sigma_E
\end{bmatrix},
\qquad s>\frac12.
\]
Here,
$\rho_{\varepsilon E}\in\mathbb R^{T\times(n-1)}$ governs the
dependence between the outcome disturbances and the measurement
error, and
$\Sigma_E\in\mathbb R^{(n-1)\times(n-1)}$ is a correlation matrix.
Assume that there exist constants $c_E,C_E,C_\rho>0$, independent
of $n$ and $T$, such that
\[
c_E
\leq
\lambda_{\min}(\Sigma_E)
\leq
\lambda_{\max}(\Sigma_E)
\leq
C_E,
\qquad
\|\rho_{\varepsilon E}\|_2\leq C_\rho,
\]
and that $\Sigma_{\varepsilon E}$ is positive semidefinite.
\end{assumption}

Assumption~\ref{Assump:E_noise_endogenous} is imposed together with
Assumptions~\ref{Assump:E_noise} and~\ref{Assump:Spatial_Errors}.
Thus, the disturbances remain marginally i.i.d.\ across $i$ and $t$,
while dependence between $E$ and $\varepsilon_t$ is allowed. The
Gaussian specification is imposed solely to illustrate the rate in
the presence of endogenous measurement error.

Denote $\widehat{\theta}_{GMM}(W,M)$ and $\widehat{\theta}_p(\widehat{W},M)$ 
the GMM and plug-in estimator with instrument $Z_t(M)$. In comparison to the 
exogenous measurement error case, the bias of 
$\widehat{\theta}_{GMM}(W,M) = \widehat{\theta}_p(W,M)$ could be larger than 
our proposed estimators. It is because the deviation between sample moments 
$\Bar{g}_{nT}(\theta, W, M)$ and $\Bar{g}_{nT}(\theta, W_0, M)$ might be 
more severe. This is reflected in our simulation exercises, see 
Section~\ref{Sec:Simulation}. For simplicity, we shall illustrate without the contextual-effects regressor $W_0X_t$ 
(i.e., setting $\gamma_0 = \mathbf{0}_K$), so the regressors are 
$[W_0Y_t, X_t]$. Define 
${G}(M) = -n^{-1}\mathbb{E}[Z^{\top}_t(M)[W_0 Y_t, X_t]]$. To explain this, 
take $M$ as exogenous; the leading term for the moment estimator 
$\widehat{\theta}_p(\widehat{W},M) - \theta_0$ is 
$-(G^{\top}(M)\Lambda^{-1}G(M))^{-1}G^{\top}(M)\Lambda^{-1}
\Bar{g}_{nT}(\theta_0, W_0, M)$, 
according to Theorem~\ref{Thm:DK22}. We shall suppress the dependence of 
$M$ on $W_0$ and $E$. The following proposition characterizes a rate bound 
for both the GMM estimator and our estimator. % It isolates the rate's role in the simpler case where instruments are independent of $W$; the general case $Z_t(W)$ — with additional bias terms from both the moment condition and the gradient — was handled by Theorem~\ref{Thm:DK22}.

Recall that
$\bar g_{nT}(\theta,W,M)
=
\frac1{nT}\sum_{t=1}^T
Z_t(M)^\top
\left(
Y_t-\lambda WY_t-X_t\beta-WX_t\gamma
\right)$.
\begin{proposition}[Endogenous error bias]
\label{lemmabias}
Suppose that $M$ is fixed or exogenous and that the conditional-moment
conditions in Lemma~\ref{rateG}(c) hold. Under
Assumptions~\ref{weight}--\ref{Assump:E_noise_endogenous}, with
$\|B_0\|_2(1+\|W_0\|_2)=O(1)$ and
$\mathcal R_{nT}^*=o(1)$, we have
\begin{align}
\|\widehat\theta_p(\widehat W,M)-\theta_0\|_2
&\lesssim_p
\mathcal R_{nT}^*
+\frac1{\sqrt{nT}},
\label{prop1_denoised}\\
\|\widehat\theta_{GMM}(W,M)-\theta_0\|_2
&\lesssim_p
n^{1/2-s}\|\rho_{\varepsilon E}\|_2
+w_{2n}
+\frac1{\sqrt{nT}}.
\label{prop1_gmm}
\end{align}

\end{proposition}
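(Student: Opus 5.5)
The plan is to handle both estimators through the standard linear-GMM decomposition, with fixed exogenous instruments $Z_t(M)$. For any working matrix $V\in\{\widehat W,W\}$, the closed form \eqref{Eq:GMM_estimator} gives
\[
\widehat\theta(V,M)-\theta_0=\bigl[\widehat G(V)^\top\widehat\Lambda_{nT}^{-1}\widehat G(V)\bigr]^{-1}\widehat G(V)^\top\widehat\Lambda_{nT}^{-1}\,\bar g_{nT}(\theta_0,V,M),
\]
where $\widehat G(V)=(nT)^{-1}Z(M)^\top X(V)$. The identity $Y_t=X_t(W_0)\theta_0+\varepsilon_t$ yields
\[
\bar g_{nT}(\theta_0,V,M)=\bar g_{nT}(\theta_0,W_0,M)-\frac{1}{nT}\sum_t Z_t(M)^\top(V-W_0)(\lambda_0Y_t+X_t\gamma_0).
\]
The first step is to show that $\widehat G(V)\to_p G(M)$, which has full column rank by Assumption~\ref{Assump:Instruments}(iv). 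This follows because $\widehat G(V)-\widehat G(W_0)$ is again a bilinear average in $V-W_0$ of the same form as the second term above. Given that, both rates reduce to bounding $\bar g_{nT}(\theta_0,W_0,M)=O_p(1/\sqrt{nT})$, which follows from a CLT and variance computation under Assumptions~\ref{Assump:Spatial_Errors}--\ref{Assump:Instruments}, together with the perturbation term $\Delta(V):=(nT)^{-1}\sum_t Z_t(M)^\top(V-W_0)Y_t$.

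For the plug-in bound \eqref{prop1_denoised}, take $V=\widehat W$. Since $M$ is exogenous, $Z_t(M)$ does not depend on $\widehat W$, so only the regressor side is perturbed. This is a strictly simpler situation than in Theorem~\ref{Thm:DK22}(a). I will reuse the bilinear-form bounds from that proof, namely Lemma~\ref{rateG}, together with the conditional-moment conditions in Lemma~\ref{rateG}(c), which are exactly what controls the dependence between $E$ (hence $\widehat W$) and $(X_t,\varepsilon_t)$. Write $\widehat W-W_0=\Delta L+\Delta S$ and $Y_t=B_0(X_t\beta_0+\varepsilon_t)$. The low-rank part is bounded by $n^{-1}\|\Delta L\|_*\,\|B_0\|_2$ times norms of the $Z,X$ averages, and $\|\Delta L\|_*\le\sqrt{2r}\|\Delta L\|_F$. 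The sparse part is bounded by $n^{-1}|\Delta S|_{1,1}$ times max-norm moments. Plugging in Theorem~\ref{Thm:DK1} then gives $O_p(\mathcal R^*_{nT})$, and the hypothesis $\|B_0\|_2(1+\|W_0\|_2)=O(1)$ absorbs the $\kappa_n^{d+1}$ factors. The endogeneity in Assumption~\ref{Assump:E_noise_endogenous} enters only through these conditional-moment conditions, so it does not worsen the rate. This is precisely the point of the proposition.

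For the GMM bound \eqref{prop1_gmm}, take $V=W$ so that $V-W_0=E$, and split
\[
\Delta(W)=\frac{1}{nT}\sum_t Z_t^\top E B_0X_t\beta_0+\frac{1}{nT}\sum_t Z_t^\top EB_0\varepsilon_t .
\]
The first term is at most $n^{-1}\|Z\|\,\|E\|_2\|B_0\|_2\|X\|\lesssim_p w_{2n}$, using Cauchy--Schwarz with $n^{-1}\|Z_t\|_F^2=O_p(1)$; the same bound handles $\widehat G(W)-\widehat G(W_0)$. The second term carries the endogeneity bias, and I expect it to be the main obstacle. Its expectation does not vanish: by Gaussian conditioning,
\[
\mathbb E[E_{ij}\varepsilon_{it}]=\sigma_0\sigma_E n^{-s}(\rho_{\varepsilon E})_{tj}.
\]
Since $Z_t$ is exogenous, I will center $\varepsilon_t$ conditionally on $E$, writing $\varepsilon_i=\mathbb E[\varepsilon_i\mid E_{i,-i}]+\eta_i$. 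The fluctuation part is then bounded as $O_p(w_{2n}/\sqrt{T})$, or more crudely by $w_{2n}$. The mean part is linear in $\rho_{\varepsilon E}$, scaled by $n^{-s}$, and its bilinear form sums over $n$ rows and $n$ columns. I will bound it by $n^{-1}\cdot n^{-s}\cdot\|\rho_{\varepsilon E}\|_2\cdot\|Z\|\cdot\|B_0\|_2\cdot\sqrt n\cdot\sqrt n$, where the $\sqrt n$ factors come from the Frobenius norms of the $Z$ columns and the $E$ row scaling. Collecting factors, the target is $n^{1/2-s}\|\rho_{\varepsilon E}\|_2$.

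The delicate part is bookkeeping the powers of $n$ so that exactly $n^{1/2-s}$ survives, using $\lambda_{\max}(\Sigma_E)\le C_E$ and the row independence. If a direct bound is too crude, I will compute the second moment of the mean term explicitly via Isserlis' formula. Combining the three pieces with the invertible Hessian limit completes the proof.
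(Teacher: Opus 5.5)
Your overall structure is the paper's. You use the exact linear-GMM representation. For the $\widehat W$ perturbation you use Lemma~\ref{rateG}(c), and only the regressor side moves because $Z_t(M)$ is fixed. For $\widehat\theta_{GMM}(W,M)$ you split $\varepsilon_t=\mu_E+\xi_t$ into its conditional mean given $E$ and a conditionally centred part. If you simply invoke Lemma~\ref{rateG}(c), absorbing the powers of $M$ into the $\sigma(E)$-measurable factors, the plug-in bound follows as in the paper.

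The explicit bilinear bound you sketch instead fails at one concrete step. Theorem~\ref{Thm:DK1}(a) only gives $|\widehat S-S_0|_{1,1}\lesssim_p s_n+n(w_{2n}+\mathcal R_s)$, and the second term comes from the diagonal of $\Delta S$. So $n^{-1}|\Delta S|_{1,1}$ leaves an undiscounted $w_{2n}+\mathcal R_s$. That is the same order as the noisy-$W$ rate, not $O_p(\mathcal R^*_{nT})$. You must use $\operatorname{diag}(\Delta L+\Delta S)=0$, in one of two ways:
\begin{itemize}
\item write $\widehat W-W_0=\operatorname{off}(\Delta L)+\operatorname{off}(\Delta S)$ as the paper does, so only $|\operatorname{off}(\Delta S)|_{1,1}=O_p(s_n)$ enters and the low-rank piece goes through duality with $\|\operatorname{off}(N)\|_2\le 2\|N\|_2$;
\item or bound the diagonal of $\Delta S$ by $\sum_i|(\Delta L)_{ii}|\le\|\Delta L\|_*$.
\end{itemize}
Also, $\|\Delta L\|_*\lesssim\sqrt r\,\|\Delta L\|_F$ is not automatic, since $\widehat L$ may have rank above $r$. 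It comes from the cone condition.

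A second issue is in the same sketch. In the nuclear-norm and $\ell_1$--max dualities, the matrix paired with $\Delta L$ or $\operatorname{off}(\Delta S)$ must be the conditional mean $T^{-1}\sum_t\mathbb E[a_tb_t^\top\mid\sigma(E)]$. Its operator and max norms are $O_p(1)$ by \eqref{eq:conditional-moment-op}--\eqref{eq:conditional-moment-max}. The sample average $T^{-1}\sum_t a_tb_t^\top$ instead has operator norm of order $1+n/T$; for $T=1$ it is $\|a_1\|_2\|b_1\|_2\asymp n$. Deterministic max-norm bounds on $X_t,\varepsilon_t$ are not among the hypotheses. They are the extra conditions of Lemma~\ref{rateG}(b), and for Gaussian $\varepsilon_t$ they cost at least a logarithmic factor. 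The centred remainder has to go through \eqref{eq:conditional-bilinear}. That gives $O_p(\|A\|_F/(n\sqrt T))$, combined with the Frobenius rate $\sqrt{r\vee 1}\,\rho_n$.

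The GMM half of your argument works, and the power counting is simpler than you expect. Your count $n^{-1}\cdot n^{-s}\cdot\sqrt n\cdot\sqrt n$ does not itself produce $n^{1/2-s}$. The $n^{-s}$ scalings in $\operatorname{Cov}(\varepsilon_i,E_{i,-i})$ and in $E_{i,-i}$ cancel against the $n^{2s}$ in $\operatorname{Var}(E_{i,-i})^{-1}$. Hence $\operatorname{Var}(\mu_{E,it})\le\sigma_0^2c_E^{-1}\|\rho_{\varepsilon E}\|_2^2$ and $\|\mu_{E,t}\|_2=O_p(\sqrt n\,\|\rho_{\varepsilon E}\|_2)$. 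Cauchy--Schwarz then bounds the mean term by $O_p(\|B_0\|_2\|E\|_2\|\rho_{\varepsilon E}\|_2)$. Under Assumption~\ref{Assump:E_noise_endogenous}, $\|E\|_2=O_p(n^{1/2-s})$, so the $n^{1/2-s}$ comes from $\|E\|_2$ and no Isserlis computation is needed.
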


Three features of this comparison are worth highlighting.

\emph{First}, the endogeneity bias $n^{1/2-s}\|\rho_{\varepsilon E}\|_2$ 
appears only in the GMM rate~\eqref{prop1_gmm}. It originates from a 
bilinear form that is quadratic in $E$: the correlation between $E$ and 
$\varepsilon_t$ contributes a non-vanishing mean to 
$Z_t^\top E B_0\varepsilon_t$, whose magnitude is governed by 
$\rho_{\varepsilon E}$.
In the plug-in case, the analogous bilinear form involves
$(\Delta L+\Delta S)B_0\varepsilon_t$. Its low-rank component is
controlled by
\[
\|\Delta L\|_*
=
O_p\!\left((r\vee1)(w_{2n}+\mathcal R_s)\right),
\]
whereas its off-diagonal sparse component is controlled by
\[
|\Delta S|_1
=
O_p\!\left(\mathcal R_s\sqrt{m_s(S_0)}\right).
\]
Under the conditional-moment conditions, the plug-in bound therefore
contains no additional explicit term involving
$\rho_{\varepsilon E}$.
 \emph{Second}, even in the absence of endogeneity 
($\rho_{\varepsilon E} = 0$), the denoised rate improves upon the GMM 
rate: the noise-level term $w_{2n}$ 
in~\eqref{prop1_gmm} is replaced by the rate 
$\mathcal{R}^*_{nT}$ in~\eqref{prop1_denoised}, which is strictly 
smaller whenever 
$\mathcal{R}^*_{nT} = o(w_{2n})$.\emph{Third}, the denoised rate \emph{attenuates} the dependence on 
the noise magnitude $w_{2n}$ rather than eliminating it: $w_{2n}$ 
enters through the $L+S$ coupling $\|\Delta L\|_2 = O_p(w_{2n}+\mathcal{R}_s)$, 
but its effect is multiplied by the factor $r/n$ rather 
than by 1. As a result, the rate scales not with the raw noise level 
$w_{2n}$  but with the ratio of the structural complexity ($r$, $m_s(S_0)$) 
to the sample size $n$. This reflects the benefit of our approach.
Finally, we consider the rate of the supervised estimator as defined in Equation (\ref{Eq:Supervised_Estimator}).
Denote $\widehat{W}_s = \widehat{L}_s+\widehat{S}_s.$

\begin{theorem}[Performance of the supervised estimator]
\label{Thm:Supervised_Rate}
Suppose Assumptions~\ref{weight}--\ref{Assump:Moments_Weight}
hold. Let $M$ be fixed, exogenous, or $\sigma(E)$-measurable and
satisfy the stated instrument and design conditions.
Let $\widehat\theta_s$ denote the supervised estimator defined
in~\eqref{Eq:Supervised_Estimator}.
\[
\nu_{nT}=C_\nu w_{2n},
\qquad
\tau_{nT}=C_\tau(w_{\max n}+a_n),
\]
for sufficiently large constants $C_\nu,C_\tau>0$.

Suppose
\[
\mathcal R^*_{nT}=o(1)
\qquad\text{and}\qquad
\xi_{nT}=O(1).
\]
Then
\begin{equation}
\|\widehat\theta_s-\theta_0\|_2
=
O_p\left(
\frac1{\sqrt{nT}}+\mathcal R^*_{nT}
\right).
\label{superv8}
\end{equation}

If, in addition,
\[
\sqrt{nT}\,\mathcal R^*_{nT}=o(1)
\qquad\text{and}\qquad
\xi_{nT}=o(1),
\]
then
\[
\sqrt{nT}
(\widehat\theta_s-\theta_0)
\overset{d}{\longrightarrow}
\mathcal N(\mathbf 0,\Sigma_{GMM}).
\]
\end{theorem}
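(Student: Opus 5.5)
Our plan is to reduce the supervised estimator to the plug-in analysis of Theorem~\ref{Thm:DK22} by treating $\widehat W_s=\widehat L_s+\widehat S_s$ as a perturbed version of the two-step estimate $\widehat W$. We first profile out $\theta$. For fixed $(L,S)$, the objective in~\eqref{Eq:Supervised_Estimator} is quadratic in $\theta$, so its minimizer is the closed-form GMM expression $\widehat\theta_{p}(L+S,M)$ from~\eqref{Eq:GMM_estimator} with $W$ replaced by $L+S$. Hence $\widehat\theta_s=\widehat\theta_p(\widehat W_s,M)$. It therefore suffices to show two things: (i) $\widehat W_s$ obeys the same recovery bounds as in Theorem~\ref{Thm:DK1} (Frobenius, spectral, $|\cdot|_{1,1}$ and nuclear-norm versions), and (ii) the plug-in argument of Theorem~\ref{Thm:DK22}, which only uses those bounds through bilinear forms, goes through with $M$ fixed, exogenous, or $\sigma(E)$-measurable.

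\textbf{Step 1 (basic inequality).} We compare the joint objective at $(\widehat\theta_s,\widehat L_s,\widehat S_s)$ with its value at the feasible point $(\theta_0,L_0,S_0)$. Because $J_{nT}\ge0$, this gives
\[
\tfrac12\|W-\widehat W_s\|_F^2+\nu_{nT}\|\widehat L_s\|_*+\tau_{nT}|\widehat S_s|_{1,1}
\le \tfrac12\|E\|_F^2+\nu_{nT}\|L_0\|_*+\tau_{nT}|S_0|_{1,1}+\xi_{nT}J_{nT}(\theta_0,W_0,M;\widehat\Lambda_{nT}^{-1}).
\]
Since $\bar g_{nT}(\theta_0,W_0,M)=O_p((nT)^{-1/2})$ by the CLT under the mixing and moment Assumptions~\ref{Assump:Spatial_Errors}--\ref{Assump:Instruments}, the extra term is $\xi_{nT}\,O_p(1)=O_p(1)$ when $\xi_{nT}=O(1)$. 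We then rerun the decomposable-regularizer argument behind Theorem~\ref{Thm:DK1}, in the style of \citet{Agarwal2012}: with $\nu_{nT}\ge 2\|E\|_2$ and $\tau_{nT}\ge 2(\|E\|_{\max}+a_n)$, the error $(\Delta L_s,\Delta S_s)$ lies in the usual cone. The additional $O_p(1)$ slack adds a term of order $O_p(1)$ to $\mathcal R_F$, which is negligible relative to $\mathcal R_F$ only when $\mathcal R_F$ does not vanish.

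\textbf{Step 2 (handling the slack).} Step~1 reveals the delicate point. To avoid losing the slack, we instead compare with the point $(\widehat\theta_s,L_0,S_0)$ and use that $J_{nT}(\widehat\theta_s,\cdot,M)$ is a smooth function of the working matrix. Writing
\[
J_{nT}(\widehat\theta_s,\widehat W_s)-J_{nT}(\widehat\theta_s,W_0)
\]
as a linear term plus a quadratic term in $\widehat W_s-W_0$, the linear term is controlled by bilinear forms $(nT)^{-1}\sum_t Z_t^\top \Delta\, Y_t$. These are bounded by $n^{-1}\|\Delta\|_*$ times $O_p(1)$ norms from Lemma~\ref{rateG}. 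Multiplied by $nT\,\xi_{nT}$ and by $\bar g=O_p((nT)^{-1/2})$, this gives a perturbation of order $\xi_{nT}\sqrt{T/n}\,\|\Delta\|_*$. It can be absorbed into the nuclear-norm penalty when $\nu_{nT}$ dominates it, possibly after enlarging $C_\nu$. This is the step I expect to be the main obstacle: securing $\|\Delta L_s\|_*\lesssim (r\vee1)\rho_n$ and $|\Delta S_s|_{1,1}\lesssim s_n$ despite the GMM term's coupling. If direct absorption fails, we would fall back on a localization argument, restricting to the event where the Theorem~\ref{Thm:DK1} rates hold and showing the GMM term cannot move the minimizer outside it.

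\textbf{Step 3 (rate and normality).} With these bounds in hand, we apply the proof of Theorem~\ref{Thm:DK22}(a) verbatim to $\widehat\theta_p(\widehat W_s,M)$. Expanding $X(\widehat W_s)$ and $Z(\widehat W_s)$ around $W_0$ via the powers $\widehat W_s^{\ell}-W_0^{\ell}$, each bounded by $\kappa_n^{d}\|\Delta\|$, together with the resolvent $B_0$, yields gradient and moment perturbations of order $\mathcal R^*_{nT}$. This gives~\eqref{superv8}. For the normality claim, $\sqrt{nT}\mathcal R^*_{nT}=o(1)$ makes these perturbations $o_p((nT)^{-1/2})$. The condition $\xi_{nT}=o(1)$ removes the residual first-order influence of the GMM term on $\widehat W_s$ that would otherwise enter the moment through $\widehat W_s-\widehat W$. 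The leading term is then
\[
-(G_0^\top\Lambda^{-1}G_0)^{-1}G_0^\top\Lambda^{-1}\bar g_{nT}(\theta_0,W_0,M),
\]
and the CLT of Theorem~\ref{Thm:DK22}(c) delivers $\mathcal N(\mathbf 0,\Sigma_{GMM})$.
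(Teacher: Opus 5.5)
The reduction fails at the step you treat as routine, not only at the one you flag. Lemma~\ref{rateG}(c), and hence the rate $\mathcal R^*_{nT}$ in Theorem~\ref{Thm:DK22}, is proved by conditioning on $\mathcal E=\sigma(E)$. The two-step $\widehat W$ is a function of $W=W_0+E$ alone, so $\Delta_W=\widehat W-W_0$ is $\mathcal E$-measurable. That is what lets the conditional mean of each bilinear form $\frac{1}{nT}\sum_t a_t^\top A b_t$ be bounded by $n^{-1}\|\Delta L\|_*$ and $n^{-1}|\operatorname{off}(\Delta S)|_{1,1}$ times $O_p(1)$ factors, with the fluctuation handled by the conditional-variance bound. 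Your $\widehat W_s$ depends on $(Y_t,X_t)$ through the GMM term, so it is not $\mathcal E$-measurable. Even matching the norm bounds of Theorem~\ref{Thm:DK1} would therefore not let you rerun Theorem~\ref{Thm:DK22} ``verbatim.'' For a data-dependent $\Delta$ you only have uniform bounds such as $\bigl|\frac{1}{nT}\sum_t a_t^\top\Delta b_t\bigr|\le n^{-1}\|\Delta\|_*\,\bigl\|\frac1T\sum_t b_ta_t^\top\bigr\|_2$ or $\lesssim_p\|\Delta\|_2$. Since $\bigl\|\frac1T\sum_t b_ta_t^\top\bigr\|_2$ is generically of order $n$ when $T$ is fixed, the $1/n$ discount disappears and you are back at a noisy-GMM-type rate.

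The same point undermines Step~2 in three ways. The bound ``$n^{-1}\|\Delta\|_*$ times $O_p(1)$'' is Lemma~\ref{rateG}(c) applied to a data-dependent $\Delta$. The coefficient of your linear term is $\bar g_{nT}(\widehat\theta_s,W_0,M)=\bar g_{nT}(\theta_0,W_0,M)+\widehat G(W_0,M)(\widehat\theta_s-\theta_0)$, which is $O_p((nT)^{-1/2})$ only if the rate of $\widehat\theta_s$ is already known, so the argument is circular. And the absorption into the penalty requires $C_\nu w_{2n}\gtrsim\xi_{nT}\sqrt{T/n}$, which is not among the hypotheses.

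The missing idea is to apply Lemma~\ref{rateG}(c) only to the $\mathcal E$-measurable two-step estimate $\widehat W=\widehat L+\widehat S$, and to treat the supervised matrix as a small Frobenius perturbation of it. Separate rates for $\widehat L_s$ and $\widehat S_s$ are never needed, so your goal (i) can be dropped. The paper analyzes one block-coordinate pass started at $(\widehat L,\widehat S)$:
\begin{enumerate}
\item It sets $\widehat\theta^{[1]}=\widehat\theta_p(\widehat W,M)$, which has rate $a_{nT}=(nT)^{-1/2}+\mathcal R^*_{nT}$ and satisfies $\bar g_{nT}(\widehat\theta^{[1]},\widehat W,M)=O_p(a_{nT})$.
\item The adjacency-only criterion is minimized at $(\widehat L,\widehat S)$ and is $1$-strongly convex in the direction $L+S$. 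The descent inequality then gives $\frac12\|\widehat W_s-\widehat W\|_F^2\le\xi_{nT}\|\nabla_W q_{nT}(\widehat\theta^{[1]},\widehat W,M)\|_2\,\|\widehat W_s-\widehat W\|_F$, hence $\|\widehat W_s-\widehat W\|_F=O_p(\xi_{nT}a_{nT})$.
\item The affine identity $\bar g_{nT}(\theta,W_1,M)-\bar g_{nT}(\theta,W_2,M)=-\overline{\widetilde X}_{nT}(\theta,M)^\top\operatorname{vec}(W_1-W_2)$ with $\|\overline{\widetilde X}_{nT}\|_2=O_p(1)$, plus the matching Lipschitz bound for the Jacobian, transfers the rate with no measurability requirement. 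It also gives $\|\widehat\theta_s-\widehat\theta^{[1]}\|_2=O_p(\xi_{nT}a_{nT})=o_p((nT)^{-1/2})$ under the normality conditions.
\end{enumerate}

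For the exact joint minimizer you analyze, the anchoring is even simpler. Compare with the feasible point $(\widehat\theta^{[1]},\widehat L,\widehat S)$ and discard the nonnegative GMM term at the minimizer. This yields $\frac12\|\widehat W_s-\widehat W\|_F^2\le\xi_{nT}q_{nT}(\widehat\theta^{[1]},\widehat W,M)=O_p(\xi_{nT}a_{nT}^2)$, which suffices for both conclusions.

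This, like the paper's proof, puts the weight $\xi_{nT}$ on $q_{nT}=\bar g_{nT}^\top\widehat\Lambda_{nT}^{-1}\bar g_{nT}=J_{nT}/(nT)$. The $O_p(\xi_{nT})$ slack you found in Step~1 is genuine under the literal weighting $\xi_{nT}J_{nT}$. Under that weighting both stability bounds pick up a factor $nT$ (respectively $\sqrt{nT}$).
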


The supervised estimator achieves the same rate as the plug-in 
estimator in Theorem~\ref{Thm:DK22} and it is strictly smaller
than that of the noisy-$W$ GMM estimator whenever
$\mathcal R_{nT}^*=o(w_{2n}+w_{2n}^2)$.
The rate still depends on $w_{2n}$, but only through the
dimension-discounted recovery rate $\mathcal R_{nT}^*$.
It is worth noting that the rate in~\eqref{superv8} depends on 
the parameters $(r, m_s(S_0), n, T)$ and on the 
spectral characteristics of the underlying network 
($\|B_0\|_2, \|W_0\|_2$), rather than directly on the noise level 
$w_{2n}$ compared to the rate of the GMM estimator. The condition on $\xi_{nT}$ 
ensures that the generated error from the first-iteration 
estimator $\widehat{\theta}^{[1]}$ does not inflate either the 
sparse recovery  in Step b) or the low-rank recovery in Step c) 
beyond the unsupervised noise level, so that the supervised 
decomposition matches the plug-in decomposition rates. %If $L_0 = \mathbf{0}_{n\times n}$, the rate holds for any choice of $\xi_{nT}$ satisfying the above condition.

The following corollary specializes Theorem~\ref{Thm:Supervised_Rate} 
to the pure-sparse case $L_0 = \mathbf{0}_{n\times n}$, paralleling 
Corollary~\ref{cor1}: only Assumption~\ref{weight}(ii) on $S_0$ is 
needed, the coherence terms drop out, and the rate is driven entirely 
by the sparse recovery error $n^{-1}|\widehat S_s - S_0|_{1,1}$, 
making the connection to the misclassification framework of 
\citet{Lewbeletal2024EJ} explicit.

\begin{corollary}\label{cor2}
Suppose $L_0=\mathbf 0_{n\times n}$ and consider the pure-sparse
supervised estimator obtained by imposing $L=\mathbf 0_{n\times n}$,
so that $\widehat W_s=\widehat S_s$. Suppose
Assumption~\ref{weight}(ii),
Assumption~\ref{Assump:E_noise}(i),(iii), and
Assumptions~\ref{Assump:Theta}--\ref{Assump:Moments_Weight} hold.
Assume, in addition, that
\[
\|W_0\|_1=O(1),
\qquad
\|E\|_\infty=O_p(1),
\qquad
\xi_{nT}=O(1),
\]
and
\[
\max_{1\leq j\leq K}
\frac{1}{T}\sum_{t=1}^T\|X_{jt}\|_{\max}^2=O_p(1),
\qquad
\frac{1}{T}\sum_{t=1}^T
\|\varepsilon_t\|_{\max}^2=O_p(1).
\]
Then
\begin{align}
\|\widehat\theta_s-\theta_0\|_2
&=
O_p\!\left(
\frac{1}{n}|\widehat S_s-S_0|_{1,1}
\right)
+
O_p\!\left(\frac{1}{\sqrt{nT}}\right).
\end{align}
Moreover, Lemma~\ref{rateG}(b), together with the sparse-recovery
bound for $\widehat S_s$, gives
\begin{align}
\|\widehat\theta_s-\theta_0\|_2
&=
O_p\!\left(
\frac{m_s(S_0)w_{\max n}}{n}
\right)
+
O_p\!\left(\frac{1}{\sqrt{nT}}\right).
\end{align}
\end{corollary}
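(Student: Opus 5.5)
We prove Corollary~\ref{cor2} by specializing the argument behind Theorem~\ref{Thm:Supervised_Rate} to $L=\mathbf 0_{n\times n}$, in three steps.

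\textbf{Step 1: linearize around $\theta_0$.} The supervised estimator $\widehat\theta_s$ minimizes the GMM criterion $J_{nT}(\theta,\widehat S_s,M;\widehat\Lambda_{nT}^{-1})$ over $\theta$, with $\widehat S_s$ held fixed at its joint optimum. The criterion is quadratic in $\theta$, so $\widehat\theta_s=\widehat\theta_p(\widehat S_s)$ in the closed form \eqref{Eq:GMM_estimator} with $W$ replaced by $\widehat S_s$. Write $\Delta S:=\widehat S_s-S_0$. Subtracting $\theta_0$ gives
\[
\widehat\theta_s-\theta_0=-\bigl(\widehat G^\top\widehat\Lambda_{nT}^{-1}\widehat G\bigr)^{-1}\widehat G^\top\widehat\Lambda_{nT}^{-1}\bar g_{nT}(\theta_0,\widehat S_s,M).
\]
Here $\widehat G$ is the sample Jacobian built from $X(\widehat S_s)$ and $Z(M)$ (or $Z(\widehat S_s)$).

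\textbf{Step 2: bound the Jacobian and moment perturbations.} Decompose
\[
\bar g_{nT}(\theta_0,\widehat S_s)=\bar g_{nT}(\theta_0,W_0)-\tfrac{1}{nT}\sum_t Z_t^\top\Delta S\,(\lambda_0Y_t+X_t\gamma_0)+(\text{instrument perturbation}).
\]

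\emph{Oracle moment.} The term $\bar g_{nT}(\theta_0,W_0)$ is $O_p((nT)^{-1/2})$. This follows from Assumptions~\ref{Assump:Spatial_Errors}--\ref{Assump:Instruments} and a mixing CLT/variance bound.

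\emph{Perturbation terms.} Each has the form $\frac{1}{nT}\sum_t a_t^\top \Delta S\, b_t$, with $a_t,b_t$ columns of $X_t$, $W_0^\ell X_t$, $Y_t=B_0(\cdots+\varepsilon_t)$ or $\varepsilon_t$. I bound these entrywise:
\[
\Bigl|\frac{1}{nT}\sum_t a_t^\top\Delta S\,b_t\Bigr|\le \frac{|\Delta S|_{1,1}}{n}\Bigl(\frac1T\sum_t\|a_t\|_{\max}^2\Bigr)^{1/2}\Bigl(\frac1T\sum_t\|b_t\|_{\max}^2\Bigr)^{1/2}.
\]
This is exactly Lemma~\ref{rateG}(b).

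\emph{Max-norm bounds on $a_t,b_t$.} These come from the extra hypotheses $\frac1T\sum_t\|X_{jt}\|_{\max}^2=O_p(1)$ and $\frac1T\sum_t\|\varepsilon_t\|_{\max}^2=O_p(1)$, together with $\|M v\|_{\max}\le\|M\|_\infty\|v\|_{\max}$. The relevant operator norms are controlled as follows. First, $\|W_0\|_\infty$ and $\|B_0\|_\infty\le(1-c_\infty)^{-1}$ are bounded by Assumption~\ref{Assump:True_Adjacency}(iii). Second, $\|\widehat S_s\|_\infty\le\|W_0\|_\infty+\|E\|_\infty+\tau$-shrinkage slack is $O_p(1)$: soft-thresholding does not increase absolute row sums of $W$, and $\|E\|_\infty=O_p(1)$ is assumed. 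Third, $\|W_0\|_1=O(1)$ is needed for the transposed products $Z_t^\top$ appearing in $\widehat G$. Powers up to $d$ stay bounded because $d$ is fixed.

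\emph{Instrument perturbation.} When $Z_t(\widehat S_s)$ is used, $\widehat S_s^\ell-W_0^\ell$ telescopes into $\sum_k \widehat S_s^k\Delta S\,W_0^{\ell-1-k}$. Each summand is handled by the same bilinear bound.

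\emph{Jacobian.} By the same argument $\widehat G-G_0=O_p(|\Delta S|_{1,1}/n)+o_p(1)$. Full column rank of $G_0$ (Assumption~\ref{Assump:Instruments}(iv)) and Assumption~\ref{Assump:Moments_Weight} then give a bounded inverse with probability approaching one. This yields the first display.

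\textbf{Step 3: sparse recovery for the supervised $\widehat S_s$.} This is the main obstacle: $\widehat S_s$ is not plain soft-thresholding of $W$, because the term $\xi_{nT}J_{nT}$ perturbs its optimality condition. I would write the KKT condition
\[
W-\widehat S_s-\xi_{nT}\nabla_S J_{nT}(\widehat\theta_s,\widehat S_s)\in\tau_{nT}\partial|\widehat S_s|_{1}.
\]
The gradient $\nabla_S J_{nT}=2\sum_t(\lambda Y_t+X_t\gamma)\bar g^\top\widehat\Lambda^{-1}Z_t^\top$ has max-norm $O_p(\|\bar g_{nT}\|_2)=O_p(1)\cdot o_p(1)$ entrywise scaled by $1/n$-type factors. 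Using $\xi_{nT}=O(1)$, the perturbation is then absorbed into $\tau_{nT}\asymp w_{\max n}$, with $a_n=0$ since $L_0=\mathbf 0$. With $\tau_{nT}$ dominating $\|E\|_{\max}$ plus this perturbation, the standard argument applies. Thresholding kills entries off the support of $S_0$ (up to the perturbation), and on the support each entry errs by at most $2\tau_{nT}$. Hence $|\Delta S|_{1,1}\lesssim_p m_s(S_0)w_{\max n}$, mirroring the pure-sparse step in the proof of Theorem~\ref{Thm:DK1} and Corollary~\ref{cor1}.

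If the gradient term cannot be shown to be uniformly $O_p(w_{\max n})$ in max norm, I would fall back to a one-step argument. Evaluate at the first iterate $\widehat\theta^{[1]}$, whose consistency comes from Corollary~\ref{cor1}, and iterate once. Substituting this bound into Step 2 gives the second display.
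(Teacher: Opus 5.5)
Your Steps 1 and 2 are sound and follow the paper's route: the linearization and remainder decomposition from the proof of Theorem~\ref{Thm:DK22}, with each perturbation bounded by the $|\cdot|_{1,1}$/max-norm bilinear inequality of Lemma~\ref{rateG}(b) applied to $\Delta S=\widehat S_s-S_0$. That inequality is deterministic, so it remains valid even though $\widehat S_s$ depends on the outcome data.

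\textbf{The gap is in Step 3.} Step 3 is the only place where the supervised estimator differs from Corollary~\ref{cor1}, and your KKT argument needs $\|\xi_{nT}\nabla_S J_{nT}(\widehat\theta_s,\widehat S_s)\|_{\max}\lesssim_p w_{\max n}$. This is asserted rather than shown, and the scaling you state is not the one your own formula gives.
\begin{itemize}
\item Since $J_{nT}=nT\,\bar g_{nT}^\top\widehat\Lambda_{nT}^{-1}\bar g_{nT}$, the $(i,j)$ entry of the gradient is $-2\sum_{t=1}^T(Z_t(M)\widehat\Lambda_{nT}^{-1}\bar g_{nT})_iU_{t,j}$. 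This is $O_p(T\|\bar g_{nT}\|_2)$, with no $1/n$ factor.
\item You never bound $\bar g_{nT}$ at the joint optimum; the ``$O_p(1)\cdot o_p(1)$'' is unsupported.
\item Even with $q_{nT}=J_{nT}/(nT)$, where the entries are $O_p(\xi_{nT}\|\bar g_{nT}\|_2/n)$, absorbing them into $\tau_{nT}\asymp w_{\max n}$ needs an unstated lower bound, roughly $w_{\max n}\gtrsim n^{-1}(nT)^{-1/2}$.
\end{itemize}
So $|\Delta S|_{1,1}\lesssim_p m_s(S_0)w_{\max n}$ is not established. Relatedly, your Step 2 row-sum bound on $\widehat S_s$ treats it as a soft-thresholding of $W$. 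It is in fact a soft-thresholding of $W-\xi_{nT}\nabla_SJ_{nT}$. This matters only if the instruments are built from $\widehat S_s$.

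\textbf{The missing idea} is the stability device used in the paper's proof of Theorem~\ref{Thm:Supervised_Rate}. Do not analyse $\widehat S_s$ through its own optimality conditions. Instead compare it with the unsupervised soft-thresholding estimator $\widehat S^{[0]}$, the minimizer of $Q_n^0(S)=\frac12\|W-S\|_F^2+\tau_{nT}\sum_{i\neq j}|S_{ij}|$, and use the $1$-strong convexity of $Q_n^0$. Read the criterion with $q_{nT}$, as that proof does, and compare the joint minimizer with the feasible point $(\theta_0,\widehat S^{[0]})$:
\[
\tfrac12\|\widehat S_s-\widehat S^{[0]}\|_F^2\le Q_n^0(\widehat S_s)-Q_n^0(\widehat S^{[0]})\le\xi_{nT}\,q_{nT}(\theta_0,\widehat S^{[0]},M)=O_p\Bigl(\xi_{nT}\Bigl\{\frac{m_s(S_0)w_{\max n}}{n}+\frac{1}{\sqrt{nT}}\Bigr\}^2\Bigr).
\]
The last step applies your Step 2 to $\widehat S^{[0]}$, using $|\widehat S^{[0]}-S_0|_{1,1}\lesssim_p m_s(S_0)w_{\max n}$ from the sparse step of Theorem~\ref{Thm:DK1}.

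Since $|A|_{1,1}\le n\|A\|_F$, this gives $n^{-1}|\widehat S_s-S_0|_{1,1}=O_p(m_s(S_0)w_{\max n}/n+(nT)^{-1/2})$ for $\xi_{nT}=O(1)$. Substituting this into your first display gives the second. The bound is weaker than the one you aimed for, but it suffices. Provided $m_s(S_0)w_{\max n}/n\to0$, it also supplies the $o_p(1)$ Jacobian perturbation you need for invertibility.

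Two caveats. First, the normalization is essential. With the literal factor $nT$ in $J_{nT}$, the same comparison gives only $\|\widehat S_s-\widehat S^{[0]}\|_F=O_p\bigl(\sqrt{\xi_{nT}nT}\,\{m_s(S_0)w_{\max n}/n+(nT)^{-1/2}\}\bigr)$, which is at least of order $\sqrt{\xi_{nT}}$. Second, your fallback ``iterate once'' points toward the paper's argument, but it omits exactly this strong-convexity step, which is what controls the $S$-update.
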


The theoretical performance of the supervised estimator is similar 
to that of the plug-in estimator, as confirmed by the simulation 
and application results in Sections~\ref{Sec:Simulation} 
and~\ref{Sec:Application}. We shall also note that both 
Theorems~\ref{Thm:DK22} and~\ref{Thm:Supervised_Rate} do not 
require $E$ to be exogenous, as long as $Z_t(M)$ does not directly 
involve $W$. Furthermore, both theorems yield rates in which the 
noise level $w_{2n}$ enters only after dimension-discounting by 
the factor $r/n$ (rather than at the raw spectral rate 
$w_{2n}$ delivered by the noisy-$W$ baseline), reflecting the 
fundamental advantage of the $L+S$ decomposition combined with the 
within transformation.

    \section{Simulations}
    \label{Sec:Simulation}
    In this section, we present simulation exercises to evaluate the performance of our proposed estimators. We first introduce four data-generating processes (DGPs) for the network adjacency matrix $W_0$. We then describe how it is contaminated with noise, and discuss the Monte Carlo simulation results.

\subsection{Generating $\mathbf{W}_0$}
\label{Subsec:Network_DGPs}
 The diagonal elements of $W_0$ are assumed to be 0. Hence we set all diagonal elements of $E$ to be $0$ and do not penalize $S_0$ on its diagonal. The other entries of $E$ are generated as explained in Section \ref{dgp}.
\begin{itemize}
\item[i)] \textbf{DGP 1}(Low rank)
In this setup, $L_0$ is generated by a pure low-rank matrix  as: 
\begin{equation}
    \label{Eq:DGP_LowRank}
    L_0 = U DV\T, \quad U, V \in \mathbb{R}^{n \times r} \, , \text{ and} \quad U\T U = I_r, \quad V\T V = I_r,
\end{equation}
where $D$ is the $r\times r$ diagonal matrix $D = \diag\{0.8^0, 0.8^1, \ldots, 0.8^{r-1}\}$, and $U$ and $V$ ---the left and right singular vectors of $L_0$, respectively--- are the first $r$ columns of two random $n \times n$ orthonormal matrices generated according to standard algorithms \citep{Stewart1980, Mezzadri2007}. We then set $W_0=L_0^*$, where $L_0^*$ has zero diagonal elements and its off-diagonal elements are the same as $L_0$. 
%$S_0$ as a diagonal matrix such that its $(i,i)$th element is the same as that of $-L_0$. In other words, $S_0$ is completely determined by $L_0$ in order to satisfy the diagonal constraint on $W_0$. Note that the elements of $U$ and $V$ converge to 0 at the rate of $1/\sqrt{n}$. Therefore, the diagonal elements of $S_0$ converge to 0 at the rate of $1/n$, and the off-diagonal elements are exactly 0. That is why we refer to it as a low-rank case. 

\item[ii)] \textbf{DGP 2} (Low rank+ sparse)
Our second DGP still uses the same $L_0$ as in DGP 1, but adds a sparse $n \times n$ component $S_0$, where we randomly pick $m_r=2$ elements in each row and draw their values from a $\text{Uniform}(0.5, 1)$ distribution.  Note that $W_0$ could be written as %That is, we set
\begin{equation}
    \label{Eq:DGP_LowRank_Sparse}
    W_0 =L_0^*+  S_0^*,
\end{equation}
where $L_0^*$ is defined as in DGP 1 using $U$, $V$ and $D$ generated as in \eqref{Eq:DGP_LowRank}, and $S_0^*$ has zero diagonal elements and its off-diagonal elements are the same as $S_0$. 
%Let $\tilde S_0$ be a sparse matrix, where we randomly pick $m_r=2$ elements in each row and draw their values from a $\text{Uniform}(0.5, 1)$ distribution. Then $S_0$ is chosen so that its off-diagonal elements have the same values as those of $\tilde S_0$, and its diagonal elements are the same as that of $-L_0$ to ensure that the diagonal of $W_0$ is zero. 

 \item[iii)] \textbf{DGP 3} (Dominant units) Our third DGP generates $W_0$ as a network with dominant units as presented in Subsection \ref{Subsec:Dominant_Units_Simulation}. Suppose the number of dominant units is $J=2$. The first $J$ units are dominant players in the network, whereas the remaining $n-J$ units are ordinary players. We could partition $W_0$ as \begin{eqnarray}
    \label{Eq:DGP_Dominant_Network} 
    W_0=\begin{bmatrix}
      W_{11} & W_{12} \\
      W_{21} & W_{22}
     \end{bmatrix} . 
\end{eqnarray}
 Let $W_D = %\coloneqq 
 [W_{11}\T, W_{21}\T]\T$ be the $n \times J$ dominant block of the adjacency matrix corresponding to the dominant units, and $W_{R}= [W_{12}\T, W_{22}\T]\T$, the right block of the adjacency matrix corresponding to ordinary units.
 
%For simplicity, we assume that the influence exerted by each dominant player is of the same order $\delta=0.9$. 
For each column of $W_D$, the first $\lfloor n^{0.9} \rfloor$ entries, except the diagonal elements, are drawn independently from a $\text{Uniform}(0, 1)$ distribution, and the remaining entries are set equal to $0$. For $W_{R}$, we induce sparsity by imposing that 
each individual is connected with equal weights to the immediate predecessor and successor. Note that $W_0$ could be viewed as a purely sparse matrix $S_0^*$ as each of its row has at most $2+J$ nonzero elements, and we treat $L_0^*$ as a zero matrix. 

\item[iv)]\textbf{DGP 4} (Group) Our fourth DGP divides the individuals into two groups. The first $n/4$ members form group 1, the remaining $3n/4$ form group 2, and only members within the same group have connections. For each group, the connection strength between any two different individuals $i$ and $j$ within the same group is defined by $d_g u_{g,i}u_{g,j}/\sum_i u_{g,i}^2$, where $d_{1,1}=1$ and $d_{2,2}=0.9$, $u_{g,i}$'s are latent individual-specific characteristic generated independently from  standard normal random variables. Note that DGP 4 admits the low-rank form such that 
$L_0=UDU^\top$, where $U=[U_1, U_2]$, $D$ is a $2\times 2$ diagonal matrix with $d_{1,1}=1$ and $d_{2,2}=0.9$, $U_1=[u_{1,1}, \cdots, u_{1,n/4}, 0, \cdots, 0]/\sqrt{\sum_i u_{1,i}^2}$ and $U_2=[0, \cdots, 0, u_{2,1}, \cdots, u_{2,3n/4}]/\sqrt{\sum_i u_{2,i}^2} $, and $W_0=L_0^*$, where $L_0^*$ has zero diagonal entries and its off diagonal elements are the same as $L_0$. 
\end{itemize}
\subsection{Monte Carlo setup}\label{dgp}

For our simulation exercises, we set $\lambda_0 = 0.25$, $\beta_0 = (-1, 2)$. We generate two covariates for $X_t$ independently from a $\N(0, 1)$ and a $\N(5, 2)$, respectively. We set sample sizes $n \in \{40,80, 120\}$ and time dimensions $T \in \{1, 5, 15, 50\}$, considering specific combinations of panel sizes due to computational constraints. We take as given a true network $W_0$ generated using one of the DGPs introduced in the previous subsection, and keep it constant across 100 Monte Carlo simulations.

Our exercises additionally consider the possibility that measurement errors are endogenous. That is, we allow for correlated disturbances between the outcome equation ($\varepsilon_t$) and the errors that contaminate the adjacency matrix $(E)$, as outlined in our Assumption \ref{Assump:E_noise_endogenous}. Let $\sigma_E $ and $\sigma_\varepsilon$ be two positive constants. Specifically, for $i,j = 1, \cdots, n$, we generate $E_{ij} \sim i.i.d. \quad \N(0, \sigma^2_E)$ for $i\neq j$, and set $E_{ii}=0$. We generate $\varepsilon_t$ such that its $i$th component $\varepsilon_{it}$ satisfies $\varepsilon_{it}=\rho \sigma_\varepsilon\sum_{j=1}^n E_{ij}/\sqrt{n} +v_{it}$, and $v_{it}$ are i.i.d. $ \N (0,  (1-\rho^2)\sigma_\varepsilon^2)$ that are independent of $E_{ij}$. Note that $\varepsilon_{t}|E$ is normally distributed and we could simply obtain its conditional mean and variance.  We define $\mbox{Vec}(E)$ as the vector obtained by stacking the off-diagonal elements of $E$.
\begin{eqnarray*}
    \label{Eq:Correlated_Errors}
    \left ( \begin{matrix}
        \varepsilon_t \\
        \mbox{Vec}(E)
    \end{matrix} \right )
 \sim    \N \left( \mathbf{0}_{n^2}, \begin{bmatrix}
        \sigma_{\varepsilon}^2 \cdot I_n & \Sigma_{\varepsilon E} \\
        \Sigma_{\varepsilon E}\T & \sigma_E^2 \cdot I_{n^2-n}
    \end{bmatrix} \right)  , \quad t = 1, \ldots, T.
\end{eqnarray*}

where  
$\Sigma_{\varepsilon E}$ is the $n\times (n^2-n)$ matrix that captures the correlation between $\varepsilon_{it}$ and $ \mbox{Vec}(E)$ stacks  all $E_{ij}$ for $i\neq j$ into a long vector. Note that it is implied from the previous sections that  $\mbox{cov}(\varepsilon_{it}, E_{ij})=\rho\sigma_E^2\sigma_\varepsilon/\sqrt{n}$ for $i\neq j$ and $\mbox{cov}(\varepsilon_{it}, E_{kj})=0$ if $k\neq i$. In the simulation, we set $\sigma_\varepsilon=0.15$, $\sigma_E = 0.3/n^{0.7}$ and $\rho=0$ or  $0.7$. By setting $\rho = 0$, we recover the exogenous measurement error case from Assumption \ref{Assump:E_noise}.

Given values for the adjacency $W_0$, covariates $X_t$, parameters $\theta_0$, and disturbances $\varepsilon_t$, we use the reduced-form representation of model \eqref{maineq} to generate our outcome $Y_t$ at every time period:
\begin{align}
    \label{Eq:SAR_reduced}
    Y_t = (I_n - \lambda_0 W_0)^{-1}(X_t \beta_0 + \varepsilon_t), \, t = 1, \ldots, T \, .
\end{align}
Throughout the simulation we set $W = W_0 + E$, which is the observed noisy version of the adjacency matrix, to construct estimates of parameters $\theta_0$. Once the adjacency matrix is generated, we use it to construct a rough estimate of the variance of the errors and set the tuning parameters for estimation. Specifically, we fix $\xi_{nT} = 1$ and set $\mbox{Int}(W)$ as the interquartile range of $W_{ij}$. Then we choose the sparse penalty as $\tau_n=\tau_{nT} = 2\mbox{Int}(W)\log(n)$ and $\nu_n=\nu_{nT}=\mbox{Int}(W) \sqrt{n}$.

\subsection{Simulation results}

\begin{table}[htbp]
    \centering
    \captionsetup{width=0.975\linewidth}
    \caption{Relative RMSE for estimators of $\lambda_0$}
    \label{Tab:Relative_RMSE_Exogenous}
        \resizebox{\textwidth}{!}{
    \begin{tabular}{lcc|cc|cc|cc}
        \toprule
      Exogenous    & Plug-in & Supervised & Plug-in & Supervised & Plug-in & Supervised & Plug-in & Supervised \\
        \midrule
        L (Low-rank) & \multicolumn{2}{c}{$T = 1$} & \multicolumn{2}{c}{$T = 5$} & \multicolumn{2}{c}{$T = 15$} & \multicolumn{2}{c}{$T = 50$} \\
        \midrule
       n=40  & 0.499 & 0.409 & 0.262 & 0.258 & 0.251 & 0.245 & 0.239 & 0.235 \\ 
n=80 & 0.279 & 0.279 & 0.203 & 0.203 & 0.200 & 0.196 & 0.197 & 0.188 \\ 
n=120  & 0.361 & 0.353 & 0.146 & 0.140 & 0.131 & 0.124 & 0.120 & 0.135 \\ 
       \midrule
  L+Sparse & \multicolumn{2}{c}{$T = 1$} & \multicolumn{2}{c}{$T = 5$} & \multicolumn{2}{c}{$T = 15$} & \multicolumn{2}{c}{$T = 50$} \\
        \midrule
         n=40   & 0.444 & 0.438 & 0.285 & 0.280 & 0.231 & 0.221 & 0.211 & 0.207 \\ 
n=80 & 0.360 & 0.359 & 0.237 & 0.235 & 0.201 & 0.198 & 0.181 & 0.181 \\ 
n=120 & 0.297 & 0.296 & 0.146 & 0.145 & 0.101 & 0.102 & 0.071 & 0.071 \\ 
        \midrule
    Dominant & \multicolumn{2}{c}{$T = 1$} & \multicolumn{2}{c}{$T = 5$} & \multicolumn{2}{c}{$T = 15$} & \multicolumn{2}{c}{$T = 50$} \\        \midrule
  n=40 & 0.449 & 0.450 & 0.328 & 0.331 & 0.248 & 0.252 & 0.260 & 0.282\\ 
  n=80   & 0.341 & 0.341 & 0.218 & 0.218 & 0.158 & 0.160 & 0.125 & 0.131 \\ 
n=120   & 0.285 & 0.285 & 0.185 & 0.186 & 0.164 & 0.166 & 0.128 & 0.133 \\ 
\midrule
Group & \multicolumn{2}{c}{$T = 1$} & \multicolumn{2}{c}{$T = 5$} & \multicolumn{2}{c}{$T = 15$} & \multicolumn{2}{c}{$T = 50$} \\        \midrule
n=40 & 0.364 & 0.356 & 0.243 & 0.228 & 0.214 & 0.200 & 0.207 & 0.204 \\ 
n=80 & 0.239 & 0.230 & 0.154 & 0.155 & 0.140 & 0.137 & 0.135 & 0.145 \\ 
n=120 & 0.206 & 0.205 & 0.148 & 0.147 & 0.130 & 0.126 & 0.132 & 0.122 \\ 
        \bottomrule
    \end{tabular}%
}
     \resizebox{\textwidth}{!}{
    \begin{tabular}{lcc|cc|cc|cc}
        \toprule
  Endogenous         & Plug-in & Supervised & Plug-in & Supervised & Plug-in & Supervised & Plug-in & Supervised \\
        \midrule
        L (Low-rank) & \multicolumn{2}{c}{$T = 1$} & \multicolumn{2}{c}{$T = 5$} & \multicolumn{2}{c}{$T = 15$} & \multicolumn{2}{c}{$T = 50$} \\
        \midrule
       n=40  & 0.376 & 0.317 & 0.261 & 0.260 & 0.256 & 0.261 & 0.249 & 0.260  \\ 
n=80 & 0.223 & 0.224 & 0.207 & 0.206 & 0.207 & 0.201 & 0.204 & 0.196  \\ 
n=120  & 0.276 & 0.269 & 0.135 & 0.124 & 0.128 & 0.120 & 0.123 & 0.145  \\ 
       \midrule
 L+Sparse & \multicolumn{2}{c}{$T = 1$} & \multicolumn{2}{c}{$T = 5$} & \multicolumn{2}{c}{$T = 15$} & \multicolumn{2}{c}{$T = 50$} \\
        \midrule
         n=40   & 0.371 & 0.364 & 0.274 & 0.266 & 0.242 & 0.235 & 0.235 & 0.234  \\ 
n=80 & 0.327 & 0.325 & 0.218 & 0.214 & 0.190 & 0.190 & 0.181 & 0.183 \\ 
n=120 & 0.238 & 0.238 & 0.136 & 0.135 & 0.105 & 0.105 & 0.092 & 0.093 \\ 
        \midrule
    Dominant & \multicolumn{2}{c}{$T = 1$} & \multicolumn{2}{c}{$T = 5$} & \multicolumn{2}{c}{$T = 15$} & \multicolumn{2}{c}{$T = 50$} \\        \midrule
  n=40  & 0.383 & 0.384 & 0.314 & 0.318 & 0.290 & 0.299 & 0.280 & 0.305 \\ 
  n=80   & 0.274 & 0.274 & 0.203 & 0.204 & 0.167 & 0.169 & 0.162 & 0.171 \\ 
n=120   & 0.211 & 0.211 & 0.144 & 0.144 & 0.119 & 0.120 & 0.114 & 0.119  \\ 
\midrule
Group & \multicolumn{2}{c}{$T = 1$} & \multicolumn{2}{c}{$T = 5$} & \multicolumn{2}{c}{$T = 15$} & \multicolumn{2}{c}{$T = 50$} \\        \midrule
n=40  & 0.311 & 0.305 & 0.237 & 0.218 & 0.219 & 0.202 & 0.217 & 0.224 \\ 
n=80  & 0.183 & 0.183 & 0.152 & 0.153 & 0.144 & 0.141 & 0.142 & 0.152  \\ 
n=120  & 0.174 & 0.172 & 0.140 & 0.141 & 0.135 & 0.132 & 0.136 & 0.135 \\ 
        \bottomrule
    \end{tabular}%   
    }
    \caption*{\footnotesize Denote $\widehat \lambda_W$ and $\widehat \lambda_p$ as the baseline estimate and the plugin estimate calculated using equation \eqref{Eq:Plugin_Estimator} with weight matrices $W=W_0+E$ and $\widehat W$ respectively. The supervised estimate is obtained by equation \eqref{Eq:Supervised_Estimator} with $M=\widehat{W}$ to construct the instruments.
All entries represent RMSE values normalized by the RMSE of the baseline $\widehat \lambda_W$. The endogenous measurement error is simulated with a correlation structure parameter $\rho = 0.7$ as described in \eqref{Eq:Correlated_Errors}. The listed four DGPs correspond to those introduced in Subsection \ref{Subsec:Network_DGPs}.}
 \end{table}

%%% Table W
{\scriptsize
\begin{table}[htbp]
    \centering
    \captionsetup{width=0.975\linewidth}
    \caption{Relative recovery accuracy of $\widehat W$}
    \label{Tab:Relative_RecoveryW_Exogenous}
        \resizebox{\textwidth}{!}{
   \begin{tabular}{lcc|cc|cc|cc}
        \toprule
 Exogenous             & $\widehat W$ & Supervised $\widehat W$& $\widehat W$ & Supervised $\widehat W$& $\widehat W$ & Supervised & $\widehat W$ & Supervised $\widehat W$\\
%  & Plug-in & Supervised & Plug-in & Supervised & Plug-in & Supervised & Plug-in & Supervised \\
        \midrule
        L (Low-rank) & \multicolumn{2}{c}{$T = 1$} & \multicolumn{2}{c}{$T = 5$} & \multicolumn{2}{c}{$T = 15$} & \multicolumn{2}{c}{$T = 50$} \\
        \midrule
       n=40   & 0.320 & 0.320 & 0.320 & 0.321 & 0.320 & 0.323 & 0.320 & 0.328   \\ 
n=80 & 0.226 & 0.226 & 0.226 & 0.227 & 0.226 & 0.227 & 0.226 & 0.228  \\ 
n=120 & 0.185 & 0.185 & 0.185 & 0.186 & 0.185 & 0.186 & 0.185 & 0.188 \\ 
       \midrule
  L+Sparse & \multicolumn{2}{c}{$T = 1$} & \multicolumn{2}{c}{$T = 5$} & \multicolumn{2}{c}{$T = 15$} & \multicolumn{2}{c}{$T = 50$} \\
        \midrule
         n=40  & 0.392 & 0.391 & 0.392 & 0.391 & 0.392 & 0.391 & 0.392 & 0.392  \\ 
n=80 & 0.275 & 0.274 & 0.275 & 0.274 & 0.275 & 0.274 & 0.275 & 0.274 \\ 
n=120 & 0.225 & 0.225 & 0.225 & 0.225 & 0.225 & 0.225 & 0.225 & 0.225  \\ 
        \midrule
    Dominant & \multicolumn{2}{c}{$T = 1$} & \multicolumn{2}{c}{$T = 5$} & \multicolumn{2}{c}{$T = 15$} & \multicolumn{2}{c}{$T = 50$} \\        \midrule
  n=40  & 0.556 & 0.556 & 0.556 & 0.556 & 0.556 & 0.556 & 0.556 & 0.556 \\ 
  n=80  & 0.489 & 0.489 & 0.489 & 0.489 & 0.489 & 0.489 & 0.489 & 0.489 \\ 
n=120  & 0.333 & 0.333 & 0.333 & 0.333 & 0.333 & 0.333 & 0.333 & 0.333 \\ 
\midrule
Group & \multicolumn{2}{c}{$T = 1$} & \multicolumn{2}{c}{$T = 5$} & \multicolumn{2}{c}{$T = 15$} & \multicolumn{2}{c}{$T = 50$} \\        \midrule
n=40  & 0.340 & 0.362 & 0.340 & 0.363 & 0.340 & 0.364 & 0.340 & 0.365 \\ 
n=80   & 0.224 & 0.224 & 0.224 & 0.225 & 0.224 & 0.226 & 0.224 & 0.227  \\ 
n=120  & 0.185 & 0.187 & 0.185 & 0.187 & 0.185 & 0.187 & 0.185 & 0.188 \\ 
        \bottomrule
    \end{tabular}
    }
        \resizebox{\textwidth}{!}{
   \begin{tabular}{lcc|cc|cc|cc}
        \toprule
Endogenous         & $\widehat W$ & Supervised $\widehat W$& $\widehat W$ & Supervised $\widehat W$& $\widehat W$ & Supervised & $\widehat W$ & Supervised $\widehat W$\\
        \midrule
        L (Low-rank) & \multicolumn{2}{c}{$T = 1$} & \multicolumn{2}{c}{$T = 5$} & \multicolumn{2}{c}{$T = 15$} & \multicolumn{2}{c}{$T = 50$} \\
        \midrule
       n=40  & 0.320 & 0.320 & 0.320 & 0.323 & 0.320 & 0.328 & 0.320 & 0.346  \\ 
n=80  & 0.226 & 0.226 & 0.226 & 0.227 & 0.226 & 0.228 & 0.226 & 0.232  \\ 
n=120 & 0.185 & 0.186 & 0.185 & 0.186 & 0.185 & 0.188 & 0.185 & 0.194 \\ 
       \midrule
  L+Sparse & \multicolumn{2}{c}{$T = 1$} & \multicolumn{2}{c}{$T = 5$} & \multicolumn{2}{c}{$T = 15$} & \multicolumn{2}{c}{$T = 50$} \\
        \midrule
         n=40  & 0.392 & 0.391 & 0.392 & 0.391 & 0.392 & 0.391 & 0.392 & 0.392 \\ 
n=80 & 0.275 & 0.274 & 0.275 & 0.274 & 0.275 & 0.274 & 0.275 & 0.274 \\ 
n=120  & 0.225 & 0.225 & 0.225 & 0.225 & 0.225 & 0.225 & 0.225 & 0.225  \\ 
        \midrule
    Dominant & \multicolumn{2}{c}{$T = 1$} & \multicolumn{2}{c}{$T = 5$} & \multicolumn{2}{c}{$T = 15$} & \multicolumn{2}{c}{$T = 50$} \\        \midrule
  n=40   & 0.556 & 0.556 & 0.556 & 0.556 & 0.556 & 0.556 & 0.556 & 0.556 \\ 
  n=80  & 0.489 & 0.489 & 0.489 & 0.489 & 0.489 & 0.489 & 0.489 & 0.489 \\ 
n=120  & 0.333 & 0.333 & 0.333 & 0.333 & 0.333 & 0.333 & 0.333 & 0.333 \\ 
\midrule
Group & \multicolumn{2}{c}{$T = 1$} & \multicolumn{2}{c}{$T = 5$} & \multicolumn{2}{c}{$T = 15$} & \multicolumn{2}{c}{$T = 50$} \\        \midrule
n=40   & 0.340 & 0.362 & 0.340 & 0.365 & 0.340 & 0.367 & 0.340 & 0.374  \\ 
n=80  & 0.224 & 0.225 & 0.224 & 0.226 & 0.224 & 0.228 & 0.224 & 0.233  \\ 
n=120  & 0.185 & 0.187 & 0.185 & 0.187 & 0.185 & 0.188 & 0.185 & 0.190 \\ 
        \bottomrule
    \end{tabular}%  
    }
       \caption*{\footnotesize 
The relative recovery accuracy is defined as $\|\widehat W-W_0\|_F/\|W-W_0\|_F$, with baseline weight matrix $W=W_0+E$. Endogenous measurement error is simulated with a correlation parameter $\rho = 0.7$, as specified in \eqref{Eq:Correlated_Errors}. The listed four DGPs correspond to those introduced in Subsection \ref{Subsec:Network_DGPs}. }
 \end{table}
}
%%% Table 3

\begin{table}[htbp]
    \captionsetup{width=0.975\linewidth}
    \caption{Relative recovery accuracy of $(I_n-\widehat\lambda_{\widehat W} \widehat W)^{-1}$  }
    \label{Tab:Relative_SpilloverMultiplier_Exogenous}
    \resizebox{\textwidth}{!}{
    \begin{tabular}{lcc|cc|cc|cc}
         \toprule
        Exogenous     & Plug-in & Supervised & Plug-in & Supervised & Plug-in & Supervised & Plug-in & Supervised  \\
        \midrule
  L (Low-rank)  & \multicolumn{2}{c}{$T = 1$} & \multicolumn{2}{c}{$T = 5$} & \multicolumn{2}{c}{$T = 15$} & \multicolumn{2}{c}{$T = 50$} \\        \midrule
n=40 & 0.541 & 0.478 & 0.420 & 0.418 & 0.415 & 0.411 & 0.414 & 0.407 \\ 
  n=80 & 0.315 & 0.315 & 0.297 & 0.296 & 0.296 & 0.294 & 0.296 & 0.292 \\ 
  n=120 & 0.447 & 0.440 & 0.288 & 0.281 & 0.277 & 0.269 & 0.271 & 0.277 \\
\midrule
  L+Sparse & \multicolumn{2}{c}{$T = 1$} & \multicolumn{2}{c}{$T = 5$} & \multicolumn{2}{c}{$T = 15$} & \multicolumn{2}{c}{$T = 50$} \\        \midrule
n=40 & 0.417 & 0.416 & 0.398 & 0.397 & 0.394 & 0.392 & 0.392 & 0.390 \\ 
  n=80 & 0.295 & 0.294 & 0.278 & 0.278 & 0.274 & 0.274 & 0.273 & 0.272 \\ 
  n=120 & 0.243 & 0.243 & 0.227 & 0.226 & 0.223 & 0.223 & 0.222 & 0.222 \\
\midrule
    Dominant & \multicolumn{2}{c}{$T = 1$} & \multicolumn{2}{c}{$T = 5$} & \multicolumn{2}{c}{$T = 15$} & \multicolumn{2}{c}{$T = 50$} \\        \midrule
  n=40 & 0.545 & 0.545 & 0.543 & 0.543 & 0.542 & 0.542 & 0.542 & 0.542 \\ 
  n=80 & 0.507 & 0.507 & 0.504 & 0.505 & 0.505 & 0.505 & 0.504 & 0.505 \\ 
  n=120 & 0.332 & 0.332 & 0.324 & 0.324 & 0.324 & 0.324 & 0.324 & 0.323 \\
  \midrule
    Group & \multicolumn{2}{c}{$T = 1$} & \multicolumn{2}{c}{$T = 5$} & \multicolumn{2}{c}{$T = 15$} & \multicolumn{2}{c}{$T = 50$} \\        \midrule
  n=40 & 0.539 & 0.556 & 0.475 & 0.486 & 0.468 & 0.480 & 0.469 & 0.480 \\ 
  n=80 & 0.387 & 0.383 & 0.332 & 0.332 & 0.328 & 0.325 & 0.328 & 0.326 \\ 
  n=120 & 0.329 & 0.331 & 0.300 & 0.304 & 0.295 & 0.297 & 0.296 & 0.295 \\ 
   \bottomrule
    \end{tabular}
    }
    \resizebox{\textwidth}{!}{
        \begin{tabular}{lcc|cc|cc|cc}
        \toprule
       
     Endogenous   & Plug-in & Supervised & Plug-in & Supervised & Plug-in & Supervised & Plug-in & Supervised  \\
        \midrule
  L (Low-rank) & \multicolumn{2}{c}{$T = 1$} & \multicolumn{2}{c}{$T = 5$} & \multicolumn{2}{c}{$T = 15$} & \multicolumn{2}{c}{$T = 50$} \\        \midrule
n=40 & 0.474 & 0.421 & 0.409 & 0.408 & 0.406 & 0.407 & 0.405 & 0.404 \\ 
  n=80 & 0.324 & 0.324 & 0.325 & 0.324 & 0.328 & 0.323 & 0.328 & 0.320 \\ 
  n=120 & 0.340 & 0.334 & 0.224 & 0.215 & 0.217 & 0.210 & 0.213 & 0.230  \\
\midrule
  L+Sparse & \multicolumn{2}{c}{$T = 1$} & \multicolumn{2}{c}{$T = 5$} & \multicolumn{2}{c}{$T = 15$} & \multicolumn{2}{c}{$T = 50$} \\        \midrule
n=40  & 0.409 & 0.407 & 0.393 & 0.391 & 0.388 & 0.386 & 0.387 & 0.385  \\ 
  n=80 & 0.297 & 0.296 & 0.276 & 0.275 & 0.273 & 0.272 & 0.272 & 0.271  \\ 
  n=120  & 0.241 & 0.241 & 0.223 & 0.223 & 0.219 & 0.219 & 0.218 & 0.218 \\
\midrule
    Dominant & \multicolumn{2}{c}{$T = 1$} & \multicolumn{2}{c}{$T = 5$} & \multicolumn{2}{c}{$T = 15$} & \multicolumn{2}{c}{$T = 50$} \\        \midrule
  n=40 & 0.533 & 0.532 & 0.535 & 0.535 & 0.535 & 0.535 & 0.536 & 0.537 \\ 
  n=80 & 0.492 & 0.492 & 0.494 & 0.494 & 0.496 & 0.496 & 0.496 & 0.496 \\ 
  n=120 & 0.320 & 0.320 & 0.321 & 0.321 & 0.321 & 0.321 & 0.321 & 0.321 \\
  \midrule
    Group & \multicolumn{2}{c}{$T = 1$} & \multicolumn{2}{c}{$T = 5$} & \multicolumn{2}{c}{$T = 15$} & \multicolumn{2}{c}{$T = 50$} \\        \midrule
  n=40 & 0.492 & 0.507 & 0.442 & 0.442 & 0.440 & 0.441 & 0.444 & 0.450 \\ 
  n=80  & 0.327 & 0.331 & 0.301 & 0.301 & 0.298 & 0.293 & 0.297 & 0.295  \\ 
  n=120 & 0.315 & 0.317 & 0.297 & 0.301 & 0.295 & 0.295 & 0.296 & 0.294 \\ 
   \bottomrule
    \end{tabular}
    }
    \caption*{\footnotesize The recovery accuracy of the Leontief inverse is defined as $\|[I_n - \widehat \lambda_p \widehat W]^{-1} - (I_n - \lambda_0 W_0)^{-1}\|_F$ and $\|[I_n - \widehat \lambda_s (\widehat L_s+\widehat S_s)]^{-1} - (I_n - \lambda_0 W_0)^{-1}\|_F$ for the plug-in and supervised estimates solved from equations (\ref{Eq:Plugin_Estimator}) and (\ref{Eq:Supervised_Estimator}) respectively. Both are then normalized by the baseline $\|(I_n - \widehat \lambda_{W} W)^{-1} - (I_n - \lambda_0 W_0)^{-1}\|_F$ with $W=W_0+E$ to obatin the relative recovery accuracy. Endogenous measurement error is simulated with a correlation parameter $\rho = 0.7$, as specified in \eqref{Eq:Correlated_Errors}. The DGPs correspond to those introduced in Subsection \ref{Subsec:Network_DGPs}.}
\end{table}

Tables \ref{Tab:Relative_RMSE_Exogenous} -   \ref{Tab:Relative_SpilloverMultiplier_Exogenous} present the simulation results using the setting outlined in the previous subsections. We highlight several key takeaways in this setup. First, Tables \ref{Tab:Relative_RMSE_Exogenous} reports relative root mean squared errors (RMSE) for estimates of the spillover scalar parameter $\lambda_0$, where the benchmark is the standard GMM estimator. Observe that both our plug-in and supervised estimators achieve lower RMSEs than the GMM benchmark uniformly across all DGPs and sample sizes. Moreover, the GMM estimates tend to perform worse in the endogenous case and hence there is more improvement in the endogenous case.
Our proposed methods perform better than GMM by approximately 50--80\% for small $n$ and $T$. Our methods achieve at least a 75\% improvement in RMSE for $n=120$, $T=50$. The relative performance of our methods generally improves as the number of nodes in the network $n$ and the number of time periods $T$ increase. This behavior is as predicted by the theory in the cases where the adjacency matrix is assumed to be constant across time.

In terms of the recovery performance of the denoised adjacency matrix, as evidenced in Table \ref{Tab:Relative_RecoveryW_Exogenous}, the plug-in method and the supervised method perform similarly. Moreover, there is little difference between the exogenous and endogenous cases because we use the same observed adjacency matrix. 

Finally, %in Table \ref{Tab:Relative_SpilloverMultiplier_Exogenous}, 
we explore how our methods perform in recovering not just the scalar spillover effect $\lambda_0$, but the full diffusion multiplier matrix given by $(I_n - \lambda_0 W_0)^{-1}$. This quantity is highly important as a policy tool, as it summarizes the  cumulative spillover effects of a given economic shock. As shown in Table \ref{Tab:Relative_SpilloverMultiplier_Exogenous}, by providing reliable estimates for both $\lambda_0$ and %the denoised 
$W_0$, our proposed methods achieve better performance in terms of Frobenius norm across DGPs and sample sizes. Although the estimates of $W_0$ might be similar in both the exogenous and the endogenous cases, our method could obtain more reliable estimates of $\lambda_0$ as well as better recovery accuracy in the endogenous case.

    \section{Empirical Applications}
   \label{Sec:Application}
   This section illustrates the use of our estimators for social/spatial interactions. We consider two examples, applying our methods and examining the spillover effects. In each example, we first construct a raw spatial weight matrix and then apply Algorithms \ref{Algo:CD_LRSED} and \ref{Algo:ADMM_GMM_FBS} described in the Appendix to obtain our estimates. To select the penalties, we use the interquartile range (IQR) of the raw weight-matrix entries as a robust scale estimate, denoted by $\widehat\sigma$. The sparse and low-rank penalties are parameterized as $C_S\widehat\sigma\log(n)$ and $C_L\widehat\sigma\sqrt{n}$, respectively, with the constants chosen by the Bayesian Information Criterion (BIC). We estimate the spillover coefficient $\widehat\lambda$ and refer the column sum of the Leontief inverse $(I_n-\widehat\lambda\widehat W)^{-1}$ as the total network influence.

\subsection{Example 1}
In our first example, we examine an application to annual Gross Domestic Product (GDP) growth as our outcome of interest.  Following \cite{MRW1992}, the explanatory variables include the annual growth rate of the working-age population defined as individuals aged 15 to 64, denoted as \(X_1\), and the log investment share, \(X_2\), which serves as a proxy for the saving rate.   We also include individual and time fixed effects in our specifications.  The dataset is downloaded from the World Development Indicators (WDI) and covers a balanced panel of 23~economies from 1970 to~2023. \footnote{These economies are: Australia (AU), Canada (CA), Mainland China (CN), 
Denmark (DK), Finland (FI), France (FR), Germany (DE), Hong Kong SAR China (HK), Indonesia (ID), Ireland (IE), Italy (IT), Japan (JP), Korea (KR),
Malaysia (MY), Mexico (MX), Netherlands (NL), New Zealand (NZ), Singapore (SG), Spain (ES), Sweden (SE),
Thailand (TH), United Kingdom (GB), and United States (US).}

Economic activity is well known to exhibit substantial spatial dependence \citep[see, e.g.,][]{Krugman1991}. Neighboring economies tend to trade more, share regional demand and supply shocks, and compete for mobile factors of production. For this reason, we use a geographic‑proximity weight matrix as a natural and predetermined benchmark to study spillovers. The raw spatial weight matrix is constructed as the row-normalized inverse squared distance matrix based on Haversine distances and denoted by $W_2$. \footnote{ For economies \(i\neq j\), we compute the Haversine great-circle distance \(D_{ij}\) between capital cities (in millions of meters) and then form the raw weight matrix by setting the $(i,j)$ entry as $1/D_{ij}^{2}$ for $i\neq j$, and the diagonal entries as 0, and then apply row-normalization. } Note that row-normalization on the weight matrix is standard in the spatial econometrics literature, as it facilitates a clear interpretation of the spatial autoregressive coefficient as the strength of the average influence. All off-diagonal entries of $W_2$ are strictly positive, which contrasts sharply with the sparsity assumptions sometimes invoked in the literature. This dense structure makes $W_2$ a natural candidate for our estimating approach, since many of the small off-diagonal weights likely reflect weak, noise‑contaminated links rather than economically meaningful connections.

We estimate Model (\ref{maineq}) with two-way fixed effects under two scenarios. Panel~A includes only \(X_1\) and \(X_2\) as covariates.  We assume that those variables are contemporaneously exogenous and use the spatial lags \(WX_1\) and \(WX_2\) as instruments. Panel~B includes both the original covariates and the contextual variables \(WX_1\) and \(WX_2\) as regressors, using the second-order spatial lags \(W^2X_1\) and \(W^2X_2\) as instruments. 
%Panel~C considers a subset of the covariates used in Panel B. It treats \(X_1\), \(X_2\), and \(WX_1\) as covariates, again instrumenting with \(W^2X_1\) and \(W^2X_2\). 
Table \ref{tab:app-gdp-denoising} reports the deviations of the estimated weight matrices from the raw matrix $W_2$. We consider three alternative specifications: a purely sparse structure with $\tau=0.0443$, a purely low-rank structure with $\nu=0.2709$, and a combination of both with $\tau=0.0443$ and $\nu=0.2709$, where the penalties are selected according to BIC respectively. %The selected penalties according to BIC are \(\tau=0.0443\) and \(\nu=0.2709\) respectively. 
We first examine the differences between estimated adjacency matrices and the raw matrix $W_2$.  We find that the purely low-rank estimator $\widehat L$ exhibits substantial departure from $W_2$, particularly in terms of the matrix maximum norm. Most of the elements in $\widehat L$ are small but not exactly 0, suggesting that a low-rank approximation alone may fail to preserve several important links in the original network. By contrast, the purely sparse estimator $\widehat S$ retains only 96 nonzero entries. Although this produces a more parsimonious network representation, it has the largest deviation from $W_2$ under $l_2$ norm. The low-rank-plus-sparse estimators provide a more balanced representation and both $\|\widehat{W}_2-W_2\|_{\max}$ and $\|\widehat{W}_2-W_2\|_{2}$ remain small relative to the corresponding deviations of the competitors. A further calculation shows that $\|\widehat{W}_2-\widehat S\|_{\max}=0.02$ and $\|\widehat{W}_2-\widehat S\|_{2}=0.11$, indicating that it may capture diffuse connections that may be discarded by pure sparsification. Notably, the supervised low-rank and sparse estimates based on two panel specifications remain very similar to the unsupervised one, which is consistent with our theoretical findings. A further comparison of the heat maps of the raw $W_{2}$, the estimated $\widehat{W}_{2}$, and $\widehat{S}$ is also shown in Figure~\ref{heatmapW2}. We observe that the low-rank plus sparse estimate $\widehat{W}_{2}$ appears to preserve the dominant network structure while allowing for pervasive weak link effects. 

\begin{table}[htbp]
	\centering
	\caption{Comparisons of Estimated Weight Matrices}
	\label{tab:app-gdp-denoising}
	\begin{tabular}{lrrrr}
		\toprule
		Estimator & $\|\widehat W-W_2\|_{\max}$ & $\|\widehat W-W_2\|_2$ & Rank($\widehat L_2$) & Nonzero entries($\widehat S_2$)\\
		\midrule
		$\widehat S$  & 0.044 & 0.385 & -- & 96\\
		$\widehat L$  & 0.269 & 0.349 & 14 & --\\
		$\widehat W_2$  & 0.044 & 0.289 & 2 & 78\\
		$\widehat W_s$ (Panel A) & 0.047 & 0.301 & 2 & 115\\
		$\widehat W_s$ (Panel B) & 0.045 & 0.299 & 2 & 116\\
	%	$W_s$ (Panel C) & 0.044 & 0.298 & 2 & 117\\
		\bottomrule
	\end{tabular}
    \caption*{\footnotesize Note: We calculate the differences between each estimated weight matrix and the raw weight matrix $W_2$. We consider the purely sparse matrix estimate $\widehat S$, the purely low-rank matrix estimate $\widehat L$, the low-rank plus sparse matrix estimate $\widehat W_2$ and its supervised estimates $\widehat W_s$ under model (\ref{maineq}) in Panel A (without contextual effects) or Panel B (with contextual effects). Rank ($\widehat L$) reports the rank of the low-rank component, and nonzero entries ($\widehat S$) refer to the number of nonzero entries in the sparse component when the corresponding decomposition is available.}
\end{table}

\begin{figure}[htbp]
\includegraphics[width=1\textwidth]{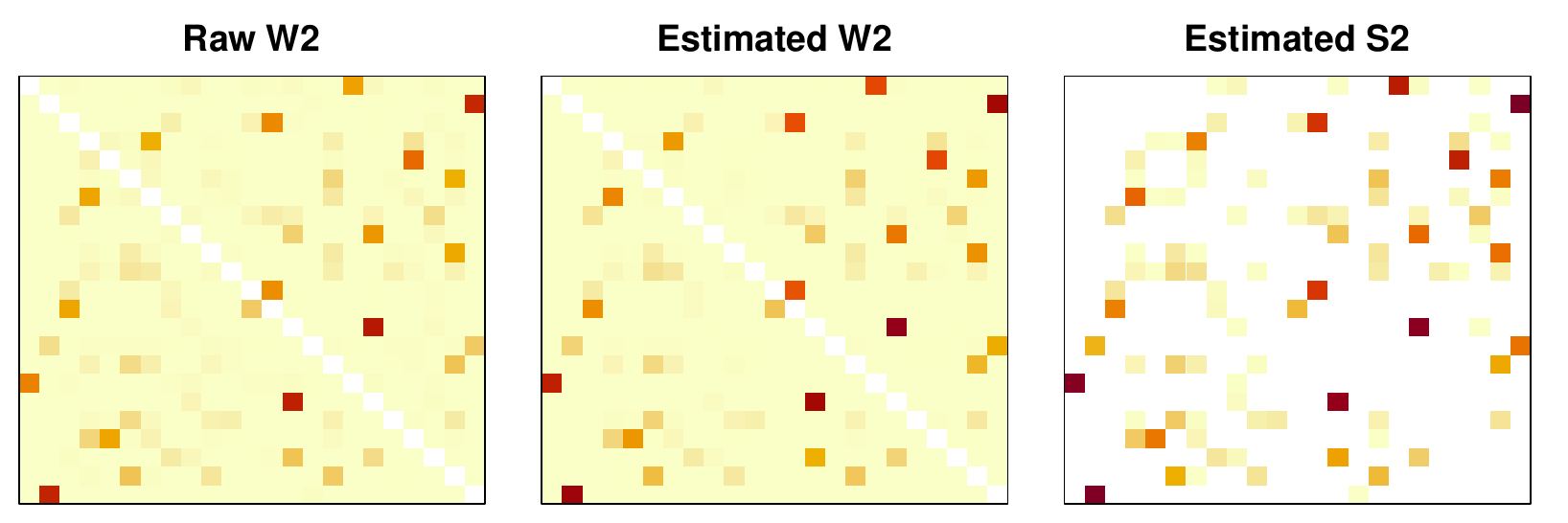}
\caption{\footnotesize Heatmaps of different spatial weight matrices associated with 23 economies, including the raw weight matrix $W_2$ (left), the estimated low-rank and sparse matrix $\widehat W_2$ (center), and the estimated purely sparse matrix $\widehat S$ (right). \label{heatmapW2}}
\end{figure}

% \begin{figure}[htbp]
% \centering
% \captionsetup{width=1\linewidth}
% \begin{subfigure}{0.3\textwidth}
%     \includegraphics[width=\textwidth]{Floats/Raw_Files/new/heatmap/Winv2.eps}
% \end{subfigure}
% \hfill
% \begin{subfigure}{0.3\textwidth}
%     \includegraphics[width=\textwidth]{Floats/Raw_Files/new/heatmap/LSinv.eps}
% \end{subfigure}
% \hfill
% \begin{subfigure}{0.3\textwidth}
%     \includegraphics[width=\textwidth]{Floats/Raw_Files/new/heatmap/Sinv.eps}
% \end{subfigure}
% \caption{\footnotesize Heatmaps of three spatial weight matrices associated with 23 economies. From left to right, they are the raw row-normalized inverse squared distance one $W_2$, the estimated low-rank and sparse one $\widehat W_2$, and the estimated purely sparse one $\widehat S_2$.\label{heatmap3}
% }
% \end{figure}

In the regression analysis, we evaluate the impact of the investment rate and working-age population growth, %and geographic location, We 
and report the estimated spillover coefficient $\widehat{\lambda}$ based on different weight matrices in Table~\ref{tab:app-gdp-spillovers}. For each candidate weight matrix $W$, we also compute the column sums of the Leontief inverse $(I_n-\widehat{\lambda}\widehat W)^{-1}$ and designate the economy with the largest column sum as the key player. 
Relative changes compared to the raw $W_2$ benchmark are also reported.

% Our calculations reveal that the supervised and plug‑in approaches yield similar results, but the structure of the weight matrix can substantially influence the estimated strength of spillovers. In both panels, the denoised (low‑rank plus sparse) weight matrix $\widehat{W}_2$ yields spillover estimates that are approximately 5–10\% lower than those obtained from the fully dense raw weight matrix $W_2$. As the raw matrix assigns a positive weight to every pair of economies, it accumulates many weak, noise‑dominated links, thereby inflating the estimated spatial dependence and likely overstating the true strength of spillover channels. The purely low-rank estimator might miss important links and it overestimates the spillover coefficient. In contrast, the purely sparse estimate $\widehat{S}_2$ retains only the strongest connections and produces the smallest spillover coefficient, while the low‑rank plus sparse estimate $\widehat{W}_2$ strikes a balance between these two. 

% Table~\ref{tab:app-gdp-spillovers} reports the estimated spillover coefficients and network influence results. We consider three panels: without contextual $WX$ (Panel A) and with contextual $WX$ (Panel B). %, and with partial contextual $WX_1$ (Panel C). 
% For each scenario, we consider different candidate weight matrix $W$, and compute the column sums of the Leontief inverse $(I_n-\widehat{\lambda}\widehat W)^{-1}$ and designate the economy with the largest entries as the key Player. Relative changes compared to the raw $W_2$ benchmark are also reported.  

In Panel A (without contextual effects), the purely sparse estimator $\widehat{S}_2$ only retains the strong connections and reduces $\widehat{\lambda}$ by 16.7\% and total influence (i.e., the corresponding column sum from the Leontief matrix, see above) by 11.3\% compared to those obtained from the raw $W_2$. Conversely, the purely low-rank estimator $\widehat{L}_2$, amplifies network connections via aggregating pervasive weak links and increases $\widehat{\lambda}$ by 11.3\% and total influence by 12.0\% compared to those obtained from the raw $W_2$. These results indicate that the sparse and low-rank components might capture distinct aspects of the network. The sparse component filters out scattered weak links, whereas the low-rank component summarizes pervasive dependence. Correspondingly, the low-rank-plus-sparse estimator $\widehat{W}_2$ produced more moderate deviations from the raw benchmark. 

The benefits of our approach become more pronounced in Panel B with contextual effects. The raw $W_2$ produces a negative estimate $\widehat{\lambda}$ that is much lower than the our estimates. The sparse estimator and the low-rank-plus-sparse estimator produce moderate positive values that might be more reasonable. As higher-order spatial lags can magnify noise in the raw matrix, this reversal suggests that our approach can improve the plausibility and stability of the estimated spillover effects.

\begin{table}[htbp]
	\centering
	\caption{GDP-Growth GMM Spillover Estimates and Influence Analysis}
	\label{tab:app-gdp-spillovers}
	\begin{tabular}{lcccccc}
		\toprule
		Weight matrix & $\widehat\lambda$  &  $\Delta \widehat\lambda$ & Key player & Total influence & $\Delta influence$ \\
		\midrule
		\multicolumn{4}{l}{\textit{Panel A: without contextual effects}}\\\\
		Raw $W_2$ & 0.377 & 0.0\% & SGP & 2.211 & 0.0\%\\
		$\widehat S_2$ & 0.314& -16.7\% & SGP & 1.961 & -11.3\%\\
		$\widehat L_2$ & 0.420 &  11.3\% & SGP & 2.475 &12.0\%\\
		$\widehat W_2$ & 0.354 & -6.2\% & SGP & 2.147 &-2.9\%\\
		$\widehat W_s$ & 0.356& -5.6\% & SGP & 2.178 & -1.5\%\\
        \addlinespace
\midrule
        \multicolumn{4}{l}{\textit{Panel B: with contextual effects}}\\\\
		Raw $W_2$ & -0.109 & 0.0\% & MEX & 0.985 & 0.0\%\\
		$\widehat S_2$ & 0.124 & 214.2\% & SGP & 1.290& 31.0\%\\
		$\widehat L_2$ & -0.052 & 52.3\% & MEX & 0.993 &0.8\%\\
		$\widehat W_2$ & 0.095 & 187.6\% & SGP & 1.212 &23.0\%\\
		$\widehat W_s$ & 0.095 & 186.9\% & SGP & 1.211  & 22.9\%\\
		\addlinespace
%        \midrule
  %       \multicolumn{4}{l}{\textit{Panel C: with contextual WX1 only}}\\\\
		% Raw $W_2$ & 0.290 (--) & SGP & 1.802 (--)\\
		% $\widehat S_2$ & 0.334 (+15.1\%) & SGP & 2.054 (+14.0\%)\\
		% $\widehat L_2$ & 0.337 (+16.3\%) & SGP & 2.027 (+12.5\%)\\
		% $\widehat W_2$ & 0.315 (+8.7\%) & SGP & 1.957 (+8.6\%)\\
		% $\widehat W_2$ supervised & 0.314 (+8.3\%) & SGP & 1.955 (+8.5\%)\\
		\bottomrule
	\end{tabular}
    \caption*{\footnotesize We report the estimated spillover coefficient $\widehat{\lambda}$ obtained using different weight matrices. The raw $W_2$
  matrix is the row-normalized inverse squared distance weight matrix. We consider its sparse estimator $\widehat S$, its low-rank estimator $\widehat L$, and its low-rank plus sparse estimator $\widehat W_2$ and its supervised version  $\widehat W_s$. %based on Algorithm \ref{Algo:CD_LRSED}.
  Panels A and B correspond to the scenarios without and with contextual effects respectively. We record the maximum column sum of the Leontief inverse $(I_n-\widehat{\lambda} \widehat W)^{-1}$ and refer to it as the total influence of key player. The relative changes of the spillover effect and the total influence are also computed using the raw matrix $W_2$ as the baseline. }
\end{table}

Moreover, the decomposition also yields interesting practical insights. Compared with $\widehat S_2$, we find that including the low-rank component in $\widehat W_2$ primarily rescales the overall magnitude of the spillover influence vector rather than altering the key player. This pattern indicates that, while the identity of the key player remains robust to the decomposition strategy, the estimated strength of network spillovers depends materially on whether the low-rank component is retained, underscoring the practical relevance of the full low-rank plus sparse decomposition over the purely sparse estimator. Figure~\ref{GR2} presents the graphical representations of $\widehat{W}_{2}$, $\widehat{L}_{2}$, and $\widehat{S}_{2}$, respectively. $\widehat W_2$ admits the decomposition $\widehat W_2=\widehat S_{\widehat W_2}+\widehat L_{\widehat W_2}$ and comparing the leading eigenvectors of $\widehat W_2$, $\widehat L_{\widehat W_2}$ and $\widehat S_{\widehat W_2}$, we find that ${\bf v}_{\widehat{W}_2}\approx 0.56\,{\bf v}_{\widehat{L}_{\widehat W_2}}+0.48\,{\bf v}_{\widehat{S}_{\widehat W_2}}$ from a regression perspective as discussed in equation (\ref{rhoLS}). Both the low-rank and the sparse components contribute substantially in this application, indicating that the identification of influential nodes depends not only on a few strong connections, but also on the cumulative effect of widespread weak links. Notably, as shown in Figure~\ref{fig:networksparseE1}, ${\bf v}_{\widehat{S}_{\widehat W_2}}$ has many zero entries, and its nonzero entries concentrate on the Asian economies, whereas ${\bf v}_{\widehat{L}_{\widehat W_2}}$ loads on all economies but assigns slightly smaller weights to Western economies. One possible interpretation is that the sparse eigenvector captures localized regional clustering centered on Asia, while the low-rank eigenvector reflects a global common factor that is slightly attenuated for Western economies  in accounting for the broader geographic diffusion inherent in the low-rank structure.

\begin{figure}[htbp]
\includegraphics[width=1\textwidth]{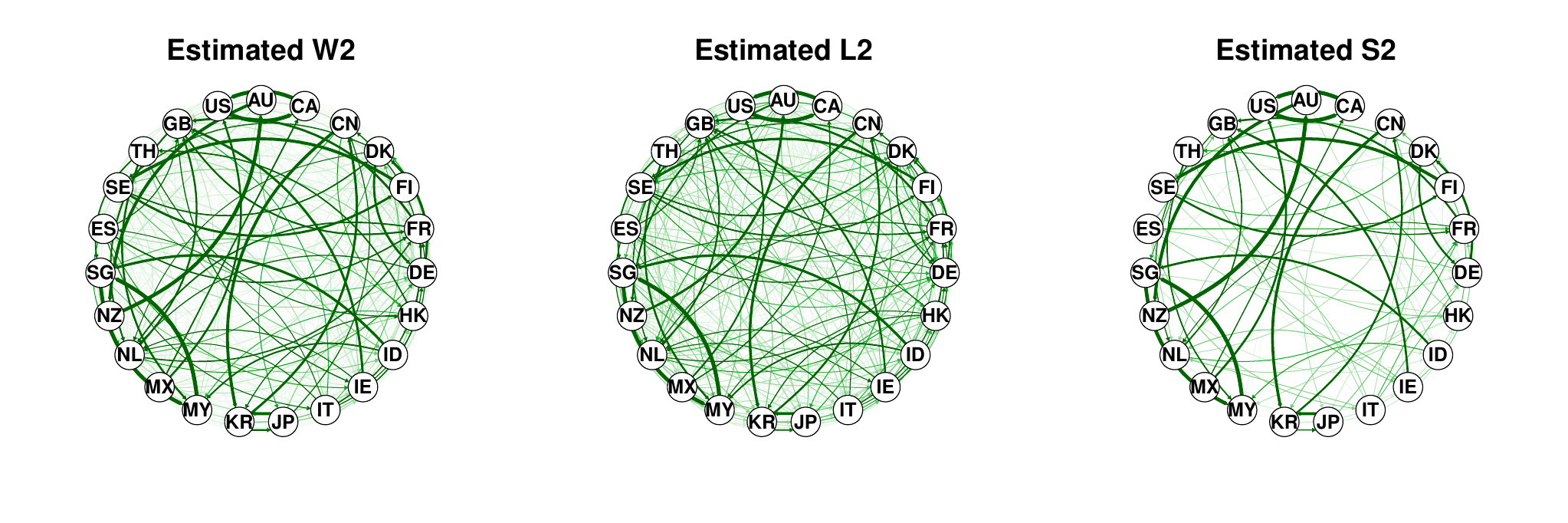}
\caption{\footnotesize Graphical representation of different spatial weight matrix associated with 23 economies, including the estimated low-rank and sparse one $\widehat W_2$ (left), the estimated low-rank one $\widehat L$ (center) and the estimated purely sparse one $\widehat S$ (right). \label{GR2}}
\end{figure}

\begin{figure}[htbp]
\includegraphics[width=1\textwidth]{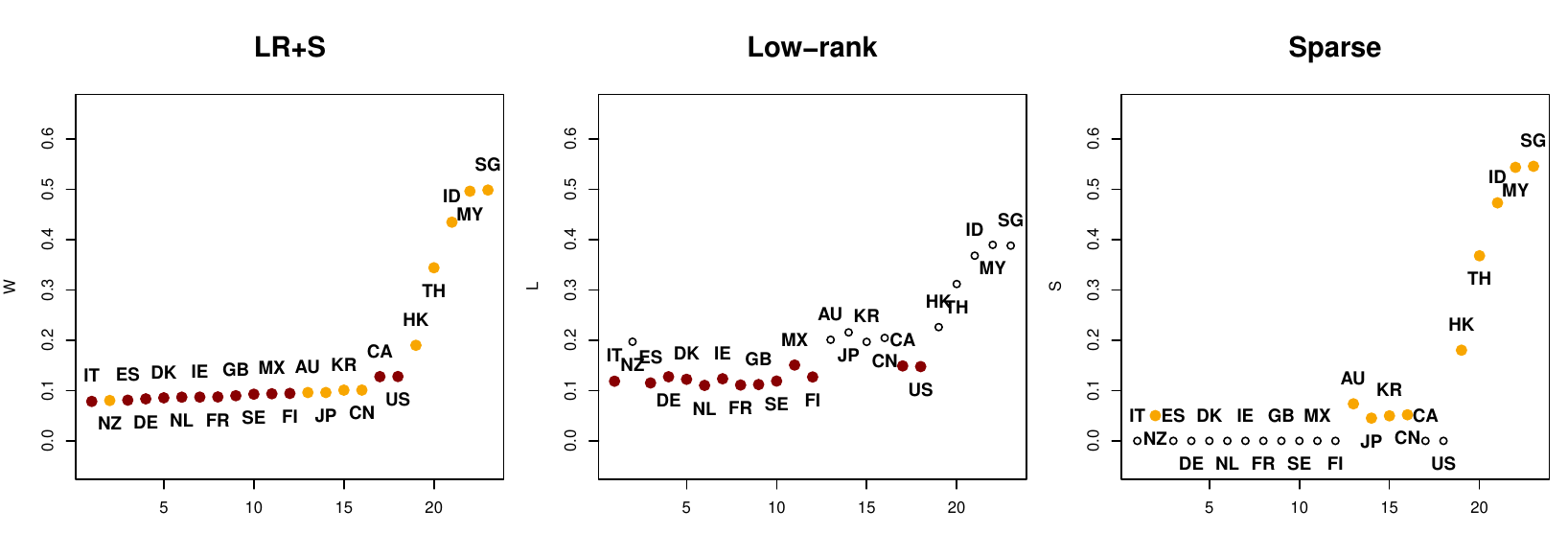}
\caption{\footnotesize Plots of eigenvector centrality associated with different weight matrices. Denote the low-rank and sparse components of our estimator $\widehat W_2$ as $\widehat L$ and $\widehat S$, i.e. $\widehat W_2=\widehat L_{\widehat W_2}+\widehat S_{\widehat W_2}$. From left to right, we provide the eigenvector centrality plots of $\widehat W_2$, $\widehat L_{\widehat W_2}$ and $\widehat S_{\widehat W_2}$ respectively. Orange dots mark economies whose entries in the leading eigenvector of $\widehat{S}_{\widehat W_2}$ are nonzero, while red dots mark economies whose entries in the leading eigenvector of $\widehat{L}_{\widehat W_2}$ are smaller than $0.2$. The economies corresponding to red dots are Canada, Denmark, Finland, France, Germany, Ireland, Italy, Mexico, Netherlands, Spain, Sweden, United Kingdom, and United States, and those corresponding to orange dots are Australia, China, Hong Kong SAR China,  Indonesia, Japan, Korea Rep., Malaysia, New Zealand, Singapore, Thailand.  
    }
 \label{fig:networksparseE1}
\end{figure}

%----------------------------------------------------------------------

\subsection{Example 2}
\label{sec:geo_application}

In our second example, we apply our methods to examine tax competition among the 48 contiguous U.S.\ states. This application was originally studied by \cite{Besley1995} using data from 1962 to 1988, with a pre‑specified geographic weight matrix $W_g$ defined as the row‑normalized 0‑1 neighborhood adjacency matrix setting 1 to state pairs that are geographically adjacent and zero, otherwise. \cite{Besley1995} emphasize a political‑economy channel for tax‑setting spillovers, as policymakers respond to voters who benchmark their policies against those of peer states. Consequently, states may face both political and economic incentives to align their fiscal policies with those of their neighbors. This dataset has recently been extended to cover $1962$--$2015$ ($T = 53$) and analyzed by \cite{Paula2023}. We obtain the spatial weight matrix $W_2$, which is a row-normalized inverse squared distance matrix based on Haversine distances between state capitals, as the observed weight matrix. % and we consider this matrix as the observed one in our denoising approach. 
In addition to endogenous spatial effects captured by $WY$, the covariates include state-level characteristics such as per-capita income, the unemployment rate, and the proportions of young and elderly populations, without and with their contextual covariates $WX$ in Panel A and B respectively. We also include individual and time fixed effects in our specifications.  Lagged neighbor income and lagged neighbor unemployment are used as instruments as in the original paper by \cite{Besley1995}.

%We construct the spatial weight matrix as the row‑normalized inverse squared distance matrix $W_2$, where the distance $r = d_{ij}/10^{6}$ is the great‑circle distance between state centroids (in millions of meters). %We compare the denoised estimate $\widehat{W}_2$ with the raw matrices $W_g$ and $W_2$, to assess whether they may lead to different conclusions. %We apply our denoising algorithm ~\ref{Algo:CD_LRSED} in the Appendix to $W_2$, with the penalty parameters selected via the Bayesian Information Criterion (BIC). 

%We compare the 
%We then estimate the denoised weight matrix under three alternative structures, including a purely sparse matrix $\widehat{S}_2$, a purely low‑rank matrix $\widehat{L}_2$, and a low‑rank plus sparse decomposition $\widehat{W}_2$. 

%Table~\ref{tab:n520-app-gmm-denoising} reports the deviations of various estimators from the baseline weight matrix $W_2$, measured by the maximum norm and $l_2$ norm, along with their corresponding rank and number of nonzero entries. 
We consider three specifications, which yield a purely sparse matrix estimate $\widehat{S}_2$, a purely low‑rank matrix estimate $\widehat{L}_2$, and a low‑rank plus sparse estimate $\widehat{W}_2$. The purely sparse estimate $\widehat{S}_2$ captures the strong links and retains 90 nonzero entries. The estimate $\widehat{L}_2$ is a zero matrix under the selected penalty, suggesting that the data do not provide  enough evidence of a low-rank component. The combined estimator $\widehat{W}_2$ coincides with the sparse‑only estimator $\widehat{S}_2$, and hence we only report results about $\widehat{W}_2$ in the subsequent analysis. The supervised variants $W_s$ in Panels A and B are very close to those of the unsupervised estimate $\widehat{W}_2$.

\begin{figure}[htbp]
\includegraphics[width=1\textwidth]{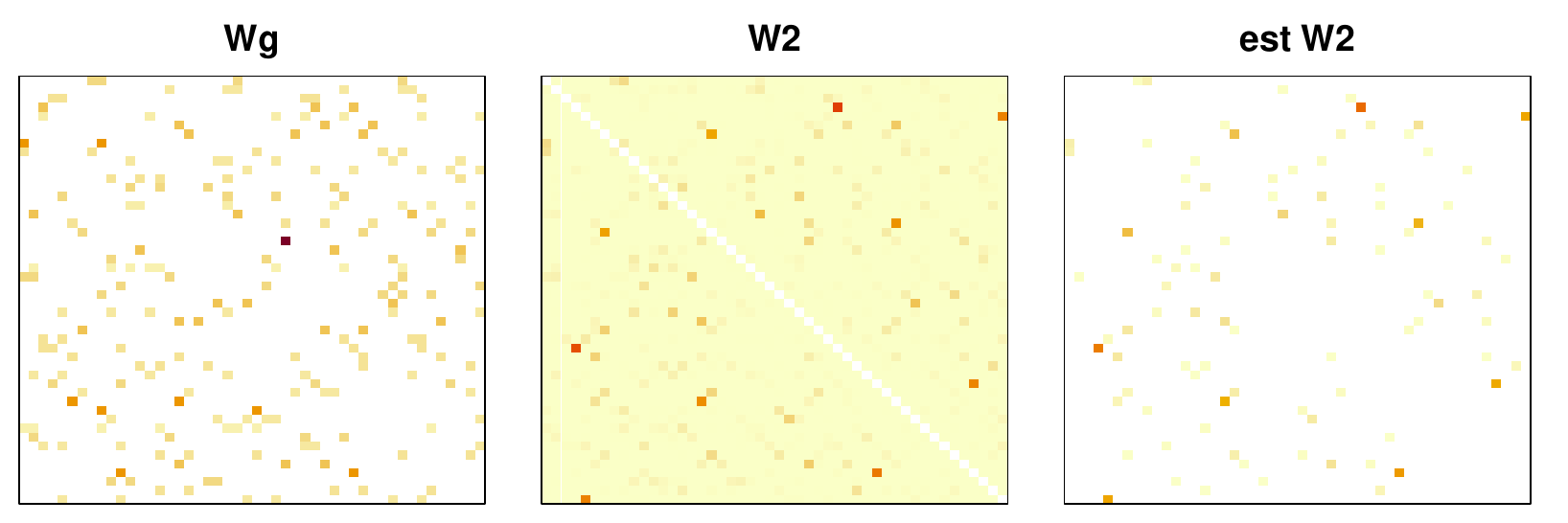}
   \caption{\footnotesize Heatmaps of adjacency matrices associated with 48 states. From left to right, this figure provides the heatmaps associated with the row-normalized 0-1 geographic adjacency matrix $W_g$, the row-normalized inverse squared distance matrix $W_2$, the estimated low-rank and sparse matrix $\widehat W_2$.\label{heatmap48}}
\end{figure}

% \begin{figure}[htbp]
% 	\centering
% 	\begin{minipage}[b]{0.3\textwidth}
% 		\centering
% 			\includegraphics[width=\linewidth]{./app2/results/figure/tenet_qgraph_Geo_neighbor_Wg.png}
% 		\captionof{subfigure}{$W_g$} %TENET-style thresholded network for Geo neighbor Wg. Edges with absolute weight below $10^{-3}$ are set to zero.}
% 		\label{fig:n520-app-gmm-tenet-geo_neighbor_wg}
% 	\end{minipage}\hfill
% 	\begin{minipage}[b]{0.3\textwidth}
% 		\centering
% 			\includegraphics[width=\linewidth]{./app2/results/figure/tenet_qgraph_Raw_Wd.png}
% 		\captionof{subfigure}{Raw $W_2$}
% 		\label{fig:n520-app-gmm-tenet-raw_wd}
% 	\end{minipage}\hfill
% 	\begin{minipage}[b]{0.3\textwidth}
% 		\centering
% 			\includegraphics[width=\linewidth]{./app2/results/figure/tenet_qgraph_Sparse_S.png}
% 		\captionof{subfigure}{$\widehat W_2$}
% 		\label{fig:n520-app-gmm-tenet-sparse_s}
% 	\end{minipage}
% 	\caption{Graphical Representation of different weight matrices.}
% 	\label{fig:three_graphs2}
%     \caption{\footnotesize Graphical representations of different weight matrices. From left to right, this figure provides the network representation for the row-normalized  neighborhood  weight matrix $W_g$, the row-normalized inverse squared distance matrix $W_2$ and the denoised weight matrix $\widehat W_2$. 
% \label{fig:networksparse_full}}
% \end{figure}

\begin{figure}[htbp]
\includegraphics[width=1\textwidth]{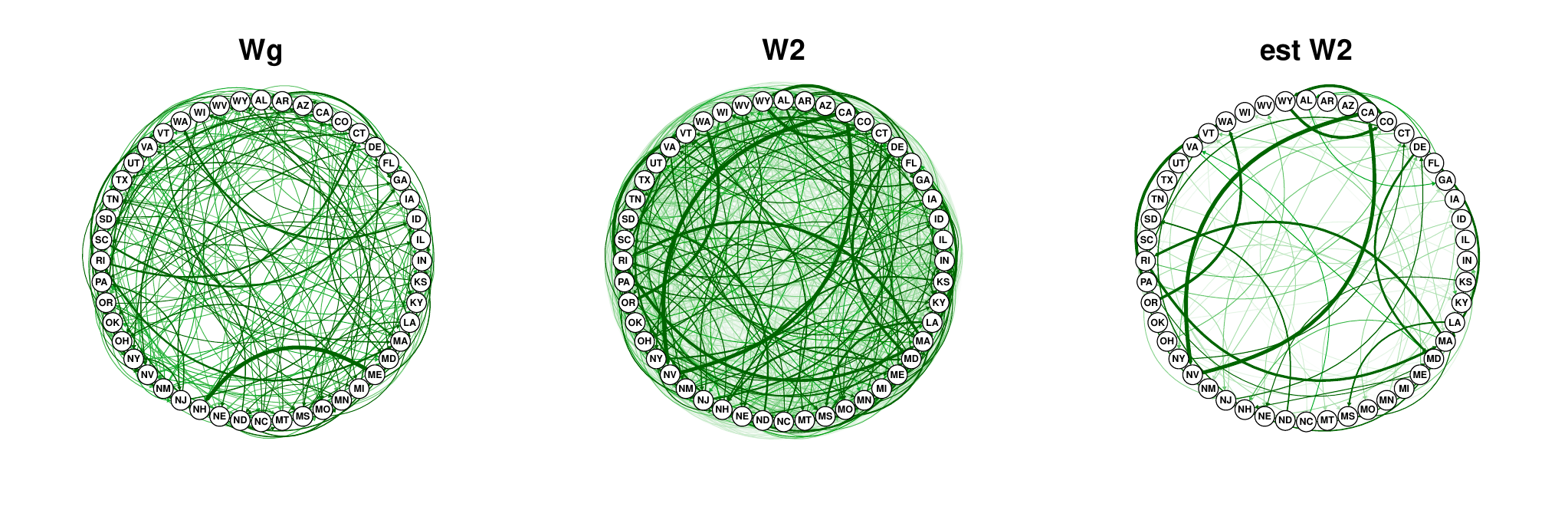}
 \caption{\footnotesize Graphical representations of different weight matrices. From left to right, this figure provides the network representation for the row-normalized  neighborhood  weight matrix $W_g$, the row-normalized inverse squared distance matrix $W_2$ and the estimated low-rank and sparse weight matrix $\widehat W_2$. 
\label{fig:three_graphs2}}
\end{figure}

Figures~\ref{heatmap48} and~\ref{fig:three_graphs2} provide heatmaps and graphical representations of the raw geographic weight matrices $W_g$, $W_2$ and the estimate $\widehat W_2$, respectively. We find that the raw inverse geographic distance matrix $W_2$ is much denser than the raw geographic adjacency matrix $W_g$. It may be more vulnerable to measurement errors, and thus exaggerate the overall degree of network dependence and obscure the dominant transmission channels. By contrast, the estimated network discards many weak, noise‑dominated connections, thus providing a clearer visualization of the significant links and improving interpretability.

% \begin{figure}[htbp]
% \centering
% \captionsetup{width=0.975\linewidth}
% \begin{subfigure}{0.3\textwidth}
% 	\includegraphics[width = \textwidth]{Floats/Raw_Files/25.11.26/tenet/Wg.eps}
% \end{subfigure}
% \hfill
% \begin{subfigure}{0.3\textwidth}
% \includegraphics[width = \textwidth]{Floats/Raw_Files/25.11.26/tenet/Wd.eps}
% \end{subfigure}
% \hfill
% \begin{subfigure}{0.3\textwidth}
% \includegraphics[width = \textwidth]{Floats/Raw_Files/25.11.26/tenet/Wn.eps}
% \end{subfigure}
% \hfill
% \caption{\footnotesize Graphical representations of different weight matrices. From left to right, this figure provides the network representation for the row-normalized  neighborhood  weight matrix $W_g$, the row-normalized inverse squared distance matrix $W_2$ and the denoised weight matrix $\widehat W_2$. 
% \label{fig:networksparse_full}}
% \end{figure}

% Table \ref{tab:geo_ranktop} reports the top four states ranked by the column sums (out-degree) of different weight matrices, including the raw (row-normalized) neighborhood weight matrix $W_g$, the raw (row-normalized) inverse squared distance matrix $W_2$, and the denoised estimate $\widehat{W}_2$. The rankings under $W_g$ and $W_2$ differ markedly. The denoised estimate $\widehat{W}_2$ yields rankings that are more closely aligned with those of $W_2$, while compressing the magnitude of the column sums. This reflects that the denoising removes long-range connections rather than incorporating them into the core neighborhood topology.

Table~\ref{tab:geo_ranktop} provides a more detailed characterization of the state-level network structure. States are sorted according to their column sums, which measure the aggregate strength of their outward connections. Massachusetts is identified as the most influential state under all three weight matrices. Its out-degree is 1.70 under the row-normalized adjacency matrix $W_g$, 1.61 under the row normalized inverse squared distance matrix $W_2$, and 0.72 under the estimate $\widehat W_2$. The estimated network therefore preserves the leading position of Massachusetts while substantially reducing the magnitude of its estimated outward influence. %A similar pattern is observed for Maryland, Rhode Island, and Delaware, which remain among the most influential states under both $W_2$ and $\widehat W_2$. 
This suggests that our estimate does not fundamentally alter the ranking of the major network hubs, but removes potentially noisy links that may inflate the overall strength of network connections rather than incorporating them into the core neighborhood topology. By contrast, the ranking based on $W_g$ differs more noticeably because the geographical adjacency matrix only captures direct neighboring relationships, whereas $W_2$ allows for distance-decaying connections between all states.

% \begin{table}[ht]
% \centering
% \caption{Top 4 states ranked by out-degree} %, $1/r^2$ kernel}
% \label{tab:geo_ranktop}
% \begin{tabular}{c|cc|cc|cc}
% \hline
%   & \multicolumn{2}{c|}{$W_g$}& \multicolumn{2}{c|}{Raw $W_2$} & \multicolumn{2}{c}{$\widehat{W}_2$}  \\
% Rank & State & Out-degree  & State & Out-degree & State & Out-degree\\
% \hline
% 1& MA & 1.70 & MA & 1.67 & MA & 0.66 \\
% 2& GA & 1.63 & RI & 1.53 & RI & 0.63 \\
% 3& TN & 1.58 & MD & 1.43 & NH & 0.52 \\
% 4& ID/NH & 1.53 & DE & 1.39 & DE & 0.46 \\
% \hline
% \end{tabular}
% \caption*{\footnotesize Notes: We consider three choices of the weight matrix, including the raw row-normalized adjacency matrix $W_g$, the raw row-normalized inverse squared distance matrix $W_2$, and the denoised estimator $\widehat W_2$. For each matrix, we compute the out-degree (column sums) and report the four states with the largest values in descending order. The abbreviations MA, GA, RI, TN, MD, ID, NH, and DE stand for Massachusetts, Georgia, Rhode Island, Tennessee, Maryland, Idaho, New Hampshire, and Delaware, respectively.
% }
% \end{table}

\begin{table}[ht]
	\centering
	\caption{Top 4 States Ranked by Out-degree}
    \label{tab:geo_ranktop}
%	\label{tab:n520-app-gmm-out-degree}
	\begin{tabular}{r|lr|lr|lr}
		\toprule
		& \multicolumn{2}{c|}{Raw $W_g$} & \multicolumn{2}{c|}{Raw $W_2$} & \multicolumn{2}{c}{$\widehat W_2$}\\
		Rank & State & Out-degree & State & Out-degree & State & Out-degree\\
		\midrule
		1 & MA & 1.70 & MA & 1.61 & MA & 0.72\\
		2 & GA & 1.62 & MD & 1.57 & MD & 0.64\\
		3 & TN & 1.58 & RI & 1.53 & RI & 0.63\\
		4 & ID, NH & 1.53 & DE & 1.52 & DE & 0.63\\
		\bottomrule
	\end{tabular}
    \caption*{\footnotesize Notes: We consider three choices of the weight matrix, including the raw row-normalized adjacency matrix $W_g$, the raw row-normalized inverse squared distance matrix $W_2$, and the estimate $\widehat W_2$. For each matrix, we compute the out-degree (column sums) and report the four states with the largest values in descending order. The abbreviations MA, GA, TN, ID, NH, MD, RI, DE stand for Massachusetts, Georgia, Tennessee, Idaho, New Hampshire, Maryland,  Rhode Island and Delaware, respectively.
}
\end{table}

\begin{table}[htbp]
	\centering
	\caption{Spillover Estimates and Influence Analysis}
	\label{tab:geo-spillovers}
	%\scriptsize
	\begin{tabular}{lrrlrr}
		\toprule
		Weight matrix & $\widehat\lambda$ & $\Delta\widehat\lambda$ & Key region & Region impact &  $\Delta$ Region impact\\
		\midrule
		\multicolumn{6}{l}{\textit{Panel A: without contextual effects}}\\\\
		Raw $W_g$ & 0.658 & 30.3\% & Mountain & 0.181 & 3.9\%\\
		Raw $W_2$ & 0.505 & 0.0\% & SA & 0.174 & 0.0\% \\
	%	$\widehat S$ & 0.220 & -0.565 & & 0.168 & -0.033\\
	%	$\widehat L$ & 0.000 & -1.000 & & 0.167 & -0.043\\
		$\widehat W_2$ & 0.220 & -56.5\% & SA & 0.168 & -3.3\%\\
		$\widehat W_s$ & 0.220 & -56.5\% & SA & 0.168 & -3.3\%\\
		\addlinespace
        \midrule
		\multicolumn{6}{l}{\textit{Panel B: with contextual effects}}\\\\
		Raw $W_g$ & 1.006 & -11.0\% & NA & NA & NA\\
		Raw $W_2$ & 1.130 & 0.0\% & NA & NA & NA\\
	%	$\widehat S$ & 0.811 & -0.283 & & 0.177 & NA\\
	%	$\widehat L$ & 0.000 & -1.000 & & 0.167 & NA\\
		$\widehat W_2$ & 0.811 & -28.3\% & SA & 0.177 & NA\\
		$\widehat W_s$ & 0.811 & -28.2\% & SA & 0.177 & NA\\
		\bottomrule
	\end{tabular}
    \caption*{\footnotesize Notes: We report the estimates of the spillover coefficients obtained using different spatial weight matrices. $W_g$ and $W_2$ denote the row-normalized 0-1 adjacency matrix and the row-normalized inverse squared distance weight matrix respectively. $\widehat W_2$ and $\widehat W_s$ are the estimator and the supervised estimator constructed from the raw weight matrix $W_2$. If $|\widehat \lambda|<1$, region impact is computed by first obtaining the column sums of the Leontief inverse, % $(I_n-\widehat{\lambda} \widehat W)^{-1}$, 
    and then aggregating them according to the nine geographical divisions. The divisions with the largest total impact are reported as the key region. The relative changes of the spillover and the region impact are reported using the results associated with the raw $W_2$ matrix as the baseline when available. }
\end{table}

% Table \ref{tab:geo-spillovers} reports the estimates of the spillover coefficient $\widehat{\lambda}$ obtained from four weight-matrix specifications, including the raw geographical neighborhood matrix $W_g$, the raw inverse squared distance matrix $W_2$, the denoised estimate $\widehat{W}_2$, and the supervised estimator. We consider two scenarios, without and with contextual social effects $WX$, respectively. Without contextual social effects, the two raw weight matrices $W_g$ and $W_2$ produce quite similar results, and their estimated spillover coefficients are substantially larger than those obtained from the denoised (and supervised) estimates. When contextual social effects are included, the raw weight matrices yield explosive estimates of $\widehat{\lambda}>1$, whereas the denoised estimates produce values that satisfy the Leontief stability condition 
% $|\widehat{\lambda}|<1$. Recall that each weight matrix is row-normalized prior to estimation. By the Perron–Frobenius theorem and the Gershgorin circle theorem, a row-normalized non‑negative matrix has a maximum eigenvalue of one and the condition $|\widehat{\lambda}|<1$ is necessary to ensure Leontief stability \citep{LeSage2009}. Explosive spillover estimates can cause some elements of the Leontief inverse $(I_n-\widehat{\lambda}W)^{-1}$ to become negative, leading to key‑player rankings that differ sharply from those based on the denoised estimates.

Table~\ref{tab:geo-spillovers} reports the estimated spillover coefficient $\widehat{\lambda}$ and the regional influence analysis using different weight matrices. We consider two scenarios, Panel A (without $WX$) and Panel B (with $WX$), respectively. In both panels, two raw weight matrices $W_g$ and $W_2$ produce quite similar results, and they both exhibit much larger or even explosive values of $\widehat{\lambda}$ in Panel B. Such large values are likely driven by measurement error or noise in the raw adjacency matrices, yielding  exaggerated spillover estimates that could be unreliable and potentially misleading. In contrast, the estimators, including the low-rank-plus-sparse estimator $\widehat{W}$, and its supervised variants, produce substantially smaller and more stable $\widehat{\lambda}$ values, and consistently identify the South Atlantic as the key region. Recall that each weight matrix is row-normalized prior to estimation.
By the Perron–Frobenius theorem, the spectral radius of a row-normalized non‑negative matrix is 1, and hence the standard Leontief-stability condition reduces to $|\widehat{\lambda}|<1$ \citep{LeSage2009}. Explosive spillover estimates can cause some elements of the Leontief inverse %$(I_n-\widehat{\lambda}W)^{-1}$ 
to become negative and hard to interpret. %, leading to key‑player rankings that differ sharply from those based on the denoised estimates. 
These results indicate that our approach might yield more plausible and interpretable spillover effects, avoiding the inflated influence suggested by raw network matrices.

 %, indicating that the raw weight matrix might overestimate spatial effects.

%Second, the inclusion of contextual social effects (Panel~B) yields substantially different results and highlights the practical value of denoising. Both raw matrices produce explosive estimates $\widehat{\lambda}>1$, and the resulting rankings are counterintuitively dominated by distant states. In contrast, the denoised matrices recover a plausible geographic neighborhood centered on the Southeast, with Georgia, Florida, and North Carolina emerging as the most influenced states after South Carolina. 

\begin{table}[ht]
	\centering
	\caption{Top Four States Most Strongly Influenced by Massachusetts}
	\label{tab:n520-app-gmm-ma-influence}
	\begin{tabular}{lllll}
		\toprule
		Weight matrix & 1st & 2nd & 3rd & 4th\\
		\midrule
		\multicolumn{5}{l}{\textit{Panel A: Without contextual effects}}\\\\
		Raw $W_g$ & MA (1.30) & RI (0.59) & CT (0.48) & VT (0.45)\\
		Raw $W_2$ & MA (1.13) & RI (0.34) & NH (0.24) & CT (0.17)\\
		$\widehat W_2$ & MA (1.04) & RI (0.19) & NH (0.17) & ME (0.06)\\
		$\widehat W_s$ & MA (1.04) & RI (0.19) & NH (0.17) & ME (0.06)\\
		\midrule
		\multicolumn{5}{l}{\textit{Panel B: With contextual effects}}\\\\
		$\widehat W_2$ & MA (2.51) & RI (1.88) & NH (1.82) & ME (1.48)\\
		$\widehat W_s$  & MA (2.52) & RI (1.88) & NH (1.82) & ME (1.48)\\
		\bottomrule
	\end{tabular}
    \caption*{\footnotesize Notes: We evaluate the column corresponding to Massachusetts in the Leontief inverse %$(I_n-\widehat{\lambda}\widehat{W})^{-1}$ 
    and interpret its entries as the total influence that a shock originating from Massachusetts exerts. We report the four largest values in parentheses along with the corresponding states. Four choices of the weight matrix are considered: the raw adjacency matrix $W_g$, the raw row-normalized inverse squared distance matrix $W_2$, the estimator $\widehat W_2$ and the supervised estimator $\widehat W_s$ . The abbreviations MA, RI, CT, VT, NH, ME stand for Massachusetts, Rhode Island, Connecticut, Vermont, New Hampshire and Maine, respectively.
We do not include the results for Raw $W_{2}$ and Raw $W_{g}$ in Panel B because the estimated spillover effects do not satisfy the Leontief-stable condition.
}
\end{table}

Table~\ref{tab:n520-app-gmm-ma-influence} further examines the propagation of shocks originating from Massachusetts. We calculate the Leontief Inverse %$(I_n-\lambda W)^{-1}$ 
and examine the four largest elements in the column that corresponding to Massachusetts. The two states most strongly influenced by a shock originating from Massachusetts are Massachusetts itself and Rhode Island. However, the magnitude of the estimated network influence is attenuated. % under the denoised specifications. %: for example, the influence of Massachusetts on itself decreases from 1.13 under the raw $W_2$ to 1.04 under $\widehat W_2$, while the corresponding value for Rhode Island decreases from 0.34 to 0.19. 
When contextual effects $WX$ are incorporated, the Leontief influence values increase substantially. Under the estimate $\widehat W_2$, the influence measure for Massachusetts rises from 1.04 to 2.51, while the corresponding values for Rhode Island, New Hampshire, and Maine increase to 1.88, 1.82, and 1.48, respectively. Overall, the results consistently identify Massachusetts as an important regional hub and indicate that contextual effects could considerably amplify the propagation of state-level shocks. Our approach retains stronger local transmission channels while suppressing weaker long-range associations that may be sensitive to measurement error.

%We further examine the influence of South Carolina on other states, a case also carefully analyzed in \cite{Paula2023}. Table~\ref{tab:SC_influence} evaluates the column corresponding to South Carolina in the Leontief inverse $(I_n-\widehat \lambda \widehat W)^{-1}$, which measures the total influence that South Carolina exerts on each state. The results reveal two main findings. First, in the absence of contextual social effects (Panel A), all four weight matrices consistently identify South Carolina itself, followed by Georgia, North Carolina, and Florida, as the states that are most strongly influenced by South Carolina. Although the relative ordering of the top four states remains fully robust, the magnitude of the spillover coefficients is attenuated under the denoised specifications.

%\input{app2}

    \section{Conclusions}
    \label{Sec:Conclusion}
    This paper introduces a robust framework for studying interaction effects in social and spatial networks under noisy adjacency matrices. By leveraging the low-rank and sparse structure of real-world networks, our de-noising approach, combining LASSO and nuclear norm penalization, significantly improves estimation accuracy. We propose two estimation procedures—a two-step and a one-step supervised GMM estimator—that outperform standard GMM by $50-80\%$ in RMSE under noise and endogeneity.

Applying our proposed methodology to examine the international spillover of economic growth and the tax competition across U.S. states, we find pronounced discrepancy in estimating the spillover effects, which highlights the essential role of accurate network denoising. Failing to distinguish genuine connectivity from measurement noise can systematically distort spillover coefficients as well as the strength and direction of cross-unit interactions. Moreover, our decomposition framework might provide a new perspective on complex spatial dependencies by disentangling global and local components. This decomposition not only sharpens the economic interpretation but also improve econometric inference, thereby contributing to credible spatial econometric analysis.

\section{Declaration of generative AI and AI-assisted technologies in the writing process}

During the preparation of this work the authors used AI to suggest edits to some lengthy passages for concision and clarity. After using these tools, the authors reviewed and edited the content as needed and take full responsibility for the content of the published article.

%{\color{red}Add comparison with Baysian and Ridge regression

%L+S decomposition and Bayesian methods differ in network inference. L+S is efficient, using convex optimization for scalability, and separates core structures from anomalies without priors. Bayesian methods rely on posterior sampling and predefined priors, making them computationally heavier. L+S is robust to noise, while Bayesian approaches depend on noise models. Bayesian methods excel in uncertainty quantification, but L+S is better for scalable, interpretable network estimation. Low-rank plus sparse (L+S) decomposition of an adjacency matrix offers advantages over ridge regression by capturing global structures (e.g., communities) and local anomalies (e.g., outliers), providing better interpretability and parsimony. It separates signal from noise, models network heterogeneity (e.g., block structures and influential nodes), and enables efficient computation through low-rank and sparse representations. Additionally, L+S aligns with probabilistic generative models, unlike ridge regression, which assumes dense, smooth structures and uniformly shrinks coefficients. Ridge regression may be preferable only for fully dense matrices or prediction-focused tasks where computational simplicity outweighs structural insights.}

\clearpage

    % Bibliography and references

    \bibliographystyle{apalike}
    \bibliography{LowRank_new2} 
    % Appendices divided into sections
    \begin{appendices}
    \numberwithin{figure}{section}
    \numberwithin{table}{section}
    \numberwithin{equation}{section}

    \section{Algorithms}
    \label{Appendix:Algorithms}
    \label{algo}

\begin{algorithm}
	\caption{Alternating minimization algorithm to obtain $\widehat{L}$ and $\widehat{S}$} %,  \citep{caietal} \citep{Friedman2010}}
	\label{Algo:CD_LRSED}
	\begin{algorithmic}[1]
		\Require
      Observed matrix {$W$}, penalty parameters $\nu_n>0$ and $\tau_n>0$, and a positive constant $\textit{tol}$
		\State Initialize $\widehat L_0$ as ${\bf 0}_{n \times n}$, and $\widehat W_0=W$
		\While{not converged}
		\State $\widehat S_{k+1} \gets {\bf S}_{\tau_n}(W - \widehat L_{k}),$ where ${\bf S}_{\tau_n}(A)$ is denoted as the element-wise soft thresholding estimator proposed by \cite{Friedman2010}, i.e., $\mbox{max}((A_{ij}-\tau_n),0)+\mbox{min}((A_{ij}+\tau_n),0)$ for $i\neq j$.(set diagonal entries to zero.) 
        \State $\widehat L_{k+1} \gets \SVT_{\nu_n}(W - \widehat S_{k+1}), $ where $\SVT_{\nu_n}(A)$ denotes the singular value decomposition with thresholding proposed by \cite{caietal}. Specifically, let matrix $A$ admit the singular value decomposition $A=UDV^\top$. Then $\SVT_{\nu_n} (A)=U D_{\nu_n} V^\top$, where $D$ and $D_{\nu_n}$ are both diagonal matrices, whose $i$th diagonal element satisfies $(D_{\nu_n})_i=\max(D_i-\nu_n,0)$. 
		\State Set $\widehat W_{k+1}=\widehat L_{k+1}+\widehat S_{k+1}$. Check convergence condition: $\|\widehat W_{k+1} - \widehat W_{k}\|_F \leq \textit{tol}$
		\EndWhile
		\State \Return $\widehat{L} \gets \widehat L_{k+1}$ and $\widehat{S} \gets \widehat S_{k+1}$, $\widehat{W} \gets \widehat W_{k+1}$
	\end{algorithmic}
\end{algorithm}

% \begin{algorithm}
% 	\caption{Coordinate descent algorithm to obtain $\widehat{L}$ and $\widehat{S}$} %,  \citep{caietal} \citep{Friedman2010}}
% 	\label{Algo:CD_LRSED}
% 	\begin{algorithmic}[1]
% 		\Require
%       Observed matrix {$W$}, penalty parameters $\nu_n>0$ and $\tau_n>0$, and a positive constant $\textit{tol}$
% 		\State Initialize $\widehat L_0$ as $0_{n \times n}$, and $\widehat W_0=W$
% 		\While{not converged}
% 		\State $\widehat S_{k+1} \gets {\bf S}_{\tau_n}(\widehat W_k - \widehat L_{k}),$ where ${\bf S}_{\tau_n}(A)$ is denoted as the element-wise soft thresolding estimator proposed by \cite{Friedman2010}, i.e., $\mbox{max}((A_{ij}-\tau_n),0)+\mbox{min}((A_{ij}+\tau_n),0)$ for $i\neq j$. 
%         \State $\widehat L_{k+1} \gets \SVT_{\nu_n}(\widehat W_k - \widehat S_{k+1}), $ where $\SVT_{\nu_n}(A)$ denotes the singular value decomposition with thresholding proposed by \cite{caietal}. Specifically, let matrix $A$ admit the singular value decomposition $A=UDV^\top$. Then $\SVT_{\nu_n} (A)=U D_{\nu_n} V^\top$, where $D$ and $D_{\nu_n}$ are both diagonal matrices, whose $i$th diagonal element satisfies $(D_{\nu_n})_i=\max(D_i-\nu_n,0)$. 
% 		\State Set $\widehat W_{k+1}=\widehat L_{k+1}+\widehat S_{k+1}$. Check convergence condition: $\|\widehat W_{k+1} - \widehat W_{k}\|_F \leq \textit{tol}$
% 		\EndWhile
% 		\State \Return $\widehat{L} \gets \widehat L_{k+1}$ and $\widehat{S} \gets \widehat S_{k+1}$, $\widehat{W} \gets \widehat W_{k+1}$
% 	\end{algorithmic}
% \end{algorithm}

\begin{algorithm}
	\caption{Supervised estimation of spatial autoregressive model}
	\label{Algo:ADMM_GMM_FBS}
	\begin{algorithmic}[1]
		\Require Observed matrix $W$, outcome $nT\times 1$ vector $Y$, regressors $nT\times K$ matrix $X$, GMM weight matrix $\widehat{\Lambda}_n^{-1}$, penalty parameters $\nu_{nT}$, $\tau_{nT}$ and $\xi_{nT}$,  tolerances $\textit{tol}_0>0$, $M$ and $\textit{tol}_1>0$. 
		\State Set $\check{Z}_t=(X_t, MX_t, \cdots, M^dX_t)$. Let $Z_t$ be the $m$ linearly independent columns of $\check{Z_t}$. Define $nT\times m$ matrix %$\check{Z}=(\check{Z}^{\top}_1, \cdots, \check{Z}^{\top}_T)^{\top}$ with
        $Z=(Z\T_1, \cdots, Z\T_T)\T$. 
\State Set $\theta_0$ as the GMM estimate using $W$, initialize $\widehat W_0$ and $\widehat{L}_{0}$ %and $\widehat{S}_{0}$ 
as $W$ and $0_{n\times n}$ respectively. 
  
	%	\State Set $Z$ as the linearly independent columns of $\check{Z}$
		\While{not converged}
		% \State $\widehat{W}_k \gets L_k + S_k$ \Comment{Estimated adjacency}
		\State Obtain $\widetilde W_{k+1}$ by minimizing $\xi_{nT} J_{nT}(\theta_k, \widetilde W, M; \widehat{\Lambda}_n^{-1}) + \frac{1}{2}\|W-\widetilde W\|_F^2$ over $\widetilde W$. Recall the definition of $J_{nT}(.)$ and $\xi_{nT}$ in Equation (\ref{Eq:Supervised_Estimator}).
  \State Obtain $\widehat L_{k+1}$ and $\widehat S_{k+1}$ using Algorithm \ref{Algo:CD_LRSED} with $W$ replaced by $\widetilde W_{k+1}$. Set   
  $\widehat W_{k+1}=\widehat L_{k+1}+\widehat S_{k+1}$. 
  \State  Update $\widehat \theta_{k+1}$ as the GMM estimate using $\widehat W_{k+1}$. 
		\State Check convergence conditions $\|\widehat \theta_{k+1} - \widehat \theta_{k}\|_2 \leq \textit{tol}_0$, $\|\widehat W_{k+1} - \widehat W_{k}\|_F  \leq \textit{tol}_1$, for positive constants $\textit{tol}_0,\textit{tol}_1>0.$
		\EndWhile
		\State \Return $\hat{\theta} \gets \widehat \theta_{k+1}$, $\widehat{L} \gets \widehat L_{k+1}$, $\widehat{S} \gets \widehat S_{k+1}$, and $\widehat{W} \gets \widehat W_{k+1}$
	\end{algorithmic}
\end{algorithm}

    %\section{Extensions to Additional Topics}
    %\label{Appendix:Topics}
   % \input{Sections/Appendix_Topic_Extensions}

 %   \newpage
   \section{Additional Numerical Results}
   \label{Appendix:Add_Results}
  % \input{Sections/Appendix_Additional_Results}
% \begin{figure}[htbp]
% \centering
% %\caption{Graphical representations of different weight matrices}
% \begin{subfigure}{0.4\textwidth}
% 	\includegraphics[width = \textwidth]{Floats/net-W.png}
% %	\caption{Graph representation}
% %	\label{fig:block_graph}
% \end{subfigure}
% \hfill
% \begin{subfigure}{0.4\textwidth}
% \includegraphics[width = \textwidth]{Floats/net-LSresult2S.png}
% \end{subfigure}
% \caption{Graphical representation of the original weight table (left) and the sparse one estimated by our method.  %which is obtained using the 0.05 threshold (right). 
% \label{fig:application_adjacency_2002}}
% \end{figure}

\begin{figure}[htbp]
\centering
%\caption{Heatmap of different weight matrices}
\begin{subfigure}{0.4\textwidth}
\includegraphics[width = \textwidth]{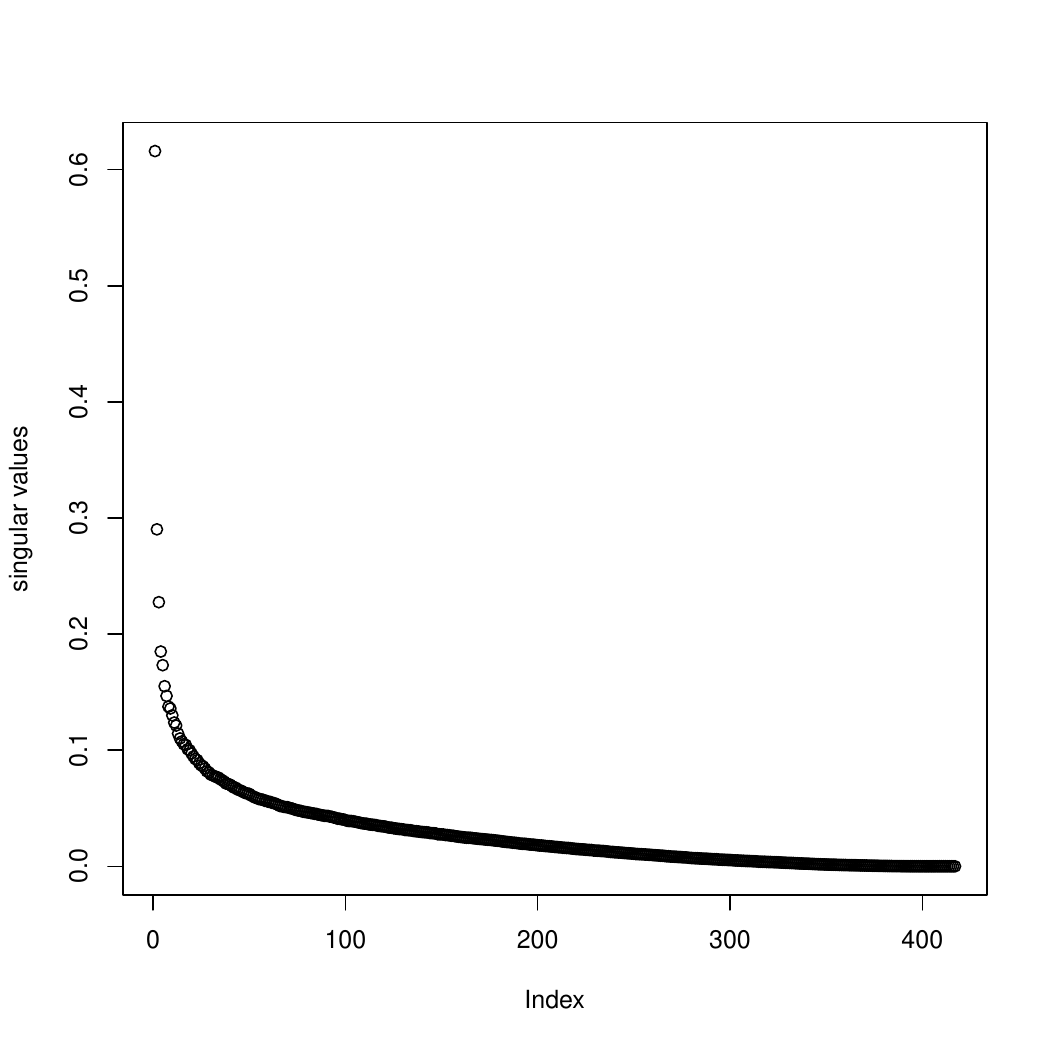}
%	\caption{Standard adjacency}
%	\label{fig:block_adjacency_standard}
\end{subfigure}
\caption{Plots of singular values associated with the tiny weight matrix which is calculated by subtracting the sparse network using a threshold 0.05 from the input-output weight table 2002. \label{fig:svd}
}
\end{figure}

        \newpage
    \section{Low-rank plus sparse decomposition}
    \label{Appendix:decomp}
    
\subsection{Motivation for Low-Rank Plus Sparse Structure} \label{session21}

The low-rank plus sparse decomposition is useful in network analysis because it can improve interpretability and computational efficiency across various applications. Additionally, it is closely related to {eigenvector centrality} and the {Leontief inverse}, both of which play crucial roles in policy targeting by identifying key structural dependencies and intervention points \citep{jackson2008social, newman2010networks}.

\subsubsection{Application to eigenvector centrality} \label{sub:ec}
Our decomposition can be useful in interpreting and understanding centrality as such measure can be shown to be dominated by either the sparse or low-rank components in certain potentially relevant circumstances. The (eigenvector) centrality of each node corresponds to the eigenvector entries attached to the highest eigenvalue of the adjacency matrix representing the network.{\footnote{  For a connected network $W_0$ whose entries are all non-negative, the Perron-Frobenius theorem implies that its maximum eigenvalue (in terms of radius) is real and non-negative.}}  The low-rank plus sparse decomposition separates the adjacency matrix \( W_0\) into a low-rank structure \( L_0^* \) (or $L_0$, before diagonal adjustment), representing the core economic interactions, and a sparse component \( S_0^* \) (or $S_0$), capturing highly influential entities or individual heterogeneity \citep{gabaix2011granular, carvalho2014}. The eigenvector centrality under this decomposition expands to:
\[
%(L_0 + S_0) \mathbf{v}_{W_0} =
(L_0^* + S_0^*) \mathbf{v}_{W_0}= \lambda_{W_0}\mathbf{v}_{W_0},
\]
 where $\lambda_{W_0}$ and $\mathbf{v}_{W_0}$ are the leading eigenvalue and eigenvector associated with $W_0$.
The entries of $\mathbf{v}_{W_0}$ are related to the leading eigenvectors of $\bL_0$ and $\bS_0$, denoted by $\mathbf{v}_{\bL_0}$ and $\mathbf{v}_{\bS_0}$, respectively, which capture different aspects of the underlying structure.  
 If the leading eigenvector of $\mathbf{v}_{W_0}$, is closer to $\mathbf{v}_{\bL_0}$ or $\mathbf{v}_{\bS_0}$, this indicates that the low-rank or the sparse structure, respectively, is more dominant.
 
We now formalize how $\mathbf{v}_{W_0}$ is informed by $\mathbf{v}_{\bL_0}$ and $\mathbf{v}_{\bS_0}$. %Denote by $\mathbf{v}_{\bL_0}$ and $\mathbf{v}_{\bS_0}$ the corresponding (maximal eigenvalue) eigenvectors for $\bL_0$ and $\bS_0$, respectively.  In addition, let 
Let $\rho_{\bL_0\bS_0}$ denote the inner product between $\bvL$ and $\bvS$, i.e., $\rho_{\bL_0\bS_0}=\bvL ^\top \bvS$. Similarly define $\rho_{W_0\bL_0}=\bvW^\top \bvL$ and $\rho_{W_0\bS_0}=\bvW^\top \bvS$ and let $ k_{\bL_0}\bvL+k_{\bS_0}\bvS$ denote the best linear approximation of $\bvW$ using $\bvL$ and $\bvS$. %{\color{red}PLEASE ADD THE DERIVATION IN the APPENDIX}
Then from a regression perspective, it satisfies: %$\hat \bvW=k_{L_0}\bvL+k_{S_0}\bvS$, where 
\begin{eqnarray}\label{rhoLS}
k_{\bL_0}=\frac{\rho_{W_0\bL_0}-\rho_{\bL_0\bS_0}\rho_{W_0\bS_0}}{1-\rho_{\bL_0\bS_0}^2}, \qquad  k_{\bS_0}=\frac{\rho_{W_0\bS_0}-\rho_{\bL_0\bS_0}\rho_{W_0\bL_0}}{1-\rho_{\bL_0\bS_0}^2} 
\end{eqnarray}
(see Appendix \ref{perb}).
Further calculations show that the coefficients $k_{\bL_0}$ and $k_{\bS_0}$ should be related to the maximum eigenvalues of $\bL_0$ and $\bS_0$ respectively, i.e. $\lambda_{\bL_0}$ and $\lambda_{\bS_0}$. If $\bL_0$ and $\bS_0$ are symmetric, we have 
\begin{eqnarray}
\frac{k_{\bL_0}}{k_{\bS_0}} &\approx& \frac{\rho_{\bL_0\bS_0}(\lambda_{\bL_0}-\bvS^\top \bL_0\bvS )}{\lambda_{S_0^*}+\bvS^\top \bL_0\bvS-\lambda_{L_0^*}-\bvL^\top \bS_0\bvL}, \  \text{if} \  \lambda_{\bL_0}+\bvL^\top \bS_0\bvL< \lambda_{\bS_0}+\bvS^\top \bL_0\bvS  \label{eq:kSL1},\\
%\frac{k_{S_0}}{k_{L_0}} &\approx & \frac{\rho_{L_0S_0}(\bvS^\top L_0\bvS -\lambda_{L_0})}{\lambda_{L_0}+\bvL^\top S_0\bvL-\lambda_{S_0}-\bvS^\top L_0\bvS}   \qquad \text{if} \qquad \lambda_{L_0}+\bvL^\top S_0\bvL> \lambda_{S_0}+\bvS^\top L_0\bvS   \label{eq:kSL2}\\
\frac{k_{\bS_0}}{k_{\bL_0}} &\approx& \frac{\rho_{\bL_0\bS_0}(\lambda_{\bS_0}-\bvL^\top \bS_0\bvL )}{\lambda_{\bL_0}+\bvL^\top \bS_0\bvL-\lambda_{\bS_0}-\bvS^\top \bL_0\bvS}, \   \text{if} \ \lambda_{\bL_0}+\bvL^\top \bS_0\bvL> \lambda_{\bS_0}+\bvS^\top \bL_0\bvS , \label{eq:kSL2}
\end{eqnarray}
{where the approximation error diminishes
when $\rho_{\bL_0\bS_0}\to 0$.} (We also derive the expression when $\bL_0$ or  $S_0^*$ is asymmetric in Appendix \ref{perb}.) 
In practical scenarios involving large networks, $\bvL$ is typically dense, with few zero entries, while $\bvS$ tends to be sparse, featuring many zero entries, thus implying that the correlation $\rho_{\bL_0 \bS_0}$ is small.
Consequently, if $\rho_{\bL_0\bS_0}$ is close to 0, then $k_{\bL_0}/k_{\bS_0}\approx 0$ if (\ref{eq:kSL1}) holds or $k_{\bS_0}/k_{\bL_0}\approx 0$ if (\ref{eq:kSL2}) holds.
The inequality
$\lambda_{L_0^*} + \bvL^\top S_0^* \bvL
\quad \lessgtr \quad
\lambda_{S_0^*} + \bvS^\top L_0^* \bvS$
determines which component $L_0^*$ or $S_0^*$ plays a more dominant role in shaping the eigenvector centrality of $W_0$.
Specifically, the left-hand side approximates the effective contribution of the low-rank component $L_0^*$ (and the part of $S_0^*$ aligned with it), whereas the right-hand side captures the influence of the sparse component $S_0^*$.
In contrast, the inner product $\bvL^\top \bvS$ measures the degree of alignment between the leading eigenvectors of $L_0^*$ and $S_0^*$, indicating whether the two structures reinforce or counteract each other in determining the dominant direction of $W_0$.
For the detailed calculations above, please refer to Section \ref{perb} in the Appendix.  

Ultimately, this means that when $\rho_{\bL_0\bS_0}\to 0$ we have:
\[ \mathbf{v}_{W_0} \approx k_{\bL_0} \mathbf{v}_{\bL_0} \quad \mbox{or} \quad \mathbf{v}_{W_0} \approx k_{\bS_0} \mathbf{v}_{\bS_0}. \]
Therefore, we obtain a measure that either emphasizes the systemic structure of the network \( \bL_0 \) or the eigenvector centrality of \( \bS_0 \) which can be attributed to the individual shocks or dominant units \citep{battiston2012}. For example, in interbank networks, systemically important banks often dominate centrality rankings when using the full adjacency matrix \citep{degryse2007}. Applying the low-rank plus sparse decomposition allows us to disentangle the influence of connections with some important institutions  (captured in \( \bS_0 \))  from the broader systemic structure (captured in \( \bL_0 \)). This approach thus potentially provides a more nuanced understanding of systemic risk, as it distinguishes between the centrality of highly connected hubs and the underlying economic network \citep{Acemoglu2015, greenwood2015}.

To illustrate this, we now present two numerical examples to examine the underlying structure of the dominant eigenvector for a network with $n=100$. In both examples, the network $W_0=\bL_0+\bS_0$, where  $L_0=\mathbf{v}_{L_0} \mathbf{v}_{L_0}^{\top}$ has the leading eigenvector $ \mathbf{v}_{L_0}$ whose elements are all $1/10$, and $L_0^*=L_0-1/100 I_{100}$ with $\mathbf{v}_{L_0^*}{=} \mathbf{v}_{L_0}$.%\mathbf{1}_{100}^{\top} /10$. 
 The sparse matrix has the form $\bS_0=kS_A^*$ for some pre-specified sparse matrix $S_A^*$ with zero diagonals.  
The constant $k$ affects the relative strength between the low-rank component $\bL_0$ and the sparse component $\bS_0$, and we consider $k=0.4$ and $k=4$. In Case 1, $S_A^*$ is chosen such that it only has two nonzero elements $S^*_{A,12}=S^*_{A,21}=1$. 
In this case, the maximum eigenvalues of $\bL_0$ and $\bS_0$ are 0.99 and $k$ respectively. If $k=0.4$, the calculated leading eigenvectors show that $\mathbf{v}_{W_0}\approx 0.983\mathbf{v}_{\bL_0}+0.091\mathbf{v}_{\bS_0}$, signifying a predominantly uniform structure with minor local adjustments. However, when $k=4$, the eigenvalue of $\bS_0$ dominates that of $\bL_0$, and we find that $\mathbf{v}_{W_0} \approx 0.046 \mathbf{v}_{\bL_0}+ 0.992\mathbf{v}_{\bS_0} $, reflecting a reduced influence of $ \bL_0 $ due to the increased eigenvalue of $ \bS_0 $.

In Case 2, we set $S_A^*$ to be an asymmetric sparse matrix with 
$S^*_{A,12}=1$ and $S^*_{A,i1}=1$ for $2\leq i\leq 30$. This 
represents the case when the first individual is a key player that 
affects $29$ individuals. Because $S_A^*$ is asymmetric, its 
nonzero eigenvalues are $\pm 1$; we take $\bvS$ to be the 
eigenvector associated with the eigenvalue $+1$. This $\bvS$ has 
more nonzero entries than in Case~1, so $\rho_{\bL_0\bS_0} = 
\bvL^\top\bvS$ is larger. When $W_0=\bL_0+kS_A^*$ with $k=0.4$, 
$\mathbf{v}_{W_0}\approx 0.845\,\bvL+0.244\,\bvS$; when $k=4$, 
$\mathbf{v}_{W_0}\approx 0.148\,\bvL+0.911\,\bvS$, implying a 
negligible correction from $\bL_0$.

\subsubsection{Application to the Leontief inverse}\label{inverse}
We also highlight how our decomposition is reflected in the Leontief 
inverse for $W_0$. Defining $I_n$ as the $n\times n$ identity matrix, 
the Leontief inverse is given by $(I_n - \lambda_0 W_0)^{-1}$. For 
notational simplicity, throughout this subsection we set $\lambda_0 = 1$ 
and write the Leontief inverse as $(I_n - W_0)^{-1}$; the general case 
follows by replacing $W_0$ with $\lambda_0 W_0$ throughout.
Assume that the low-rank component $L_0$ admits the singular value decomposition as % follows a factorized structure:
%\begin{equation}
$L_0 = U D V^\top,$  
%\end{equation}
where  $D$ is a square diagonal matrix of dimension $r\times r$, and %whose rank is significantly smaller than $n$, $U$ and $V$ are matrices with a small number of columns, indicating that $L_0$ is a low-rank matrix, $D$ is a square diagonal matrix whose rank is significantly smaller than $n$, 
$U$ and $V$ are the matrices that collect the left and right singular vectors respectively. When $r<n$,
this allows dimensionality reduction while retaining key structural information. %By applying the \textbf{Woodbury matrix identity} \citep{Woodbury}, the diffusion matrix $(I_n - W_0)^{-1}$ has a simpler form: 
We seek to calculate the Leontief matrix:
%\begin{equation}
$(I_n - W_0)^{-1} = (I_n - (UDV^\top + S_0))^{-1}.$
%\end{equation}
Applying the Woodbury matrix identity \citep{Woodbury}: % \cite{hager1989updating}:
\begin{equation*}
(A + UDV^\top)^{-1} = A^{-1} - A^{-1} U (D^{-1} + V^\top A^{-1} U)^{-1} V^\top A^{-1},
\end{equation*}
where $A = I_n - S_0$ is assumed to be invertible, we have %, we obtain (assume the invertibility of $(I_n - S_0)$):
\begin{equation}\label{Eq:Woodbury21}
(I_n - UDV^\top - S_0)^{-1} = (I_n - S_0)^{-1} + (I_n - S_0)^{-1} U (D^{-1} - V^\top (I_n - S_0)^{-1} U)^{-1} V^\top (I_n - S_0)^{-1}.
\end{equation}
Denote $\|.\|_{\max}$ and $\|.\|_2$ as the maximum norm and two-norm of a matrix respectively.
Suppose $\|\mbox{diag}(L_0)\|_{\max}\to 0$. By matrix algebra\footnote{Take $A = I_n - S_0^*$ and $B=I_n - S_0$ and note that $ A^{-1}-B^{-1}= A^{-1}(B-A)B^{-1}$.} and $\bS_0-S_0=\mbox{diag}(L_0)$, we have 
\begin{equation*}
\|(I_n - S_0^*)^{-1}-
(I_n - S_0)^{-1}\|_2= \|(I_n - S_0^*)^{-1}\mbox{diag}(L_0)
(I_n - S_0)^{-1}\|_2 \to 0. 
\end{equation*}
If we substitute $(I_n - S_0)^{-1}$ with $(I_n - S_0^*)^{-1}$ in Equation (\ref{Eq:Woodbury21}) and use Woodbury's formula, we have  
\begin{equation}\label{Eq:Woodbury22}
(I_n - W_0)^{-1} \approx (I_n - S_0^*)^{-1} + (I_n - S_0^*)^{-1} U (D^{-1} - V^\top (I_n - S_0^*)^{-1} U)^{-1} V^\top (I_n - S_0^*)^{-1}.
\end{equation}
This decomposition reveals two key effects in the network. The first term, $(I_n - \bS_0)^{-1}$, represents the Leontief matrix for the sparse component, capturing localized effects such as idiosyncratic shocks or sporadic interactions. For example, in {economic networks}, this can model firm-specific disruptions or localized financial shocks \cite{gabaix2011granular, Acemoglu2012}. In {supply chains}, it reflects disruptions in specific suppliers without affecting the entire network \cite{barrot2016input, boehm2019input}.

The second term in the Woodbury expansion of \( (I_n - W_0)^{-1} \) 
captures the influence of dense network structures, exploiting the 
low-rank form of \( U \) and \( V \). When \( L_0 = k\mathbf{v}_{L_0} 
\mathbf{v}_{L_0}^\top \), the second term in Equation~(\ref{Eq:Woodbury22}) yields, for symmetric \( S_0^* \), a rank-one matrix \( \tilde k\, \tilde v \tilde v^\top \),  where 
\( \tilde v = (I_n-S_0^*)^{-1} \mathbf{v}_{L_0} \) and 
\( \tilde k = \bigl(1/k - \mathbf{v}_{L_0}^\top (I_n-S_0^*)^{-1} 
\mathbf{v}_{L_0}\bigr)^{-1} \), which reflects the difference between 
\( (I_n-W_0)^{-1} \) and \( (I_n-S_0^*)^{-1} \) due to the presence 
of \( L_0 \). %to reflect \( L_0 \)’s uniform connectivity. 
%This term reflects influence along the dimension of \( \tilde v \), scaled by \( \tilde k \), amplifying the global effect when \( L_0 \) dominates, or %(\( W_0 = L_0 + S_0 \)) or %minimally perturbing \( S_0 \)’s sparse, localized structure when $S_0$ dominates. slightly perturbing the local structure when $S_0$ dominates. 
Note that $\tilde{k}\tilde{v}\tilde{v}^\top$ lies along the direction of $\tilde{v}$, with both $\mathbf{v}_{L_0}$ and $S_0^*$ contributing to it. Its effect can be substantial, particularly when $L_0$ is dominant. In economic networks, this balances uniform information spread, as in \cite{jackson2008social} where dense ties drive equilibrium outcomes in coordination games, against localized dynamics. Thus, the correction reconciles global connectivity, akin to \cite{Acemoglu2012}’s analysis of aggregate economic fluctuations via network linkages.

When studying the network effect, it is often assumed $W_0$ has bounded eigenvalues %less than 1 
even when the dimension of the network grows. If the network is highly connected, columns of $U$ and $V$ are dense vectors with many non-zero entries whose magnitude must converge to $0$ due to the bounded row-sum constraint. %the unitary constraint. 
Unlike factor models, the singular values in $D$ may not diverge as $n$ increases, which makes the elements of $L_0$ corresponding to a dense network being very small. However, our calculation above implies that in a dense network, naive thresholding methods are insufficient, as this might result in missing the second term in Equation (\ref{Eq:Woodbury22}) and inconsistent evaluation of network effects that overlook systemic dependencies. We empirically illustrate this point in our empirical application Section \ref{Sec:Application}.

Moreover, utilizing the low-rank and sparse structure, we only need $O(nr + s)$ parameters to store the matrix information, where $r = \mathrm{rank}(L_0)$ and $s = \|S_0\|_0$, which highlights the computational advantages we refer to earlier in the section. Note that it costs less time to compute the inverse of a sparse matrix compared to a dense matrix. With the help of Equation (\ref{Eq:Woodbury21}) or (\ref{Eq:Woodbury22}), we could simplify the calculation of the Leontief matrix.

    \section{Identification}
    \label{Appendix:iden}
    
In this section, we discuss the identification issue, i.e., given a weight matrix $W_0$ that admits the low-rank and sparse decomposition, it could be uniquely written as $W_0=L_0+S_0$ under  some regularity conditions.  As discussed earlier, we focus on $L_0$ and $S_0$ in the theoretical derivations, since the properties of $L_0^*$ and $S_0^*$ can be derived from them.
 Subsection \ref{sec:iden} presents conditions regarding the identification of 
$L_0$ and $S_0$, and Subsection \ref{exampleid} discusses the conditions using examples.

\label{subsec:iden_examples}

Before proceeding, we recall from Section~\ref{Sec:theory} the 
sparsity measures $m_r(A)$, $m_c(A)$, $m_s(A)$, and 
$\deg_{\max}(A) = \max\{m_r(A), m_c(A)\}$, and additionally define 
the \emph{incoherence index} of a nonzero matrix $A$ as 
$\mbox{inc}(A)= \sqrt{\max(\mu_u(A), \mu_{v}(A))}$, where
\begin{equation*}
\mu_u(A) \coloneqq \max_{1\leq i\leq n} \sum_{j=1}^r u_{ij}^2(A), 
\qquad 
\mu_v(A) \coloneqq \max_{1\leq i\leq n} \sum_{j=1}^r v_{ij}^2(A),
\end{equation*}
$r$ is the rank of $A$, and $u_{ij}(A)$, $v_{ij}(A)$ are the $j$-th 
elements of the $i$-th left and right singular vectors. The measure 
is scale-invariant and reflects whether $A$ concentrates its mass in 
few entries: for a diagonal matrix with distinct diagonal entries 
$\mathrm{inc}(A) = 1$, whereas for a fully connected matrix with 
equal entries $\mathrm{inc}(A) = n^{-1/2}$.

The incoherence index controls the maximum entry via 
$\|L_0\|_{\max} \le \sigma_1(L_0)\,\mathrm{inc}(L_0)^2$,\footnote{For 
$L_0 = UDV^\top$, $L_{0,ij} = (U^\top e_i)^\top D (V^\top e_j)$, so 
by Cauchy--Schwarz 
$|L_{0,ij}| \le \sigma_1(L_0)\|U^\top e_i\|_2\|V^\top e_j\|_2 \le 
\sigma_1(L_0)\,\mathrm{inc}(L_0)^2$, using 
$\|U^\top e_i\|_2^2 = \sum_{k=1}^r u_{ik}^2 \le \mu_u(L_0)$.} so when 
$\sigma_1(L_0) = O(1)$, the quantity $\mathrm{inc}(L_0)^2$ plays the 
same role as $1/\sqrt{m_r(L_0)m_c(L_0)}$ in the coherence bound of 
Assumption~\ref{weight}(i): both equal $1/n$ for a fully dense 
$L_0$ and are of constant order for a spiked $L_0$. We use 
$\mathrm{inc}(L_0)$ here because it yields the sharp identification 
condition $\mathrm{inc}(L_0)\,\deg_{\max}(S_0)\le 1/16$.

\subsection{Identification conditions}\label{sec:iden}

In the following, we provide a theoretical result that establishes the conditions under which $L_0$ and $S_0$ can be disentangled. % when $E$ is the zero matrix.

\begin{assumption}[Low rank plus sparse components]
    \label{Assump:W_structure}
Suppose $W_0$ is an $n\times n$ matrix %with constant entries. It 
        which can be decomposed as $W_0=L_0+S_0$, where $L_0$ is a low-rank matrix of a  rank $0\leq r< n$ and $S_0\neq \mathbf{0}_{n\times n}$. %and $r = o(n)$ .
Moreover, $$\mbox{inc}(L_0){\operatorname{deg}_{\max }(S_0)} \leq 1/16.$$
%{\color{blue} Suppose $W_0$ is an $n\times n$ nonzero adjacency matrix with zero diagonals. There exists a pair of matrices $L_0$ and $S_0$ such that  %with constant entries. It 
%        which can be decomposed as $W_0=L_0+S_0$.  
%(1) $L_0$ is a low-rank matrix of a  rank $0\leq r< n$; (2)  $S_0\neq \mathbf{0}_{n\times n}$; (3) %If $r>0$, we assume that its first $r$ singular values are distinct constants.   Moreover, 
%$\mbox{inc}(L_0){\operatorname{deg}_{\max }(S_0)} \leq 1/16.$   
%}     
        %In this case, $\mbox{inc}(L_0)$${\operatorname{deg}_{\max }(S_0)}=1.$
\end{assumption}

%We shall note that Assumption  \ref{Assump:W_structure} is regarding identification conditions on $L_0$ and $S_0$.
%{\color{red} comment on the incoherence measure for the dominant unit case }

\begin{lemma}\label{identification}
Under Assumption \ref{Assump:W_structure}, for any given decomposition 
\(W_0 = L_0 + S_0\) with %\(L_0 \neq \mathbf{0}_{n\times n}\) and 
\(S_0 \neq \mathbf{0}_{n\times n}\), the pair \((L_0, S_0)\) 
can be  uniquely recovered.\footnote{Uniqueness here means that there exists an 
optimization scheme that pins down the pair \((L_0, S_0)\) uniquely; 
see \cite{Chan2009}. Note that the identification is local, in the sense that what we can identify is a discrete set. Local perturbations of the solution can also be identified; however, at the global level, multiple identified sets may exist.}
\end{lemma}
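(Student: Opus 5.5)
The plan follows the rank--sparsity incoherence argument of \cite{Chandrasekaran2010} and \cite{chandrasekaran2011rank}, adapted to the incoherence index $\mathrm{inc}(\cdot)$ defined above. Uniqueness will come from two facts. First, the tangent spaces of the two components intersect only at zero. Second, the pair $(L_0,S_0)$ is the unique minimizer of the convex program
\[
\min_{L,S}\ \|L\|_* + \gamma\sum_{i,j}|S_{ij}| \quad \text{s.t. } L+S=W_0
\]
for an appropriate $\gamma>0$. This is the ``optimization scheme'' referred to in the footnote.

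\textbf{Step 1 (tangent spaces).} Let $L_0=UDV^\top$ with $U,V\in\mathbb{R}^{n\times r}$. Define
\[
T=\{UX^\top+YV^\top\},\qquad \Omega=\{M:\ \mathrm{supp}(M)\subseteq\mathrm{supp}(S_0)\}.
\]
Set
\[
\xi(T)=\max_{N\in T,\ \|N\|_2\le1}\|N\|_{\max},\qquad \mu(\Omega)=\max_{N\in\Omega,\ \|N\|_{\max}\le1}\|N\|_2 .
\]
I will show $\xi(T)\le 2\,\mathrm{inc}(L_0)$. The argument writes $N=P_UN+NP_V-P_UNP_V$ and bounds each entry, e.g.\ $|e_i^\top P_UN e_j|\le\|U^\top e_i\|_2\|N\|_2$, with $\|U^\top e_i\|_2\le\sqrt{\mu_u(L_0)}\le\mathrm{inc}(L_0)$; this careful bookkeeping is what yields the constant $2$. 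Next, for $N\in\Omega$ I will use the Schur-type bound $\|N\|_2\le\sqrt{\|N\|_1\|N\|_\infty}$, which gives $\mu(\Omega)\le\deg_{\max}(S_0)$. Combining these,
\[
\xi(T)\mu(\Omega)\le 2\,\mathrm{inc}(L_0)\deg_{\max}(S_0)\le 1/8<1
\]
under Assumption~\ref{Assump:W_structure}. Then, for $N\in\Omega\cap T$, we get $\|N\|_{\max}\le\xi(T)\|N\|_2\le\xi(T)\mu(\Omega)\|N\|_{\max}$, which forces $N=0$. Hence $\Omega\cap T=\{0\}$, which is the local identifiability statement.

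\textbf{Step 2 (dual certificate).} The pair $(L_0,S_0)$ is the unique optimum if there exists $Q$ satisfying:
\[
P_T(Q)=UV^\top,\quad \|P_{T^\perp}(Q)\|_2<1,\quad P_\Omega(Q)=\gamma\,\mathrm{sign}(S_0),\quad \|P_{\Omega^\perp}(Q)\|_{\max}<\gamma .
\]
To build $Q$, I will seek $Q=Q_T+Q_\Omega$ with $Q_T\in T$ and $Q_\Omega\in\Omega$. The projection conditions become a linear system on $T\oplus\Omega$. It is solvable by a Neumann series, since the operator $P_\Omega P_T$ restricted to $\Omega$ has norm at most $\xi\mu\le1/8$, and likewise for $P_TP_\Omega$ restricted to $T$. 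The norms of the correction terms are then bounded by geometric series in $\xi\mu$.

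\textbf{Main obstacle.} The hard part is verifying the two strict inequalities on the orthogonal complements. One needs a window
\[
\frac{\xi(T)}{1-4\xi\mu}<\gamma<\frac{1-3\xi\mu}{\mu(\Omega)}
\]
to be nonempty, which requires roughly $\xi\mu<1/6$. The constant $1/16$ in the assumption leaves a margin for the factor $2$ in $\xi\le2\,\mathrm{inc}(L_0)$, and I will follow the constants in Chandrasekaran et al.'s Theorem 2 rather than optimizing them. Once the certificate exists, a standard subgradient argument finishes the proof. Suppose $(L_0+\Delta,S_0-\Delta)$ is another feasible point with $\Delta\ne0$. Then the objective increases strictly unless $P_{T^\perp}\Delta=0$ and $P_{\Omega^\perp}\Delta=0$, i.e.\ $\Delta\in T\cap\Omega=\{0\}$ by Step 1. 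This gives uniqueness of $(L_0,S_0)$ and hence the lemma, with the caveat noted in the footnote that the identification is only relative to this convex scheme.
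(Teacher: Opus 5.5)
Your proposal is correct and follows the same route as the paper. The paper simply invokes Corollary~3 of \cite{Chan2009} after using Lemma~\ref{ineq} ($\xi(L_0)\le 2\,\mathrm{inc}(L_0)$, $\mu(S_0)\le\deg_{\max}(S_0)$) to get $\xi\mu\le 1/8$; you unpack that corollary by reproducing its transversality argument $\Omega\cap T=\{0\}$, the dual-certificate construction, and the window $\xi/(1-4\xi\mu)<\gamma<(1-3\xi\mu)/\mu$, which is nonempty since $1/8<1/6$. One small slip: the standard bounds give $\|P_\Omega P_T|_{\Omega}\|$ and $\|P_T P_\Omega|_{T}\|$ at most $2\xi\mu$ (because $\|P_T N\|_2\le 2\|N\|_2$), not $\xi\mu$, but $2\xi\mu\le 1/4<1$ still gives convergence of the Neumann series, and the window you quote already carries this factor.
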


%{\color{red}When there are large errors in the models, for example, in a sparse $0-1$ network with misclassified links. We could not identify the true $S_0$ without further structural modeling assumption, see for example, \cite{lewbel2024estimating}. Instead we can identify $S_0$ and the big errors together. 
%}

%{\color{red}When there are large errors in the models, for example, in a sparse $0-1$ network with misclassified links. It may be hard to distinguish true signals from noises without further information. In contrast, it is easier to denoise tiny errors. Hence this paper focuses more in the case when there are many tiny errors which together might have non-negligible impact.}  

%{\color{red} It shall be noted that \cite{Lewbel2023a} and \cite{Lewbeletal2024EJ} are concerning majorly on links measurement errors. They try to control the impact of measurement errors $E_{ij}$ in the link matrix via $\sum_{i=1}^n\sum_{j=1}^n |E_{ij}|$. They do not distinguish $L_0$ and $S_0$ in the true adjacency matrix. %$0-1$ with a small amount of misclassified links and their networks shall be regarded as $S_0$ in our case. Therefore they have $L_0 = \mathbf{0}$. }

Assumption \ref{Assump:W_structure} is a direct consequence of Corollary~3 in \cite{Chan2009}. %,  \cite{chandrasekaran2009sparse}. 
This assumption imposes that the sparse matrix \( S_0 \) is not excessively dense, while the low-rank matrix \( L_0 \) must not be overly sparse. The incoherence measure provides a quantitative assessment of the dense level of \( L_0 \). Specifically, if \( L_0 \) is sparse, its incoherence measure is of constant order, which might preclude the above inequality from holding when \( S_0 \neq {\bf 0}_{n\times n}\). Conversely, if \( L_0 \) is dense and contains infinitely many small entries, its incoherence measure tends toward zero. For instance, if \( L_0 \) possesses a singular vector with entries of order \( 1/\sqrt{n} \), the incoherence measure attains an order of {\( 1/\sqrt{n} \)}, which is the smallest order achievable. In such a case,  it could be separated from \( S_0 \) if $\operatorname{deg}_{\max }(S_0)=o(\sqrt{n})$, which includes the sparse network with bounded degree, or the dominant network with finite keyplayers. It also implies that $m_s(S_0) = o(n^{\frac{3}{2}})$ as $m_r(S_0)\vee m_c(S_0) =o(\sqrt{n})$. 

\subsection{Examples}
\label{exampleid}
We now discuss how the identification assumption applies to the four 
examples of network structures mentioned in Section~\ref{Sec:Motivation}.

Complete network: In this example, $W_0=L_0+S_0$, where 
$L_0=\bm{c} \bm{c}^\top$, $\bm{c}=(c_1, \cdots, c_n)^\top$, and the 
sparse component $S_0$ is a diagonal matrix to guarantee that the 
diagonal elements of $W_0$ are zero. Note that 
$\operatorname{deg}_{\max}(S_0)=1$ and 
$\mbox{inc}(L_0)=\sqrt{\max_{1\leq i\leq n} c_i^2/\|\bm{c}\|_2^2}
\leq c_{\max}/(\sqrt{n}\,c_{\min})$. Hence, a sufficient condition 
for the identification condition 
$\mbox{inc}(L_0)\operatorname{deg}_{\max}(S_0)\leq 1/16$ of 
Assumption~\ref{Assump:W_structure} is 
$c_{\max}/c_{\min}\leq \sqrt{n}/16$.

Low-rank network: In this example, $W_0=L_0+S_0$, where $L_0={\bm a} {\bf 1}_n^\top + {\bm z} {\bm z}^\top$ is a matrix of rank no more than 2 with ${\bm a}=(a_1, \cdots, a_n)$ and ${\bm z}=(z_1, \cdots, z_n)$, and $S_0$ is a diagonal matrix whose $(i,i)$th element is $-L_{0,ii}$ to remove self-loops. For simplicity, we assume ${\bm z}$ is orthogonal to both ${\bm a}$ and ${\bm 1}_n$, and hence the left singular vectors are proportional to ${\bm a}$ and ${\bm z}$, while the right singular vectors are proportional to ${\bf 1}_n$ and ${\bm z}$. %To ensure that $W$ is not self-looped, $S_0$ is set as a diagonal matrix whose diagonal elements are opposite to those of $L_0$. 
Since $\operatorname{deg}_{\max}(S_0)=1$, the 
identification assumption could be satisfied if 
$\mbox{inc}(L_0)=\sqrt{\max\left\{\max_{1\leq i\leq n}\left(
\frac{a_i^2}{\|\bm a\|_2^2}+\frac{z_i^2}{\|\bm z\|_2^2}\right),\ 
\max_{1\leq i\leq n}\left(\frac{1}{n}+\frac{z_i^2}{\|\bm z\|_2^2}
\right)\right\}}\leq 1/16$.

%To construct the spatial weight matrix, we can scale by $n^{-1}$ so that $W$ has a bounded row sum. Correspondingly,  
%\begin{equation*}
%    W = \begin{bmatrix}
%        0 & 1/n & 1/n & 1/n & \cdots & 1/n & 1/n \\
%        1/n & 0 & 1/n & 0 & \cdots & 1/n & 1/n \\
%        1/n & 1/n & 0 & 1/n & \cdots & 1/n & 1/n \\
%        \vdots & \vdots & \vdots & \vdots & \ddots & \vdots & \vdots \\
%        1/n & 1/n & 1/n & 1/n & \cdots & 1/n& 0 \\
%    \end{bmatrix}.
%\end{equation*}

Group network: In this example, $W_0=L_0+S_0$, where $L_0=\bm{c}_a \bm{c}_a\T+\bm{c}_b \bm{c}_b\T$, $\bm{c}_a=(c_1, \cdots, c_6, 0, \cdots, 0)\T$, and $\bm{c}_b=(0, \cdots, 0, c_7, \cdots, c_{20})\T$, and $S_0$ is a diagonal matrix to ensure that the diagonal elements of $W_0$ are 0. Since $\operatorname{deg}_{\max }(S_0)=1$, the identification assumption %to separate the low-rank and sparse components 
could be satisfied if  $\mbox{inc}(L_0)=\sqrt{\max(\max_{1\leq i\leq 6} c_i^2/\sum_{i=1}^6 c_i^2, \max_{7\leq i \leq 20} c_i^2/\sum_{i=7}^{20} c_i^2)} \leq 1/16$. In the general case when $S_0$ has nonzero off-diagonal elements, the identification assumption then requires that $\operatorname{deg}_{\max }(S_0)\mbox{inc}(L_0)=\operatorname{deg}_{\max }(S_0)\sqrt{\max(\max_{1\leq i\leq 6} c_i^2/\sum_{i=1}^6 c_i^2, \max_{7\leq i \leq 20} c_i^2/\sum_{i=7}^{20} c_i^2)}\leq 1/16$.

Dominant units: In this example, \begin{equation*}
    W_0 = \begin{bmatrix}
            0 & w_{1,2} & 0 & \cdots & \cdots & \cdots &\cdots&\cdots& 0 \\
            w_{2,1} & 0 & w_{2,3} & \cdots & \cdots &\cdots&\cdots& \cdots & 0 \\
            \vdots & \vdots & \vdots & \vdots & \ddots & \vdots & \vdots &\vdots &\vdots\\
            w_{14,1} & 0 & \cdots  & w_{14,13} & 0 & w_{14,15} & \cdots & 0 & 0 \\
            0 & \cdots & \cdots & 0 & w_{15,14} & 0 & w_{15,16} & \cdots  & 0 \\
            \vdots & \vdots & \vdots & \vdots & \vdots & \vdots & \vdots &\vdots &\vdots\\
            0 & 0 & 0 & 0 & \cdots &\cdots&\cdots& w_{20,19} & 0 \\
        \end{bmatrix} \, .
\end{equation*}

When we treat it as a pure sparse matrix, the identification condition is automatically satisfied as $\mbox{inc}(L_0)=0$. Though it seems that one could write $W_0 = {\bf w}_1 {\bm e}\T_1 + W_{r}$, where $ {\bf w}_1$ is the first column of $W_0$,  ${\bm e}_1$ is the first column of the identity matrix $I_{n}$, and $W_{r}$ is the $n\times n$ partitioned matrix satisfying
 \begin{equation*}
 	\begin{bmatrix} 0 & {\bm s}\T_{1} \\
 		            \bm{0}_{(n-1)\times 1} & {\bm S}_{r} \end{bmatrix} \, 
 \end{equation*}
 with ${\bm s}_{1} = [w_{1,2},  \bm{0}_{1\times (n-2)}]\T $, and the $(i,j)$th element in ${\bm S_{r}}$ is identical to the $(i+1, j+1)$th element of $W_0$. In such a decomposition, the right singular vector of the low-rank part is $[1, \bm{0}_{1\times (n-1)}]\T$ and its incoherence index is $\mbox{inc}({\bm w_{1}} {\bm e}\T_1)=1$. Since $\operatorname{deg}_{\max }(W_{r})=2$, we have $\mbox{inc}({\bm w_{1}} {\bm e}\T_1)\operatorname{deg}_{\max }(W_{r})>1/16$. Therefore, we have to treat $W_0$ as a pure sparse matrix so that the identification Assumption holds.

 \section{Proofs of main results}
    \label{Appendix:Proofs}
    
\subsection{Notation}
$\|A\|_1 = \max_{1\leq j\leq n} \sum_{i=1}^n |A_{ij}|$, $\|A\|_2 =  \sigma_1(A)$, $\|A\|_{\infty} = \max_{1\leq i\leq n}\sum_{j=1}^n|A_{ij}|$, $\|A\|_{\max} = \max_{1\leq i,j\leq n}|A_{i,j}|$, $\|A\|_F = \sqrt{\sum A_{ij}^2} $, and $\|A\|_*=\sum_{i=1}^n \sigma_i(A)$.
Define $\langle A, B \rangle_F = \tr(A\T B)$ as the Frobenius inner product between $A$ and $B$.

%\subsection{Proofs Pertubation}

\subsection{Proof of Perturbation Lemma}\label{perb}
Justification of Equation (\ref{rhoLS}):\\
\noindent\textbf{Assumption
(two-dimensional spectral concentration and separation).}
For each $n$, let $\bL_0$ and $\bS_0$ be real symmetric matrices
with normalized leading eigenvectors $\bvL$ and $\bvS$ and
corresponding simple leading eigenvalues
$\lambda_{\bL_0}$ and $\lambda_{\bS_0}$. Let
\[
\rho_n \coloneqq \bvL^\top\bvS,
\qquad |\rho_n|<1,
\qquad \rho_n\to0.
\]
Let $\bvW$ be the normalized leading eigenvector of
$W_0=\bL_0+\bS_0$ and write
\[
\bvW=k_1\bvL+k_2\bvS+k_3e,
\qquad
e\perp\operatorname{span}\{\bvL,\bvS\},
\qquad \|e\|_2=1.
\]
Assume
\[
\|\bL_0\|_2+\|\bS_0\|_2\le C,
\qquad
|k_1|+|k_2|\asymp1,
\qquad
|k_3|=o(|\rho_n|).
\]
Define
\[
\widetilde a_n
  \coloneqq
  \lambda_{\bL_0}-\bvS^\top\bL_0\bvS,
\qquad
\widetilde c_n
  \coloneqq
  \lambda_{\bS_0}-\bvL^\top\bS_0\bvL.
\]
Assume that, for some constants $g_0,\kappa>0$, one of the
following two cases holds eventually:
\begin{enumerate}[(i)]
\item $\widetilde a_n-\widetilde c_n\ge g_0$ and
      $|k_1|\ge\kappa$;
\item $\widetilde c_n-\widetilde a_n\ge g_0$ and
      $|k_2|\ge\kappa$.
\end{enumerate}
\begin{lemma}[Perturbation of the leading eigenvector]\label{lem:perturb}
   Under the two-dimensional spectral concentration and separation 
   conditions stated above, Equations~\eqref{rhoLS},~\eqref{eq:kSL1} and \ref{eq:kSL2} hold; in case~(i), 
   $|\bvW^\top\bvL| > |\bvW^\top\bvS|$ for all sufficiently 
   large $n$, and in case~(ii) the reverse inequality holds.
   \end{lemma}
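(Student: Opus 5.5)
Equation~\eqref{rhoLS} is pure linear algebra, so I will dispose of it first. The best linear approximation $k_{\bL_0}\bvL+k_{\bS_0}\bvS$ of $\bvW$ is the orthogonal projection of $\bvW$ onto $\operatorname{span}\{\bvL,\bvS\}$. Its coefficients solve the $2\times 2$ normal equations with Gram matrix $\begin{pmatrix}1&\rho_n\\ \rho_n&1\end{pmatrix}$ and right-hand side $(\rho_{W_0\bL_0},\rho_{W_0\bS_0})^\top$. Inverting gives exactly the formulas in \eqref{rhoLS}, which is well defined since $|\rho_n|<1$. Because $e\perp\operatorname{span}\{\bvL,\bvS\}$, the projection coefficients coincide with $k_1,k_2$ in the decomposition $\bvW=k_1\bvL+k_2\bvS+k_3e$. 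So $k_{\bL_0}=k_1$ and $k_{\bS_0}=k_2$.

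For \eqref{eq:kSL1}–\eqref{eq:kSL2}, I will substitute the decomposition into the eigen-equation $(\bL_0+\bS_0)\bvW=\lambda_{W_0}\bvW$. I then take inner products with $\bvL$ and with $\bvS$, using symmetry so that $\bvL^\top\bL_0=\lambda_{\bL_0}\bvL^\top$ and $\bvS^\top\bS_0=\lambda_{\bS_0}\bvS^\top$. The $\bvL$-equation reads
\[
k_1(\lambda_{\bL_0}+\bvL^\top\bS_0\bvL)+k_2(\lambda_{\bL_0}\rho_n+\bvL^\top\bS_0\bvS)+O(|k_3|)=\lambda_{W_0}(k_1+k_2\rho_n).
\]
The $\bvS$-equation is symmetric. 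The $k_3$ contributions are $O(|k_3|)=o(|\rho_n|)$ by the norm bound $C$.

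The cross terms need care. We have $\bvL^\top\bS_0\bvS=\lambda_{\bS_0}\rho_n$ exactly, and likewise $\bvS^\top\bL_0\bvL=\lambda_{\bL_0}\rho_n$. I will then eliminate $\lambda_{W_0}$ by subtracting $\rho_n$ times one equation from the other, or equivalently by solving the $2\times2$ eigenproblem of the compressed operator. To leading order in $\rho_n$, $\lambda_{W_0}$ equals the larger of $a:=\lambda_{\bL_0}+\bvL^\top\bS_0\bvL$ and $b:=\lambda_{\bS_0}+\bvS^\top\bL_0\bvS$.

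Take case (i), with $a>b$. The $\bvS$-equation gives $k_2(b-\lambda_{W_0})\approx -k_1\rho_n(\lambda_{\bS_0}-\ldots)$, and rearranging yields $k_2/k_1\approx\rho_n\,\widetilde c_n/(a-b)$, which is \eqref{eq:kSL2}. Case (ii) gives \eqref{eq:kSL1} by the symmetric argument.

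The errors are $O(\rho_n^2)+o(\rho_n)$. Dividing by the gap $|\widetilde a_n-\widetilde c_n|\ge g_0$ and by $|k_1|\ge\kappa$ in case (i) (or $|k_2|\ge\kappa$ in case (ii)), these errors are $o(\rho_n)$ relative to the leading term. This is the sense of ``$\approx$''. The separation assumption is exactly what keeps the denominator away from zero.

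I expect the main obstacle to be precisely this bookkeeping: ensuring that $a-b$ and $\widetilde a_n-\widetilde c_n$ agree up to $O(\rho_n)$, so that the gap hypothesis transfers, and that $\lambda_{W_0}-a=O(\rho_n^2)+o(\rho_n)$. For this I will use $a-b=\widetilde a_n-\widetilde c_n+(\bvL^\top\bS_0\bvL-\bvS^\top\bL_0\bvS)-(\ldots)$ and check the terms directly. If that fails, I will fall back on a Rayleigh-quotient argument on the $2\times2$ compression.

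Finally, the dominance claim follows from $\bvW^\top\bvL=k_1+k_2\rho_n+0$ and $\bvW^\top\bvS=k_2+k_1\rho_n$. In case (i), $k_2/k_1=O(\rho_n)\to0$ while $|k_1|\ge\kappa$. Hence $|\bvW^\top\bvL|\ge\kappa(1-o(1))>|\bvW^\top\bvS|=O(\rho_n)$ for large $n$. Case (ii) is the mirror image.
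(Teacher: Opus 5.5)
Your proposal is correct and follows essentially the paper's route: project the eigen-equation onto $\bvL$ and $\bvS$, use symmetry for the cross terms, and divide by the gap using $|k_1|\ge\kappa$ (resp.\ $|k_2|\ge\kappa$). The paper eliminates $\lambda_{W_0}$ by cross-multiplying the two projected equations into a quadratic in $k_2/k_1$ and taking its bounded root, whereas you first locate $\lambda_{W_0}$ and then solve the $\bvS$-equation linearly; this works equally well. The bookkeeping you flag as the main obstacle is not actually an obstacle: $a-b=\widetilde a_n-\widetilde c_n$ holds exactly. Moreover, $\lambda_{W_0}=a+O(\rho_n)+O(|k_3|)$ follows at once from the $\bvL$-equation after dividing by $k_1+k_2\rho_n$, which is bounded away from zero because $|k_1|\ge\kappa$, so the Rayleigh-quotient fallback is unnecessary.
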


\begin{proof}
Define
\[
\rho_{W_0\bL_0}
\coloneqq
\bvW^\top\bvL
=
k_1+k_2\rho_n,
\qquad
\rho_{W_0\bS_0}
\coloneqq
\bvW^\top\bvS
=
k_1\rho_n+k_2.
\]
Note that
\[
\rho_n=\rho_{\bL_0\bS_0}=\bvL^\top\bvS.
\]
Moreover, $\bvL$ and $\bvS$ are normalized eigenvectors. Hence
\begin{equation*}
\begin{pmatrix}
k_1\\
k_2
\end{pmatrix}
=
\begin{pmatrix}
1 & \rho_n\\
\rho_n & 1
\end{pmatrix}^{-1}
\begin{pmatrix}
\rho_{W_0\bL_0}\\
\rho_{W_0\bS_0}
\end{pmatrix}
=
\begin{pmatrix}
\dfrac{\rho_{W_0\bL_0}-\rho_n\rho_{W_0\bS_0}}
      {1-\rho_n^2}\\[3mm]
\dfrac{\rho_{W_0\bS_0}-\rho_n\rho_{W_0\bL_0}}
      {1-\rho_n^2}
\end{pmatrix},
\end{equation*}
and Equation~(\ref{rhoLS}) holds.

We now prove Equation~(\ref{eq:kSL1}) in the main text.

Recall the decomposition
\[
\bvW=k_1\bvL+k_2\bvS+k_3e,
\]
where $\|e\|_2=1$ and $e$ is orthogonal to both $\bvL$ and $\bvS$.
Then
\begin{align}
(\bL_0+\bS_0)\bvW
={}&
k_1\lambda_{\bL_0}\bvL
+k_2\bL_0\bvS
+k_1\bS_0\bvL
+k_2\lambda_{\bS_0}\bvS
+r_n,
\label{eqn:L+S}
\end{align}
where
\[
r_n\coloneqq k_3(\bL_0+\bS_0)e.
\]
Because $\|\bL_0\|_2+\|\bS_0\|_2\leq C$,
\[
\|r_n\|_2\leq C|k_3|.
\]

Let
\[
r_{Ln}\coloneqq\bvL^\top r_n,
\qquad
r_{Sn}\coloneqq\bvS^\top r_n.
\]
Then
\[
|r_{Ln}|+|r_{Sn}|=O(|k_3|).
\]
Multiplying Equation~(\ref{eqn:L+S}) by $\bvL^\top$ and using the
symmetry of $\bL_0$ gives
\begin{align*}
&k_1\lambda_{\bL_0}
+k_2\lambda_{\bL_0}\rho_n
+k_1\bvL^\top\bS_0\bvL
+k_2\lambda_{\bS_0}\rho_n
+r_{Ln}\\
&\hspace{3cm}
=\lambda_{W_0}(k_1+k_2\rho_n).
\end{align*}
Similarly, multiplying by $\bvS^\top$ and using the symmetry of
$\bS_0$ gives
\begin{align*}
&k_1\lambda_{\bL_0}\rho_n
+k_2\bvS^\top\bL_0\bvS
+k_1\lambda_{\bS_0}\rho_n
+k_2\lambda_{\bS_0}
+r_{Sn}\\
&\hspace{3cm}
=\lambda_{W_0}(k_1\rho_n+k_2).
\end{align*}

Recall that
\[
\widetilde a_n
=
\lambda_{\bL_0}-\bvS^\top\bL_0\bvS,
\qquad
\widetilde c_n
=
\lambda_{\bS_0}-\bvL^\top\bS_0\bvL.
\]
Since $\lambda_{\bL_0}$ and $\lambda_{\bS_0}$ are the largest
eigenvalues of the symmetric matrices $\bL_0$ and $\bS_0$,
respectively,
\[
\widetilde a_n\geq0,
\qquad
\widetilde c_n\geq0.
\]

Consider case~(i), so that
\[
\widetilde a_n-\widetilde c_n\geq g_0>0,
\qquad
|k_1|\geq\kappa>0.
\]
Set
\[
q_n\coloneqq\frac{k_2}{k_1}.
\]
Cross-multiplying the two projected eigenvalue equations and retaining
the remainder gives
\[
\widetilde a_n\rho_nq_n^2
+
(\widetilde a_n-\widetilde c_n)q_n
-
\widetilde c_n\rho_n
=
O(|k_3|).
\]
Let
\[
g_n\coloneqq\widetilde a_n-\widetilde c_n.
\]
The sequence $q_n$ is bounded because $|k_1|\geq\kappa$ and
$|k_1|+|k_2|\asymp1$. Moreover,
\[
2\widetilde a_n\rho_nq_n+g_n
=
g_n+o(1)
\geq \frac{g_0}{2}
\]
for all sufficiently large $n$. Hence the $O(|k_3|)$ perturbation of
the quadratic produces an $O(|k_3|)$ perturbation of its bounded root.
Therefore,
\begin{align*}
\frac{k_2}{k_1}
&=
\frac{2\widetilde c_n\rho_n}
{g_n+\sqrt{g_n^2+
4\widetilde a_n\widetilde c_n\rho_n^2}}
+O(|k_3|)\\
&=
\frac{\widetilde c_n\rho_n}{g_n}
+O(\rho_n^3)+O(|k_3|)\\
&=
\frac{
(\lambda_{\bS_0}-\bvL^\top\bS_0\bvL)\rho_n}
{
\lambda_{\bL_0}-\bvS^\top\bL_0\bvS
-
(\lambda_{\bS_0}-\bvL^\top\bS_0\bvL)}
+o(|\rho_n|),
\end{align*}
where the last equality uses $\rho_n\to0$ and
$|k_3|=o(|\rho_n|)$.

Since $q_n=O(\rho_n)$,
\[
|\bvW^\top\bvL|
=
|k_1+k_2\rho_n|
=
|k_1|\{1+o(1)\},
\]
whereas
\[
|\bvW^\top\bvS|
=
|k_1\rho_n+k_2|
=
O(|\rho_n|).
\]
Thus, for all sufficiently large $n$,
\[
|\bvW^\top\bvL|>|\bvW^\top\bvS|.
\]

Now consider case~(ii), so that
\[
\widetilde c_n-\widetilde a_n\geq g_0>0,
\qquad
|k_2|\geq\kappa>0.
\]
Set
\[
p_n\coloneqq\frac{k_1}{k_2}.
\]
Cross-multiplying the projected eigenvalue equations gives
\[
\widetilde c_n\rho_np_n^2
+
(\widetilde c_n-\widetilde a_n)p_n
-
\widetilde a_n\rho_n
=
O(|k_3|).
\]
Let
\[
h_n\coloneqq\widetilde c_n-\widetilde a_n.
\]
As in case~(i), $p_n$ is bounded and
\[
2\widetilde c_n\rho_np_n+h_n
=
h_n+o(1)
\geq\frac{g_0}{2}
\]
for all sufficiently large $n$. Therefore,
\begin{align*}
\frac{k_1}{k_2}
&=
\frac{2\widetilde a_n\rho_n}
{h_n+\sqrt{h_n^2+
4\widetilde a_n\widetilde c_n\rho_n^2}}
+O(|k_3|)\\
&=
\frac{\widetilde a_n\rho_n}
{\widetilde c_n-\widetilde a_n}
+O(\rho_n^3)+O(|k_3|)\\
&=
\frac{\widetilde a_n\rho_n}
{\widetilde c_n-\widetilde a_n}
+o(|\rho_n|).
\end{align*}

Since $p_n=O(\rho_n)$,
\[
|\bvW^\top\bvS|
=
|k_2+k_1\rho_n|
=
|k_2|\{1+o(1)\},
\]
whereas
\[
|\bvW^\top\bvL|
=
|k_2\rho_n+k_1|
=
O(|\rho_n|).
\]
Consequently, for all sufficiently large $n$,
\[
|\bvW^\top\bvS|>|\bvW^\top\bvL|.
\]
\end{proof}
\begin{remark}
The preceding perturbation result applies to symmetric matrices
$\bL_0$ and $\bS_0$. When $\bS_0$ is asymmetric, the argument must
be reformulated using the corresponding left and right eigenvectors.
We leave this extension for future work.
\end{remark}

\subsection{Some Useful Lemmas}

We first present some useful lemmas for the proof of the theorem.
The notation within this section does not necessary align the main text and the proofs of the theorem since they are independent results. Assume that the rank of the matrix \( A \) is \( r \).
%{Denote that} $ \mu_u(A) \coloneqq \max_{1\leq i\leq n} \left\{\frac{1}{r} \sum_{j=1}^r u_{ij}^2(A) \right\} \leq \max_{1\leq i\leq n}\|u_i(A)\|_{\max}$
%and 
%$\mu_v(A)\coloneqq \max_{1\leq i\leq n} \left\{\frac{1}{r} \sum_{j=1}^r v_{ij}^2(A) \right\} \leq \max_{1\leq i\leq n}\|v_i(A)\|_{\max}$. Then 
%Also since have:

To summarise the steps we shall first give a few definitions.
Let $L_0 \in \mathbb{R}^{n \times n}$ be a low rank matrix with rank $r$.
Let $\mathcal{T}(L_0)$ denote the span of all matrices whose row 
space is contained in the row space of $L_0$ or whose column space 
is contained in the column space of $L_0$. Writing the singular 
value decomposition $L_0 = U\Sigma V^\top$ with 
$U, V \in \mathbb{R}^{n\times r}$ and $\operatorname{rank}(L_0) = r$, 
this space is
$$
\mathcal{T}(L_0)=\left\{U X^{\top}+Y V^{\top} \mid X, Y \in 
\mathbb{R}^{n \times r}\right\}.
$$

Analogously to $\deg_{\max}(A)$, define 
$\deg_{\min}(A) \coloneqq \min\Big\{\min_{1\leq i\leq n}\sum_{j=1}^n 
\I(A_{ij}\neq 0),\ \min_{1\leq j\leq n}\sum_{i=1}^n \I(A_{ij}\neq 0)\Big\},
$
the minimum number of nonzero entries in any row or column of $A$.
 For any matrix $S_0 \in \mathbb{R}^{n \times n}$, the  space $\Omega(S_0)$ is defined by,
$$
\Omega(S_0)=\left\{N \in  \mathbb{R}^{n \times n} \mid \operatorname{support}(N) \subseteq \operatorname{support}(S_0)\right\}.
$$

Given $\mathcal{T}(L_0)$ and $\Omega(S_0)$, we define two measures that quantify the denseness 
and sparsity levels of a matrix.
$\xi(L_0) = \max_{N \in \mathcal{T}(L_0), \|N\|_2 \leq 1}\|N\|_{\max}$, 
$\mu(S_0) =\max _{N \in \Omega(S_0),\|N\|_{\max} \leq 1}\|N\|_2$ .

\begin{lemma}(\cite{Chan2009})\label{ineq}
 Let $A \in \mathbb{R}^{n \times n}$ be any matrix. We have that
$$\operatorname{inc}(A) \leq \xi(A) \leq 2 \operatorname{inc}(A),$$
$$
\operatorname{deg}_{\min }(A) \leq \mu(A) \leq \operatorname{deg}_{\max }(A) ,
$$
\begin{equation*}
\xi(A) \mu(A) \geq 1.
\end{equation*}
\end{lemma}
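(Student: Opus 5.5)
Each of the three claims can be proved with an elementary variational argument using the definitions of $\xi$, $\mu$, $\operatorname{inc}$ and $\deg$. Throughout, assume $A\neq\mathbf 0_{n\times n}$ with compact SVD $A=U\Sigma V^\top$ of rank $r$. Write $P_U=UU^\top$ and $P_V=VV^\top$. Then $\|P_Ue_i\|_2^2=\sum_{j=1}^r u_{ij}^2$, so $\max_i\|P_Ue_i\|_2=\sqrt{\mu_u(A)}$, and likewise $\max_i\|P_Ve_i\|_2=\sqrt{\mu_v(A)}$. Hence $\operatorname{inc}(A)=\max\{\max_i\|P_Ue_i\|_2,\max_j\|P_Ve_j\|_2\}$.

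\emph{First inequality.} For the lower bound $\xi(A)\ge\operatorname{inc}(A)$, I exhibit test matrices in $\mathcal T(A)$. Fix $i$ with $P_Ue_i\neq 0$ and set $N=P_Ue_ie_i^\top/\|P_Ue_i\|_2$. This is of the form $UX^\top$ with $X=U^\top e_ie_i^\top/\|P_Ue_i\|_2$, so $N\in\mathcal T(A)$. It is rank one with $\|N\|_2=1$, and its $(i,i)$ entry equals $\|P_Ue_i\|_2$. Maximizing over $i$, and using the symmetric construction $e_je_j^\top P_V/\|P_Ve_j\|_2$ on the row side, gives the bound.

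For the upper bound $\xi(A)\le 2\operatorname{inc}(A)$, take any $N\in\mathcal T(A)$ with $\|N\|_2\le1$. Such $N$ satisfies $(I-P_U)N(I-P_V)=0$, so
\[
N=P_UN+(I-P_U)NP_V .
\]
By Cauchy--Schwarz, $|e_i^\top P_UNe_j|\le\|P_Ue_i\|_2\|N\|_2\le\operatorname{inc}(A)$. Since $\|I-P_U\|_2\le1$, also $|e_i^\top(I-P_U)NP_Ve_j|\le\|N\|_2\|P_Ve_j\|_2\le\operatorname{inc}(A)$. Adding the two bounds gives $\|N\|_{\max}\le 2\operatorname{inc}(A)$.

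\emph{Second inequality.} For the upper bound $\mu(A)\le\deg_{\max}(A)$, take $N\in\Omega(A)$ with $\|N\|_{\max}\le1$. Each row of $N$ has at most $m_r(A)$ nonzero entries and each column at most $m_c(A)$, each of modulus at most $1$. Then $\|N\|_2\le\sqrt{\|N\|_1\|N\|_\infty}\le\sqrt{m_c(A)m_r(A)}\le\deg_{\max}(A)$.

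For the lower bound $\mu(A)\ge\deg_{\min}(A)$, take $N=B$, the $0$--$1$ indicator matrix of $\operatorname{support}(A)$. It lies in $\Omega(A)$ and has $\|B\|_{\max}\le 1$. Testing against $\mathbf 1_n/\sqrt n$ gives $\|B\|_2\ge \mathbf 1_n^\top B\mathbf 1_n/n=m_s(A)/n$. Every row of $A$ has at least $\deg_{\min}(A)$ nonzeros, so $m_s(A)\ge n\deg_{\min}(A)$, and the bound follows.

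\emph{Third inequality.} I plan to use $A$ itself as the test matrix in both definitions. Trivially $A\in\mathcal T(A)$ (take $X=V\Sigma$, $Y=0$) and $A\in\Omega(A)$. Hence $\|A\|_{\max}\le\xi(A)\|A\|_2$ and $\|A\|_2\le\mu(A)\|A\|_{\max}$. Multiplying these and dividing by $\|A\|_{\max}\|A\|_2>0$ yields $\xi(A)\mu(A)\ge1$.

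The only step requiring real care is the upper bound $\xi(A)\le 2\operatorname{inc}(A)$. It rests on writing an element of $\mathcal T(A)$ as $P_UN+(I-P_U)NP_V$, so that each piece has a projector on exactly one side and can be bounded via $\|P_Ue_i\|_2$ or $\|P_Ve_j\|_2$. The remaining bounds are direct norm inequalities. For a zero matrix the statements are vacuous, under the convention $\operatorname{inc}=0$.
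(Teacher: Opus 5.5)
Your proof is correct and follows essentially the standard argument of the source the paper cites without reproducing (\cite{Chan2009}): rank-one test matrices $P_Ue_ie_i^\top/\|P_Ue_i\|_2$ give $\xi(A)\ge\operatorname{inc}(A)$, the split $N=P_UN+(I-P_U)NP_V$ gives the factor $2$, the row/column-sum bound and the all-ones test vector handle $\mu(A)$, and $A\in\mathcal T(A)\cap\Omega(A)$ gives $\xi(A)\mu(A)\ge1$. Two minor corrections: first, in the lower-bound construction the matrix you call $X$ is really $X^\top$ (take $X=e_ie_i^\top U/\|P_Ue_i\|_2$); second, for $A=\mathbf 0_{n\times n}$ the third inequality is false rather than vacuous, since $\mathcal T(\mathbf 0_{n\times n})=\Omega(\mathbf 0_{n\times n})=\{\mathbf 0_{n\times n}\}$ gives $\xi\mu=0$, so the lemma should be read for $A\neq\mathbf 0_{n\times n}$, which is consistent with the paper defining $\operatorname{inc}$ only for nonzero matrices.
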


Let $\Omega$ denote the parameter space for $\theta$, and let
$\mathcal L$ denote the population cost function. Define the target
parameter by
\[
\theta^*
\in
\arg\min_{\theta\in\Omega}\mathcal L(\theta).
\]

Based on the observed sample
\[
Z_1^n=\{Z_1,\ldots,Z_n\},
\]
we estimate $\theta^*$ by solving
\[
\widehat{\theta}
\in
\arg\min_{\theta\in\Omega}
\left\{
\mathcal L_n(\theta;Z_1^n)
+\lambda_n\Phi(\theta)
\right\},
\]
where $\mathcal L_n$ is the empirical cost function,
$\lambda_n>0$ is a regularization parameter, and
$\Phi:\Omega\rightarrow\mathbb R_+$ is a norm. Let $\Phi^*$ denote
the dual norm of $\Phi$.

Let $\mathbb M$ and $\overline{\mathbb M}$ be subspaces satisfying
\[
\mathbb M\subseteq\overline{\mathbb M},
\]
and denote their orthogonal complements by $\mathbb M^\perp$ and
$\overline{\mathbb M}^{\perp}$, respectively.

For any subspace $\mathbb S$, let $\theta_{\mathbb S}$ denote the
orthogonal projection of $\theta$ onto $\mathbb S$. In particular,
$\theta_{\mathbb M^\perp}^*$ denotes the component of $\theta^*$ in
$\mathbb M^\perp$.

Our bound involves the quantity
\[
\begin{aligned}
\varepsilon_n^2
\left(\overline{\mathbb M},\mathbb M^\perp\right)
\coloneqq{}&
\underbrace{
9\frac{\lambda_n^2}{\kappa^2}
\Psi^2\left(\overline{\mathbb M}\right)
}_{\text{estimation error}}
\\
&+
\underbrace{
\frac{8}{\kappa}
\left\{
\lambda_n\Phi\left(\theta_{\mathbb M^\perp}^*\right)
+
16\tau_n^2
\Phi^2\left(\theta_{\mathbb M^\perp}^*\right)
\right\}
}_{\text{approximation error}},
\end{aligned}
\]
which depends on the choice of the subspace pair
$\left(\overline{\mathbb M},\mathbb M^\perp\right)$.
\begin{assumption}\label{conditionwainright}
\begin{itemize}
\item[i)]
The cost function $\mathcal L_n$ is convex and satisfies the local
restricted strong convexity condition with curvature $\kappa>0$,
radius $R>0$, and tolerance $\tau_n^2\geq0$. That is,
\[
\mathcal E_n(\Delta)
\coloneqq
\mathcal L_n(\theta^*+\Delta)
-\mathcal L_n(\theta^*)
-\left\langle
\nabla\mathcal L_n(\theta^*),\Delta
\right\rangle
\geq
\frac{\kappa}{2}\|\Delta\|^2
-\tau_n^2\Phi^2(\Delta)
\]
for every
\[
\Delta\in\mathbb B(R)
\coloneqq
\left\{
\Delta:
\theta^*+\Delta\in\Omega,\ 
\|\Delta\|\leq R
\right\}.
\]

\item[ii)]
There exists a pair of subspaces
\[
\left(\mathbb M,\overline{\mathbb M}^{\perp}\right),
\qquad
\mathbb M\subseteq\overline{\mathbb M},
\]
such that the regularizer $\Phi$ is decomposable with respect to
$\left(\mathbb M,\overline{\mathbb M}^{\perp}\right)$; that is,
\[
\Phi(u+v)=\Phi(u)+\Phi(v)
\]
for every $u\in\mathbb M$ and
$v\in\overline{\mathbb M}^{\perp}$.

\item[iii)]
Define the good event
\[
\mathbb G(\lambda_n)
\coloneqq
\left\{
\Phi^*\left(\nabla\mathcal L_n(\theta^*)\right)
\leq\frac{\lambda_n}{2}
\right\},
\qquad \lambda_n\geq0.
\]

\item[iv)]
For a given norm $\|\cdot\|$ and subspace $\mathbb S$, define the
subspace Lipschitz constant by
\[
\Psi(\mathbb S)
\coloneqq
\sup_{u\in\mathbb S\setminus\{\mathbf 0\}}
\frac{\Phi(u)}{\|u\|}.
\]

\end{itemize}
\end{assumption}

The following are two important lemmas that we use to prove
Theorem~\ref{Thm:DK1}.

\begin{lemma}\label{theorem919}
{\rm (Theorem 9.19 in \cite{wainwright2019high}).}
Suppose Assumption~\ref{conditionwainright} holds. Conditional on
the event $\mathbb{G}(\lambda_n)$, the following statements hold:

\begin{enumerate}[(a)]
\item Any optimal solution satisfies
\[
\Phi\left(\widehat{\theta}-\theta^*\right)
\leq
4\left\{
\Psi\left(\overline{\mathbb{M}}\right)
\left\|\widehat{\theta}-\theta^*\right\|_2
+
\Phi\left(\theta_{\mathbb{M}^{\perp}}^*\right)
\right\}.
\]

\item For any subspace pair
$\left(\overline{\mathbb{M}},\mathbb{M}^{\perp}\right)$ satisfying
\[
\tau_n^2\Psi^2\left(\overline{\mathbb{M}}\right)
\leq \frac{\kappa}{64}
\quad\text{and}\quad
\varepsilon_n\left(
\overline{\mathbb{M}},\mathbb{M}^{\perp}
\right)
\leq R,
\]
we have
\[
\left\|\widehat{\theta}-\theta^*\right\|_2^2
\leq
\varepsilon_n^2\left(
\overline{\mathbb{M}},\mathbb{M}^{\perp}
\right).
\]
\end{enumerate}
\end{lemma}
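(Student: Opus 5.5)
The proof follows the standard decomposable-regularizer argument. Write $\widehat\Delta\coloneqq\widehat\theta-\theta^*$, fix the subspace pair $(\overline{\mathbb M},\mathbb M^\perp)$ from Assumption~\ref{conditionwainright}(ii), and work throughout on the event $\mathbb G(\lambda_n)$. The plan has two parts: first show that $\widehat\Delta$ lies in a cone-type set, which gives (a); then use restricted strong convexity on that set, together with a convexity/rescaling argument, to get (b).

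\textbf{Step 1 (cone condition, part (a)).} Optimality of $\widehat\theta$ gives the basic inequality
\[
\mathcal L_n(\theta^*+\widehat\Delta)-\mathcal L_n(\theta^*)\le \lambda_n\{\Phi(\theta^*)-\Phi(\theta^*+\widehat\Delta)\}.
\]
Convexity of $\mathcal L_n$ and the dual-norm inequality then bound the left-hand side from below:
\[
\mathcal L_n(\theta^*+\widehat\Delta)-\mathcal L_n(\theta^*)\ge\langle\nabla\mathcal L_n(\theta^*),\widehat\Delta\rangle\ge-\Phi^*(\nabla\mathcal L_n(\theta^*))\,\Phi(\widehat\Delta)\ge-\tfrac{\lambda_n}{2}\Phi(\widehat\Delta).
\]
For the right-hand side, write $\theta^*=\theta^*_{\mathbb M}+\theta^*_{\mathbb M^\perp}$ and $\widehat\Delta=\widehat\Delta_{\overline{\mathbb M}}+\widehat\Delta_{\overline{\mathbb M}^\perp}$. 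Decomposability and the triangle inequality give
\[
\Phi(\theta^*+\widehat\Delta)\ge\Phi(\theta^*_{\mathbb M})+\Phi(\widehat\Delta_{\overline{\mathbb M}^\perp})-\Phi(\theta^*_{\mathbb M^\perp})-\Phi(\widehat\Delta_{\overline{\mathbb M}}).
\]
Combining the three displays and dividing by $\lambda_n$ yields the cone condition
\[
\Phi(\widehat\Delta_{\overline{\mathbb M}^\perp})\le3\Phi(\widehat\Delta_{\overline{\mathbb M}})+4\Phi(\theta^*_{\mathbb M^\perp}).
\]
Hence $\Phi(\widehat\Delta)\le4\Phi(\widehat\Delta_{\overline{\mathbb M}})+4\Phi(\theta^*_{\mathbb M^\perp})$. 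By the definition of $\Psi$ and the fact that orthogonal projection is nonexpansive, $\|\widehat\Delta_{\overline{\mathbb M}}\|\le\|\widehat\Delta\|$, so $\Phi(\widehat\Delta_{\overline{\mathbb M}})\le\Psi(\overline{\mathbb M})\|\widehat\Delta\|$. This gives (a).

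\textbf{Step 2 (error bound, part (b)).} Define the recentred penalized objective
\[
\mathcal F(\Delta)\coloneqq\mathcal L_n(\theta^*+\Delta)-\mathcal L_n(\theta^*)+\lambda_n\{\Phi(\theta^*+\Delta)-\Phi(\theta^*)\}.
\]
It is convex, satisfies $\mathcal F(\mathbf 0)=0$, and $\mathcal F(\widehat\Delta)\le0$. Let $\mathbb K$ be the set of $\Delta$ satisfying the cone inequality of Step~1. This set is star-shaped about $\mathbf 0$: if $\Delta\in\mathbb K$ and $t\in[0,1]$, the homogeneous terms scale by $t$ while the constant $4\Phi(\theta^*_{\mathbb M^\perp})$ does not, so the inequality is preserved and $t\Delta\in\mathbb K$.

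I claim $\mathcal F(\Delta)>0$ for every $\Delta\in\mathbb K$ with $\|\Delta\|=\varepsilon_n\le R$. Granting the claim, suppose $\|\widehat\Delta\|>\varepsilon_n$ and set $t=\varepsilon_n/\|\widehat\Delta\|\in(0,1)$. Then $t\widehat\Delta\in\mathbb K$, $\|t\widehat\Delta\|=\varepsilon_n$, and convexity gives $\mathcal F(t\widehat\Delta)\le t\mathcal F(\widehat\Delta)+(1-t)\mathcal F(\mathbf 0)\le0$. This contradicts the claim, so $\|\widehat\Delta\|\le\varepsilon_n$.

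To prove the claim, apply restricted strong convexity, which is valid because $\|\Delta\|\le R$, and redo the bounds from Step~1 for a general $\Delta\in\mathbb K$:
\[
\mathcal F(\Delta)\ge\tfrac\kappa2\|\Delta\|^2-\tau_n^2\Phi^2(\Delta)-\tfrac{3\lambda_n}{2}\Psi(\overline{\mathbb M})\|\Delta\|-2\lambda_n\Phi(\theta^*_{\mathbb M^\perp}).
\]
Then bound
\[
\Phi^2(\Delta)\le32\Psi^2(\overline{\mathbb M})\|\Delta\|^2+32\Phi^2(\theta^*_{\mathbb M^\perp}),
\]
and use $\tau_n^2\Psi^2(\overline{\mathbb M})\le\kappa/64$ to absorb the quadratic term into the curvature. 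This leaves a quadratic in $\|\Delta\|$ with positive leading coefficient of order $\kappa$. It is strictly positive once $\|\Delta\|^2$ exceeds the sum of the estimation-error term $9\lambda_n^2\Psi^2/\kappa^2$ and the approximation-error term $(8/\kappa)\{\lambda_n\Phi(\theta^*_{\mathbb M^\perp})+16\tau_n^2\Phi^2(\theta^*_{\mathbb M^\perp})\}$, which is exactly $\varepsilon_n^2(\overline{\mathbb M},\mathbb M^\perp)$.

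\textbf{Main obstacle.} The only delicate part is the constant bookkeeping in this last step. A crude bound $(a+b)^2\le2a^2+2b^2$ combined with $\tau_n^2\Psi^2\le\kappa/64$ leaves only about $\kappa/4$ of curvature. The split must therefore be arranged, using $ab\le a^2/2+b^2/2$ on the linear term and $(a+b)^2\le2a^2+2b^2$ on the $\Phi^2$ term, so that the remaining coefficients reproduce the stated factors $9$, $8$ and $16$ in $\varepsilon_n^2$. I would follow the proof of Theorem 9.19 in \citet{wainwright2019high} for this choice of splitting. The rest of the argument is structural and needs no further input.
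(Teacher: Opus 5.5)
Your Step~1 and the skeleton of Step~2 are correct: the basic inequality, the cone, star-shapedness of $\mathbb K$, the rescaling argument, and the lower bound $\mathcal F(\Delta)\ge\frac\kappa2\|\Delta\|^2-\tau_n^2\Phi^2(\Delta)-\frac{3\lambda_n}{2}\Psi(\overline{\mathbb M})\|\Delta\|-2\lambda_n\Phi(\theta^*_{\mathbb M^\perp})$. This is the standard argument behind the cited theorem, which the paper invokes without proof.

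The gap is exactly the step you defer as constant bookkeeping, and it cannot be filled. First, your arithmetic is off. With $\Phi^2(\Delta)\le32\Psi^2\|\Delta\|^2+32\Phi^2(\theta^*_{\mathbb M^\perp})$ and $\tau_n^2\Psi^2\le\kappa/64$, the leftover curvature is $\frac\kappa2-32\tau_n^2\Psi^2\ge0$. This can be exactly zero, not $\kappa/4$. Even when $\theta^*_{\mathbb M^\perp}=0$, where the cone gives $\Phi(\Delta)\le4\Psi\|\Delta\|$ directly, this chain only yields $\frac\kappa4\|\Delta\|^2-\frac{3\lambda_n}{2}\Psi\|\Delta\|$, which is positive only beyond $6\lambda_n\Psi/\kappa$. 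More fundamentally, part~(b) as transcribed, with RSC curvature $\frac\kappa2$, is false whenever $\tau_n>0$. For $\theta^*\in\mathbb M$ the radius $\varepsilon_n=3\lambda_n\Psi/\kappa$ does not depend on $\tau_n$. Yet this radius is already attained in the limit at $\tau_n=0$, so a positive tolerance can push the error past it.

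Concretely, take the following setup.
\begin{itemize}
\item Let $\theta\in\mathbb R$, $\Phi=\|\cdot\|=|\cdot|$ and $\mathbb M=\overline{\mathbb M}=\mathbb R$, so $\Psi=1$ and $\varepsilon_n=3\lambda_n/\kappa$.
\item Let $\tau_n^2=\kappa/64$, $\theta^*=M$ large, and population cost $h(\theta-\theta^*)$.
\item Let $\mathcal L_n(\theta)=\frac{\lambda_n}{2}(\theta-\theta^*)+h(\theta-\theta^*)$, where $h(x)=\max\{\frac{31\kappa}{64}x^2,\,c(-x-\delta_0)\}$ with $\frac{93}{64}\lambda_n<c<\frac32\lambda_n$ and $\delta_0>0$ small.
\end{itemize}
Since $h=\frac{31\kappa}{64}x^2$ near $0$, we have $\nabla\mathcal L_n(\theta^*)=\lambda_n/2$, so $\mathbb G(\lambda_n)$ holds. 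Also $\mathcal E_n=h\ge\frac\kappa2x^2-\tau_n^2x^2$, so RSC holds with $R=\infty$. The penalized objective is bounded below by $h(x)+\frac{3\lambda_n}{2}x+\lambda_nM$, with equality when $\theta>0$. This function is larger than its value at $x=0$ for $x>0$. It decreases as $x$ goes from $0$ to $-t_2$ and increases afterwards, where $t_2$ is the larger root of $\frac{31\kappa}{64}t^2=c(t-\delta_0)$. Hence $|\widehat\theta-\theta^*|=t_2$, and $t_2\to\frac{64c}{31\kappa}>\frac{3\lambda_n}{\kappa}=\varepsilon_n$ as $\delta_0\to0$.

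What your argument does establish is the case $\tau_n=0$. This is the only case the paper uses (quadratic loss with $\kappa=1$, in the proof of Theorem~\ref{Thm:DK1}). There, the positive root of $\frac\kappa2x^2-\frac{3\lambda_n}{2}\Psi x-2\lambda_n\Phi(\theta^*_{\mathbb M^\perp})$ has square at most $\varepsilon_n^2$. Run the rescaling with $\delta>\varepsilon_n$ and let $\delta\downarrow\varepsilon_n$, since $\mathcal F$ may vanish on $\|\Delta\|=\varepsilon_n$.

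For $\tau_n>0$ you need one of two changes:
\begin{itemize}
\item RSC in the form $\mathcal E_n(\Delta)\ge\kappa\|\Delta\|^2-\tau_n^2\Phi^2(\Delta)$. Then your crude split leaves curvature $\kappa/2$ and reproduces $9$, $8$, $16$ exactly.
\item The stronger condition $\tau_n^2\Psi^2\le\kappa/128$, with $\varepsilon_n^2$ replaced by $4\varepsilon_n^2$.
\end{itemize}
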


\begin{assumption}\label{conditionwainrightPhi}
The cost function satisfies the $\Phi^*$-norm curvature condition
with curvature $\kappa_\Phi>0$, tolerance $\tau_{n,\Phi}\geq0$,
and radius $R_\Phi>0$:
\[
\Phi^*\left(
\nabla\mathcal L_n(\theta^*+\Delta)
-\nabla\mathcal L_n(\theta^*)
\right)
\geq
\kappa_\Phi\Phi^*(\Delta)
-\tau_{n,\Phi}\Phi(\Delta)
\]
for every
\[
\Delta\in\mathbb B_{\Phi^*}(R_\Phi)
\coloneqq
\left\{
\Delta\in\Omega:
\Phi^*(\Delta)\leq R_\Phi
\right\}.
\]
\end{assumption}

\begin{lemma}[Theorem 9.24 in \cite{wainwright2019high}]
\label{thm924}
Suppose $\theta^*\in\mathbb M$, the decomposability condition in
Assumption~\ref{conditionwainright}(ii) holds, and
Assumption~\ref{conditionwainrightPhi} holds. Suppose also that
\[
\tau_{n,\Phi}\Psi^2(\overline{\mathbb M})
<\frac{\kappa_\Phi}{32}.
\]
Conditional on
\[
\mathbb G(\lambda_n)
\cap
\left\{
\Phi^*(\widehat\theta-\theta^*)\leq R_\Phi
\right\},
\]
any optimal solution satisfies
\[
\Phi^*(\widehat\theta-\theta^*)
\leq
\frac{3\lambda_n}{\kappa_\Phi}.
\]
\end{lemma}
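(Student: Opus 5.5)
The plan is the standard dual-norm argument behind Theorem~9.24 of \cite{wainwright2019high}. It combines three ingredients: the first-order optimality condition of the regularized program, the $\Phi^*$-curvature condition of Assumption~\ref{conditionwainrightPhi}, and a cone condition that lets us absorb the tolerance term. Throughout, write $\Delta\coloneqq\widehat\theta-\theta^*$ and work on the event $\mathbb G(\lambda_n)\cap\{\Phi^*(\Delta)\le R_\Phi\}$. On this event $\Delta\in\mathbb B_{\Phi^*}(R_\Phi)$, so the curvature inequality may be applied at $\Delta$.

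\textbf{Step 1 (gradient bound at $\widehat\theta$).} Since $\mathcal L_n$ is convex and $\Phi$ is a norm, optimality of $\widehat\theta$ gives
\[
\mathbf 0\in\nabla\mathcal L_n(\widehat\theta)+\lambda_n\partial\Phi(\widehat\theta).
\]
Every subgradient $z$ of a norm satisfies $\Phi^*(z)\le1$, so $\Phi^*(\nabla\mathcal L_n(\widehat\theta))\le\lambda_n$. On $\mathbb G(\lambda_n)$ we also have $\Phi^*(\nabla\mathcal L_n(\theta^*))\le\lambda_n/2$. The triangle inequality then gives
\[
\Phi^*\bigl(\nabla\mathcal L_n(\theta^*+\Delta)-\nabla\mathcal L_n(\theta^*)\bigr)\le\tfrac32\lambda_n,
\]
and plugging this into Assumption~\ref{conditionwainrightPhi} yields
\[
\kappa_\Phi\Phi^*(\Delta)\le\tfrac32\lambda_n+\tau_{n,\Phi}\Phi(\Delta).
\]

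\textbf{Step 2 (cone condition).} Since $\theta^*\in\mathbb M$, we have $\theta^*_{\mathbb M^\perp}=0$. The usual argument then applies: compare the objective at $\widehat\theta$ and $\theta^*$, use convexity of $\mathcal L_n$ together with $|\langle\nabla\mathcal L_n(\theta^*),\Delta\rangle|\le\frac{\lambda_n}{2}\Phi(\Delta)$, and use decomposability of $\Phi$ over $(\mathbb M,\overline{\mathbb M}^\perp)$. This is the same step underlying Lemma~\ref{theorem919}(a), and it gives
\[
\Phi(\Delta_{\overline{\mathbb M}^\perp})\le3\,\Phi(\Delta_{\overline{\mathbb M}}),\qquad\text{hence}\qquad\Phi(\Delta)\le4\,\Phi(\Delta_{\overline{\mathbb M}}).
\]

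\textbf{Step 3 (from $\Phi$ back to $\Phi^*$ on $\overline{\mathbb M}$).} For $u\in\overline{\mathbb M}$, the duality (H\"older-type) inequality $\|u\|^2=\langle u,u\rangle\le\Phi(u)\Phi^*(u)$ combined with $\Phi(u)\le\Psi(\overline{\mathbb M})\|u\|$ gives $\|u\|\le\Psi(\overline{\mathbb M})\Phi^*(u)$. It follows that $\Phi(u)\le\Psi^2(\overline{\mathbb M})\Phi^*(u)$. Applying this with $u=\Delta_{\overline{\mathbb M}}$ requires $\Phi^*(\Delta_{\overline{\mathbb M}})\le\Phi^*(\Delta)$, that is, that projection onto $\overline{\mathbb M}$ does not increase the dual norm.

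This is the step I expect to be the main obstacle, since the property fails for a general norm. I would verify it directly for the two regularizers used in the paper. For the $\ell_1$ norm with coordinate subspaces, the dual norm is $\ell_\infty$ and the projection is a coordinate restriction, so it cannot increase the maximum entry. For the nuclear norm with $\overline{\mathbb M}=\mathcal T(L_0)$, the dual norm is the operator norm, and the projection can be written as $P_U\Delta+\Delta P_V-P_U\Delta P_V$. Here I would argue via the block form with respect to $(U,U^\perp)\times(V,V^\perp)$, and if needed accept a harmless constant factor of $2$ that can be absorbed into the slack below. Otherwise I would state this contraction as part of the decomposability hypothesis, as Wainwright implicitly does.

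\textbf{Step 4 (conclusion).} Combining Steps 2 and 3 gives $\Phi(\Delta)\le4\Psi^2(\overline{\mathbb M})\Phi^*(\Delta)$. By the assumption $\tau_{n,\Phi}\Psi^2(\overline{\mathbb M})<\kappa_\Phi/32$, this yields
\[
\tau_{n,\Phi}\Phi(\Delta)\le\tfrac{\kappa_\Phi}{8}\Phi^*(\Delta).
\]
Substituting into Step 1 gives $\tfrac78\kappa_\Phi\Phi^*(\Delta)\le\tfrac32\lambda_n$, so $\Phi^*(\Delta)\le\tfrac{12}{7}\lambda_n/\kappa_\Phi\le3\lambda_n/\kappa_\Phi$. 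The remaining slack comfortably absorbs a constant factor such as the $2$ that may arise in Step 3.
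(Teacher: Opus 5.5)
Your Steps 1, 2 and 4 are the standard argument. They use convexity of $\mathcal L_n$ and the unconstrained first-order condition, which belong to Wainwright's standing setup rather than to the hypotheses listed here; the paper itself gives no proof and simply imports Theorem~9.24.

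The gap is in Step 3. You route the bound through $\Phi^*(\Delta_{\overline{\mathbb M}})\le\Phi^*(\Delta)$, and this does not follow from decomposability. For a general $\Phi$, the orthogonal projection onto $\overline{\mathbb M}$ can expand the dual norm. As you suspect, this already happens for the nuclear norm on the tangent space, where only a factor $2$ is available. So checking the $\ell_1$ and nuclear cases covers the two instances the paper uses, not the lemma as stated, and adding the contraction as a hypothesis changes the statement. Nor is it an implicit hypothesis of Theorem~9.24: the standard argument, sketched below, never puts a projection inside $\Phi^*$.

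The repair is one line. Because $\Delta-\Delta_{\overline{\mathbb M}}\perp\overline{\mathbb M}$, apply H\"older to the pair $(\Delta_{\overline{\mathbb M}},\Delta)$ rather than to $(\Delta_{\overline{\mathbb M}},\Delta_{\overline{\mathbb M}})$:
\[
\|\Delta_{\overline{\mathbb M}}\|^2=\langle\Delta_{\overline{\mathbb M}},\Delta\rangle\le\Phi(\Delta_{\overline{\mathbb M}})\,\Phi^*(\Delta)\le\Psi(\overline{\mathbb M})\,\|\Delta_{\overline{\mathbb M}}\|\,\Phi^*(\Delta).
\]
Hence $\Phi(\Delta_{\overline{\mathbb M}})\le\Psi^2(\overline{\mathbb M})\Phi^*(\Delta)$ for every decomposable $\Phi$. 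Your Step 4 then goes through unchanged and gives $\Phi^*(\Delta)\le\tfrac{12}{7}\lambda_n/\kappa_\Phi$.

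The argument behind Theorem~9.24 is cruder but equally projection-free:
\begin{itemize}
\item The cone condition gives $\Phi(\Delta)\le4\Psi(\overline{\mathbb M})\|\Delta_{\overline{\mathbb M}}\|\le4\Psi(\overline{\mathbb M})\|\Delta\|$.
\item H\"older on the whole error, $\|\Delta\|^2\le\Phi(\Delta)\Phi^*(\Delta)$, then yields $\Phi(\Delta)\le16\Psi^2(\overline{\mathbb M})\Phi^*(\Delta)$.
\item The hypothesis gives $16\tau_{n,\Phi}\Psi^2(\overline{\mathbb M})<\kappa_\Phi/2$, so $\tfrac{\kappa_\Phi}{2}\Phi^*(\Delta)\le\tfrac32\lambda_n$.
\end{itemize}
This matches the constants $1/32$ and $3$ in the statement.
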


\begin{lemma}[Lindeberg--Feller central limit theorem]\label{feller}
For each $n$, let $Y_{n,1},\ldots,Y_{n,k_n}$ be independent,
mean-zero random vectors with finite covariance matrices. Suppose
that, for every $\varepsilon>0$,
\[
\sum_{i=1}^{k_n}
\mathbb E\left[
\|Y_{n,i}\|_2^2
\mathbf 1\{\|Y_{n,i}\|_2>\varepsilon\}
\right]
\longrightarrow0
\]
and
\[
\sum_{i=1}^{k_n}\operatorname{Cov}(Y_{n,i})
\longrightarrow\Sigma.
\]
Then
\[
\sum_{i=1}^{k_n}Y_{n,i}
\overset{d}{\longrightarrow}\mathcal N(0,\Sigma).
\]
\end{lemma}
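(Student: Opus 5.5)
The plan is to reduce the vector statement to the scalar Lindeberg--Feller theorem with the Cramér--Wold device, and to prove the scalar case by a characteristic-function (or Lindeberg replacement) argument. Fix $a\in\mathbb R^m$ with $a\neq\mathbf 0$ and set $\xi_{n,i}\coloneqq a^\top Y_{n,i}$. These are independent, mean zero, and have finite variances. Their variances satisfy
\[
\sum_{i=1}^{k_n}\operatorname{Var}(\xi_{n,i})=a^\top\Big(\sum_{i=1}^{k_n}\operatorname{Cov}(Y_{n,i})\Big)a\longrightarrow a^\top\Sigma a\eqqcolon s^2 .
\]
The Lindeberg condition transfers to the scalars because $|\xi_{n,i}|\le\|a\|_2\|Y_{n,i}\|_2$. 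Hence, for every $\varepsilon>0$,
\[
\sum_i\mathbb E\big[\xi_{n,i}^2\mathbf 1\{|\xi_{n,i}|>\varepsilon\}\big]\le\|a\|_2^2\sum_i\mathbb E\big[\|Y_{n,i}\|_2^2\mathbf 1\{\|Y_{n,i}\|_2>\varepsilon/\|a\|_2\}\big]\to0 .
\]
So it suffices to show $\sum_i\xi_{n,i}\overset{d}{\to}\mathcal N(0,s^2)$ for every $a$. Cramér--Wold then yields the vector limit $\mathcal N(\mathbf 0,\Sigma)$.

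For the scalar step, first dispose of the degenerate case $s^2=0$. There Chebyshev gives $\sum_i\xi_{n,i}\to_p0=\mathcal N(0,0)$, because the total variance tends to zero.

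For $s^2>0$, write $\sigma_{n,i}^2=\operatorname{Var}(\xi_{n,i})$ and compare characteristic functions. By independence, $\mathbb E e^{\mathrm i t\sum_i\xi_{n,i}}=\prod_i\varphi_{n,i}(t)$. I will show
\[
\Big|\prod_i\varphi_{n,i}(t)-\prod_i e^{-t^2\sigma_{n,i}^2/2}\Big|\to0 ,
\]
which gives the claim since $\prod_i e^{-t^2\sigma_{n,i}^2/2}=e^{-t^2\sum_i\sigma_{n,i}^2/2}\to e^{-t^2s^2/2}$. The argument has three steps.

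\begin{enumerate}
\item[(i)] Since all factors lie in the closed unit disk, the products differ by at most $\sum_i|\varphi_{n,i}(t)-e^{-t^2\sigma_{n,i}^2/2}|$.
\item[(ii)] Split each term through $1-t^2\sigma_{n,i}^2/2$. The Taylor bound $|e^{\mathrm i x}-1-\mathrm i x+x^2/2|\le\min(|x|^3/6,x^2)$ gives
\[
|\varphi_{n,i}(t)-1+t^2\sigma_{n,i}^2/2|\le\mathbb E\min\big(|t\xi_{n,i}|^3,\,t^2\xi_{n,i}^2\big).
\]
Splitting this on $\{|\xi_{n,i}|\le\varepsilon\}$ and summing over $i$ gives at most
\[
\varepsilon|t|^3\sum_i\sigma_{n,i}^2+t^2\sum_i\mathbb E\big[\xi_{n,i}^2\mathbf 1\{|\xi_{n,i}|>\varepsilon\}\big].
\]
This is arbitrarily small, since $\varepsilon$ is arbitrary, the total variance is bounded, and the second sum vanishes by Lindeberg.
\item[(iii)] The second comparison uses $|e^{-u}-1+u|\le u^2$ for $u\ge0$, so it is bounded by $t^4\max_i\sigma_{n,i}^2\sum_i\sigma_{n,i}^2$.
\end{enumerate}

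The main obstacle is step (iii): we need the uniform asymptotic negligibility $\max_i\sigma_{n,i}^2\to0$. This follows from the Lindeberg condition via
\[
\max_i\sigma_{n,i}^2\le\varepsilon^2+\sum_i\mathbb E\big[\xi_{n,i}^2\mathbf 1\{|\xi_{n,i}|>\varepsilon\}\big].
\]
Everything else is the routine Taylor bookkeeping in steps (i)--(ii). Alternatively, one can simply cite the textbook scalar Lindeberg--Feller theorem (e.g.\ Billingsley, Theorem 27.2) after the Cramér--Wold reduction.
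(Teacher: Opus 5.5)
Your proposal is correct. The Cram\'er--Wold reduction, the transfer of the Lindeberg condition through $|a^\top Y_{n,i}|\le\|a\|_2\|Y_{n,i}\|_2$, and the scalar characteristic-function comparison all go through; this includes deriving $\max_i\sigma_{n,i}^2\to0$ from the Lindeberg condition and the implicit appeal to L\'evy's continuity theorem. The separate treatment of $s^2=0$ is not even needed, since none of your bounds uses $s^2>0$.

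The paper does not prove this lemma at all: it states the classical multivariate Lindeberg--Feller theorem and simply invokes it in the proof of Theorem~\ref{Thm:DK22}. Your argument therefore just supplies the standard textbook proof (Cram\'er--Wold plus Billingsley's Theorem~27.2) that the paper takes for granted.
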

\begin{lemma}\label{propbound}
Let $A,B\in\mathbb R^{n\times n}$. Let
$\mathcal C_1\subseteq\{1,\ldots,n\}$ satisfy
$|\mathcal C_1|\leq K$, where $K$ is fixed, and define
\[
\mathcal C_2
=
\{1,\ldots,n\}\setminus\mathcal C_1.
\]
Let
\[
c_n
=
\max_{k\in\mathcal C_1}
\sum_{j=1}^n|B_{kj}|,
\qquad
c
=
\max_{k\in\mathcal C_2}
\sum_{j=1}^n|B_{kj}|.
\]
Then
\[
\|AB\|_\infty
\leq
K\|A\|_{\max}c_n+\|A\|_\infty c.
\]
Here, the maximum over an empty set is defined to be zero.
\end{lemma}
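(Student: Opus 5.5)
We bound $(AB)_{ij}$ entry by entry, splitting the inner sum over $k$ according to the partition $\{1,\ldots,n\}=\mathcal C_1\cup\mathcal C_2$. Recall that $\|AB\|_\infty=\max_i\sum_j|(AB)_{ij}|$. For a fixed row $i$, the triangle inequality gives
\[
\sum_{j=1}^n|(AB)_{ij}|
\le\sum_{j=1}^n\sum_{k=1}^n|A_{ik}||B_{kj}|
=\sum_{k=1}^n|A_{ik}|\Big(\sum_{j=1}^n|B_{kj}|\Big).
\]
Interchanging the order of summation here is legitimate because all sums are finite and the terms are nonnegative.

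\textbf{Sum over $\mathcal C_1$.} Each coefficient satisfies $|A_{ik}|\le\|A\|_{\max}$, and each row-$\ell_1$ mass satisfies $\sum_j|B_{kj}|\le c_n$. Since $|\mathcal C_1|\le K$, this part contributes at most $K\|A\|_{\max}c_n$.

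\textbf{Sum over $\mathcal C_2$.} Bound each row mass by $c$ and pull it out of the sum:
\[
\sum_{k\in\mathcal C_2}|A_{ik}|\sum_j|B_{kj}|
\le c\sum_{k\in\mathcal C_2}|A_{ik}|
\le c\sum_{k=1}^n|A_{ik}|
\le c\|A\|_\infty .
\]

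Adding the two bounds gives, for every $i$,
\[
\sum_j|(AB)_{ij}|\le K\|A\|_{\max}c_n+\|A\|_\infty c .
\]
The right-hand side does not depend on $i$, so taking the maximum over $i$ yields the claim.

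\textbf{Empty index sets.} If $\mathcal C_1=\emptyset$, the first sum vanishes and $c_n=0$ by the stated convention. If $\mathcal C_2=\emptyset$, then $c=0$ and the second sum vanishes. In both cases the bound still holds.

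There is no real obstacle in this argument; the only point requiring care is using the max-norm of $A$ on the $\le K$ exceptional indices and the row $\ell_1$-norm on the remaining ones.
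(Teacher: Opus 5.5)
Your proposal is correct and follows essentially the same argument as the paper: split the inner index $k$ according to $\mathcal C_1$ and $\mathcal C_2$, bound $|A_{ik}|$ by $\|A\|_{\max}$ on the at most $K$ indices in $\mathcal C_1$, and bound the row $\ell_1$ mass of $B$ by $c$ on $\mathcal C_2$ to extract $\|A\|_\infty$. Your explicit handling of the empty-set convention is a small extra that the paper leaves implicit.
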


% \begin{lemma}\label{lem7}
% Suppose $\lambda_{\max} (\lambda W_0)<1$. Prove that $\|(I - \lambda W_0)^{-1}\|_2$ is uniformly bounded.
% \end{lemma}
% \begin{proof}
% Since $\lambda_{\max} (\lambda W_0)<1$, $(I - \lambda W_0)^{-1}=I_n+\lambda W_0+\lambda^2 W_0^2+\cdots $. Utilizing the fact that $\|AB\|_2\leq \|A\|_2\|B\|_2$ and $\|A+B\|_2\leq \|A\|_2+\|B\|_2$,  we have 
% $\|(I - \lambda W_0)^{-1}\|_2=\|I_n+\lambda W_0+\lambda^2 W_0^2+\cdots\|\leq 1/(1-\lambda \|W_0\|_2)$. 
% \end{proof}

\begin{lemma}\label{lem8}
Suppose Assumptions~\ref{weight} and
\ref{Assump:True_Adjacency}(iii) hold, $J$ is fixed, and
\[
\|W_0\|_1=O(\sqrt{n}).
\]
Assume that there exist constants $c_\infty,c_1>0$, independent
of $n$, such that
\[
\sup_{\lambda\in\mathcal C}
|\lambda|\|W_0\|_\infty\leq 1-c_\infty,
\qquad
\sup_{\lambda\in\mathcal C}
|\lambda|\|W_{0,b}\|_1\leq 1-c_1.
\]
Then, uniformly over $\lambda\in\mathcal C$,
\[
\big\|(I_n-\lambda W_0)^{-1}\big\|_2=O(n^{1/4}).
\]
\end{lemma}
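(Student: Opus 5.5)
We bound the spectral norm by interpolating between the $\ell_\infty$ and $\ell_1$ operator norms, using $\|A\|_2\le\sqrt{\|A\|_1\|A\|_\infty}$. Write $B(\lambda)=(I_n-\lambda W_0)^{-1}$. The row-sum part is immediate. Since $\sup_{\lambda\in\mathcal C}|\lambda|\|W_0\|_\infty\le 1-c_\infty$, the Neumann series $B(\lambda)=\sum_{k\ge0}\lambda^kW_0^k$ converges in $\|\cdot\|_\infty$. Submultiplicativity then gives $\|B(\lambda)\|_\infty\le 1/c_\infty$, uniformly over $\mathcal C$. It therefore suffices to show $\|B(\lambda)\|_1=O(\sqrt n)$ uniformly, since then $\|B(\lambda)\|_2\le\sqrt{c_\infty^{-1}\,O(\sqrt n)}=O(n^{1/4})$.

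For the column-sum bound, split $W_0=W_{0,u}+W_{0,b}$ as in Assumption~\ref{Assump:True_Adjacency}(iii). Because $|\lambda|\|W_{0,b}\|_1\le1-c_1$, the matrix $B_b(\lambda)=(I_n-\lambda W_{0,b})^{-1}$ exists. Its Neumann series gives $\|B_b(\lambda)\|_1\le 1/c_1$, and likewise $\|B_b(\lambda)\|_\infty$ is bounded, provided $\|W_{0,b}\|_\infty\le\|W_0\|_\infty$, which holds entrywise since $W_{0,b}$ zeroes out columns of $W_0$.

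Next use the resolvent identity
\[
B(\lambda)=B_b(\lambda)+\lambda\,B(\lambda)W_{0,u}B_b(\lambda).
\]
The matrix $W_{0,u}$ has at most $J$ nonzero columns. Hence $B(\lambda)W_{0,u}$ is supported on those $J$ columns, and each of its columns has $\ell_1$ norm at most $\|B(\lambda)\|_1$.

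Rather than iterate on $\|B\|_1$, I would bound $B(\lambda)W_{0,u}$ directly via $B(\lambda)W_{0,u}=(B(\lambda)W_0)\,P_J$, where $P_J$ projects onto the first $J$ columns. Each column $j\le J$ of $B(\lambda)W_0$ equals $B(\lambda)w_j$, with $w_j$ the $j$-th column of $W_0$. Its $\ell_1$ norm is at most $\|B(\lambda)\|_1\|w_j\|_1$, which still involves $\|B\|_1$.

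To break this circularity, use instead $\|B(\lambda)w_j\|_1\le n\|B(\lambda)w_j\|_\infty\le n\|B(\lambda)\|_\infty\|w_j\|_\infty$. This bound is too crude, so the cleaner route goes through Lemma~\ref{propbound}, applied to $\|\cdot\|_1=\|(\cdot)^\top\|_\infty$. Transposing, $(B(\lambda)W_{0,u}B_b)^\top=B_b^\top W_{0,u}^\top B^\top$. Apply Lemma~\ref{propbound} with $A=B_b^\top$ and $B=W_{0,u}^\top B(\lambda)^\top$, whose only nonzero rows are indexed by $\mathcal C_1=\{1,\dots,J\}$. This yields
\[
\|W_{0,u}^\top B(\lambda)^\top\|_\infty\le\|W_{0,u}\|_1\|B(\lambda)\|_\infty.
\]
Since $\|W_{0,u}\|_1\le\|W_0\|_1=O(\sqrt n)$ and $\|B(\lambda)\|_\infty\le 1/c_\infty$, the rows in $\mathcal C_1$ have sums $c_n=O(\sqrt n)$, while the rows in $\mathcal C_2$ vanish, so $c=0$.

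Lemma~\ref{propbound} then gives $\|B_b^\top W_{0,u}^\top B^\top\|_\infty\le J\|B_b\|_{\max}c_n$. Here $\|B_b\|_{\max}\le\|B_b\|_1\le1/c_1$. Combining,
\[
\|B(\lambda)\|_1\le \|B_b\|_1+|\lambda|\,J\,c_1^{-1}\,O(\sqrt n)=O(\sqrt n),
\]
uniformly over $\lambda\in\mathcal C$ by compactness of $\mathcal C$. This bound involves only $\|B\|_\infty$ on the right-hand side, so no circularity arises.

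The main obstacle is exactly this step: getting an $O(\sqrt n)$ column-sum bound without circularity. I expect it to hinge on ordering the resolvent identity so that $B(\lambda)$ appears on the side where only its already-controlled row sums enter. If the identity is written in the other order, $B=B_b+\lambda B_bW_{0,u}B$, one should transpose as above to move $B$ into the $\|\cdot\|_\infty$ slot. Combining the two bounds via $\|B\|_2\le\sqrt{\|B\|_1\|B\|_\infty}$ then completes the argument.
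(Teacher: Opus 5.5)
Your conclusion is true, but the argument has a gap at the key step: the inequality $\|W_{0,u}^\top B(\lambda)^\top\|_\infty\le\|W_{0,u}\|_1\|B(\lambda)\|_\infty$ is false. Submultiplicativity gives $\|W_{0,u}^\top\|_\infty\|B(\lambda)^\top\|_\infty=\|W_{0,u}\|_1\|B(\lambda)\|_1$, because the row-sum norm of $B^\top$ is the column-sum norm of $B$. Concretely, row $k\le J$ of $W_{0,u}^\top B(\lambda)^\top$ is $(B(\lambda)w_k)^\top$. So the $c_n$ you feed into Lemma~\ref{propbound} is exactly $\max_{k\le J}\|B(\lambda)w_k\|_1$, the circular quantity you identified a paragraph earlier; transposing only relabels it. The honest bound is $\|B(\lambda)\|_1\le c_1^{-1}+|\lambda|Jc_1^{-1}\|W_0\|_1\|B(\lambda)\|_1$. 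This cannot be absorbed, since $\|W_0\|_1$ may grow like $\sqrt n$. The root cause is the ordering $B=B_b+\lambda BW_{0,u}B_b$: here $B$ acts on the columns $w_k$, and $Bw_k$ mixes all columns of $B$, so row-sum information on $B$ cannot control it. Your closing remark about transposing in the other ordering does not address this either.

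The repair is the other resolvent identity, $B(\lambda)=B_b(\lambda)+\lambda B_b(\lambda)W_{0,u}B(\lambda)$, combined with an entrywise estimate rather than a transposition. Because $W_{0,u}$ has only $J$ nonzero columns, $W_{0,u}B(\lambda)$ involves only the first $J$ rows of $B(\lambda)$, whose entries are bounded by $\|B(\lambda)\|_\infty\le c_\infty^{-1}$. Hence
\[
\|W_{0,u}B(\lambda)\|_1=\max_j\sum_i\Bigl|\sum_{k\le J}W_{0,ik}B_{kj}(\lambda)\Bigr|\le J\,\|W_0\|_1\,c_\infty^{-1}.
\]
This is Lemma~\ref{propbound} applied with $A=B(\lambda)^\top$ and second factor $W_{0,u}^\top$. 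It gives $\|B(\lambda)\|_1\le c_1^{-1}\{1+\sup_{\mathcal C}|\lambda|\,Jc_\infty^{-1}\|W_0\|_1\}=O(\sqrt n)$ uniformly, and your interpolation $\|B\|_2\le\sqrt{\|B\|_1\|B\|_\infty}$ then finishes the proof.

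Once repaired, your route differs from the paper's and is somewhat more elementary. The paper factors $I_n-\lambda W_0=(I_n-\lambda W_{0,b})(I_n-\lambda B_b(\lambda)W_{0,u})$ and inverts the second factor by Woodbury through a $J\times J$ capacitance matrix. That matrix's inverse is bounded because it is a $J\times J$ block of a matrix with bounded row sums. The paper then obtains $n^{1/4}$ from $\|W_{0,u}\|_2\le\sqrt{\|W_{0,u}\|_1\|W_{0,u}\|_\infty}$ together with $\|B_b(\lambda)\|_2=O(1)$. The repaired resolvent argument needs no capacitance-matrix invertibility and uses only the two stability conditions and $\|W_0\|_1=O(\sqrt n)$.
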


\begin{lemma}\label{lem9}
Suppose Assumptions~\ref{Assump:True_Adjacency}(iii) and
\ref{Assump:Instruments}(i) hold.
Let $X_{jt}$ be the $j$th column of $X_t$. Then, for
$\ell=0,1,2$,
\[
\|W_0^\ell X_{jt}\|_2=O_p(\sqrt n),
\qquad
\big\|W_0^\ell(I_n-\lambda_0W_0)^{-1}X_{jt}\big\|_2
=O_p(\sqrt n).
\]
\end{lemma}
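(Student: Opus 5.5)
The plan is to avoid spectral-norm bounds on $W_0^\ell$ and $(I_n-\lambda_0W_0)^{-1}$ altogether. Lemma~\ref{lem8} only gives $O(n^{1/4})$ for the latter, which would lose a factor. Instead I will work with row-sum ($\|\cdot\|_\infty$) norms, which Assumption~\ref{Assump:True_Adjacency}(iii) controls uniformly in $n$, together with the cross-sectional independence of $X_{jt}$ from Assumption~\ref{Assump:Instruments}(i).

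\textbf{Step 1 (deterministic matrices).} Let $A$ denote any of the nonstochastic matrices $W_0^\ell$ or $W_0^\ell(I_n-\lambda_0W_0)^{-1}$, for $\ell=0,1,2$. I will show $\|A\|_\infty=O(1)$.
\begin{itemize}
\item Assumption~\ref{Assump:True_Adjacency}(iii) gives $\|W_0\|_\infty=O(1)$. Here I use that the compact set $\mathcal C$ contains some $\lambda\neq0$, so the bound $\sup_{\lambda\in\mathcal C}|\lambda|\|W_0\|_\infty\le 1-c_\infty$ translates into a bound on $\|W_0\|_\infty$; this is the standard reading in the spatial literature.
\item Since $|\lambda_0|\|W_0\|_\infty\le 1-c_\infty$, the Neumann series $\sum_k(\lambda_0W_0)^k$ converges in $\|\cdot\|_\infty$. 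Hence $\|(I_n-\lambda_0W_0)^{-1}\|_\infty\le 1/c_\infty$.
\item Submultiplicativity of $\|\cdot\|_\infty$ then gives $\|A\|_\infty\le \|W_0\|_\infty^\ell/c_\infty=O(1)$.
\end{itemize}

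\textbf{Step 2 (decompose the covariate vector).} Write $X_{jt}=\mu_{jt}+U_{jt}$, where $\mu_{jt}=\mathbb E X_{jt}$ and $U_{jt}$ has entries that are independent across $i$ with mean zero. By the moment bound $\sup\mathbb E|x_{t,ij}|^{2+\delta}<\infty$, the entries of $\mu_{jt}$ are uniformly bounded by some $C_\mu$, and the variances of the entries of $U_{jt}$ are uniformly bounded by some $C_U$.

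\textbf{Step 3 (mean part).} The mean part is handled deterministically:
\[
\|A\mu_{jt}\|_2\le\sqrt n\,\|A\mu_{jt}\|_{\max}\le \sqrt n\,\|A\|_\infty C_\mu=O(\sqrt n).
\]

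\textbf{Step 4 (stochastic part).} Independence across $i$ kills the cross terms, so
\[
\mathbb E\|AU_{jt}\|_2^2=\sum_{i,k}A_{ik}^2\operatorname{Var}(x_{t,ik})\le C_U\|A\|_F^2 .
\]
Bounding each row's $\ell_2$ norm by its $\ell_1$ norm gives
\[
\|A\|_F^2=\sum_i\sum_k A_{ik}^2\le\sum_i\Big(\sum_k|A_{ik}|\Big)^2\le n\|A\|_\infty^2 .
\]
Thus $\mathbb E\|AU_{jt}\|_2^2=O(n)$, and Markov's inequality yields $\|AU_{jt}\|_2=O_p(\sqrt n)$. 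The triangle inequality with Step 3 gives both claims, and the case $\ell=0$, $A=I_n$, is included.

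The only step needing care is Step 1. The whole argument must use the row-sum stability condition rather than the column-sum one, because column sums of $W_0$ may diverge for the first $J$ dominant columns. Since $\|\cdot\|_\infty$ is submultiplicative and that is all the argument uses, this causes no difficulty.
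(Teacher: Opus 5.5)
Your proposal is correct and follows essentially the same route as the paper. Both arguments bound $\|W_0^\ell\|_\infty$ and $\|W_0^\ell(I_n-\lambda_0W_0)^{-1}\|_\infty$ via the row-sum stability condition and the Neumann series, split $X_{jt}$ into its mean and a centered part, and control the centered part through $\|A\|_F^2\le n\|A\|_\infty^2$ and Markov's inequality. Your remark that extracting $\|W_0\|_\infty=O(1)$ requires $\mathcal C$ to contain a nonzero $\lambda$ is a detail the paper leaves implicit.
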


\subsection{Proof of Lemma \ref{identification}, \ref{propbound}-\ref{lem9}.}

\begin{proof}

% To identify $L_0$ and $S_0$, we compare their element sizes and sparsity levels directly. According to Assumption \ref{Assump:W_structure}, each element of $S_0$ has a constant magnitude, while the maximum value of $L_0$ diminishes to zero. Additionally, $L_0$ is dense in both rows and columns because $m_c(L_0) Asymp m_r(L_0) \to \infty$. In contrast, $S_0$ consists of elements of constant order, with sparsity constrained by the finite constant $m_r(S_0)$.
%If we consider two possible decompositions, $W_0 = L_0 + S_0$ and $W_0 = \tilde{L}_0 + \tilde{S}_0$, it becomes impossible for $\operatorname{rank}(L_0) \neq \operatorname{rank}(\tilde{L}_0)$. Additionally, we have $| S_0 - \tilde{S}_0 |_{\max} = o(1)$.
It follows from Corollary 3 in  \cite{Chan2009}. 

%The Theorem ensures the identification of $L_0$ and $S_0$ which means that sparse matrix $S_0$ need to be not too dense in eigenvalues and low rank matrix $L_0$ can not be too sparse.

%The incoherence measure can be defined as,
%$\operatorname{inc}(M) = \max \left[\max _i\left\|P_U e_i\right\|, \max _i\left\|P_V e_i\right\|\right] .$
%We can define $P_U  = U(U^{\top}U)^{-1}U^{\top}$ and same as $P_V$.

$\mu(S_0)$ can be bounded by $\operatorname{deg}_{\max }(S_0)$
and $\xi(L_0)$ can be bounded by its incoherence measure $2\operatorname{inc}(L_0)$.
The identification condition is $ \xi(L_0) \mu(S_0)<1/8$.
A stronger condition can be  $\operatorname{deg}_{\max }(S_0)\operatorname{inc}(L_0)\leq 1/16$, due to Lemma \ref{ineq}.
%{\color{red} Thus we have if $\mbox{inc}(L_0)\to 0 $,  $\operatorname{deg}_{\max }(L_0)\geq \mbox{inc}(L_0)^{-1}\to \infty$. }
\end{proof}

\textbf{Proof of Lemma \ref{propbound}.}
\begin{proof}
Recall that
\[
\norminf{AB}
=
\max_{1\leq i\leq n}\sum_{j=1}^n|(AB)_{ij}|,
\qquad
\norminf{A}
=
\max_{1\leq i\leq n}\sum_{k=1}^n|A_{ik}|,
\]
and
\[
\normmax{A}
=
\max_{1\leq i,k\leq n}|A_{ik}|.
\]
For every $i$,
\begin{align*}
\sum_j |(AB)_{ij}|
&\leq
\sum_j\sum_{k\in\mathcal C_1}|A_{ik}B_{kj}|
+
\sum_j\sum_{k\in\mathcal C_2}|A_{ik}B_{kj}|\\
&\leq
\max_{k\in\mathcal C_1}|A_{ik}|
\sum_{k\in\mathcal C_1}\sum_j|B_{kj}|
+
\sum_{k\in\mathcal C_2}|A_{ik}|
\max_{k\in\mathcal C_2}\sum_j|B_{kj}|\\
&\leq
K\normmax{A}c_n+\norminf{A}c.
\end{align*}
Taking the maximum over $i$ gives
\[
\norminf{AB}
\leq
K\normmax{A}c_n+\norminf{A}c,
\]
which is the stated result.
\end{proof}

\textbf{Proof of Lemma~\ref{lem8}.}
\begin{proof}
Let
\[
B_n(\lambda)=(I_n-\lambda W_{0,b})^{-1}.
\]
Since zeroing columns cannot increase the maximum absolute row sum,
\[
\|W_{0,b}\|_\infty\leq\|W_0\|_\infty.
\]
Therefore, the two stability conditions and the Neumann-series
argument give
\[
\sup_{\lambda\in\mathcal C}\|B_n(\lambda)\|_1
\leq c_1^{-1},
\qquad
\sup_{\lambda\in\mathcal C}\|B_n(\lambda)\|_\infty
\leq c_\infty^{-1}.
\]
Consequently,
\[
\sup_{\lambda\in\mathcal C}\|B_n(\lambda)\|_2
\leq
\sup_{\lambda\in\mathcal C}
\sqrt{\|B_n(\lambda)\|_1\|B_n(\lambda)\|_\infty}
=O(1).
\]

If $J=0$, then $W_{0,u}=0$ and the result follows immediately.
Hence, suppose $J\geq1$. Let
\[
\Pi_J=(e_1,\ldots,e_J)\in\mathbb R^{n\times J}
\]
be the matrix selecting the first $J$ coordinates, and define
\[
A_J:=W_{0,u}\Pi_J\in\mathbb R^{n\times J}.
\]
Since $W_{0,u}$ vanishes outside its first $J$ columns,
\[
W_{0,u}=A_J\Pi_J^\top.
\]
Set
\[
U=B_n(\lambda)A_J,
\qquad
V=\Pi_J,
\qquad
Q_n(\lambda)=I_n-\lambda UV^\top.
\]
Then
\[
I_n-\lambda W_0
=
(I_n-\lambda W_{0,b})Q_n(\lambda).
\]

Let
\[
R_n(\lambda)=(I_n-\lambda W_0)^{-1}.
\]
The first stability condition and the Neumann-series argument give
\[
\sup_{\lambda\in\mathcal C}
\|R_n(\lambda)\|_\infty
\leq c_\infty^{-1}.
\]
Both $I_n-\lambda W_0$ and $I_n-\lambda W_{0,b}$ are invertible.
Hence, $Q_n(\lambda)$ is invertible and
\[
Q_n(\lambda)^{-1}
%= (I_n-\lambda W_{0,b})R_n(\lambda).
=R_n(\lambda)(I_n-\lambda W_{0,b}).
\]
Therefore,
\begin{align*}
\sup_{\lambda\in\mathcal C}
\|Q_n(\lambda)^{-1}\|_\infty
&\leq
\sup_{\lambda\in\mathcal C}
\|R_n(\lambda)\|_\infty
\bigl(1+|\lambda|\|W_{0,b}\|_\infty\bigr)\\
&\leq
\frac{2-c_\infty}{c_\infty}
=O(1).
\end{align*}

Define
\[
M_J(\lambda)=I_J-\lambda V^\top U.
\]
Since
\[
V^\top Q_n(\lambda)=M_J(\lambda)V^\top,
\]
right-multiplication by $Q_n(\lambda)^{-1}V$ gives
\[
M_J(\lambda)
\bigl[V^\top Q_n(\lambda)^{-1}V\bigr]
=I_J.
\]
Thus, $M_J(\lambda)$ is invertible and
\[
M_J(\lambda)^{-1}
=
V^\top Q_n(\lambda)^{-1}V.
\]
Because $J$ is fixed,
\[
\sup_{\lambda\in\mathcal C}
\|M_J(\lambda)^{-1}\|_2
\leq
J\sup_{\lambda\in\mathcal C}
\|Q_n(\lambda)^{-1}\|_\infty
=O(1).
\]
The Woodbury formula may therefore be applied:
\[
Q_n(\lambda)^{-1}
=
I_n+\lambda U M_J(\lambda)^{-1}V^\top.
\]

Since $\mathcal C$ is compact,
$\sup_{\lambda\in\mathcal C}|\lambda|<\infty$. Moreover,
\begin{align*}
\|U\|_2\|V^\top\|_2
&\leq
\|B_n(\lambda)\|_2\|W_{0,u}\|_2\\
&\leq
O(1)
\sqrt{\|W_{0,u}\|_1\|W_{0,u}\|_\infty}\\
&=O(n^{1/4}),
\end{align*}
because
\[
\|W_{0,u}\|_1\leq\|W_0\|_1=O(\sqrt n),
\qquad
\|W_{0,u}\|_\infty\leq\|W_0\|_\infty=O(1).
\]
Therefore, the Woodbury representation gives
\[
\sup_{\lambda\in\mathcal C}
\|Q_n(\lambda)^{-1}\|_2=O(n^{1/4}).
\]
Finally,
\[
R_n(\lambda)
=
Q_n(\lambda)^{-1}B_n(\lambda),
\]
and hence
\[
\sup_{\lambda\in\mathcal C}
\|(I_n-\lambda W_0)^{-1}\|_2
=O(n^{1/4}),
\]
as required.
\end{proof}
\textbf{Proof of Lemma~\ref{lem9}.}
\begin{proof}
Fix $j$ and write $X=X_{jt}$. For $\ell=0,1,2$,
Assumption~\ref{Assump:True_Adjacency}(iii) gives
\[
\|W_0^\ell\|_\infty
\leq\|W_0\|_\infty^\ell
=O(1).
\]
Let $\mu=\mathbb E X$. By
Assumption~\ref{Assump:Instruments}(i), the components of $X$
are independent across $i$ and have means and variances that are
uniformly bounded in $n,t,i$, and $j$. Hence,
%\begin{align*}
$$\mathbb E\|W_0^\ell(X-\mu)\|_2^2
%&\leq C\|W_0^\ell\|_F^2\\
\leq Cn\|W_0^\ell\|_\infty^2
=O(n).$$
%\end{align*}
Moreover,
\[
\|W_0^\ell\mu\|_2^2
\leq
n\|W_0^\ell\|_\infty^2\|\mu\|_{\max}^2
=O(n).
\]
It follows that
\[
\mathbb E\|W_0^\ell X\|_2^2=O(n),
\]
and therefore, by Markov's inequality,
\[
\|W_0^\ell X\|_2=O_p(\sqrt n).
\]

Let
\[
R_{0n}=(I_n-\lambda_0W_0)^{-1}.
\]
By the Neumann-series bound under
Assumption~\ref{Assump:True_Adjacency}(iii),
\[
\|R_{0n}\|_\infty\leq c_\infty^{-1}.
\]
Consequently,
\[
\|W_0^\ell R_{0n}\|_\infty=O(1).
\]
Applying the same second-moment argument with
$A_n=W_0^\ell R_{0n}$ gives
\[
\mathbb E\|W_0^\ell R_{0n}X\|_2^2=O(n).
\]
Hence, by Markov's inequality,
\[
\|W_0^\ell(I_n-\lambda_0W_0)^{-1}X\|_2
=O_p(\sqrt n),
\]
for $\ell=0,1,2$.
\end{proof}

\subsection{Proof of Main Theorems}

\begin{remark}[Proof convention: within transformation and two-way
fixed effects]
\label{rem:twoway_demean}

Throughout the proofs that follow, $Y_t$, $X_t$, $Z_t(M)$, and
$\varepsilon_t$ denote the within-transformed quantities obtained by
removing the individual fixed effects $\alpha$ from~\eqref{maineq}.

The extension to specifications containing time fixed effects
$\iota_t\mathbf 1_n$ uses the two-way within transformation:
\[
\ddot{\varepsilon}_{it}
=
\varepsilon_{it}
-\bar{\varepsilon}_{i\cdot}
-\bar{\varepsilon}_{\cdot t}
+\bar{\varepsilon}_{\cdot\cdot}.
\]
Under Assumption~\ref{Assump:Spatial_Errors}, for $i\neq j$,
\[
\operatorname{Cov}
\left(\ddot{\varepsilon}_{it},\ddot{\varepsilon}_{jt}\right)
=
-\frac{\sigma_0^2}{n}
\left(1-\frac{1}{T}\right).
\]

Let
\[
M_n=I_n-\frac{1}{n}\mathbf 1_n\mathbf 1_n^\top,
\qquad
M_T=I_T-\frac{1}{T}\mathbf 1_T\mathbf 1_T^\top.
\]
Because the errors, instruments, and regressors are transformed using
the same symmetric and idempotent projection matrices,
\[
\sum_{t=1}^T
\ddot Z_t(M)^\top\ddot\varepsilon_t
=
\sum_{t=1}^T
\ddot Z_t(M)^\top\varepsilon_t.
\]
Under the strict-exogeneity condition imposed in the main text, these
moments have conditional mean zero. Consequently, the same rate,
variance, and central-limit-theorem arguments continue to apply, with
the instruments and regressors interpreted as two-way within
transformed.
\end{remark}

\begin{proof}[Proof of Theorem \ref{Thm:DK1}]
First, note that the rate of the estimator can diminish with respect to large $n$ if both $w_{2 n}, w_{\max n} \to 0$. This theorem indicates that $\|\widehat W-W_0\|_F^2$ is of the order $\nu_n^2r+\tau_n^2 n$, which is a direct result from Theorem 1 in  \cite{Agarwal2012}. Hence, the proof boils down to formally verifying the following assumptions (note that the italic text maintain the same notations in  \cite{Agarwal2012}).\\
\textit{
	Verification of conditions in Theorem \ref{Thm:DK1}:
	\begin{itemize}
		\item[i)]  For RSC, it trivially holds, as the operator $\mathcal{X}(.)$ becomes the identity operator. Therefore in our case $\gamma=1$ and tolerance parameter $\tau_n = 0$ in their RSC condition.
		\item[ii)] Equation (25) in the theorem does not need to be verified due to $\tau_n = 0$.
		\item[iii)] Equation (26) has two conditions. One poses restriction on the $\ell_2$ norm of the error $W$ and the second poses restriction on the maximum norm of $W$ as well as the low rank component $\Theta$.
	\end{itemize}
Moreover the choice of two tuning parameter satisfies that 
$$\lambda_d \geq 4\left\|\mathfrak{X}^*(W)\right\|_{\mathrm{op}},$$
$$\mu_d \geq 4 \mathcal{R}^*\left(\mathfrak{X}^*(W)\right)+\frac{4 \gamma \alpha}{\kappa_d},$$
where $W$ in our context is our error and $\kappa_d =\sqrt{m_r(L_0)m_c(L_0)}$ in our context.
}

Their spikiness parameter $\alpha$ is chosen such that
\[
\frac{\alpha}{\kappa_d}
\asymp
\frac{1}{\sqrt{m_r(L_0)m_c(L_0)}},
\]
as implied by Assumption~\ref{weight}(i).

Define a large enough positive constant $\alpha$. Therefore for us it suffices to just verify (iii) for our context. Assumption \ref{weight} and \ref{Assump:E_noise} and conditions of theorem yield that with probability greater than {$1-p_n$ ($p_n \to 0$)}, we have $\nu_n>4\|E\|_2$ and
\begin{align}
	\tau_n\geq 4\|E\|_{\max}+4\frac{\alpha }{\sqrt{m_r(L_0)m_c(L_0)}}\geq 4(\|E\|_{\max}+\|L_0\|_{\max}).
\end{align}
Thus, assumption (iii) is also satisfied.
\textit{
	In addition, for the conclusion of Theorem 1, we have $\mathcal{K}_{\tau_n} =0$ as $\tau_n$ equals $0$ therein. Also we do not have misspecification in the low rank and sparse component, then the second term in $\mathcal{K}^*_{\Theta}$ and  $\mathcal{K}^*_{\Gamma}$ are all zero.
}

Hence there exists a positive constant $C_1$ such that, with the same probability event, 
\begin{eqnarray*}
	\|\widehat L-L_0\|_{F}^2+\|\widehat S-S_0\|_{F}^2&&	\leq r\nu_n^2+4\tau_n^2m_s(S_0)
	\\&&\leq C_1rw_{2 n}^2+C_1m_s(S_0)w_{\max n}^2+C_1 \frac{\alpha^2 m_s(S_0)}{m_r(L_0)m_c(L_0)},
\end{eqnarray*}
where the second inequality is due to Assumption \ref{Assump:E_noise}. {Next, we present the results for $\|.\|_2$, corresponding to the second statement of Theorem \ref{Thm:DK1}.}
According to  \cite{Agarwal2012}, in case of identity design, we can cast the problem to a two step one.
Define $|S|_1 = \sum_{i,j\in \{1,\cdots, n\}:i\neq j}|S_{i,j}| $.
Namely:
\begin{itemize}
\item[Step 1] For a given $L=\mathbf{0}_{n\times n} $, we shall optimize:
\begin{align}
	\label{Eq:LRSEDsub1}
	\widehat{S} \coloneqq \arg \min_{ S \in \mathbb{R}^{n \times n}} \frac{1}{2}\|W-L-S\|_F^2 + \tau_n |S|_{1},
\end{align}
\item[Step 2] 
%{\color{red}Remove the diagonal elements and set them to zero, $\widehat{S}_{ii}=0$.} {\color{blue} This sentence is not necessary and could be deleted. Observed $W$ has zero diagonal. After step 1, the estimated $S$ still has zero diagonal. }\\
For the estimator in the previous step $\widehat{S}$, we shall optimize:
\begin{align}
	%\label{Eq:LRSED}
	\widehat{L} \coloneqq \arg \min_{L \in \mathbb{R}^{n \times n}, } \frac{1}{2}\|W-L-\widehat{S}\|_F^2 + \nu_n \|L\|_* .    
\end{align}
Then set $\widehat{S}_{ii}=-\widehat{L}_{ii}$, so that $\widehat W=\widehat L+\widehat S$ has zero diagonals.

Then we can apply Lemma \ref{theorem919}(Theorem 9.19 b)  in  \cite{wainwright2019high}.)
\textit{
We shall verify respectively for those two steps the following three conditions:
\begin{itemize}
\item[i)]The term $$\Psi(\mathbb{S}):=\sup _{u \in \mathbb{S} \backslash(0)} \frac{\Phi(u)}{\|u\|} .$$ 
\item[ii)] The gradient condition to select $\lambda_n$:
$$
\mathbb{G}\left(\lambda_n\right):=\left\{\Phi^*\left(\nabla \mathcal{L}_n\left(\theta^*\right)\right) \leq \frac{\lambda_n}{2}\right\} .
$$
\item[iii)] Identification: exists a positive $c_0$ and $\kappa$ such that, 
$$\frac{\|\Delta\|_F^2}{2 } \geq \frac{\kappa}{2}\|\Delta\|_F^2-c_0 \Phi(\Delta)^2\quad \text {for all} \quad\|\Delta\|_F\leq 1.$$
\end{itemize}}

\textit{
Thus following the conclusion of the theorem we should have the bound:
$$9 \frac{\lambda_n^2}{\kappa^2} \Psi^2(\overline{\mathbb{M}}).$$
%}
Because ${\frac{8}{\kappa}\left\{\lambda_n \Phi\left(\theta_{\mathbb{M}^{\perp}}^*\right)+{16 \tau_n^2 \Phi^2\left(\theta_{\mathbb{M}^{\perp}}^*\right)}\right\}}$ is zero.}

$[A]_{i,j}$ denotes a matrix with element $A_{ij}$.
For the optimization in the Step 1:
$\tau_n$ is of order $\|E\|_{\max}+\|L_0\|_{\max}\lesssim_p w_{\max n}+ \frac{\alpha}{\sqrt{m_r(L_0)m_c(L_0)}}$ due to the gradient evaluated at the true value $S_0$ is just $[-(W_{ij}-S_{0,ij})]_{i,j}$ and the dual norm  $\Phi^*(.)$ is the $\|.\|_{\max}$. $\kappa = 1$, and $\Psi(\mathbb{S})\lesssim \sqrt{m_s(S_0)}$. This results to a bound of the order:
\begin{equation}
\|\widehat{S}-S_0\|_2\lesssim_p (w_{\max n}+ \frac{1}{\sqrt{m_r(L_0)m_c(L_0)}})\sqrt{m_s(S_0)}= \mathcal{R}_s.
\end{equation}

{
Then we can apply Lemma~\ref{theorem919}
(Theorem~9.19 in \cite{wainwright2019high}) to Step~1. For Step~2,
we use the singular-value-thresholding representation; see also the
related operator-norm result in Lemma~\ref{thm924}
(Theorem~9.24 in \cite{wainwright2019high}). The relevant conditions
are verified as follows:
\begin{itemize}

\item[i)]
For both steps, the loss is quadratic. Therefore,
\[
\mathcal E_n(\Delta)
=
\frac12\|\Delta\|_F^2,
\]
so the curvature condition holds globally with $\kappa=1$ and zero
tolerance.

\item[ii)]
For Step~1, the entrywise $\ell_1$ norm is decomposable with respect
to the support space of $S_0$, and
\[
\Psi(\overline{\mathbb M})
\leq
\sqrt{m_s(S_0)}.
\]

\item[iii)]
The gradient of the Step~1 loss evaluated at $S_0$ is
\[
\nabla\mathcal L_n(S_0)
=
S_0-W
=
-(L_0+E).
\]
Since the dual of the entrywise $\ell_1$ norm is the maximum norm,
\[
\Phi^*\!\left(\nabla\mathcal L_n(S_0)\right)
\leq
\|L_0\|_{\max}+\|E\|_{\max}
\lesssim_p
\frac{1}{\sqrt{m_r(L_0)m_c(L_0)}}+w_{\max n}.
\]
Thus, the gradient condition holds for the stated choice of
$\tau_n$.

\item[iv)]
For Step~2, the nuclear norm is decomposable with respect to the
tangent-space pair generated by $L_0$, and
\[
\Psi(\overline{\mathbb M})\lesssim\sqrt r.
\]

\item[v)]
The gradient of the Step~2 loss evaluated at $L_0$ is
\[
\nabla\mathcal L_n(L_0)
=
L_0+\widehat S-W
=
\widehat S-S_0-E,
\]
and hence
\[
\Phi^*\!\left(\nabla\mathcal L_n(L_0)\right)
=
\|\widehat S-S_0-E\|_2
\leq
\|\widehat S-S_0\|_2+\|E\|_2
\lesssim_p
\mathcal R_s+w_{2n}.
\]

\item[vi)]
Because $S_0$ is exactly sparse and $L_0$ has rank $r$, the
approximation-error terms for both steps are zero.

\end{itemize}

According to Lemma~\ref{theorem919} and the Step~1 Frobenius bound,
we have
\begin{equation}
|\widehat S-S_0|_1
\lesssim_p
\left(
w_{\max n}
+
\frac{1}{\sqrt{m_r(L_0)m_c(L_0)}}
\right)m_s(S_0)
=
\mathcal R_s\sqrt{m_s(S_0)}.
\end{equation}

Denote \[
\operatorname{SVT}_{\nu_n}(A_n)
\coloneqq
U\operatorname{diag}
\left\{
\bigl(\sigma_j(A_n)-\nu_n\bigr)_+
\right\}V^\top,
\qquad
(x)_+\coloneqq\max\{x,0\}.
\]
For the optimization in Step~2, let
\[
A_n=W-\widehat S.
\]
The solution is obtained by singular-value thresholding:
\[
\widehat L
=
\operatorname{SVT}_{\nu_n}(A_n),
\qquad
\|A_n-\widehat L\|_2\leq\nu_n.
\]
Moreover,
\[
A_n-L_0
=
E-(\widehat S-S_0).
\]
Therefore,
\begin{align}
\|\widehat L-L_0\|_2
&\leq
\|\widehat L-A_n\|_2+\|A_n-L_0\|_2 \nonumber\\
&\leq
\nu_n+\|E\|_2+\|\widehat S-S_0\|_2 \nonumber\\
&\lesssim_p
w_{2n}+\mathcal R_s,
\end{align}
where the last inequality uses
$\nu_n\asymp w_{2n}$ and
$\|\widehat S-S_0\|_2\lesssim_p\mathcal R_s$ from Step~1.
}
Combining the two steps and applying the diagonal adjustment gives
\[
\|\widehat S-S_0\|_2
\lesssim_p
w_{2n}+\mathcal R_s.
\]
Therefore,
\[
\|\widehat L-L_0\|_2^2
+
\|\widehat S-S_0\|_2^2
\lesssim_p
w_{2n}^2+\mathcal R_s^2
\lesssim_p
\mathcal R_2(W_0,E),
\]
which proves the second conclusion of the theorem.

Finally, since
\[
\widehat S_{ii}-S_{0,ii}
=
-(\widehat L_{ii}-L_{0,ii}),
\]
we have
\begin{align*}
|\widehat S-S_0|_{1,1}
&\leq
|\widehat S-S_0|_1
+
|\operatorname{diag}(\widehat L-L_0)|_{1,1}\\
&\leq
|\widehat S-S_0|_1
+
n\|\widehat L-L_0\|_2\\
&\lesssim_p
\mathcal R_s\sqrt{m_s(S_0)}
+
n(\mathcal R_s+w_{2n}).
\end{align*}
\end{itemize}

\end{proof}
\vspace{-0.7cm}

\begin{assumption}[Dependence between the measurement error and data]
\label{Assump:E_dependence}
Let
\[
\mathcal E=\sigma(E),
\qquad
U_t=\lambda_0Y_t+X_t\gamma_0,
\]
and define the finite collection
\[
\mathcal P
=
\left\{
(X_{jt},X_{kt}),
(X_{jt},Y_t),
(X_{jt},U_t),
(X_{jt},\varepsilon_t):
1\leq j,k\leq K
\right\}.
\]
Assume
\[
\sup_{(a_t,b_t)\in\mathcal P}
\sup_{1\leq t\leq T}
\left\{
\left\|\mathbb E[a_tb_t^\top\mid\mathcal E]\right\|_2
+
\left\|\mathbb E[a_tb_t^\top\mid\mathcal E]\right\|_{\max}
\right\}
=O_p(1).
\]
Moreover, uniformly over $(a_t,b_t)\in\mathcal P$ and all
$\mathcal E$-measurable matrices $A\in\mathbb R^{n\times n}$,
\[
\operatorname{Var}\left(
\sum_{t=1}^T a_t^\top A b_t
\,\middle|\,\mathcal E
\right)
\lesssim_p T\|A\|_F^2.
\]
\end{assumption}

\begin{lemma}
\label{rateG}
Define the gradient matrices
\begin{eqnarray*}
\widehat{G}(\widehat W) &:=& \frac{\partial \bar{g}_{nT}(\theta, \widehat W)}{\partial \theta^{\top}}
= -\frac{1}{nT} \sum_{t=1}^T Z^{\top}_t(\widehat W)
\begin{bmatrix} \widehat W Y_t & X_t & \widehat{W} X_t \end{bmatrix},\\
\widehat{G}(W_0) &:=& \frac{\partial \bar{g}_{nT}(\theta, W_0)}{\partial \theta^{\top}}
= -\frac{1}{nT} \sum_{t=1}^T Z^{\top}_t(W_0)
\begin{bmatrix} W_0 Y_t & X_t & W_0 X_t \end{bmatrix},
\end{eqnarray*}
which do not depend on the evaluation point $\theta$.

Suppose Assumptions \ref{weight}--\ref{Assump:Moments_Weight} hold. 
Let $\Delta L := \widehat{L}-L_0$, $\Delta S := \widehat{S}-S_0$, 
$B_0 := (I_n - \lambda_0 W_0)^{-1}$, and 
$\mathcal{R}_s := \big(w_{\max n} + 1/\sqrt{m_r(L_0)m_c(L_0)}\big)\sqrt{m_s(S_0)}$.
Throughout, $Y_t$, $Z_t$, $X_t$, $\varepsilon_t$ denote the 
within-transformed quantities (see Remark~\ref{rem:twoway_demean}).

\medskip
\noindent\textbf{(a) Baseline rate using the noisy matrix $W$.} 
Assume
\[
\|W_0\|_2=O(1),\qquad
w_{2n}=O(1).
\] Using $W = W_0 + E$ directly,
\begin{eqnarray*}
\|\widehat{G}(W)-\widehat{G}(W_0)\|_2 
&=& O_p(w_{2n} + w_{2n}^2),\\
\|\bar{g}_{nT}(\theta_0, W)
-\bar{g}_{nT}(\theta_0, W_0)\|_2 
&=& O_p(w_{2n} + w_{2n}^2).
\end{eqnarray*}
\noindent\textbf{(b) Alternative bound via the $|\cdot|_{1,1}$ norm.}
Suppose, in addition, that
\[
\|W_0\|_1=O(1),
\qquad
\|E\|_\infty=O_p(1),
\]
and
\[
\max_{1\leq j\leq K}
\frac{1}{T}\sum_{t=1}^T
\|X_{jt}\|_{\max}^2
=O_p(1),
\qquad
\frac{1}{T}\sum_{t=1}^T
\|\varepsilon_t\|_{\max}^2
=O_p(1),
\]
where $X_{jt}$ is the $j$th column of $X_t$,
$\|X_{jt}\|_{\max}=\max_{1\leq i\leq n}|x_{t,ij}|$, and
$\|\varepsilon_t\|_{\max}
=\max_{1\leq i\leq n}|\varepsilon_{it}|$.
Then
\begin{align*}
\|\widehat G(W)-\widehat G(W_0)\|_2
&=
O_p\!\left(n^{-1}|E|_{1,1}\right),\\
\|\bar g_{nT}(\theta_0,W)
-\bar g_{nT}(\theta_0,W_0)\|_2
&=
O_p\!\left(n^{-1}|E|_{1,1}\right).
\end{align*}
\noindent\textbf{(c) Improved rate using the denoised matrix
$\widehat W$.}
Let
\[
\mathcal E:=\sigma(E),
\qquad
U_t:=\lambda_0Y_t+X_t\gamma_0,
\]
and consider the finite collection of primitive pairs
\[
\mathcal P
=
\left\{
(X_{jt},X_{kt}),
(X_{jt},Y_t),
(X_{jt},U_t),
(X_{jt},\varepsilon_t):
1\leq j,k\leq K
\right\}.
\]
Suppose that
\begin{equation}
\sup_{(a_t,b_t)\in\mathcal P}
\sup_{1\leq t\leq T}
\left\|
\mathbb E[a_tb_t^\top\mid\mathcal E]
\right\|_2
=
O_p(1),
\label{eq:conditional-moment-op}
\end{equation}
and
\begin{equation}
\sup_{(a_t,b_t)\in\mathcal P}
\sup_{1\leq t\leq T}
\left\|
\mathbb E[a_tb_t^\top\mid\mathcal E]
\right\|_{\max}
=
O_p(1).
\label{eq:conditional-moment-max}
\end{equation}
Assume also that, uniformly over the pairs in $\mathcal P$ and over
all $\mathcal E$-measurable matrices $A$,
\begin{equation}
\operatorname{Var}\left(
\sum_{t=1}^T a_t^\top A b_t
\,\middle|\,\mathcal E
\right)
\lesssim_p
T\|A\|_F^2.
\label{eq:conditional-bilinear}
\end{equation}

Independence between $E$ and the regressors or disturbances is not
required; the conditional-moment conditions above are the operative
restriction. Under independence, the two conditional-moment bounds
reduce to their corresponding unconditional counterparts in the main
assumptions, whereas under dependence they restrict the strength of
that dependence.

Define
\[
\rho_n
:=
w_{2n}+\mathcal R_s,
\qquad
s_n
:=
\mathcal R_s\sqrt{m_s(S_0)},
\qquad
\kappa_n
:=
1+\|W_0\|_2+\rho_n,
\]
and
\begin{equation}
\mathcal R_{nT}^*
\coloneqq
(1+\|B_0\|_2)\kappa_n^{d+1}
\frac{(r\vee1)\rho_n+s_n}{n}.
\label{eq:RstarnT}
\end{equation}
If $\mathcal R_{nT}^*=o(1)$, then
\begin{align*}
\|\widehat G(\widehat W)-\widehat G(W_0)\|_2
&=
O_p(\mathcal R_{nT}^*),\\
\|\bar g_{nT}(\theta_0,\widehat W)
-\bar g_{nT}(\theta_0,W_0)\|_2
&=
O_p(\mathcal R_{nT}^*).
\end{align*}

\end{lemma}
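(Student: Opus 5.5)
We would prove Lemma~\ref{rateG} by writing both the gradient difference and the moment difference as finite sums of normalized bilinear forms $\frac{1}{nT}\sum_t a_t^\top A\, b_t$. Here $a_t,b_t$ range over the primitive vectors in $\mathcal P$ (columns of $X_t$, $Y_t$, $U_t$, $\varepsilon_t$), and $A$ is a difference matrix built from powers of $W_0$ and of the perturbed matrix $\widetilde W\in\{W,\widehat W\}$. Write $\Delta:=\widetilde W-W_0$, so that $\Delta=E$ in parts (a) and (b) and $\Delta=\Delta L+\Delta S$ in part (c).

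\emph{Step 1: algebraic reduction.} For each $\ell\le d$ we telescope
\[
\widetilde W^\ell-W_0^\ell=\sum_{k=0}^{\ell-1}\widetilde W^{k}\,\Delta\, W_0^{\ell-1-k}.
\]
Substituting into the $(\ell,\cdot)$ blocks of $Z_t(\widetilde W)^\top[\widetilde WY_t,X_t,\widetilde WX_t]$, every entry of $\widehat G(\widetilde W)-\widehat G(W_0)$ becomes a finite sum, with at most $(d+1)^2$ terms, of forms $\frac1{nT}\sum_t X_{jt}^\top A\,b_t$. Each such $A$ contains at least one factor $\Delta$, and the other factors are powers of $\widetilde W$ and $W_0$ of total degree at most $d+1$. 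The moment difference is handled the same way. We use $\varepsilon_t(\theta_0,\widetilde W)=\varepsilon_t-\lambda_0\Delta Y_t-\Delta X_t\gamma_0=\varepsilon_t-\Delta U_t$, which is exactly why $U_t$ appears in $\mathcal P$; whenever $Y_t$ is needed we substitute $Y_t=B_0(X_t\beta_0+W_0X_t\gamma_0+\varepsilon_t)$.

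\emph{Step 2: parts (a) and (b).} We bound each form by $\frac1{nT}\sum_t\|a_t\|_2\|A\|_2\|b_t\|_2$. Lemma~\ref{lem9} gives $\|a_t\|_2,\|b_t\|_2=O_p(\sqrt n)$ on average, and $\|A\|_2\lesssim(\|W_0\|_2+w_{2n})^{d}\,w_{2n}$. Terms carrying two factors of $E$ (for example $X_t^\top E^\top E Y_t$) produce the $w_{2n}^2$ contribution, which yields (a). For (b) we instead bound
\[
|a_t^\top Ab_t|\le\|a_t\|_{\max}\|b_t\|_{\max}|A|_{1,1},
\]
together with $|PEQ|_{1,1}\le\|P\|_1|E|_{1,1}\|Q\|_\infty$. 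The extra hypotheses $\|W_0\|_1=O(1)$ and $\|E\|_\infty=O_p(1)$ keep the powers of $W_0$ and $W$ under control, and Cauchy--Schwarz over $t$ with the max-norm moment assumptions then gives $O_p(n^{-1}|E|_{1,1})$.

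\emph{Step 3: part (c).} Here $\Delta$ is $\sigma(E)$-measurable, and we condition on $\mathcal E$. Each form splits into its conditional mean, $\frac1n\operatorname{tr}(A\,\mathbb E[b_ta_t^\top\mid\mathcal E])$, and a fluctuation.

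For the conditional-mean part we use duality. When $A$ contains $\Delta L$, we bound the trace by $\|\Delta L\|_*\,\|\text{rest}\|_2\,\|\mathbb E[\cdot\mid\mathcal E]\|_2$. Since $\widehat L-L_0$ has rank at most $2(r\vee 1)$ in the relevant tangent space, $\|\Delta L\|_*\lesssim (r\vee1)\|\Delta L\|_2\lesssim(r\vee1)\rho_n$ by Theorem~\ref{Thm:DK1}(b). When $A$ contains $\Delta S$, we use $|\Delta S|_{1,1}$ against $\|\cdot\|_{\max}$ via \eqref{eq:conditional-moment-max}, with $|\Delta S|_{1}\lesssim s_n$ from Theorem~\ref{Thm:DK1}. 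The remaining factors are bounded by $\|\widehat W\|_2\le\|W_0\|_2+\rho_n$ raised to at most $d$, and by $\|B_0\|_2$ whenever $Y_t$ enters. Dividing by $n$ gives exactly $(1+\|B_0\|_2)\kappa_n^{d+1}\{(r\vee1)\rho_n+s_n\}/n$.

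For the fluctuation we use \eqref{eq:conditional-bilinear}: its conditional standard deviation is $\lesssim \frac{1}{n\sqrt T}\|A\|_F$. Since $\|A\|_F\le\|A\|_*$ and $\|A\|_*\le\|\Delta\|_*\prod\|\cdot\|_2$, this is of smaller order than the mean term, and a conditional Chebyshev argument converts it to $O_p$.

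\emph{Main obstacle.} The delicate step is the nuclear-norm control of the diagonal-adjusted $\Delta L$ and $\Delta S$. The diagonal correction $\widehat S_{ii}=-\widehat L_{ii}$ can add a term of size $n\|\Delta L\|_2$ to $|\Delta S|_{1,1}$. We plan to treat the diagonal part separately as the diagonal matrix $\operatorname{diag}(\Delta L)$, whose nuclear norm is at most $\|\Delta L\|_*$ because diagonal extraction is a contraction in trace norm (pinching). This avoids that loss. The same issue arises for cross terms containing two $\Delta$ factors; those we bound by $\|\Delta\|_2\|\Delta\|_*$, which is of the same order because $\rho_n=O(1)$ under $\mathcal R^*_{nT}=o(1)$.
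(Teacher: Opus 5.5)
Your proposal is correct and follows essentially the paper's own proof: the same telescoping of $\widetilde W^\ell-W_0^\ell$, operator-norm/energy bounds for (a), $|\cdot|_{1,1}$--max-norm H\"older bounds for (b), and for (c) conditioning on $\sigma(E)$, with nuclear/operator (resp.\ off-diagonal $\ell_1$/max) duality for the conditional mean and the conditional-variance bound plus Chebyshev for the fluctuation; your pinching bound $\|\operatorname{diag}(\Delta L)\|_*\le\|\Delta L\|_*$ is just the dual form of the paper's $\|\operatorname{off}(N)\|_2\le 2\|N\|_2$. Three small repairs are needed: (i) $\mathcal R^*_{nT}=o(1)$ only gives $\rho_n^{d+2}=o(n)$, not $\rho_n=O(1)$, but you do not need the latter, because the extra factor $\|\Delta\|_2\lesssim\rho_n\le\kappa_n$ in two-$\Delta$ terms is absorbed into $\kappa_n^{d+1}$, as in the paper; (ii) in (b), $\|E\|_1$ is not assumed bounded, so $|PEQ|_{1,1}\le\|P\|_1|E|_{1,1}\|Q\|_\infty$ only works with powers of $W_0$ to the left of $E$, which means you should telescope as $\sum_k W_0^kE\,W^{\ell-1-k}$ (or expand $W=W_0+E$, as the paper does) rather than with $\widetilde W^k$ on the left; and (iii) with $\mathcal E$-measurable factors $P_E,Q_E$ around $\operatorname{off}(\Delta S)$, you must bound $\|P_E^\top M_EQ_E^\top\|_{\max}\le\|P_E\|_2\|M_E\|_2\|Q_E\|_2$, i.e.\ use the operator-norm conditional-moment bound rather than the max-norm one.
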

The bound in part~(b) corresponds to the rate available under 
sparsity-style arguments for measurement error in the adjacency 
matrix, of the kind used by \citet{Lewbeletal2024EJ}. When $E$ is 
sparse, $|E|_{1,1} = o(n)$ and part~(b) is faster than the baseline 
of part~(a). When $E$ is dense, however, $|E|_{1,1}$ is generically 
of order $n$ or larger, and part~(b) delivers no improvement over 
part~(a). This is precisely the regime in which the L+S-denoising 
rate of part~(c) is needed: by exploiting the assumed L+S structure 
of the latent adjacency matrix $W_0 = L_0 + S_0$, the denoising step 
of Theorem~\ref{Thm:DK1} produces a recovery error 
$\widehat W - W_0 = \Delta L + \Delta S$ with controlled spectral 
behavior, allowing part~(c) to achieve a rate that the 
elementwise-sparse bound of part~(b) cannot reach.

The denoised rate of Lemma~\ref{rateG}(c) is strictly faster than 
the baseline of Lemma~\ref{rateG}(a) whenever
\[
\frac{\|B_0\|_2(1+\|W_0\|_2)\,r(w_{2n}+\mathcal{R}_s) 
+ \mathcal{R}_s\sqrt{m_s(S_0)}}{n} = o(w_{2n}+w_{2n}^2).
\]
This holds in particular for fixed or slowly growing rank $r$ in 
the dense-network regime where $\|W_0\|_2$ remains bounded, so that 
$\|B_0\|_2 = O(1)$ when there are no dominant units ($J = 0$). Under 
finitely many dominant units, Lemma~\ref{lem8} gives 
$\|B_0\|_2 = o(n^{1/4})$, and the improvement holds under 
correspondingly stronger conditions on $r$ and $m_s(S_0)$.

%\end{document}

\begin{proof}[Proof of Lemma~\ref{rateG}]

\medskip
\noindent\textbf{Part (a): $\|\cdot\|_2$ based bound.}

Throughout this part, $d=2$, so the columns of $Z_t(W)$ are chosen
from $W^lX_{jt}$, where $l=0,1,2$ and $j=1,\ldots,K$. Since $K$
and the number of instruments are fixed, it suffices to bound each
entry of the relevant matrices.

Recall that
\[
B_0=(I_n-\lambda_0W_0)^{-1}.
\]
The stability condition on $W_0$ implies
\[
\|W_0\|_\infty=O(1),\qquad
\|B_0\|_\infty=O(1),\qquad
\|B_0W_0\|_\infty=O(1).
\]
The uniform moment bounds in
Assumptions~\ref{Assump:Instruments} and
\ref{Assump:Spatial_Errors}, together with these row-sum bounds,
imply
\begin{align}
\max_{1\leq j\leq K}
\frac{1}{nT}\sum_{t=1}^T\|X_{jt}\|_2^2
&=O_p(1), \label{eq:rateG-energy-X}\\
\frac{1}{nT}\sum_{t=1}^T\|\varepsilon_t\|_2^2
&=O_p(1), \label{eq:rateG-energy-eps}\\
\max_{\substack{1\leq j\leq K\\H\in\{I_n,W_0\}}}
\frac{1}{nT}\sum_{t=1}^T\|B_0H X_{jt}\|_2^2
+\frac{1}{nT}\sum_{t=1}^T\|B_0\varepsilon_t\|_2^2
&=O_p(1). \label{eq:rateG-energy-B}
\end{align}

Indeed, uniformly in $t$, $j$, and $H\in\{I_n,W_0\}$,
the uniform componentwise second-moment bounds and the bounded row
sums of $B_0H$ imply
\[
\mathbb E\|B_0H X_{jt}\|_2^2=O(n).
\]

Moreover, because $\varepsilon_t$ denotes the within-transformed
disturbance and the within transformation is a contraction,
\[
\lambda_{\max}\!\left(
\mathbb E[\varepsilon_t\varepsilon_t^\top]
\right)\leq \sigma_0^2.
\]
Consequently,
\[
\mathbb E\|B_0\varepsilon_t\|_2^2
=
\operatorname{tr}\!\left(
B_0^\top B_0
\mathbb E[\varepsilon_t\varepsilon_t^\top]
\right)
\leq
\sigma_0^2\|B_0\|_F^2
\leq
\sigma_0^2n\|B_0\|_\infty^2
=O(n),
\]
where
\[
\|B_0\|_F^2
=\sum_{i=1}^n\|B_{0,i\cdot}\|_2^2
\leq
\sum_{i=1}^n\|B_{0,i\cdot}\|_1^2
\leq
n\|B_0\|_\infty^2.
\]

Therefore, the expectations of the normalized time averages are
uniformly bounded, and Markov's inequality gives
\eqref{eq:rateG-energy-X}--\eqref{eq:rateG-energy-B}.
Since
\[
Y_t=B_0(X_t\beta_0+W_0X_t\gamma_0+\varepsilon_t),
\]
the compactness of the parameter space, the fact that $K$ is fixed,
and \eqref{eq:rateG-energy-B} implies
\begin{equation}
\frac{1}{nT}\sum_{t=1}^T\|Y_t\|_2^2=O_p(1).
\label{eq:rateG-energy-Y}
\end{equation}
Consequently, defining
\[
U_t:=\lambda_0Y_t+X_t\gamma_0,
\]
we also have
\begin{equation}
\frac{1}{nT}\sum_{t=1}^T\|U_t\|_2^2=O_p(1).
\label{eq:rateG-energy-U}
\end{equation}

Define
\[
A_l:=W^l-W_0^l,\qquad l=0,1,2.
\]
Since $W=W_0+E$,
\[
A_0=0,\qquad A_1=E,
\]
and
\[
A_2=W^2-W_0^2=W_0E+EW_0+E^2.
\]
Therefore, using $\|W_0\|_2=O(1)$ and
$\|E\|_2=O_p(w_{2n})$,
\begin{equation}
\max_{0\leq l\leq2}\|A_l\|_2
=O_p(w_{2n}+w_{2n}^2).
\label{eq:rateG-power-difference}
\end{equation}
Moreover,
\[
\|W\|_2
\leq\|W_0\|_2+\|E\|_2
=O_p(1),
\]
where the final equality uses $w_{2n}=O(1)$.

We first consider the gradient. By its definition,
\begin{align*}
\|\widehat G(W)-\widehat G(W_0)\|_2
&\lesssim
\left\|
\frac{1}{nT}\sum_{t=1}^T
\big[Z_t^\top(W)-Z_t^\top(W_0)\big]X_t
\right\|_2\\
&\quad+
\left\|
\frac{1}{nT}\sum_{t=1}^T
\big[Z_t^\top(W)W-Z_t^\top(W_0)W_0\big]Y_t
\right\|_2\\
&\quad+
\left\|
\frac{1}{nT}\sum_{t=1}^T
\big[Z_t^\top(W)W-Z_t^\top(W_0)W_0\big]X_t
\right\|_2\\
&=:\|\mathcal R_{n1}\|_2
+\|\mathcal R_{n2}\|_2
+\|\mathcal R_{n3}\|_2.
\end{align*}

For $\mathcal R_{n1}$, every $(j,k,l)$ entry satisfies
\begin{align*}
&\left|
\frac{1}{nT}\sum_{t=1}^T
X_{jt}^\top A_l^\top X_{kt}
\right|\\
&\qquad\leq
\|A_l\|_2
\left(
\frac{1}{nT}\sum_{t=1}^T\|X_{jt}\|_2^2
\right)^{1/2}
\left(
\frac{1}{nT}\sum_{t=1}^T\|X_{kt}\|_2^2
\right)^{1/2}.
\end{align*}
It follows from \eqref{eq:rateG-energy-X} and
\eqref{eq:rateG-power-difference} that
\begin{equation}
\|\mathcal R_{n1}\|_2
=O_p(w_{2n}+w_{2n}^2).
\label{eq:rateG-Rn1}
\end{equation}

For $\mathcal R_{n2}$ and $\mathcal R_{n3}$, define
\[
D_l:=(W^l)^\top W-(W_0^l)^\top W_0.
\]
Because
\[
D_l=A_l^\top W+(W_0^l)^\top E,
\]
we have
\[
\|D_l\|_2
\leq
\|A_l\|_2\|W\|_2+\|W_0\|_2^l\|E\|_2.
\]
Thus,
\begin{equation}
\max_{0\leq l\leq2}\|D_l\|_2
=O_p(w_{2n}+w_{2n}^2).
\label{eq:rateG-Dl}
\end{equation}

Each $(j,l)$ entry of $\mathcal R_{n2}$ satisfies
\begin{align*}
\left|
\frac{1}{nT}\sum_{t=1}^T
X_{jt}^\top D_lY_t
\right|
&\leq
\|D_l\|_2
\left(
\frac{1}{nT}\sum_{t=1}^T\|X_{jt}\|_2^2
\right)^{1/2}\\
&\quad\times
\left(
\frac{1}{nT}\sum_{t=1}^T\|Y_t\|_2^2
\right)^{1/2}.
\end{align*}
Therefore, by \eqref{eq:rateG-energy-X},
\eqref{eq:rateG-energy-Y}, and \eqref{eq:rateG-Dl},
\begin{equation}
\|\mathcal R_{n2}\|_2
=O_p(w_{2n}+w_{2n}^2).
\label{eq:rateG-Rn2}
\end{equation}

Similarly, every $(j,k,l)$ entry of $\mathcal R_{n3}$ satisfies
\begin{align*}
\left|
\frac{1}{nT}\sum_{t=1}^T
X_{jt}^\top D_lX_{kt}
\right|
&\leq
\|D_l\|_2
\left(
\frac{1}{nT}\sum_{t=1}^T\|X_{jt}\|_2^2
\right)^{1/2}\\
&\quad\times
\left(
\frac{1}{nT}\sum_{t=1}^T\|X_{kt}\|_2^2
\right)^{1/2}.
\end{align*}
Hence
\begin{equation}
\|\mathcal R_{n3}\|_2
=O_p(w_{2n}+w_{2n}^2).
\label{eq:rateG-Rn3}
\end{equation}

Combining \eqref{eq:rateG-Rn1}--\eqref{eq:rateG-Rn3}, we obtain
\[
\|\widehat G(W)-\widehat G(W_0)\|_2
=O_p(w_{2n}+w_{2n}^2).
\]

We next consider the sample moment. Define
\[
\varepsilon_t(\theta_0,M)
:=Y_t-\lambda_0MY_t-X_t\beta_0-MX_t\gamma_0.
\]
Then
\[
\varepsilon_t(\theta_0,W_0)=\varepsilon_t
\]
and
\[
\delta_t
:=\varepsilon_t(\theta_0,W)-\varepsilon_t
=-E(\lambda_0Y_t+X_t\gamma_0)
=-EU_t.
\]
Consequently,
\begin{align*}
&\bar g_{nT}(\theta_0,W)
-\bar g_{nT}(\theta_0,W_0)\\
&\quad=
\frac{1}{nT}\sum_{t=1}^T
Z_t^\top(W_0)\delta_t
+
\frac{1}{nT}\sum_{t=1}^T
\big[Z_t^\top(W)-Z_t^\top(W_0)\big]
\varepsilon_t(\theta_0,W)\\
&\quad=:\mathcal R_1+\mathcal R_2.
\end{align*}

For each instrument component of $\mathcal R_1$,
\begin{align*}
\left|
\frac{1}{nT}\sum_{t=1}^T
X_{jt}^\top(W_0^l)^\top EU_t
\right|
&\leq
\|E\|_2
\left(
\frac{1}{nT}\sum_{t=1}^T
\|W_0^lX_{jt}\|_2^2
\right)^{1/2}\\
&\quad\times
\left(
\frac{1}{nT}\sum_{t=1}^T\|U_t\|_2^2
\right)^{1/2}.
\end{align*}
Because $\|W_0\|_2=O(1)$, \eqref{eq:rateG-energy-X} and
\eqref{eq:rateG-energy-U} imply
\[
\|\mathcal R_1\|_2=O_p(w_{2n}).
\]

For $\mathcal R_2$, use
\[
\varepsilon_t(\theta_0,W)=\varepsilon_t+\delta_t
\]
to write
\begin{align*}
\mathcal R_2
&=
\underbrace{
\frac{1}{nT}\sum_{t=1}^T
\big[Z_t^\top(W)-Z_t^\top(W_0)\big]\varepsilon_t
}_{\mathcal R_{21}}\\
&\quad+
\underbrace{
\frac{1}{nT}\sum_{t=1}^T
\big[Z_t^\top(W)-Z_t^\top(W_0)\big]\delta_t
}_{\mathcal R_{22}}.
\end{align*}

For every $(j,l)$ component of $\mathcal R_{21}$,
\begin{align*}
\left|
\frac{1}{nT}\sum_{t=1}^T
X_{jt}^\top A_l^\top\varepsilon_t
\right|
&\leq
\|A_l\|_2
\left(
\frac{1}{nT}\sum_{t=1}^T\|X_{jt}\|_2^2
\right)^{1/2}\\
&\quad\times
\left(
\frac{1}{nT}\sum_{t=1}^T\|\varepsilon_t\|_2^2
\right)^{1/2}.
\end{align*}
Thus,
\[
\|\mathcal R_{21}\|_2
=O_p(w_{2n}+w_{2n}^2).
\]

Finally, since $\delta_t=-EU_t$, every component of
$\mathcal R_{22}$ satisfies
\begin{align*}
\left|
\frac{1}{nT}\sum_{t=1}^T
X_{jt}^\top A_l^\top EU_t
\right|
&\leq
\|A_l\|_2\|E\|_2
\left(
\frac{1}{nT}\sum_{t=1}^T\|X_{jt}\|_2^2
\right)^{1/2}\\
&\quad\times
\left(
\frac{1}{nT}\sum_{t=1}^T\|U_t\|_2^2
\right)^{1/2}.
\end{align*}
It follows that
\[
\|\mathcal R_{22}\|_2
=O_p(w_{2n}^2+w_{2n}^3)
=O_p(w_{2n}^2),
\]
where the final equality uses $w_{2n}=O(1)$.

Combining the preceding bounds gives
\[
\|\bar g_{nT}(\theta_0,W)
-\bar g_{nT}(\theta_0,W_0)\|_2
=O_p(w_{2n}+w_{2n}^2),
\]
which completes the proof of part~(a).\\

\medskip
\noindent\textbf{Part (b): $|\cdot|_{1,1}$-based bound.}

Throughout this part, $d=2$. We use the inequalities
\begin{equation}
|AB|_{1,1}
\leq
\|A\|_1|B|_{1,1},
\qquad
|AB|_{1,1}
\leq
\|B\|_\infty|A|_{1,1}.
\label{eq:l11-product}
\end{equation}
For vectors $x,y\in\mathbb R^n$, we also use
\begin{equation}
|x^\top Ay|
\leq
\|x\|_{\max}|A|_{1,1}\|y\|_{\max}.
\label{eq:l11-bilinear}
\end{equation}

Recall the definitions
\[
A_l:=W^l-W_0^l,
\qquad
D_l:=(W^l)^\top W-(W_0^l)^\top W_0,
\qquad l=0,1,2.
\]
Since $W=W_0+E$,
\[
A_0=0,
\qquad
A_1=E,
\]
and
\[
A_2=W_0E+EW_0+E^2.
\]
Using~\eqref{eq:l11-product},
\begin{align*}
|A_2|_{1,1}
&\leq
|W_0E|_{1,1}
+|EW_0|_{1,1}
+|E^2|_{1,1}\\
&\leq
\left(
\|W_0\|_1
+\|W_0\|_\infty
+\|E\|_\infty
\right)|E|_{1,1}.
\end{align*}
By the additional hypothesis of part~(b),
$\|W_0\|_1=O(1)$. Moreover,
Assumption~\ref{Assump:True_Adjacency}(iii) gives
$\|W_0\|_\infty=O(1)$, while
$\|E\|_\infty=O_p(1)$ by assumption. Therefore,
\begin{equation}
\max_{0\leq l\leq2}|A_l|_{1,1}
=
O_p(|E|_{1,1}).
\label{eq:Al-l11}
\end{equation}

Moreover,
\[
D_l=A_l^\top W+(W_0^l)^\top E.
\]
Therefore,
\begin{align*}
|D_l|_{1,1}
&\leq
|A_l^\top W|_{1,1}
+|(W_0^l)^\top E|_{1,1}\\
&\leq
\|W\|_\infty|A_l^\top|_{1,1}
+\|(W_0^l)^\top\|_1|E|_{1,1}\\
&=
\|W\|_\infty|A_l|_{1,1}
+\|W_0^l\|_\infty|E|_{1,1}.
\end{align*}
Since
\[
\|W\|_\infty
\leq
\|W_0\|_\infty+\|E\|_\infty
=
O_p(1)
\]
and
\[
\|W_0^l\|_\infty
\leq
\|W_0\|_\infty^l
=
O(1),
\]
it follows that
\begin{equation}
\max_{0\leq l\leq2}|D_l|_{1,1}
=
O_p(|E|_{1,1}).
\label{eq:Dl-l11}
\end{equation}

We next record bounds for the data factors. Since
\[
Y_t
=
B_0
\left(
X_t\beta_0+W_0X_t\gamma_0+\varepsilon_t
\right),
\]
Assumption~\ref{Assump:True_Adjacency}(iii) gives
\[
\|B_0\|_\infty=O(1),
\qquad
\|W_0\|_\infty=O(1).
\]
Together with the boundedness of $\beta_0$ and $\gamma_0$ and the
fact that $K$ is fixed, this gives
\[
\|Y_t\|_{\max}
\lesssim
\sum_{k=1}^K\|X_{kt}\|_{\max}
+\|\varepsilon_t\|_{\max}.
\]
By the additional maximum-norm assumptions in part~(b),
\begin{equation}
\frac1T\sum_{t=1}^T\|Y_t\|_{\max}^2
=
O_p(1).
\label{eq:Y-max-energy}
\end{equation}
Similarly, defining
\[
U_t:=\lambda_0Y_t+X_t\gamma_0,
\]
we have
\begin{equation}
\frac1T\sum_{t=1}^T\|U_t\|_{\max}^2
=
O_p(1).
\label{eq:U-max-energy}
\end{equation}

With $\mathcal R_{n1}$, $\mathcal R_{n2}$, and
$\mathcal R_{n3}$ defined as in part~(a), we first bound the
gradient difference. For $\mathcal R_{n1}$, every $(j,k,l)$ entry
satisfies
\begin{align*}
&\left|
\frac1{nT}\sum_{t=1}^T
X_{jt}^\top A_l^\top X_{kt}
\right|\\
&\qquad\leq
\frac{|A_l|_{1,1}}{n}
\left(
\frac1T\sum_{t=1}^T\|X_{jt}\|_{\max}^2
\right)^{1/2}
\left(
\frac1T\sum_{t=1}^T\|X_{kt}\|_{\max}^2
\right)^{1/2}.
\end{align*}
Thus, by~\eqref{eq:Al-l11} and the additional maximum-norm
assumptions,
\[
\|\mathcal R_{n1}\|_2
=
O_p\left(n^{-1}|E|_{1,1}\right).
\]

For $\mathcal R_{n2}$, every $(j,l)$ entry satisfies
\begin{align*}
\left|
\frac1{nT}\sum_{t=1}^T
X_{jt}^\top D_lY_t
\right|
&\leq
\frac{|D_l|_{1,1}}{n}
\left(
\frac1T\sum_{t=1}^T\|X_{jt}\|_{\max}^2
\right)^{1/2}\\
&\quad\times
\left(
\frac1T\sum_{t=1}^T\|Y_t\|_{\max}^2
\right)^{1/2}.
\end{align*}
Therefore, by~\eqref{eq:Dl-l11},
\eqref{eq:Y-max-energy}, and the additional maximum-norm
assumptions,
\[
\|\mathcal R_{n2}\|_2
=
O_p\left(n^{-1}|E|_{1,1}\right).
\]

Similarly, every $(j,k,l)$ entry of $\mathcal R_{n3}$ satisfies
\begin{align*}
\left|
\frac1{nT}\sum_{t=1}^T
X_{jt}^\top D_lX_{kt}
\right|
&\leq
\frac{|D_l|_{1,1}}{n}
\left(
\frac1T\sum_{t=1}^T\|X_{jt}\|_{\max}^2
\right)^{1/2}\\
&\quad\times
\left(
\frac1T\sum_{t=1}^T\|X_{kt}\|_{\max}^2
\right)^{1/2}.
\end{align*}
Hence,
\[
\|\mathcal R_{n3}\|_2
=
O_p\left(n^{-1}|E|_{1,1}\right).
\]

Since $K$, $d$, and the number of instruments are fixed, the
preceding entrywise bounds also hold for the corresponding matrix
$2$-norms. Combining the three terms gives
\[
\|\widehat G(W)-\widehat G(W_0)\|_2
=
O_p\left(n^{-1}|E|_{1,1}\right).
\]

We next consider the sample-moment difference. Recall that
\[
\delta_t
:=
\varepsilon_t(\theta_0,W)-\varepsilon_t
=
-EU_t.
\]
As in part~(a), decompose
\begin{align*}
&\bar g_{nT}(\theta_0,W)
-\bar g_{nT}(\theta_0,W_0)\\
&\quad=
\underbrace{
\frac1{nT}\sum_{t=1}^T
Z_t^\top(W_0)\delta_t
}_{\mathcal R_1}
+
\underbrace{
\frac1{nT}\sum_{t=1}^T
\left[
Z_t^\top(W)-Z_t^\top(W_0)
\right]
\varepsilon_t(\theta_0,W)
}_{\mathcal R_2}.
\end{align*}

For each instrument component of $\mathcal R_1$,
\begin{align*}
&\left|
\frac1{nT}\sum_{t=1}^T
X_{jt}^\top(W_0^l)^\top EU_t
\right|\\
&\qquad\leq
\frac{|(W_0^l)^\top E|_{1,1}}{n}
\left(
\frac1T\sum_{t=1}^T\|X_{jt}\|_{\max}^2
\right)^{1/2}
\left(
\frac1T\sum_{t=1}^T\|U_t\|_{\max}^2
\right)^{1/2}.
\end{align*}
Furthermore,
\[
|(W_0^l)^\top E|_{1,1}
\leq
\|(W_0^l)^\top\|_1|E|_{1,1}
=
\|W_0^l\|_\infty|E|_{1,1}
=
O(|E|_{1,1}).
\]
Hence,
\[
\|\mathcal R_1\|_2
=
O_p\left(n^{-1}|E|_{1,1}\right).
\]

For $\mathcal R_2$, write
\[
\mathcal R_2=\mathcal R_{21}+\mathcal R_{22},
\]
where
\begin{align*}
\mathcal R_{21}
&:=
\frac1{nT}\sum_{t=1}^T
\left[
Z_t^\top(W)-Z_t^\top(W_0)
\right]\varepsilon_t,\\
\mathcal R_{22}
&:=
\frac1{nT}\sum_{t=1}^T
\left[
Z_t^\top(W)-Z_t^\top(W_0)
\right]\delta_t.
\end{align*}

For each component of $\mathcal R_{21}$,
\begin{align*}
\left|
\frac1{nT}\sum_{t=1}^T
X_{jt}^\top A_l^\top\varepsilon_t
\right|
&\leq
\frac{|A_l|_{1,1}}{n}
\left(
\frac1T\sum_{t=1}^T\|X_{jt}\|_{\max}^2
\right)^{1/2}\\
&\quad\times
\left(
\frac1T\sum_{t=1}^T\|\varepsilon_t\|_{\max}^2
\right)^{1/2}.
\end{align*}
Thus,
\[
\|\mathcal R_{21}\|_2
=
O_p\left(n^{-1}|E|_{1,1}\right).
\]

Finally, because $\delta_t=-EU_t$, every component of
$\mathcal R_{22}$ satisfies
\begin{align*}
&\left|
\frac1{nT}\sum_{t=1}^T
X_{jt}^\top A_l^\top EU_t
\right|\\
&\qquad\leq
\frac{|A_l^\top E|_{1,1}}{n}
\left(
\frac1T\sum_{t=1}^T\|X_{jt}\|_{\max}^2
\right)^{1/2}
\left(
\frac1T\sum_{t=1}^T\|U_t\|_{\max}^2
\right)^{1/2}.
\end{align*}
Using~\eqref{eq:l11-product} and~\eqref{eq:Al-l11},
\[
|A_l^\top E|_{1,1}
\leq
\|E\|_\infty|A_l^\top|_{1,1}
=
\|E\|_\infty|A_l|_{1,1}
=
O_p(|E|_{1,1}).
\]
Therefore,
\[
\|\mathcal R_{22}\|_2
=
O_p\left(n^{-1}|E|_{1,1}\right).
\]

Combining the preceding bounds yields
\[
\|\bar g_{nT}(\theta_0,W)
-\bar g_{nT}(\theta_0,W_0)\|_2
=
O_p\left(n^{-1}|E|_{1,1}\right),
\]
which completes the proof of part~(b).

\medskip
\noindent\textbf{Proof of part~(c).}
Define
\[
\widehat W
:=
\operatorname{off}(\widehat L+\widehat S),
\qquad
\Delta_W:=\widehat W-W_0,
\]
where
\[
\operatorname{off}(A)
=
A-\operatorname{diag}(A).
\]
The off-diagonal operator is redundant under the zero-diagonal
constraint, but we retain it to make the diagonal bookkeeping
explicit. Since $\operatorname{diag}(W_0)=0$,
\[
\Delta_W
=
\operatorname{off}(\Delta L)
+
\operatorname{off}(\Delta S).
\]

Recall that
\[
\rho_n:=w_{2n}+\mathcal R_s,
\qquad
s_n:=\mathcal R_s\sqrt{m_s(S_0)},
\qquad
\kappa_n:=1+\|W_0\|_2+\rho_n.
\]
Theorem~\ref{Thm:DK1} and its cone condition imply
\begin{align}
\|\Delta L\|_*
&=
O_p\bigl((r\vee1)\rho_n\bigr),
\label{eq:DL-nuclear}\\
\|\Delta L\|_F+\|\Delta S\|_F
&=
O_p\bigl(\sqrt{r\vee1}\,\rho_n\bigr),
\label{eq:DW-frobenius}\\
|\operatorname{off}(\Delta S)|_{1,1}
&=
O_p(s_n),
\label{eq:DS-off-l1}\\
\|\Delta_W\|_2
&=
O_p(\rho_n).
\label{eq:DW-operator}
\end{align}
Consequently,
\[
\|\widehat W\|_2
\leq
\|W_0\|_2+\|\Delta_W\|_2
=
O_p(\kappa_n).
\]

Let $\mathcal E=\sigma(E)$. Since $\widehat W$ is constructed from
$W=W_0+E$, the matrices $\widehat W$, $\Delta_W$, $\Delta L$, and
$\Delta S$ are $\mathcal E$-measurable.

We first record the bilinear-form bound used below. Consider
\[
\mathcal B_{nT}(A)
=
\frac1{nT}\sum_{t=1}^T a_t^\top A b_t,
\]
where $(a_t,b_t)$ is one of
\[
(X_{jt},X_{kt}),
\qquad
(X_{jt},Y_t),
\qquad
(X_{jt},U_t),
\qquad
(X_{jt},\varepsilon_t),
\]
and $A$ is $\mathcal E$-measurable.

Suppose first that
\[
A
=
P_E\operatorname{off}(\Delta L)Q_E,
\]
where
\[
\|P_E\|_2\|Q_E\|_2
=
O_p\left(
(1+\|B_0\|_2)\kappa_n^{d+1}
\right).
\]
Define
\[
M_E
=
\frac1T\sum_{t=1}^T
\mathbb E[a_tb_t^\top\mid\mathcal E].
\]
The conditional-moment condition gives
\[
\|M_E\|_2=O_p(1).
\]
Since the off-diagonal operator is self-adjoint under the Frobenius
inner product,
\begin{align*}
\left|
\mathbb E[\mathcal B_{nT}(A)\mid\mathcal E]
\right|
&=
\frac1n
\left|
\left\langle
\operatorname{off}(\Delta L),
P_E^\top M_EQ_E^\top
\right\rangle_F
\right|\\
&=
\frac1n
\left|
\left\langle
\Delta L,
\operatorname{off}
\left(P_E^\top M_EQ_E^\top\right)
\right\rangle_F
\right|.
\end{align*}
For any matrix $N$,
\[
\|\operatorname{off}(N)\|_2
\leq
\|N\|_2+\|\operatorname{diag}(N)\|_2
\leq
2\|N\|_2.
\]
Therefore, nuclear--operator norm duality and
\eqref{eq:DL-nuclear} give
\begin{align}
\left|
\mathbb E[\mathcal B_{nT}(A)\mid\mathcal E]
\right|
&\leq
\frac{2}{n}
\|\Delta L\|_*
\|P_E\|_2\|M_E\|_2\|Q_E\|_2 \notag\\
&=
O_p\left(
(1+\|B_0\|_2)\kappa_n^{d+1}
\frac{(r\vee1)\rho_n}{n}
\right).
\label{eq:bilinear-mean-L}
\end{align}

Moreover,
\begin{align*}
\|A\|_F
&\leq
\|P_E\|_2
\|\operatorname{off}(\Delta L)\|_F
\|Q_E\|_2\\
&\leq
\|P_E\|_2\|\Delta L\|_F\|Q_E\|_2\\
&=
O_p\left(
(1+\|B_0\|_2)\kappa_n^{d+1}
\sqrt{r\vee1}\,\rho_n
\right).
\end{align*}
The conditional variance condition and conditional Chebyshev's
inequality yield
\begin{align}
\mathcal B_{nT}(A)
-
\mathbb E[\mathcal B_{nT}(A)\mid\mathcal E]
&=
O_p\left(
\frac{\|A\|_F}{n\sqrt T}
\right) \notag\\
&=
O_p\left(
(1+\|B_0\|_2)\kappa_n^{d+1}
\frac{\sqrt{r\vee1}\,\rho_n}{n\sqrt T}
\right).
\label{eq:bilinear-fluctuation-L}
\end{align}

If instead
\[
A
=
P_E\operatorname{off}(\Delta S)Q_E,
\]
then entrywise $\ell_1$--max norm duality gives
\begin{align*}
\left|
\mathbb E[\mathcal B_{nT}(A)\mid\mathcal E]
\right|
&=
\frac1n
\left|
\left\langle
\operatorname{off}(\Delta S),
P_E^\top M_EQ_E^\top
\right\rangle_F
\right|\\
&\leq
\frac1n
|\operatorname{off}(\Delta S)|_{1,1}
\|P_E^\top M_EQ_E^\top\|_{\max}.
\end{align*}
Since
\[
\|P_E^\top M_EQ_E^\top\|_{\max}
\leq
\|P_E^\top M_EQ_E^\top\|_2
\leq
\|P_E\|_2\|M_E\|_2\|Q_E\|_2,
\]
we obtain
\begin{equation}
\left|
\mathbb E[\mathcal B_{nT}(A)\mid\mathcal E]
\right|
=
O_p\left(
(1+\|B_0\|_2)\kappa_n^{d+1}
\frac{s_n}{n}
\right).
\label{eq:bilinear-mean-S}
\end{equation}

Similarly,
\begin{align*}
\|A\|_F
&\leq
\|P_E\|_2
\|\operatorname{off}(\Delta S)\|_F
\|Q_E\|_2\\
&\leq
\|P_E\|_2\|\Delta S\|_F\|Q_E\|_2,
\end{align*}
and hence
\begin{equation}
\mathcal B_{nT}(A)
-
\mathbb E[\mathcal B_{nT}(A)\mid\mathcal E]
=
O_p\left(
(1+\|B_0\|_2)\kappa_n^{d+1}
\frac{\sqrt{r\vee1}\,\rho_n}{n\sqrt T}
\right).
\label{eq:bilinear-fluctuation-S}
\end{equation}

Because $r\vee1\geq1$ and $T\geq1$,
\[
\frac{\sqrt{r\vee1}\,\rho_n}{n\sqrt T}
\leq
\frac{(r\vee1)\rho_n}{n}.
\]
Thus every bilinear form considered above is
\begin{equation}
O_p\left(
(1+\|B_0\|_2)\kappa_n^{d+1}
\frac{(r\vee1)\rho_n+s_n}{n}
\right)
=
O_p(\mathcal R_{nT}^*).
\label{eq:generic-denoised-bilinear}
\end{equation}

Left factors such as $W_0^lX_{jt}$ or
$\widehat W^aX_{jt}$ are handled by absorbing the corresponding
$\mathcal E$-measurable network matrices into $P_E$ or $Q_E$.
This explains why the conditional-moment conditions are stated using
the primitive vectors $X_{jt}$.

\medskip
\noindent\emph{Moment difference.}
As in part~(a), write
\begin{align*}
&\bar g_{nT}(\theta_0,\widehat W)
-\bar g_{nT}(\theta_0,W_0)\\
&\quad=
\frac1{nT}\sum_{t=1}^T
Z_t(W_0)^\top
\left[
\varepsilon_t(\theta_0,\widehat W)
-\varepsilon_t(\theta_0,W_0)
\right]\\
&\qquad+
\frac1{nT}\sum_{t=1}^T
\left[
Z_t(\widehat W)-Z_t(W_0)
\right]^\top
\varepsilon_t(\theta_0,\widehat W)\\
&\quad=:
\mathcal R_1+\mathcal R_2.
\end{align*}

Let
\[
F_t
:=
B_0(X_t\beta_0+W_0X_t\gamma_0).
\]
Since
\[
Y_t=F_t+B_0\varepsilon_t
\]
and
\[
U_t=\lambda_0Y_t+X_t\gamma_0,
\]
we have
\begin{align*}
\mathcal R_1
&=
-\lambda_0\frac1{nT}\sum_{t=1}^T
Z_t(W_0)^\top\Delta_WB_0\varepsilon_t\\
&\quad
-\lambda_0\frac1{nT}\sum_{t=1}^T
Z_t(W_0)^\top\Delta_WF_t\\
&\quad
-\frac1{nT}\sum_{t=1}^T
Z_t(W_0)^\top\Delta_WX_t\gamma_0\\
&=:
A_{1nT}+A_{2nT}+A_{3nT}.
\end{align*}

Every coordinate of $A_{1nT}$ is a finite sum of terms involving the
pair $(X_{jt},\varepsilon_t)$, with $B_0$ and the powers of $W_0$
absorbed into $P_E$ and $Q_E$. Hence,
\[
\|A_{1nT}\|_2
=
O_p(\mathcal R_{nT}^*).
\]

After writing
\[
F_t
=
B_0X_t\beta_0+B_0W_0X_t\gamma_0,
\]
every coordinate of $A_{2nT}$ involves a pair
$(X_{jt},X_{kt})$. Therefore,
\[
\|A_{2nT}\|_2
=
O_p(\mathcal R_{nT}^*).
\]
The same argument using $(X_{jt},X_{kt})$ gives
\[
\|A_{3nT}\|_2
=
O_p(\mathcal R_{nT}^*).
\]
Consequently,
\begin{equation}
\|\mathcal R_1\|_2
=
O_p(\mathcal R_{nT}^*).
\label{eq:R1-denoised}
\end{equation}

For $\mathcal R_2$, write
\begin{align*}
\mathcal R_2
&=
\frac1{nT}\sum_{t=1}^T
\left[
Z_t(\widehat W)-Z_t(W_0)
\right]^\top\varepsilon_t\\
&\quad+
\frac1{nT}\sum_{t=1}^T
\left[
Z_t(\widehat W)-Z_t(W_0)
\right]^\top
\left[
\varepsilon_t(\theta_0,\widehat W)-\varepsilon_t
\right]\\
&=:
\mathcal R_{21}+\mathcal R_{22}.
\end{align*}
For every $l\leq d$,
\begin{equation}
\widehat W^l-W_0^l
=
\sum_{a=0}^{l-1}
\widehat W^a\Delta_WW_0^{l-1-a}.
\label{eq:power-telescope}
\end{equation}
Thus each coordinate of $\mathcal R_{21}$ is a finite sum of
bilinear forms involving $(X_{jt},\varepsilon_t)$ and at least one
factor $\Delta_W$. It follows that
\[
\|\mathcal R_{21}\|_2
=
O_p(\mathcal R_{nT}^*).
\]

Moreover,
\[
\varepsilon_t(\theta_0,\widehat W)-\varepsilon_t
=
-\Delta_WU_t.
\]
Thus each coordinate of $\mathcal R_{22}$ involves the pair
$(X_{jt},U_t)$ and at least two recovery-error factors. One factor
is paired with the data bilinear form, while every remaining factor
is bounded using
\[
\|\Delta_W\|_2=O_p(\rho_n).
\]
These additional factors are absorbed into $\kappa_n^{d+1}$.
Therefore,
\[
\|\mathcal R_{22}\|_2
=
O_p(\mathcal R_{nT}^*).
\]
Combining these bounds yields
\begin{equation}
\|\mathcal R_2\|_2
=
O_p(\mathcal R_{nT}^*).
\label{eq:R2-denoised}
\end{equation}
Equations~\eqref{eq:R1-denoised} and
\eqref{eq:R2-denoised} give
\[
\|\bar g_{nT}(\theta_0,\widehat W)
-\bar g_{nT}(\theta_0,W_0)\|_2
=
O_p(\mathcal R_{nT}^*).
\]

\medskip
\noindent\emph{Gradient difference.}
As in part~(a), decompose
\begin{align*}
\|\widehat G(\widehat W)-\widehat G(W_0)\|_2
&\lesssim
\left\|
\frac1{nT}\sum_{t=1}^T
\left[
Z_t(\widehat W)-Z_t(W_0)
\right]^\top X_t
\right\|_2\\
&\quad+
\left\|
\frac1{nT}\sum_{t=1}^T
\left[
Z_t(\widehat W)^\top\widehat W
-Z_t(W_0)^\top W_0
\right]Y_t
\right\|_2\\
&\quad+
\left\|
\frac1{nT}\sum_{t=1}^T
\left[
Z_t(\widehat W)^\top\widehat W
-Z_t(W_0)^\top W_0
\right]X_t
\right\|_2\\
&=:
\|\mathcal R_{n1}\|_2
+\|\mathcal R_{n2}\|_2
+\|\mathcal R_{n3}\|_2.
\end{align*}

For an instrument column of the form $W^lX_{jt}$, define
\[
A_l
:=
\widehat W^l-W_0^l,
\qquad
D_l
:=
(\widehat W^l)^\top\widehat W
-(W_0^l)^\top W_0.
\]
By \eqref{eq:power-telescope}, $A_l$ is a finite sum of terms
containing at least one factor $\Delta_W$. Moreover,
\[
D_l
=
A_l^\top\widehat W
+
(W_0^l)^\top\Delta_W.
\]
Therefore, every term in $D_l$ also contains at least one factor
$\Delta_W$.

Each coordinate of $\mathcal R_{n1}$ has the form
\[
\frac1{nT}\sum_{t=1}^T
X_{jt}^\top A_l^\top X_{kt},
\]
and hence involves the pair $(X_{jt},X_{kt})$. Therefore,
\[
\|\mathcal R_{n1}\|_2
=
O_p(\mathcal R_{nT}^*).
\]

Each coordinate of $\mathcal R_{n2}$ has the form
\[
\frac1{nT}\sum_{t=1}^T
X_{jt}^\top D_lY_t,
\]
and hence involves the pair $(X_{jt},Y_t)$. Thus,
\[
\|\mathcal R_{n2}\|_2
=
O_p(\mathcal R_{nT}^*).
\]

Finally, each coordinate of $\mathcal R_{n3}$ has the form
\[
\frac1{nT}\sum_{t=1}^T
X_{jt}^\top D_lX_{kt},
\]
which again involves the pair $(X_{jt},X_{kt})$. Hence,
\[
\|\mathcal R_{n3}\|_2
=
O_p(\mathcal R_{nT}^*).
\]

Combining the three bounds gives
\[
\|\widehat G(\widehat W)-\widehat G(W_0)\|_2
=
O_p(\mathcal R_{nT}^*).
\]
This completes the proof of part~(c).
\end{proof}

\begin{proof}[Proof of Theorem \ref{Thm:DK22}]
\textbf{Step 1}
    The proof consists of two steps. First, we show that when $W_0$ is correctly specified, we establish the large sample performance of $\widehat \theta_{p}(W_0)$. 
	In the first step, when $W_0$ is correctly specified, $\e_t(\theta_0, W_0)=\e_t$ and hence
   $$\bar{g}_{nT}(\theta_0, W_0)=\sum_{t=1}^T Z^{\top}_t(W_0)\e_t/(nT).$$

	Note that the gradient $$\frac{\partial \bar{g}_{nT}(\theta, W)}{\partial \theta\T} =  \frac{\partial\frac{1}{nT}\sum_{t=1}^T Z_t(W)\T(Y_t- \lambda W Y_t - X_t \beta  -WX_t \gamma)}{\partial \theta\T}
				=-\frac{1}{nT} \sum_{t=1}^T Z_t(W)\T (WY_t, X_t,WX_t),$$ with $Y_t=(I_n-\lambda_0 W_0)^{-1}( X_t \beta_0  +W_0X_t \gamma_0+\e_t)$.
                Define $\E[\widehat{G}(W)] = G(W)$.
                Recall that  $$\widehat{G}(W) =\frac{\partial \bar{g}_{nT}(\theta, W)}{\partial \theta^{\top}} .$$
                Recall $Q_{0t}=(W_0(I_n-\lambda_0 W_0)^{-1}(X_t\beta_0+W_0X_t \gamma_0), X_t,W_0X_t)$, and we have   
{\begin{eqnarray*}
					&&\frac{\partial \bar{g}_{nT}(\theta, W_0)}{\partial \theta\T}=-\frac{ \sum_{t=1}^T Z^{\top}_t(W_0)Q_{0t}}{nT}-\frac{ \sum_{t=1}^T Z^{\top}_t(W_0) (W_0(I_n-\lambda_0 W_0)^{-1}\e_t, {\bf 0}_{n\times (2K)})}{nT} \\&& \coloneqq \mathcal{R}_1+\mathcal{R}_2\label{trueW}
				\end{eqnarray*} }
        			
                    To prove $\mathcal{R}_1+\mathcal{R}_2\to -\Sigma_{Z_0Q_0}$ holds, 
			 Assumption \ref{Assump:Instruments} yields the first term $\mathcal{R}_1$ converges to  $-\Sigma_{Z_0Q_0}$.
                For the second term $\mathcal{R}_2$ we have $\frac{1}{nT}\sum_{t=1}^T Z^{\top}_t(W_0) (W_0(I_n-\lambda_0 W_0)^{-1}\e_t)$, 
                let us compute the variance,
                $$A_z = \frac{ \sigma_0^2\sum_{t=1}^T Z^{\top}_t(W_0) (W_0(I_n-\lambda_0 W_0)^{-1}(I_n-\lambda_0 W_0)^{-1\top}W_0^{\top}Z_t(W_0)}{n^2T^2}.$$
                Assume that the column sums \(\sum_{i=1}^n |W_{0,im}| \leq c_n.\)
                From Assumption \ref{weight} and its remark we have $\|S_0\|_1=O( \sqrt{n})$ , and moreover, from Assumption \ref{weight}, $\|L_0\|_2\leq \sqrt{m_r(L_0)m_c(L_0)}\|L_0\|_{\max}$ is bounded and  we have $\|L_0\|_1=O(\sqrt{n}\|L_0\|_2).$ So we can deduce that $\|W_0\|_1 = O(\sqrt{n})$ .
                Since we have 
                \begin{eqnarray*}
                 \|A_z\|_{\infty}&\leq& \frac{ \sigma_0^2\sum_{t=1}^T \|Z^{\top}_t(W_0) W_0(I_n-\lambda_0 W_0)^{-1}\|_{\infty}\|(I_n-\lambda_0 W_0)^{-1\top}W_0^{\top}Z_t(W_0)\|_{\infty}}{n^2T^2}   \\
                 &\leq &\frac{ \sigma_0^2\sum_{t=1}^T \|Z^{\top}_t(W_0)\|_{\infty} \|W_0(I_n-\lambda_0 W_0)^{-1}\|_{\infty}\|(I_n-\lambda_0 W_0)^{-1\top}W_0^{\top}Z_t(W_0)\|_{\infty}}{n^2T^2},
                \end{eqnarray*}
            for $d = 0,1,2$.
            
            {To fill in details since there are finite number of columns of $W_0$ to have unbounded column sums of order bounded by $c_n$, we have by Lemma \ref{propbound} with $m$ and $M$ being constants, $\|X^{\top}_tW_0^{l\top}\|_{\infty} \lesssim  \|{ X_t\T}\|_{\max}c_n+ \|{ X_t\T}\|_{\infty}\lesssim n$,
                 and $$\|(I_n-\lambda_0 W_0)^{-1\top}W_0^{\top}Z_t(W_0)\|_{\infty}\lesssim\|(I_n-\lambda_0 W_0)^{-1\top}W_0^{\top}\|_{\infty}\|Z_t(W_0)\|_{\infty}\lesssim c_n,$$
  $$\|A_z\|_{\infty}\leq \frac{ \sigma_0^2\sum_{t=1}^T \|Z^{\top}_t(W_0)\|_{\infty}\|(I_n-\lambda_0 W_0)^{-1\top}W_0^{\top}Z_t(W_0)\|_{\infty}}{n^2T^2}\lesssim \frac{n Tc_n}{n^2T^2} .$$

  Thus to ensure that $\|A_z\|_{\infty}\to 0$ we need to have $\frac{ c_n}{nT}\to 0$ to ensure $\mathcal{R}_2 \to_p 0$.
  It shall be noted that this is already ensured by our Assumptions \ref{weight}.}
 %Assumption \ref{Assump:True_Adjacency} also assumes $(I -\lambda W_{0,BB})^{-1}$ exists.
      It is negligible since its $2$-norm is of order 
$O_p(\sqrt{c_n/(nT)}) = O_p(1/(n^{1/4}\sqrt T)) = o_p(1)$. \\
				By the linearity of the moment condition, we have  
				$$\bar{g}_{nT}(\widehat\theta_p(W_0), W_0)=\bar{g}_{nT}(\theta_{0}, W_0)+ \widehat{G}(W_0) ( \widehat\theta_p(W_0)-\theta_0).$$
               
                Hence the first order condition yields that 
                \begin{eqnarray*}
               && {\bf 0}=\widehat{G}(W_0)^{\top}
               \widehat{\Lambda}_{nT}^{-1}  \bar{g}_{nT}(\widehat\theta_p(W_0), W_0)
           =\widehat{G}(W_0)^{\top} \widehat{\Lambda}_{nT}^{-1} \bar{g}_{nT}(\theta_0, W_0)
           +\widehat{G}(W_0)^{\top} \widehat{\Lambda}_{nT}^{-1}  \widehat{G}(W_0)( \widehat\theta_p(W_0)-\theta_0),
                \end{eqnarray*}
or equivalently, \begin{eqnarray*}
        &&  \widehat{G}(W_0)^{\top} \widehat{\Lambda}_{nT}^{-1}   \widehat{G}(W_0)( \widehat\theta_p(W_0)-\theta_0)= -\widehat{G}(W_0)^{\top} \widehat{\Lambda}_{nT}^{-1}  \bar{g}_{nT}(\theta_0, W_0).
                \end{eqnarray*}
                By Assumption \ref{Assump:Instruments} and \ref{Assump:Moments_Weight}, we conclude that $\widehat{G}(W_0)^{\top}\widehat{\Lambda}_{nT}^{-1} \to_{p} -\Sigma^{\top}_{Z_0Q_0} \Lambda^{-1},$ and 
				$$\widehat{G}(W_0)^{\top}\widehat{\Lambda}_{nT}^{-1} \widehat{G}(W_0)\to_{p} \Sigma^{\top}_{Z_0Q_0} \Lambda^{-1} \Sigma_{Z_0Q_0},$$ 
                which is nonsingular as Assumption \ref{Assump:Instruments} assumes $\Sigma_{Z_0Q_0}$ is of full column rank and Assumption \ref{Assump:Moments_Weight} assumes $\Lambda$ is a positive definite matrix. %Hence $\Sigma^{\top}_{Z_0Q_0}\Lambda^{-1} \Sigma_{Z_0Q_0}$ is nonsingular.

We shall first prove the consistency. For convenience we introduce the vector $O_p(.)$ notation. Let $a_n$ be a positive sequence. A finite dimension vector $A_n = O_p(a_n)$ means that $\|A_n/a_n\| = O_p(1)$ for any norm $\|.\|$.
Thus       \begin{eqnarray*}
        &&  \widehat{G}(W_0)^{\top} \widehat{\Lambda}_{nT}^{-1}   \widehat{G}(W_0)( \widehat\theta_p(W_0)-\theta_0)= -\widehat{G}(W_0)^{\top} \widehat{\Lambda}_{nT}^{-1}  \bar{g}_{nT}(\theta_0, W_0),
                \end{eqnarray*}
{  \begin{eqnarray*}
            && [\Sigma^{\top}_{Z_0Q_0}\Lambda^{-1} \Sigma_{Z_0Q_0}]( \widehat\theta_p(W_0)-\theta_0)+ o_p(\|( \widehat\theta_p(W_0)-\theta_0)\|_2)= -\widehat{G}(W_0)^{\top} \widehat{\Lambda}_{nT}^{-1}  \bar{g}_{nT}(\theta_0, W_0).
                \end{eqnarray*}

                By Assumptions \ref{Assump:Instruments} and \ref{Assump:Moments_Weight}, we have that 
    \begin{eqnarray*}
     -\widehat{G}(W_0)^{\top} \widehat{\Lambda}_{nT}^{-1}  \bar{g}_{nT}(\theta_0, W_0) =  -{G}(W_0)^{\top} {\Lambda}^{-1} \bar{g}_{nT}(\theta_0, W_0)+O_p(\|\bar{g}_{nT}(\theta_0, W_0)\|_2).
                \end{eqnarray*}
By Lemma \ref{rateG} and the above derivation,}  we have that , 
    \begin{eqnarray*}
     \|-\widehat{G}(W_0)^{\top} \widehat{\Lambda}_{nT}^{-1}  \bar{g}_{nT}(\theta_0, W_0)\|_2 =  O_p(\frac{1}{\sqrt{nT}}).
                \end{eqnarray*}
Thus the consistency follows, $\| \widehat\theta_p(W_0)-\theta_0 \|_2= o_p(1)$.

%{\color{blue} Hence 
%$\widehat\theta_p(W_0)-\theta_0=[\widehat{G}(W_0)^{\top} \widehat{\Lambda}_{nT}^{-1}   \widehat{G}(W_0)]^{-1} [-\widehat{G}(W_0)^{\top} \widehat{\Lambda}_{nT}^{-1}  \bar g_{nT}(\theta_0,W_0)]$, which has the same probability limit as $ [\Sigma^{\top}_{Z_0Q_0} \Lambda^{-1} \Sigma_{Z_0Q_0}]^{-1}\Sigma^{\top}_{Z_0Q_0} \Lambda^{-1} \bar g_{nT}(\theta_0,W_0)$. By Assumption \ref{Assump:Instruments}, $\|\bar g_{nT}(\theta_0, W_0)\|_2=O_p(1/\sqrt{nT})$. Therefore, we conclude that the consistency follows as $\|( \widehat\theta_p(W_0)-\theta_0) \|_2=O_p(1/\sqrt{nT})$.
%}

Furthermore, the variance matrix $\var(\sqrt{nT} \bar{g}_{nT}(\theta_0, W_0))=\var(\sum_{t=1}^T Z^{\top}_t(W_0)\e_t/\sqrt{nT})$ asymptotically converges to $\sigma_0^2 \Sigma_{Z_0Z_0}$, i.e.
	\begin{align}
		\label{thm2clt}
		\sqrt{nT}\sigma_0^{-1} \Sigma_{Z_0Z_0}^{-1/2} \bar{g}_{nT}(\theta_0, W_0) \overset{d}{\to} \N({\bf 0}, I_m ).
	\end{align}  
{
We shall try to understand the asymptotic of the above term by apply Lemma \ref{feller}.

Compute the covariance to ensure finite variances:
\[
\text{Cov}(Z^{\top}_t(W_0)\e_t/\sqrt{nT}) = \E\left[ \frac{1}{nT} Z_t^\top(W_0) \e_t \e_t^\top Z_t(W_0) \right].
\]
Since by Assumption \ref{Assump:Spatial_Errors} the elements of \( \e_t \) are i.i.d. with \( \E[\e_t \e_t^\top] = \sigma^2_0 I_n \):
\[
\text{Cov}(Z^{\top}_t(W_0)\e_t/\sqrt{nT}) =\frac{\sigma^2_0}{nT} \E[Z_t^\top(W_0) Z_t(W_0)].
\]

{Without loss of generality, we prove the result assuming no collinearity (i.e., the vectors are linearly independent). $m = 3K$. The proof can be extended to cases with collinearity.}
Compute \( Z_t^\top(W_0) Z_t(W_0) \):
\[
Z_t(W_0) = (X_t, W_0 X_t, W_0^2 X_t),
\]
\[
Z_t^\top(W_0) Z_t(W_0) = \begin{pmatrix}
X_t^\top X_t & X_t^\top W_0 X_t & X_t^\top W_0^2 X_t \\
(W_0 X_t)^\top X_t & (W_0 X_t)^\top W_0 X_t & (W_0 X_t)^\top W_0^2 X_t \\
(W_0^2 X_t)^\top X_t & (W_0^2 X_t)^\top W_0 X_t & (W_0^2 X_t)^\top W_0^2 X_t
\end{pmatrix}.
\]

Since by our Assumption \ref{Assump:Instruments}, \(\frac{1}{nT} \sum_{t=1}^T\E[Z_t^\top(W_0) Z_t(W_0)] \) converge to $\Sigma_{Z_0Z_0}$. Thus condition i) in Lemma \ref{feller}  is verified.

To verify the Lindeberg condition (condition ii) ) for Lemma \ref{feller} applied to \(\bar{g}_{nT}(\theta_0, W_0) = \frac{1}{nT} \sum_{t=1}^T Z_t^\top(W_0) \varepsilon_t\). Without loss of generality, we consider $2+\delta$ moment with $\delta = 1$. We thus compute the third moment of \( Y_{n,it} = \frac{Z_{t,i}(W_0) \varepsilon_{it}}{\sqrt{nT}} \), a scalar component of \(\frac{Z_{t,ij}(W_0) \varepsilon_{it}}{\sqrt{nT}}\), with \( Z_{t,i}(W_0) \) the \( i \)-th row of \( Z_t(W_0) = (X_t, W_0 X_t, W_0^2 X_t) \), \( X_t \) an \( n \times K \) matrix, and \( \varepsilon_{it} \sim (0, \sigma^2_0) \) with \( \E[|\varepsilon_{it}|^3] = \mu_3 < \infty \). The matrix \( W_0 \) has row sums {\(\sum_{m=1}^n |W_{0,im}| \leq 1\)} and column sums \(\sum_{i=1}^n |W_{0,im}| \leq c_n\). However since the maximum absolute elements of $W_{0}$ is $1$. For component \( j \), we calculate \( \E[|Y_{n,i,t}|^3] = \frac{\E[|Z_{t,ij}(W_0)|^3] \mu_3}{(nT)^{3/2}} \). 

For the first block (\( j = 1, \dots, K \)), \( Z_{t,ij}(W_0) = X_{t,ij} \), so \( \E[|(Z_{t,ij}(W_0))|^3] = O(1) \). 

For the second block (\( j =K+p= K+1, \dots, 2K \)), \( (Z_t(W_0))_{i, K+p} = \sum_{m=1}^n W_{0,im} X_{t,mp} \), and row sums \(\leq 1\) give \( \E[|(Z_t(W_0))_{i, K+p}|^3] \leq \mbox{max}_{m}[|X_{t,mp}|^3] = O(1).\)

For the third block (\( j =2K+p= 2K+1, \dots, 3K \)), \( (Z_t(W_0))_{i, 2K+p} = \sum_{m=1}^n W_{0,im}^2 X_{t,mp} \), with only for $m$th column of $W_0$ with infinite sums we need to enforce the following bound.  $$ |W_{0,im}^2| \leq \sum_{r=1}^n|W_{0,ir}||W_{0,rm}|\leq \sum_{r=1}^n|W_{0,ir}|\leq 1.$$
Then use the fact that  $$ |\sum_{m=1}^n (W_{0,im}^2) X_{t,mp} | \leq \sum_{r=1}^n|W_{0,ir}| \max_{1\leq r\leq n} \sum_{m=1}^n|W_{0,rm}| \max_{1\leq m\leq n,1\leq p\leq K}|X_{t,mp} |.$$
So \( \E[|(Z_t(W_0))_{i, 2K+p}|^3] \leq  \E[|(X_t)_{mp}|^3] = O(1) \). 
Thus, \( \E[|Y_{n,i,t}|^3] = O\left( \frac{1}{n^{3/2} T^{3/2}} \right) \), and the Lindeberg term is \( \sum_{t=1}^T\sum_{i=1}^{n}\E[|Y_{n,i,t}|^3]  = O\left(  \frac{1}{(nT)^{1/2}} \right) = o\left( 1 \right) \). Hence we do not need any further restrictions on $c_n$ for verifying this condition. }

            Define the variance covariance matrix $$\Sigma_{GMM}= (\Sigma_{Z_0Q_0}^{\top}\Lambda^{-1} \Sigma_{Z_0Q_0})^{-1} \Sigma_{Z_0Q_0}^{\top}\Lambda^{-1}(\sigma_0^2\Sigma_{Z_0Z_0}) \Lambda^{-1} \Sigma_{Z_0Q_0}(\Sigma_{Z_0Q_0}^{\top}\Lambda^{-1} \Sigma_{Z_0Q_0})^{-1}$$
                Together with equation (\ref{thm2clt}), we have
   
			\begin{eqnarray*}
\sqrt{nT}\,\Sigma_{GMM}^{-1/2}(\widehat\theta_p(W_0)-\theta_0)
&=& -\Sigma_{GMM}^{-1/2}\big(G(W_0)^\top\Lambda^{-1}G(W_0)\big)^{-1}
G(W_0)^\top\Lambda^{-1}\sqrt{nT}\,\bar g_{nT}(\theta_0,W_0) + o_p(1)\\
&\overset{d}{\to}& \N(\mathbf 0, I_{2K+1}).
\end{eqnarray*}

\textbf{Step 2}

Second, we study the deviation between
$\widehat\theta_p(\widehat W)$ and $\widehat\theta_p(W_0)$ to
establish the large-sample performance of
$\widehat\theta_p(\widehat W)$. By the linearity of the sample
moments,
\[
\widehat\theta_p(\widehat W)-\theta_0
=
-\left(
\widehat G(\widehat W)^\top
\widehat\Lambda_{nT}^{-1}
\widehat G(\widehat W)
\right)^{-1}
\widehat G(\widehat W)^\top
\widehat\Lambda_{nT}^{-1}
\bar g_{nT}(\theta_0,\widehat W).
\]

Decompose
\[
\bigl(\widehat\theta_p(\widehat W)-\theta_0\bigr)
-
\bigl(\widehat\theta_p(W_0)-\theta_0\bigr)
=
\mathcal R_{1,nT}
+
\mathcal R_{2,nT}
+
\mathcal R_{3,nT},
\]
where
\begin{align*}
\mathcal R_{1,nT}
&=
\Big[
-\left(
\widehat G(\widehat W)^\top
\widehat\Lambda_{nT}^{-1}
\widehat G(\widehat W)
\right)^{-1}
+
\left(
\widehat G(W_0)^\top
\widehat\Lambda_{nT}^{-1}
\widehat G(W_0)
\right)^{-1}
\Big]\\
&\qquad\times
\widehat G(\widehat W)^\top
\widehat\Lambda_{nT}^{-1}
\bar g_{nT}(\theta_0,\widehat W),
\\[1ex]
\mathcal R_{2,nT}
&=
\left(
\widehat G(W_0)^\top
\widehat\Lambda_{nT}^{-1}
\widehat G(W_0)
\right)^{-1}\\
&\qquad\times
\Big[
-\widehat G(\widehat W)^\top
+\widehat G(W_0)^\top
\Big]
\widehat\Lambda_{nT}^{-1}
\bar g_{nT}(\theta_0,\widehat W),
\\[1ex]
\mathcal R_{3,nT}
&=
\left(
\widehat G(W_0)^\top
\widehat\Lambda_{nT}^{-1}
\widehat G(W_0)
\right)^{-1}
\widehat G(W_0)^\top
\widehat\Lambda_{nT}^{-1}\\
&\qquad\times
\Big[
\bar g_{nT}(\theta_0,W_0)
-
\bar g_{nT}(\theta_0,\widehat W)
\Big].
\end{align*}

We apply Lemma~\ref{rateG}(c) to bound each remainder. Recall
$\mathcal R_{nT}^*$ from~\eqref{eq:RstarnT}. Lemma~\ref{rateG}(c)
gives
\begin{align}
\|\widehat G(\widehat W)-\widehat G(W_0)\|_2
&=
O_p(\mathcal R_{nT}^*),
\label{eq:DK22-G-rate}\\
\|\bar g_{nT}(\theta_0,\widehat W)
-\bar g_{nT}(\theta_0,W_0)\|_2
&=
O_p(\mathcal R_{nT}^*).
\label{eq:DK22-g-rate}
\end{align}

For $\mathcal R_{3,nT}$, since
\[
\left\|
\left(
\widehat G(W_0)^\top
\widehat\Lambda_{nT}^{-1}
\widehat G(W_0)
\right)^{-1}
\right\|_2
=
O_p(1)
\]
and
\[
\left\|
\widehat G(W_0)^\top
\widehat\Lambda_{nT}^{-1}
\right\|_2
=
O_p(1),
\]
Equation~\eqref{eq:DK22-g-rate} gives
\[
\|\mathcal R_{3,nT}\|_2
=
O_p(\mathcal R_{nT}^*).
\]

For $\mathcal R_{2,nT}$, Step 1 and
Equation~\eqref{eq:DK22-g-rate} give
\[
\|\bar g_{nT}(\theta_0,\widehat W)\|_2
\leq
\|\bar g_{nT}(\theta_0,W_0)\|_2
+
\|\bar g_{nT}(\theta_0,\widehat W)
-\bar g_{nT}(\theta_0,W_0)\|_2
=
O_p\left(
\frac1{\sqrt{nT}}+\mathcal R_{nT}^*
\right).
\]
Together with Equation~\eqref{eq:DK22-G-rate}, this yields
\[
\|\mathcal R_{2,nT}\|_2
=
O_p\left[
\mathcal R_{nT}^*
\left(
\frac1{\sqrt{nT}}+\mathcal R_{nT}^*
\right)
\right]
=
o_p(\mathcal R_{nT}^*),
\]
provided $\mathcal R_{nT}^*=o(1)$.

For $\mathcal R_{1,nT}$, define
\[
\widehat A_{nT}
=
\widehat G(\widehat W)^\top
\widehat\Lambda_{nT}^{-1}
\widehat G(\widehat W),
\qquad
A_{0,nT}
=
\widehat G(W_0)^\top
\widehat\Lambda_{nT}^{-1}
\widehat G(W_0).
\]
Equation~\eqref{eq:DK22-G-rate} implies
\[
\|\widehat A_{nT}-A_{0,nT}\|_2
=
O_p(\mathcal R_{nT}^*).
\]
Using the inverse identity
\[
\widehat A_{nT}^{-1}-A_{0,nT}^{-1}
=
\widehat A_{nT}^{-1}
(A_{0,nT}-\widehat A_{nT})
A_{0,nT}^{-1},
\]
together with
\[
\|\widehat A_{nT}^{-1}\|_2
+
\|A_{0,nT}^{-1}\|_2
=
O_p(1),
\]
we obtain
\[
\|\widehat A_{nT}^{-1}-A_{0,nT}^{-1}\|_2
=
O_p(\mathcal R_{nT}^*).
\]
Therefore,
\[
\|\mathcal R_{1,nT}\|_2
=
O_p\left[
\mathcal R_{nT}^*
\left(
\frac1{\sqrt{nT}}+\mathcal R_{nT}^*
\right)
\right]
=
o_p(\mathcal R_{nT}^*),
\]
provided $\mathcal R_{nT}^*=o(1)$.

Combining the three terms and recalling from Step 1 that
\[
\|\widehat\theta_p(W_0)-\theta_0\|_2
=
O_p\left(\frac1{\sqrt{nT}}\right),
\]
we obtain
\begin{align*}
\|\widehat\theta_p(\widehat W)-\theta_0\|_2
&\leq
\|\widehat\theta_p(W_0)-\theta_0\|_2
+
\|\mathcal R_{1,nT}\|_2
+
\|\mathcal R_{2,nT}\|_2
+
\|\mathcal R_{3,nT}\|_2\\
&=
O_p\left(
\frac1{\sqrt{nT}}+\mathcal R_{nT}^*
\right).
\end{align*}

By the definition in~\eqref{eq:RstarnT},
\[
\mathcal R_{nT}^*
=
(1+\|B_0\|_2)\kappa_n^{d+1}
\frac{(r\vee1)\rho_n+s_n}{n},
\]
where
\[
\rho_n=w_{2n}+\mathcal R_s,
\qquad
s_n=\mathcal R_s\sqrt{m_s(S_0)},
\qquad
\kappa_n=1+\|W_0\|_2+\rho_n.
\]
Thus the preceding display retains all spectral factors without
requiring them to be uniformly bounded.

Under the boundedness conditions in
Theorem~\ref{Thm:DK22}, and when $\rho_n=O(1)$, we have
$\kappa_n=O(1)$ and hence
\[
\mathcal R_{nT}^*
=
O\left(
\frac{
(r\vee1)(w_{2n}+\mathcal R_s)
+
\mathcal R_s\sqrt{m_s(S_0)}
}{n}
\right).
\]

\noindent\textbf{Rate comparison.}
In contrast, using the noisy matrix $W=W_0+E$ directly without
denoising, Lemma~\ref{rateG}(a) gives
\[
\|\widehat\theta_p(W)-\theta_0\|_2
=
O_p\left(
\frac1{\sqrt{nT}}+w_{2n}+w_{2n}^2
\right).
\]
Therefore, the denoised estimator has a strictly smaller plug-in
error whenever
\[
\mathcal R_{nT}^*
=
o(w_{2n}+w_{2n}^2).
\]
Equivalently,
\[
(1+\|B_0\|_2)\kappa_n^{d+1}
\bigl\{
(r\vee1)\rho_n+s_n
\bigr\}
=
o\left(
n(w_{2n}+w_{2n}^2)
\right).
\]

Finally, if
\[
\mathcal R_{nT}^*
=
o\left(\frac1{\sqrt{nT}}\right),
\]
then
\[
\sqrt{nT}
\left(
\widehat\theta_p(\widehat W)
-
\widehat\theta_p(W_0)
\right)
=
o_p(1).
\]
Combining this result with Step 1 and Slutsky's theorem gives
\[
\sqrt{nT}\,
\Sigma_{GMM}^{-1/2}
\left(
\widehat\theta_p(\widehat W)-\theta_0
\right)
\overset{d}{\longrightarrow}
\mathcal N(\mathbf 0,I_{2K+1}).
\]

\end{proof}

\subsection{Proof of Proposition \ref{lemmabias}.}
\vspace{-0.3cm}
\begin{proof}
Recall that 
\[
\bar g_{nT}(\theta,W,M)
=
\frac{1}{nT}\sum_{t=1}^T Z_t(M)^\top
\left(
Y_t-\lambda WY_t-X_t\beta-WX_t\gamma
\right).
\]
and $Z_t(M)$ is of the form $M^\ell X_t$ for $\ell = 0, 1, 2 $. Moreover, $\mathbb{E}\{\widehat G(M)\} = G(M)$. Define 
$\widehat\theta_p(\widehat W, M)$ as the plug-in estimator using 
$Z_t(M)$:
\begin{equation}
\widehat\theta_p(\widehat W, M) = 
\Big[\widetilde X(\widehat W)^\top Z(M)\widehat\Lambda_{nT}^{-1}
Z(M)^\top \widetilde X(\widehat W)\Big]^{-1}
\widetilde X(\widehat W)^\top Z(M)\widehat\Lambda_{nT}^{-1}
Z(M)^\top Y.
\end{equation}
We compare with the estimator using $W$:
\[
\widehat\theta_p(W, M) = \widehat\theta_{GMM} = 
\Big[\widetilde X(W)^\top Z(M)\widehat\Lambda_{nT}^{-1}
Z(M)^\top \widetilde X(W)\Big]^{-1}
\widetilde X(W)^\top Z(M)\widehat\Lambda_{nT}^{-1}
Z(M)^\top Y.
\]
By the first-order condition for $\widehat\theta_p(\widehat W, M)$ and 
linearity of the moment function in $\theta$, we have the exact 
expansion
\[
\widehat\theta_p(\widehat W, M) - \theta_0 
= -\big[\widehat G(\widehat W, M)^\top \widehat\Lambda_{nT}^{-1} 
\widehat G(\widehat W, M)\big]^{-1} 
\widehat G(\widehat W, M)^\top \widehat\Lambda_{nT}^{-1}\,
\bar g_{nT}(\theta_0, \widehat W, M).
\]
Under Assumptions~\ref{weight}--\ref{Assump:Moments_Weight}, 
$\widehat G(\widehat W, M) \to_p -\Sigma_{Z_0Q_0}(M)$ and 
$\widehat\Lambda_{nT} \to_p \Lambda$, so that
\[
\widehat\theta_p(\widehat W, M) - \theta_0 
= [\Sigma_{Z_0Q_0}(M)^\top \Lambda^{-1} \Sigma_{Z_0Q_0}(M)]^{-1} 
\Sigma_{Z_0Q_0}(M)^\top \Lambda^{-1}\,
\bar g_{nT}(\theta_0, \widehat W, M)\,(1 + o_p(1)).
\]
The deviation from the limiting form is $o_p$ relative to 
$\|\bar g_{nT}(\theta_0, \widehat W, M)\|_2$, so the rate of 
$\|\widehat\theta_p(\widehat W, M) - \theta_0\|_2$ matches that of 
$\|\bar g_{nT}(\theta_0, \widehat W, M)\|_2$ up to multiplicative 
constants. Let $B_0 := (I_n - \lambda_0 W_0)^{-1}$ 
and 
$\mathcal{R}_s := (w_{\max n} + 1/\sqrt{m_r(L_0)m_c(L_0)})\sqrt{m_s(S_0)}$.

\textbf{Rate of the GMM estimator (using $W$).}
Let $Z_{jt}$ be the $j$th column of the $n \times m$ matrix $Z_t(M)$. 
The residual at $\theta_0$ when using $W$ is
$\varepsilon_t(\theta_0, W) = \varepsilon_t - \lambda_0 E Y_t - E X_t \gamma_0$,
where $E = W - W_0$. Substituting 
$Y_t = B_0(X_t\beta_0 + W_0 X_t\gamma_0 + \varepsilon_t)$, the $j$th 
element of $\bar g_{nT}(\theta_0, W, M)$ is
\begin{align*}
&\frac{1}{nT}\sum_{t=1}^T Z^\top_{jt}(M)
[\varepsilon_t - \lambda_0 E Y_t - E X_t\gamma_0]\\
&\quad = \frac{1}{nT}\sum_{t=1}^T Z^\top_{jt}(M)\varepsilon_t 
- \frac{\lambda_0}{nT}\sum_{t=1}^T Z^\top_{jt}(M) E B_0 (X_t\beta_0 + W_0 X_t\gamma_0) \\
&\qquad - \frac{1}{nT}\sum_{t=1}^T Z^\top_{jt}(M) E X_t\gamma_0
- \frac{\lambda_0}{nT}\sum_{t=1}^T Z^\top_{jt}(M) E B_0 \varepsilon_t\\
&\quad =: O_p(1/\sqrt{nT}) + \mathcal{R}_{1,j} + \mathcal{R}_{3,j} + \mathcal{R}_{2,j}.
\end{align*}

For each $i$, let
\[
E_{i,-i}
=
(E_{i1},\ldots,E_{i,i-1},E_{i,i+1},\ldots,E_{in})^\top
\in\mathbb R^{n-1}
\]
denote the vector of off-diagonal entries in row $i$ of $E$. Let
\[
\widetilde\Sigma_E=\sigma_E^2\Sigma_E
\]
and define
\[
\delta_n
=
\left(n^{-2s}\widetilde\Sigma_E\right)^{-1}
n^{-s}\sigma_{E\varepsilon},
\]
where
\[
\sigma_{E\varepsilon}
=
\sigma_0\sigma_E\rho_{\varepsilon E}^\top
\in\mathbb R^{n-1}.
\]
By Assumption~\ref{Assump:E_noise_endogenous} and joint normality,
\[
\mathbb E[\varepsilon_{it}\mid E]
=
E_{i,-i}^\top\delta_n.
\]
Define
\[
\mu_E
=
\left(
E_{1,-1}^\top\delta_n,\ldots,
E_{n,-n}^\top\delta_n
\right)^\top.
\]
Then
\[
\varepsilon_t=\mu_E+\xi_t,
\]
where $\xi_t$ is independent of $E$ and has conditional mean zero.
Under the uniform eigenvalue bounds on $\Sigma_E$,
\[
\|\mu_E\|_2
=
O_p\!\left(
\sqrt n\,\|\rho_{\varepsilon E}\|_2
\right).
\]
Decompose
\begin{align*}
\mathcal R_{2,j}
&=
-\frac{\lambda_0}{nT}
\sum_{t=1}^T
Z_{jt}(M)^\top E B_0\mu_E\\
&\quad
-\frac{\lambda_0}{nT}
\sum_{t=1}^T
Z_{jt}(M)^\top E B_0\xi_t.
\end{align*}
The first term is the endogeneity-bias component, whereas the second
is conditionally centered. Using the bounds in
Assumption~\ref{Assump:E_noise_endogenous}, we obtain
\[
\max_j|\mathcal R_{2,j}|
\lesssim_p
n^{1/2-s}\|\rho_{\varepsilon E}\|_2
+
\|B_0\|_2w_{2n}.
\]

Then
\begin{align*}
\max_j |\mathcal{R}_{1,j}| 
&\leq |\lambda_0|\frac{1}{nT}\sum_t \|Z_{jt}(M)\|_2 \|B_0\|_2
\|X_t\beta_0 + W_0 X_t\gamma_0\|_2 \|E\|_2 \\
&\lesssim_p \|B_0\|_2(1+\|W_0\|_2)\,w_{2n},\\
\max_j |\mathcal{R}_{3,j}| 
&\leq \frac{1}{nT}\sum_t \|Z_{jt}(M)\|_2 \|E\|_2 \|X_t\gamma_0\|_2 
\lesssim_p w_{2n},
\end{align*}

Hence 
\[
\|\bar g_{nT}(\theta_0, W, M)\|_2 = O_p\big(n^{1/2-s}\|\rho_{\varepsilon E}\|_2 
+ w_{2n}\|B_0\|_2(1+\|W_0\|_2)\, + 1/\sqrt{nT}\big),
\]
yielding
\[
\|\widehat\theta_{GMM}(W, M) - \theta_0\|_2 
\lesssim_p n^{1/2-s}\|\rho_{\varepsilon E}\|_2 
+ \|B_0\|_2(1+\|W_0\|_2)\,w_{2n} + 1/\sqrt{nT},
\]
which under the boundedness assumption $\|B_0\|_2(1+\|W_0\|_2) = O(1)$ 
gives the rate~\eqref{prop1_gmm}.

\medskip
\noindent\textbf{Step 2: Effect of replacing $W_0$ by $\widehat W$.}

We now study the deviation between
$\widehat\theta_p(\widehat W)$ and $\widehat\theta_p(W_0)$.
Because the sample moment is linear in $\theta$, the first-order
condition gives the exact representation
\[
\widehat\theta_p(\widehat W)-\theta_0
=
-\left[
\widehat G(\widehat W)^\top
\widehat\Lambda_{nT}^{-1}
\widehat G(\widehat W)
\right]^{-1}
\widehat G(\widehat W)^\top
\widehat\Lambda_{nT}^{-1}
\bar g_{nT}(\theta_0,\widehat W).
\]
Similarly,
\[
\widehat\theta_p(W_0)-\theta_0
=
-\left[
\widehat G(W_0)^\top
\widehat\Lambda_{nT}^{-1}
\widehat G(W_0)
\right]^{-1}
\widehat G(W_0)^\top
\widehat\Lambda_{nT}^{-1}
\bar g_{nT}(\theta_0,W_0).
\]

Define
\[
\widehat A_{nT}
:=
\widehat G(\widehat W)^\top
\widehat\Lambda_{nT}^{-1}
\widehat G(\widehat W),
\qquad
A_{0,nT}
:=
\widehat G(W_0)^\top
\widehat\Lambda_{nT}^{-1}
\widehat G(W_0).
\]
Then
\[
\bigl\{\widehat\theta_p(\widehat W)-\theta_0\bigr\}
-
\bigl\{\widehat\theta_p(W_0)-\theta_0\bigr\}
=
\mathcal R_{1,nT}
+
\mathcal R_{2,nT}
+
\mathcal R_{3,nT},
\]
where
\begin{align*}
\mathcal R_{1,nT}
&=
\left(
-\widehat A_{nT}^{-1}
+
A_{0,nT}^{-1}
\right)
\widehat G(\widehat W)^\top
\widehat\Lambda_{nT}^{-1}
\bar g_{nT}(\theta_0,\widehat W),
\\[1ex]
\mathcal R_{2,nT}
&=
A_{0,nT}^{-1}
\left[
-\widehat G(\widehat W)^\top
+\widehat G(W_0)^\top
\right]
\widehat\Lambda_{nT}^{-1}
\bar g_{nT}(\theta_0,\widehat W),
\\[1ex]
\mathcal R_{3,nT}
&=
A_{0,nT}^{-1}
\widehat G(W_0)^\top
\widehat\Lambda_{nT}^{-1}
\left[
\bar g_{nT}(\theta_0,W_0)
-\bar g_{nT}(\theta_0,\widehat W)
\right].
\end{align*}

Recall the rate defined in~\eqref{eq:RstarnT}:
\[
\mathcal R_{nT}^*
=
(1+\|B_0\|_2)\kappa_n^{d+1}
\frac{(r\vee1)\omega_n+s_n}{n},
\]
where
\[
\omega_n
=
w_{2n}+\mathcal R_s,
\qquad
s_n
=
\mathcal R_s\sqrt{m_s(S_0)},
\qquad
\kappa_n
=
1+\|W_0\|_2+\omega_n.
\]
Lemma~\ref{rateG}(c) gives
\begin{align}
\|\widehat G(\widehat W)-\widehat G(W_0)\|_2
&=
O_p(\mathcal R_{nT}^*),
\label{eq:DK22-G-rate}\\
\|\bar g_{nT}(\theta_0,\widehat W)
-\bar g_{nT}(\theta_0,W_0)\|_2
&=
O_p(\mathcal R_{nT}^*).
\label{eq:DK22-g-rate}
\end{align}

By Step~1,
\[
\|\bar g_{nT}(\theta_0,W_0)\|_2
=
O_p\left(\frac1{\sqrt{nT}}\right),
\qquad
\|\widehat G(W_0)\|_2
=
O_p(1),
\]
and
\[
A_{0,nT}
\to_p
G_0^\top\Lambda^{-1}G_0.
\]
Since $G_0$ has full column rank and $\Lambda^{-1}$ is positive
definite,
\[
\|A_{0,nT}^{-1}\|_2
=
O_p(1).
\]

Moreover, Equation~\eqref{eq:DK22-G-rate} gives
\[
\|\widehat A_{nT}-A_{0,nT}\|_2
=
O_p(\mathcal R_{nT}^*).
\]
If $\mathcal R_{nT}^*=o(1)$, Weyl's inequality implies
\[
\lambda_{\min}(\widehat A_{nT})
\geq
\lambda_{\min}(A_{0,nT})
-
\|\widehat A_{nT}-A_{0,nT}\|_2,
\]
and hence
\[
\|\widehat A_{nT}^{-1}\|_2
+
\|A_{0,nT}^{-1}\|_2
=
O_p(1).
\]
Using the inverse identity
\[
\widehat A_{nT}^{-1}-A_{0,nT}^{-1}
=
\widehat A_{nT}^{-1}
\left(
A_{0,nT}-\widehat A_{nT}
\right)
A_{0,nT}^{-1},
\]
we obtain
\begin{equation}
\|\widehat A_{nT}^{-1}-A_{0,nT}^{-1}\|_2
=
O_p(\mathcal R_{nT}^*).
\label{eq:DK22-inverse-rate}
\end{equation}

Next, Equation~\eqref{eq:DK22-g-rate} and Step~1 imply
\begin{align}
\|\bar g_{nT}(\theta_0,\widehat W)\|_2
&\leq
\|\bar g_{nT}(\theta_0,W_0)\|_2
+
\|\bar g_{nT}(\theta_0,\widehat W)
-\bar g_{nT}(\theta_0,W_0)\|_2 \notag\\
&=
O_p\left(
\frac1{\sqrt{nT}}+\mathcal R_{nT}^*
\right).
\label{eq:DK22-ghat-rate}
\end{align}

For $\mathcal R_{3,nT}$,
Equations~\eqref{eq:DK22-g-rate} and the boundedness of the
multiplying matrices give
\[
\|\mathcal R_{3,nT}\|_2
=
O_p(\mathcal R_{nT}^*).
\]

For $\mathcal R_{2,nT}$,
Equations~\eqref{eq:DK22-G-rate} and
\eqref{eq:DK22-ghat-rate} give
\begin{align*}
\|\mathcal R_{2,nT}\|_2
&=
O_p\left[
\mathcal R_{nT}^*
\left(
\frac1{\sqrt{nT}}+\mathcal R_{nT}^*
\right)
\right]\\
&=
o_p(\mathcal R_{nT}^*),
\end{align*}
provided $\mathcal R_{nT}^*=o(1)$.

Similarly, Equations~\eqref{eq:DK22-inverse-rate} and
\eqref{eq:DK22-ghat-rate} yield
\begin{align*}
\|\mathcal R_{1,nT}\|_2
&=
O_p\left[
\mathcal R_{nT}^*
\left(
\frac1{\sqrt{nT}}+\mathcal R_{nT}^*
\right)
\right]\\
&=
o_p(\mathcal R_{nT}^*).
\end{align*}

Combining the three remainder bounds gives
\[
\left\|
\widehat\theta_p(\widehat W)
-
\widehat\theta_p(W_0)
\right\|_2
=
O_p(\mathcal R_{nT}^*).
\]
Since Step~1 establishes
\[
\|\widehat\theta_p(W_0)-\theta_0\|_2
=
O_p\left(\frac1{\sqrt{nT}}\right),
\]
we conclude that
\begin{equation}
\|\widehat\theta_p(\widehat W)-\theta_0\|_2
=
O_p\left(
\frac1{\sqrt{nT}}+\mathcal R_{nT}^*
\right).
\label{eq:DK22-plugin-rate}
\end{equation}

\medskip
\noindent\textbf{Rate comparison.}
Using the noisy matrix $W=W_0+E$ directly, Lemma~\ref{rateG}(a)
and the same first-order-condition argument give
\[
\|\widehat\theta_p(W)-\theta_0\|_2
=
O_p\left(
\frac1{\sqrt{nT}}+w_{2n}+w_{2n}^2
\right),
\]
provided the corresponding sample Jacobian remains nonsingular.
Therefore, the denoised estimator has a strictly smaller
network-error contribution whenever
\[
\mathcal R_{nT}^*
=
o(w_{2n}+w_{2n}^2).
\]

\medskip
\noindent\textbf{Asymptotic normality.}
Suppose, in addition, that
\[
\sqrt{nT}\,\mathcal R_{nT}^*=o(1),
\]
or equivalently,
\[
\mathcal R_{nT}^*
=
o\left(\frac1{\sqrt{nT}}\right).
\]
Then
\[
\sqrt{nT}
\left\{
\widehat\theta_p(\widehat W)
-
\widehat\theta_p(W_0)
\right\}
=o_p(1).
\]
Combining this result with the oracle central limit theorem from
Step~1 and applying Slutsky's theorem yields
\[
\sqrt{nT}
\left(
\widehat\theta_p(\widehat W)-\theta_0
\right)
\overset{d}{\longrightarrow}
\mathcal N(\mathbf 0,\Sigma_{GMM}).
\]

Equivalently, the asymptotic-normality condition can be written as
\[
(1+\|B_0\|_2)\kappa_n^{d+1}
\bigl\{(r\vee1)\omega_n+s_n\bigr\}
=
o\left(\sqrt{\frac nT}\right).
\]
When $T$ is fixed, this reduces to
\[
(1+\|B_0\|_2)\kappa_n^{d+1}
\bigl\{(r\vee1)\omega_n+s_n\bigr\}
=
o(\sqrt n).
\]

\end{proof}

\subsection{Proof of Theorem \ref{Thm:Supervised_Rate}}

\begin{proof}

This step estimates the weight matrix $W_0=L_0+S_0$ jointly with
the regression coefficient $\theta$. Specifically, we minimize the
objective in~\eqref{Eq:Supervised_Estimator} using a
block-coordinate procedure. When $L$ and $S$ are given, $\theta$
is estimated by standard GMM using the plug-in weight matrix
$L+S$. When $\theta$ is given, minimization over $(L,S)$ becomes
a penalized regression problem involving a nuclear-norm penalty on
$L$ and an elementwise $\ell_1$ penalty on the off-diagonal entries
of $S$.

To streamline the exposition, we establish the result using the
instrumental variable $Z_t(M)$ for an arbitrary fixed matrix $M$.
The same proof applies when $M$ is exogenous or
$\sigma(E)$-measurable and satisfies the corresponding norm and
moment conditions. Additional arguments are required when $M$ is
estimated using the same outcome data as the supervised estimator.

Throughout the proof, we impose
\[
\operatorname{diag}(L+S)=0,
\]
so that all estimated adjacency matrices have zero diagonal.

Recall that
\[
\bar g_{nT}(\theta,W,M)
=
\frac{1}{nT}\sum_{t=1}^T
Z_t^\top(M)
\left(
Y_t-\lambda WY_t-X_t\beta-WX_t\gamma
\right).
\]
For
\[
\theta=(\lambda,\beta^\top,\gamma^\top)^\top,
\]
define the $n^2\times m$ matrix
\[
\widetilde X_t(\theta,M)
:=
\frac{1}{n}
(\lambda Y_t+X_t\gamma)\otimes Z_t(M),
\]
and let
\[
\overline{\widetilde X}_{nT}(\theta,M)
:=
\frac{1}{T}\sum_{t=1}^T
\widetilde X_t(\theta,M).
\]
At the true parameter, write
\[
\widetilde X_t(M)
:=
\widetilde X_t(\theta_0,M)
=
\frac{1}{n}
(\lambda_0Y_t+X_t\gamma_0)\otimes Z_t(M).
\]

The vectorization identity
\[
\operatorname{vec}
\left(
Z_t^\top(M)W(\lambda Y_t+X_t\gamma)
\right)
=
\left[
(\lambda Y_t+X_t\gamma)\otimes Z_t(M)
\right]^\top
\operatorname{vec}(W)
\]
implies that, for fixed $\theta$ and $M$,
\begin{equation}
\bar g_{nT}(\theta,W_1,M)
-
\bar g_{nT}(\theta,W_2,M)
=
-\overline{\widetilde X}_{nT}(\theta,M)^\top
\operatorname{vec}(W_1-W_2).
\label{eq:supervised_moment_affine}
\end{equation}

Define
\[
q_{nT}(\theta,W,M)
:=
\bar g_{nT}(\theta,W,M)^\top
\widehat\Lambda_{nT}^{-1}
\bar g_{nT}(\theta,W,M)
\]
and the adjacency-only objective
\[
Q_n^0(L,S)
:=
\frac{1}{2}\|W-L-S\|_F^2
+\nu_{nT}\|L\|_*
+\tau_{nT}\sum_{i\neq j}|S_{ij}|.
\]
The complete supervised objective can therefore be written as
\[
Q_{nT}(\theta,L,S;M)
=
\xi_{nT}q_{nT}(\theta,L+S,M)
+
Q_n^0(L,S).
\]

Let $(\widehat L^{[0]},\widehat S^{[0]})$ be the adjacency-only
estimator obtained by minimizing $Q_n^0(L,S)$, with the same
tuning parameters $\nu_{nT}$ and $\tau_{nT}$ as those appearing
in the supervised objective. Set
\[
\widehat W^{[0]}
:=
\widehat L^{[0]}+\widehat S^{[0]}.
\]
Because $(\widehat L^{[0]},\widehat S^{[0]})$ is constructed using
only the observed adjacency matrix $W=W_0+E$, it is measurable
with respect to $\sigma(E)$.

Let
\[
\omega_n:=w_{2n}+\mathcal R_s,
\qquad
s_n:=\mathcal R_s\sqrt{m_s(S_0)}.
\]
By Theorem~\ref{Thm:DK1},
\begin{align}
\|\widehat L^{[0]}-L_0\|_2
&=O_p(\omega_n), \nonumber\\
\|\widehat S^{[0]}-S_0\|_2
&=O_p(\mathcal R_s), \nonumber\\
\left|
\operatorname{off}
(\widehat S^{[0]}-S_0)
\right|_{1,1}
&=O_p(s_n).
\label{eq:supervised_init_rates}
\end{align}
Thus, $\widehat W^{[0]}$ satisfies the conditions of
Lemma~\ref{rateG}(c).

The supervised estimator is obtained by the following
block-coordinate procedure, initialized at
$(\widehat L^{[0]},\widehat S^{[0]})$.

\begin{itemize}

\item[a)] Given $(\widehat L^{[0]},\widehat S^{[0]})$, define
\[
\widehat\theta^{[1]}
=
\arg\min_{\theta\in\Theta}
q_{nT}(\theta,\widehat W^{[0]},M).
\]

\item[b)] Given $\widehat\theta^{[1]}$ and
$\widehat L^{[0]}$, define
\begin{align*}
\widehat S^{[1]}
=
\arg\min_{\substack{S\in\mathbb R^{n\times n}:\\
\operatorname{diag}(\widehat L^{[0]}+S)=0}}
\Bigg\{&
\xi_{nT}
q_{nT}
\big(
\widehat\theta^{[1]},
\widehat L^{[0]}+S,M
\big)
\\
&+
\frac12
\|W-\widehat L^{[0]}-S\|_F^2
+
\tau_{nT}\sum_{i\neq j}|S_{ij}|
\Bigg\}.
\end{align*}

\item[c)] Given $\widehat\theta^{[1]}$ and
$\widehat S^{[1]}$, define
\begin{align*}
\widehat L^{[1]}
=
\arg\min_{\substack{L\in\mathbb R^{n\times n}:\\
\operatorname{diag}(L+\widehat S^{[1]})=0}}
\Bigg\{&
\xi_{nT}
q_{nT}
\big(
\widehat\theta^{[1]},
L+\widehat S^{[1]},M
\big)
\\
&+
\frac12
\|W-L-\widehat S^{[1]}\|_F^2
+
\nu_{nT}\|L\|_*
\Bigg\}.
\end{align*}

\item[d)] Set
\[
\widehat W^{[1]}
:=
\widehat L^{[1]}+\widehat S^{[1]}
\]
and define
\[
\widehat\theta_s
=
\arg\min_{\theta\in\Theta}
q_{nT}(\theta,\widehat W^{[1]},M).
\]

\end{itemize}

We now establish the rate of the resulting estimator.

\medskip
\noindent\textbf{Step a).}
The estimator $\widehat\theta^{[1]}$ is the plug-in GMM estimator
based on the adjacency-only estimator $\widehat W^{[0]}$.
Because $\widehat W^{[0]}$ is $\sigma(E)$-measurable and satisfies
\eqref{eq:supervised_init_rates}, Theorem~\ref{Thm:DK22}, using
Lemma~\ref{rateG}(c), gives
\begin{equation}
\|\widehat\theta^{[1]}-\theta_0\|_2
=
O_p\left(
\frac{1}{\sqrt{nT}}+\mathcal R^*_{nT}
\right).
\label{eq:supervised_theta_first}
\end{equation}
For conciseness, define
\[
a_{nT}
:=
\frac{1}{\sqrt{nT}}+\mathcal R^*_{nT}.
\]
Then
\[
\|\widehat\theta^{[1]}-\theta_0\|_2
=
O_p(a_{nT}).
\]

Because $\widehat\theta^{[1]}$ minimizes
$q_{nT}(\theta,\widehat W^{[0]},M)$,
\begin{equation}
q_{nT}(\widehat\theta^{[1]},\widehat W^{[0]},M)
\leq
q_{nT}(\theta_0,\widehat W^{[0]},M).
\label{eq:first_theta_minimization}
\end{equation}
The oracle sampling bound and Lemma~\ref{rateG}(c) imply
\[
\|
\bar g_{nT}(\theta_0,\widehat W^{[0]},M)
\|_2
=
O_p(a_{nT}).
\]

Under Assumption~\ref{Assump:Moments_Weight}, there exist constants
$0<c<C<\infty$ such that, with probability approaching one,
\[
c\|v\|_2^2
\leq
v^\top\widehat\Lambda_{nT}^{-1}v
\leq
C\|v\|_2^2
\]
for every conformable vector $v$. Consequently,
\begin{align*}
c\|
\bar g_{nT}
(\widehat\theta^{[1]},\widehat W^{[0]},M)
\|_2^2
&\leq
q_{nT}
(\widehat\theta^{[1]},\widehat W^{[0]},M)
\\
&\leq
q_{nT}
(\theta_0,\widehat W^{[0]},M)
\\
&\leq
C\|
\bar g_{nT}
(\theta_0,\widehat W^{[0]},M)
\|_2^2.
\end{align*}
It follows that
\begin{equation}
\|
\bar g_{nT}
(\widehat\theta^{[1]},\widehat W^{[0]},M)
\|_2
=
O_p(a_{nT}).
\label{eq:first_theta_moment}
\end{equation}

We also verify the operator-norm bound used below. Let
\[
U_t(\theta):=\lambda Y_t+X_t\gamma.
\]
For each of the finitely many columns $Z_{t,\cdot j}(M)$ of
$Z_t(M)$,
\[
\left\|
\frac1n
U_t(\theta)\otimes Z_{t,\cdot j}(M)
\right\|_2
=
\frac1n
\|U_t(\theta)\|_2
\|Z_{t,\cdot j}(M)\|_2.
\]
By Lemma~\ref{lem9} and the moment conditions on the regressors and
instruments, both norms on the right-hand side are
$O_p(\sqrt n)$, uniformly for $\theta$ in a neighborhood of
$\theta_0$. Since the number of instruments $m$ is fixed,
\[
\|
\overline{\widetilde X}_{nT}(\theta_0,M)
\|_2
=
O_p(1).
\]
Moreover, \eqref{eq:supervised_theta_first} gives
$\widehat\theta^{[1]}-\theta_0=o_p(1)$ under $a_{nT}=o(1)$.
Therefore,
\begin{equation}
\|
\overline{\widetilde X}_{nT}
(\widehat\theta^{[1]},M)
\|_2
=
O_p(1).
\label{eq:Xtilde_operator_bound}
\end{equation}

\medskip
\noindent\textbf{Steps b) and c).}
These two updates are performed with
$\widehat\theta^{[1]}$ fixed. By the definition of
$\widehat S^{[1]}$,
\begin{align*}
Q_{nT}
\big(
\widehat\theta^{[1]},
\widehat L^{[0]},
\widehat S^{[1]};M
\big)
\leq
Q_{nT}
\big(
\widehat\theta^{[1]},
\widehat L^{[0]},
\widehat S^{[0]};M
\big).
\end{align*}
Similarly, by the definition of $\widehat L^{[1]}$,
\begin{align*}
Q_{nT}
\big(
\widehat\theta^{[1]},
\widehat L^{[1]},
\widehat S^{[1]};M
\big)
\leq
Q_{nT}
\big(
\widehat\theta^{[1]},
\widehat L^{[0]},
\widehat S^{[1]};M
\big).
\end{align*}
Combining these inequalities,
\begin{equation}
Q_{nT}
\big(
\widehat\theta^{[1]},
\widehat L^{[1]},
\widehat S^{[1]};M
\big)
\leq
Q_{nT}
\big(
\widehat\theta^{[1]},
\widehat L^{[0]},
\widehat S^{[0]};M
\big).
\label{eq:supervised_descent}
\end{equation}

Let
\[
\Delta_L
:=
\widehat L^{[1]}-\widehat L^{[0]},
\qquad
\Delta_S
:=
\widehat S^{[1]}-\widehat S^{[0]},
\]
and define
\[
H_{nT}
:=
\widehat W^{[1]}-\widehat W^{[0]}
=
\Delta_L+\Delta_S.
\]

We first establish a lower bound for the change in the
adjacency-only objective. The function
\[
Q_n^0(L,S)
=
\frac12\|W-L-S\|_F^2
+\nu_{nT}\|L\|_*
+\tau_{nT}\sum_{i\neq j}|S_{ij}|
\]
is convex in $(L,S)$ and is $1$-strongly convex in the direction
$L+S$. More precisely, for any two feasible pairs
$(L_1,S_1)$ and $(L_2,S_2)$, the strong-convexity remainder in
the quadratic part is
\[
\frac12
\|
(L_1+S_1)-(L_2+S_2)
\|_F^2.
\]
Because $(\widehat L^{[0]},\widehat S^{[0]})$ minimizes
$Q_n^0$ over the convex zero-diagonal feasible set, its
first-order optimality condition and the strong convexity of the
quadratic part imply
\begin{align}
&Q_n^0(\widehat L^{[1]},\widehat S^{[1]})
-
Q_n^0(\widehat L^{[0]},\widehat S^{[0]})
\nonumber\\
&\qquad\geq
\frac12
\left\|
(\widehat L^{[1]}+\widehat S^{[1]})
-
(\widehat L^{[0]}+\widehat S^{[0]})
\right\|_F^2
\nonumber\\
&\qquad=
\frac12\|H_{nT}\|_F^2.
\label{eq:supervised_strong_curvature}
\end{align}

For fixed $\theta$ and $M$, equation
\eqref{eq:supervised_moment_affine} shows that
$q_{nT}(\theta,W,M)$ is a convex quadratic function of
$\operatorname{vec}(W)$. Its gradient is
\begin{equation}
\nabla_{\operatorname{vec}(W)}
q_{nT}(\theta,W,M)
=
-2\overline{\widetilde X}_{nT}(\theta,M)
\widehat\Lambda_{nT}^{-1}
\bar g_{nT}(\theta,W,M).
\label{eq:q_W_gradient}
\end{equation}

The exact quadratic expansion around $\widehat W^{[0]}$ gives
\begin{align*}
&q_{nT}
(\widehat\theta^{[1]},\widehat W^{[0]},M)
-
q_{nT}
(\widehat\theta^{[1]},\widehat W^{[1]},M)
\\
&\quad=
-
\left\langle
\nabla_{\operatorname{vec}(W)}
q_{nT}
(\widehat\theta^{[1]},\widehat W^{[0]},M),
\operatorname{vec}(H_{nT})
\right\rangle
\\
&\qquad-
\operatorname{vec}(H_{nT})^\top
\overline{\widetilde X}_{nT}
(\widehat\theta^{[1]},M)
\widehat\Lambda_{nT}^{-1}
\overline{\widetilde X}_{nT}
(\widehat\theta^{[1]},M)^\top
\operatorname{vec}(H_{nT}).
\end{align*}
The final term is nonpositive. Therefore,
\begin{align}
&q_{nT}
(\widehat\theta^{[1]},\widehat W^{[0]},M)
-
q_{nT}
(\widehat\theta^{[1]},\widehat W^{[1]},M)
\nonumber\\
&\quad\leq
\left|
\left\langle
\nabla_{\operatorname{vec}(W)}
q_{nT}
(\widehat\theta^{[1]},\widehat W^{[0]},M),
\operatorname{vec}(H_{nT})
\right\rangle
\right|
\nonumber\\
&\quad\leq
\left\|
\nabla_{\operatorname{vec}(W)}
q_{nT}
(\widehat\theta^{[1]},\widehat W^{[0]},M)
\right\|_2
\|H_{nT}\|_F.
\label{eq:q_decrease_bound}
\end{align}

On the other hand, the descent inequality
\eqref{eq:supervised_descent} implies
\begin{align}
&Q_n^0(\widehat L^{[1]},\widehat S^{[1]})
-
Q_n^0(\widehat L^{[0]},\widehat S^{[0]})
\nonumber\\
&\quad\leq
\xi_{nT}
\left[
q_{nT}
(\widehat\theta^{[1]},\widehat W^{[0]},M)
-
q_{nT}
(\widehat\theta^{[1]},\widehat W^{[1]},M)
\right].
\label{eq:Q0_descent_connection}
\end{align}
Combining
\eqref{eq:supervised_strong_curvature},
\eqref{eq:q_decrease_bound}, and
\eqref{eq:Q0_descent_connection}, we obtain
\[
\frac12\|H_{nT}\|_F^2
\leq
\xi_{nT}
\left\|
\nabla_{\operatorname{vec}(W)}
q_{nT}
(\widehat\theta^{[1]},\widehat W^{[0]},M)
\right\|_2
\|H_{nT}\|_F.
\]
Hence,
\begin{equation}
\|H_{nT}\|_F
\leq
2\xi_{nT}
\left\|
\nabla_{\operatorname{vec}(W)}
q_{nT}
(\widehat\theta^{[1]},\widehat W^{[0]},M)
\right\|_2.
\label{eq:H_gradient_bound}
\end{equation}

By \eqref{eq:q_W_gradient},
\eqref{eq:first_theta_moment},
\eqref{eq:Xtilde_operator_bound}, and
$\|\widehat\Lambda_{nT}^{-1}\|_2=O_p(1)$,
\begin{align*}
&\left\|
\nabla_{\operatorname{vec}(W)}
q_{nT}
(\widehat\theta^{[1]},\widehat W^{[0]},M)
\right\|_2
\\
&\qquad\leq
2
\|
\overline{\widetilde X}_{nT}
(\widehat\theta^{[1]},M)
\|_2
\|
\widehat\Lambda_{nT}^{-1}
\|_2
\|
\bar g_{nT}
(\widehat\theta^{[1]},\widehat W^{[0]},M)
\|_2
\\
&\qquad=
O_p(a_{nT}).
\end{align*}
It follows from \eqref{eq:H_gradient_bound} that
\begin{equation}
\|
\widehat W^{[1]}-\widehat W^{[0]}
\|_F
=
O_p(\xi_{nT}a_{nT}).
\label{eq:supervised_W_stability}
\end{equation}

This stability bound is important because
$\widehat W^{[1]}$ depends on the outcome data and therefore is
not $\sigma(E)$-measurable. We do not apply
Lemma~\ref{rateG}(c) directly to $\widehat W^{[1]}$. Instead, we
apply that lemma only to the initial estimator
$\widehat W^{[0]}$ and control the additional supervised update
through \eqref{eq:supervised_W_stability}.

The bound also implies the aggregate recovery inequality
\[
\|
\widehat W^{[1]}-W_0
\|_F
\leq
\|
\widehat W^{[1]}-\widehat W^{[0]}
\|_F
+
\|
\widehat W^{[0]}-W_0
\|_F.
\]
This inequality controls the aggregate adjacency estimator, but
does not by itself give separate recovery rates for
$\widehat L^{[1]}$ and $\widehat S^{[1]}$.

\medskip
\noindent\textbf{Step d).}
Define
\[
\widehat G_s
:=
\frac{\partial
\bar g_{nT}(\theta,\widehat W^{[1]},M)}
{\partial\theta^\top}
\]
and
\[
\widehat G^{[0]}
:=
\frac{\partial
\bar g_{nT}(\theta,\widehat W^{[0]},M)}
{\partial\theta^\top}.
\]

We first verify that the GMM Jacobian evaluated at
$\widehat W^{[1]}$ remains well conditioned. Because the
Jacobian is affine in $W$, the same design bounds used above give
\begin{equation}
\|
\widehat G_s-\widehat G^{[0]}
\|_2
\lesssim_p
\|
\widehat W^{[1]}-\widehat W^{[0]}
\|_F
=
O_p(\xi_{nT}a_{nT}).
\label{eq:supervised_Jacobian_Lipschitz}
\end{equation}
When $\xi_{nT}=O(1)$ and $a_{nT}=o(1)$, the right-hand side is
$o_p(1)$.

Moreover, the fixed-$M$ version of Lemma~\ref{rateG}(c), applied
only to the $\sigma(E)$-measurable estimator
$\widehat W^{[0]}$, and the oracle law of large numbers give
\begin{align}
\|
\widehat G^{[0]}-G_0(M)
\|_2
&\leq
\|
\widehat G^{[0]}-\widehat G(W_0,M)
\|_2
+
\|
\widehat G(W_0,M)-G_0(M)
\|_2
\nonumber\\
&=
O_p(\mathcal R^*_{nT})+o_p(1)
=
o_p(1).
\label{eq:initial_Jacobian_convergence}
\end{align}
Combining
\eqref{eq:supervised_Jacobian_Lipschitz} and
\eqref{eq:initial_Jacobian_convergence}, we obtain
\[
\widehat G_s
=
G_0(M)+o_p(1).
\]

By the identification condition,
\[
\lambda_{\min}
\left(
G_0(M)^\top\Lambda^{-1}G_0(M)
\right)
>0.
\]
Since
$\widehat\Lambda_{nT}^{-1}\to_p\Lambda^{-1}$, Weyl's inequality
implies that, for some constant $c>0$,
\[
\lambda_{\min}
\left(
\widehat G_s^\top
\widehat\Lambda_{nT}^{-1}
\widehat G_s
\right)
\geq c+o_p(1).
\]
Consequently,
\begin{equation}
\left\|
\left(
\widehat G_s^\top
\widehat\Lambda_{nT}^{-1}
\widehat G_s
\right)^{-1}
\widehat G_s^\top
\widehat\Lambda_{nT}^{-1}
\right\|_2
=
O_p(1).
\label{eq:supervised_GMM_multiplier}
\end{equation}

Because the moment function is affine in $\theta$,
\[
\bar g_{nT}
(\widehat\theta_s,\widehat W^{[1]},M)
=
\bar g_{nT}
(\theta_0,\widehat W^{[1]},M)
+
\widehat G_s
(\widehat\theta_s-\theta_0).
\]
The first-order condition for the final GMM step is
\[
\widehat G_s^\top
\widehat\Lambda_{nT}^{-1}
\bar g_{nT}
(\widehat\theta_s,\widehat W^{[1]},M)
=
0.
\]
Therefore,
\begin{align}
\widehat\theta_s-\theta_0
=
-&
\left(
\widehat G_s^\top
\widehat\Lambda_{nT}^{-1}
\widehat G_s
\right)^{-1}
\widehat G_s^\top
\widehat\Lambda_{nT}^{-1}
\bar g_{nT}
(\theta_0,\widehat W^{[1]},M).
\label{eq:supervised_GMM_expansion}
\end{align}

Using \eqref{eq:supervised_moment_affine},
\begin{align}
&\bar g_{nT}
(\theta_0,\widehat W^{[1]},M)
-
\bar g_{nT}
(\theta_0,\widehat W^{[0]},M)
\nonumber\\
&\qquad=
-\overline{\widetilde X}_{nT}(\theta_0,M)^\top
\operatorname{vec}
(\widehat W^{[1]}-\widehat W^{[0]}).
\label{eq:final_moment_difference}
\end{align}
It follows that
\begin{align*}
\|
\bar g_{nT}
(\theta_0,\widehat W^{[1]},M)
\|_2
&\leq
\|
\bar g_{nT}
(\theta_0,\widehat W^{[0]},M)
\|_2
\\
&\quad+
\|
\overline{\widetilde X}_{nT}(\theta_0,M)
\|_2
\|
\widehat W^{[1]}-\widehat W^{[0]}
\|_F
\\
&=
O_p(a_{nT})
+
O_p(\xi_{nT}a_{nT}).
\end{align*}
Combining this bound with
\eqref{eq:supervised_GMM_expansion} and
\eqref{eq:supervised_GMM_multiplier}, we obtain
\[
\|
\widehat\theta_s-\theta_0
\|_2
=
O_p(a_{nT})
+
O_p(\xi_{nT}a_{nT}).
\]
If $\xi_{nT}=O(1)$, it follows that
\begin{equation}
\|
\widehat\theta_s-\theta_0
\|_2
=
O_p\left(
\frac{1}{\sqrt{nT}}+\mathcal R^*_{nT}
\right).
\label{eq:supervised_final_rate}
\end{equation}
This proves the rate in~\eqref{superv8}.

Finally, we establish the asymptotic distribution. Define
\[
\widehat\Psi(W)
:=
\left[
\widehat G(W)^\top
\widehat\Lambda_{nT}^{-1}
\widehat G(W)
\right]^{-1}
\widehat G(W)^\top
\widehat\Lambda_{nT}^{-1}.
\]
The GMM expansions for $\widehat\theta_s$ and
$\widehat\theta^{[1]}$ give
\begin{align*}
\widehat\theta_s-\widehat\theta^{[1]}
={}&
-
\left[
\widehat\Psi(\widehat W^{[1]})
-
\widehat\Psi(\widehat W^{[0]})
\right]
\bar g_{nT}
(\theta_0,\widehat W^{[0]},M)
\\
&-
\widehat\Psi(\widehat W^{[1]})
\left[
\bar g_{nT}
(\theta_0,\widehat W^{[1]},M)
-
\bar g_{nT}
(\theta_0,\widehat W^{[0]},M)
\right].
\end{align*}
The Jacobian perturbation bound and the inverse identity imply
\[
\|
\widehat\Psi(\widehat W^{[1]})
-
\widehat\Psi(\widehat W^{[0]})
\|_2
=
O_p\left(
\|
\widehat W^{[1]}-\widehat W^{[0]}
\|_F
\right).
\]
Together with
\eqref{eq:first_theta_moment},
\eqref{eq:final_moment_difference}, and
\eqref{eq:supervised_W_stability}, this yields
\begin{align*}
\|
\widehat\theta_s-\widehat\theta^{[1]}
\|_2
&=
O_p\left(
a_{nT}
\|
\widehat W^{[1]}-\widehat W^{[0]}
\|_F
\right)
+
O_p\left(
\|
\widehat W^{[1]}-\widehat W^{[0]}
\|_F
\right)
\\
&=
O_p(\xi_{nT}a_{nT}).
\end{align*}

Suppose, in addition, that
\[
\sqrt{nT}\,\mathcal R^*_{nT}=o(1)
\qquad\text{and}\qquad
\xi_{nT}=o(1).
\]
Then
\begin{align*}
\sqrt{nT}\,\xi_{nT}a_{nT}
&=
\xi_{nT}
\left(
1+\sqrt{nT}\,\mathcal R^*_{nT}
\right)
=o(1).
\end{align*}
Consequently,
\[
\sqrt{nT}
\left(
\widehat\theta_s-\widehat\theta^{[1]}
\right)
=o_p(1).
\]
The asymptotic distribution of $\widehat\theta_s$ is therefore the
same as that of the initial plug-in estimator
$\widehat\theta^{[1]}$ established in
Theorem~\ref{Thm:DK22}. Thus, supervision may affect finite-sample
performance, but under $\xi_{nT}=o(1)$ it is asymptotically
negligible at the $\sqrt{nT}$ scale and does not alter the
first-order limiting distribution. This completes the proof.

\end{proof}

\subsection{Proof of Corollary \ref{cor2}.}

{The proof follows the same steps as in Theorem \ref{Thm:DK22}, with the only difference being the substitution of the supervised estimator's rate.}

    \end{appendices}

\end{document}